\newtheorem{thm}{Theorem}[section]
\newtheorem{lem}[thm]{Lemma}
\newtheorem{assum}[thm]{Assumption}
\newtheorem{proposition}[thm]{Proposition}
\newtheorem{remark}[thm]{Remark}
\newtheorem{definition}[thm]{Definition}
\newtheorem{problem}[thm]{Problem}
\newtheorem{lemma}[thm]{Lemma}
\newtheorem{corollary}[thm]{Corollary}
\newcommand\reallywidehat[1]{%
\savestack{\tmpbox}{\stretchto{%
  \scaleto{%
    \scalerel*[\widthof{\ensuremath{#1}}]{\kern-.6pt\bigwedge\kern-.6pt}%
    {\rule[-\textheight/2]{1ex}{\textheight}}
  }{\textheight}%
}{0.5ex}}%
\stackon[1pt]{#1}{\tmpbox}%
}
\newcommand*{\rom}[1]{\expandafter\@slowromancap\romannumeral #1@}
\newcommand{\abs}[1]{\left|#1\right|}
\newcommand{\R}{\mathbb{R}}
\newcommand{\bmeta}{\bm{\eta}}
\newcommand{\SO}{\mathsf{SO}}
\newcommand{\SE}{\mathsf{SE}}
\DeclareMathOperator*{\argmax}{arg\,max}
\DeclareMathOperator*{\argmin}{arg\,min}
\newcommand{\p}[1]{\left(#1\right)}
\newcommand{\pp}[1]{\left[#1\right]}
\newcommand{\ppp}[1]{\left\{#1\right\}}
\newcommand{\norm}[1]{\left\|#1\right\|}
\newcommand\equalityAS{\mathrel{\stackrel{\makebox[0pt]{\mbox{\normalfont \scriptsize a.s.}}}{=}}}
\newcommand{\s}[1]{\mathsf{#1}}
\numberwithin{equation}{section}
\begin{document}
\title{Orbit recovery under the rigid motions group}

\author[1]{Amnon Balanov\thanks{Corresponding author: \url{amnonba15@gmail.com}}}
\author[1]{Tamir Bendory}
\author[2]{Dan Edidin}

\affil[1]{\normalsize School of Electrical and Computer Engineering, Tel Aviv University, Tel Aviv 69978, Israel}

\affil[2]{Department of Mathematics, University of Missouri, Columbia, MO 65211, USA}

\maketitle
\begin{abstract}
We study the orbit recovery problem under the rigid-motion group $\mathsf{SE}(n)$, where the objective is to reconstruct an unknown signal from multiple noisy observations subjected to unknown rotations and translations. This problem is fundamental in signal processing, computer vision, and structural biology.

Our main theoretical contribution is bounding the sample complexity of this problem. We show that if the $d$-th order moment under the rotation group $\mathsf{SO}(n)$  uniquely determines the signal orbit, then orbit recovery under $\mathsf{SE}(n)$ is achievable with $N\gtrsim \sigma^{2d+4}$ samples as the noise variance $\sigma^2 \to \infty$. The key technical insight is that the $d$-th order $\mathsf{SO}(n)$ moments can be explicitly recovered from $(d+2)$-order $\mathsf{SE}(n)$ autocorrelations, enabling us to transfer known results from the rotation-only setting to the rigid-motion case. 
We further harness this result to derive a matching bound to the sample complexity of the multi-target detection model that serves as an abstract framework for electron-microscopy-based technologies in structural biology, such as single-particle cryo-electron microscopy (cryo-EM) and cryo-electron tomography (cryo-ET). 

Beyond theory, we present a provable computational pipeline for rigid-motion orbit recovery in three dimensions. Starting from rigid-motion autocorrelations, we extract the $\mathsf{SO}(3)$ moments and demonstrate successful reconstruction of a 3-D macromolecular structure. Importantly, this algorithmic approach is valid at any noise level, suggesting that even very small macromolecules---long believed to be inaccessible using structural biology electron-microscopy-based technologies---may, in principle, be reconstructed given sufficient data.

\end{abstract}

\newpage
\setcounter{tocdepth}{2} 
\tableofcontents

\newpage
\section{Introduction}

\subsection{Problem setup}

We study the problem of recovering an unknown signal from multiple noisy observations, each of which is transformed by an unknown element of the rigid motion group, comprising both an unknown rotation and translation. This fundamental inverse problem arises across diverse fields, such as signal processing, computer vision, and structural biology \cite{frank2006three,hartley2003multiple,ozyecsil2017survey, barfoot2024state,myronenko2010point,bartesaghi2008classification}. 

Formally, we are given $N$ noisy observations $y_0, y_1, \dots, y_{N-1}$, each generated by applying an unknown transformation of a group $G$ to an underlying signal $f: \mathbb{R}^n \to \mathbb{R}$, followed by additive Gaussian noise~\cite{bandeira2014multireference}:
\begin{align}
y_i = g_i \cdot f + \xi_i, \quad g_i \in G. \label{eqn:MRA_model}
\end{align}
Here, $G$ is a known group acting on functions by $g \cdot f(\bm{x}) \triangleq f(g^{-1} \bm{x})$, where $\bm{x} \in \mathbb{R}^n$ denotes spatial coordinates, and each $g_i$ is independently drawn from a probability distribution $\rho$ over $G$. The noise terms $\{\xi_i\}_{i=0}^{N-1}$ are independent realizations of white Gaussian noise with variance $\sigma^2$. Since the statistics of $y$ is the same for $f$ and $g\cdot f$ for any $g\in G$, we seek to recover the $G$-orbit of $f$, that is, the set $\{g\cdot f\,|\, g\in G\}$. This formulation gives rise to the name \emph{orbit recovery problem}, also commonly referred to in the literature as the multi-reference alignment (MRA) problem. 
Throughout this work, we assume that $f$ is a continuous function supported within a ball of radius $R$, which is a natural and widely applicable assumption in scientific imaging problems, including those arising in this work.

\paragraph{The special orthogonal group and the rigid motion group.}
We focus on two practically important choices for $G$: the special Euclidean group $\mathsf{SE}(n)$, which forms the core of our analysis, and the special orthogonal group $\mathsf{SO}(n)$, corresponding to pure rotational uncertainty. We refer to these two settings, respectively, as \emph{orbit recovery under $\mathsf{SE}(n)$} and \emph{orbit recovery under $\mathsf{SO}(n)$}. 

The orbit-recovery problem under compact groups, such as $\mathsf{SO}(n)$, has been extensively studied owing to its analytical tractability and rich mathematical structure, e.g.,~\cite{bandeira2023estimation,perry2019sample,abbe2018multireference}.
In contrast, the non-compact setting has received substantially less attention. Only recently have researchers begun to explore extensions of MRA to non-compact transformations, such as translations and dilations~\cite{yin2024bispectrum}.
However, the problem of orbit recovery under the full rigid-motion group $\mathsf{SE}(n)$, where both the orientation and position of the underlying signal are unknown, and which is central to many real-world applications, remains largely unexplored, particularly in terms of sample-complexity analysis\footnote{Here, \emph{sample complexity} refers to the scaling of the number of independent observations required to reliably recover the orbit as the noise level increases. A rigorous definition is provided in Section~\ref{sec:sampleComplexityRigidMotion}.} and the construction of group-invariant representations.
By addressing this problem, we aim to bridge this gap and provide a rigorous foundation for inverse problems involving rigid motions. 

A comparison between the orbit recovery problems under $\mathsf{SO}(n)$ and $\mathsf{SE}(n)$ (Figure~\ref{fig:1}(b-c)) highlights a fundamental distinction: the rigid motion setting introduces an additional layer of uncertainty due to unknown translations, in addition to rotations. 
This added indeterminacy substantially increases the complexity of the recovery problem. 
Importantly, it was shown in~\cite{balanov2025note} that the sample complexity required for the orbit recovery under $\mathsf{SO}(n)$ provides a lower bound for the more general $\mathsf{SE}(n)$ case. 
The central objective of this work is to establish a complementary upper bound, thereby characterizing the sample complexity of orbit recovery under $\mathsf{SE}(n)$ in relation to the better understood $\mathsf{SO}(n)$ problem.

\begin{figure}
    \centering
    \includegraphics[width=0.95 \linewidth]{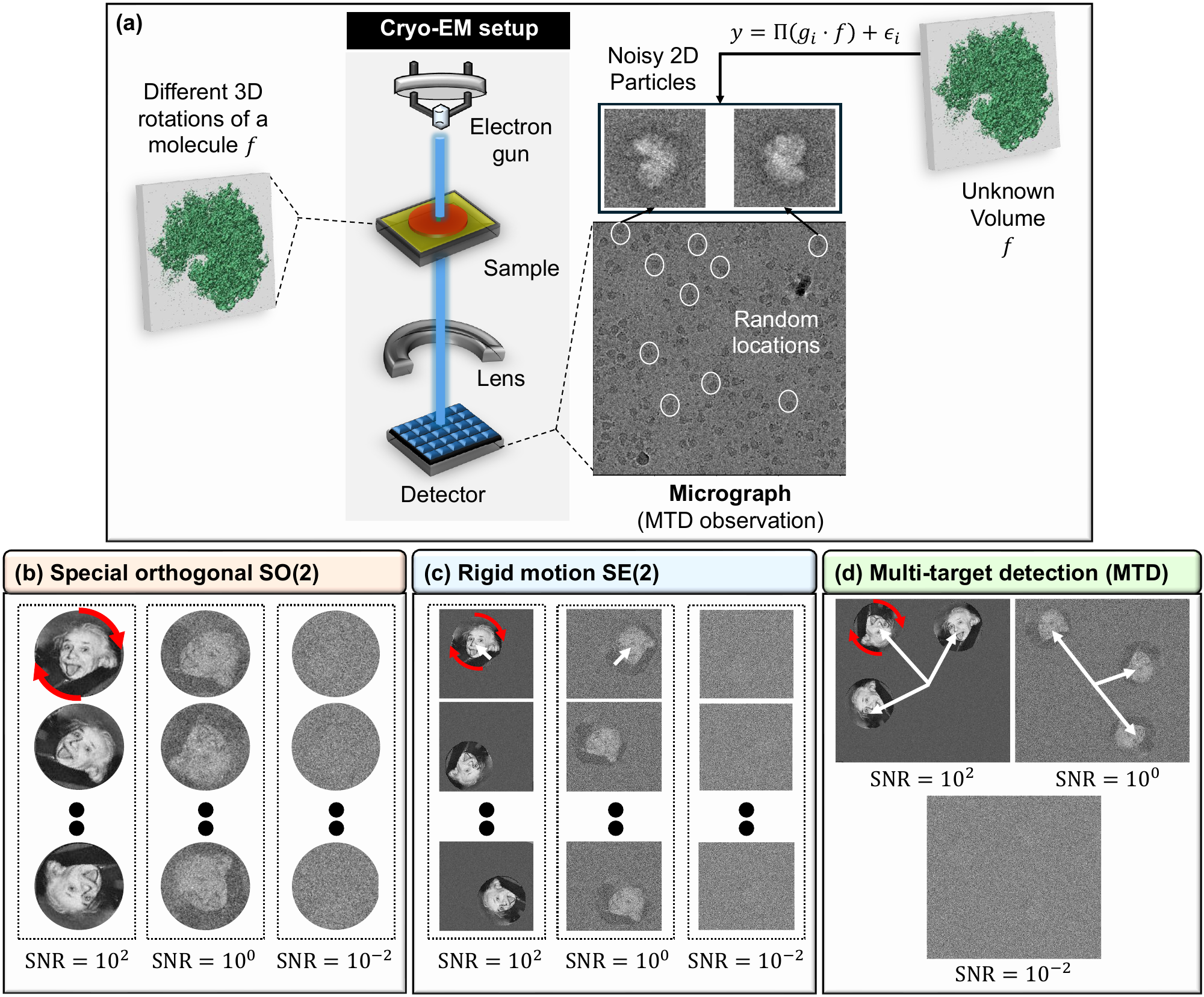}
    \caption{\textbf{Overview of the central problems studied in this work.}  
    \textbf{(a)} A primary motivation for this work arises from single-particle cryo-electron microscopy (cryo-EM), where the goal is to reconstruct 3D molecular structures from noisy 2D projection images~\cite{bendory2020single}. The raw measurements, known as \textit{micrographs}, contain many randomly oriented copies of the target molecule embedded in high levels of noise. Each particle instance is given by $f_i = \Pi(g_i \cdot f)$, where $g_i$ is a random 3D rotation and $\Pi$ is a tomographic projection operator. At low SNR, individual particles cannot be reliably detected, but direct structure recovery from micrographs may still be possible~\cite{bendory2023toward}.  
    \textbf{(b)} The orbit recovery problem under the special orthogonal group $\mathsf{SO}(n)$ (Problem~\ref{prob:orbitRecoverySOn}), where each observation is a noisy measurement of a randomly rotated signal (e.g., Einstein image), studied across varying signal-to-noise ratio (SNR) regimes.  
    \textbf{(c)} Orbit recovery under the rigid motion group $\mathsf{SE}(n)$ (Problem~\ref{prob:orbitRecoverySEn}), where each signal undergoes an unknown rotation and translation. This setting introduces additional uncertainty due to unknown spatial locations.
    \textbf{(d)} Signal recovery from MTD observations (Problem~\ref{prob:orbitRecoveryMTD}), where multiple rotated instances of a signal $f$ are embedded at unknown positions in a noisy measurement. At high SNR, particle instances can be detected, aligned, and averaged. However, at low SNR, detecting the particle locations becomes infeasible. In this challenging regime, addressed in this work, the goal is to characterize the sample complexity required for successful signal recovery despite the presence of both rotational and positional uncertainty, in high levels of noise. }
    \label{fig:1} 
\end{figure}

\subsection{Method of moments for orbit recovery}
A central concept in orbit recovery for compact groups is that of \emph{invariant features}, quantities that remain unchanged under the group action. 
In this work, we focus primarily on \emph{invariant moments}. Such moments play a fundamental role in the classical statistical approach to parameter estimation, as well as in structure-determination problems \cite{kam1980reconstruction,levin20183d}.
For orbit recovery under compact groups, the minimal moment order required for orbit identifiability directly governs the sample complexity. Specifically, if moments up to order $d$ suffice to uniquely determine the orbit, then the number of independent observations $N$ must grow faster than $\sigma^{2d}$ as the noise variance $\sigma^2 \to \infty$. In other words, the required number of observations scales as $N = \omega(\sigma^{2d})$, meaning that $N / \sigma^{2d} \to \infty$ as both $N \to \infty$ and  $\sigma \to \infty$ \cite{abbe2018estimation,bandeira2023estimation}.

Extending orbit recovery from compact groups to non-compact groups, such as the rigid motion group $\mathsf{SE}(n)$, poses significant challenges. 
Unlike compact groups, $\mathsf{SE}(n)$ admits no finite Haar measure and has no non-trivial finite-dimensional irreducible representations. 
In particular, its representations cannot be decomposed into a direct sum of irreducibles. 
Consequently, many of the standard tools available in the compact setting, such as uniform averaging over the group and orthogonal projection onto irreducible components, are no longer applicable. 
This makes the construction of non-trivial invariant statistics substantially more difficult and limits the applicability of classical representation-theoretic techniques. 
As a result, both the theoretical characterization of sample complexity and the development of effective orbit recovery algorithms remain considerably more challenging in the non-compact setting. 

\subsection{Motivating questions and main contributions}

In what follows, we outline the central questions that motivate the study and the main contributions.

\paragraph{Orbit recovery under the rigid motion group.}

To analyze the sample complexity of the orbit recovery problem under $\mathsf{SE}(n)$, our strategy is to relate the rigid motion problem to the better understood compact setting of orbit recovery under $\mathsf{SO}(n)$ through a structural comparison of the moments in the two settings\footnote{In this paper, we use \emph{autocorrelations} for the $\mathsf{SE}(n)$-invariant quantities (Definition~\ref{def:autoCorrelationNoiseFree}) and \emph{moments} for the $\mathsf{SO}(n)$-invariant quantities (Definition~\ref{def:mraTensorMoment}).}.
This comparison forms the backbone of our analysis: it provides a constructive mechanism for transferring identifiability guarantees and sample complexity bounds from the rotation-only case to the more general rigid motion case. 
Specifically, our approach is guided by the following question:
\begin{quote}
    \emph{How can the  $\mathsf{SO}(n)$  moments be recovered from the $\mathsf{SE}(n)$ autocorrelations?}
\end{quote}

As a key technical tool, we develop an algebraic procedure for extracting rotation-group moments from rigid-motion autocorrelations. 
In particular, we show that the $(d+2)$-order autocorrelations in the $\mathsf{SE}(n)$ problem suffice to recover the $d$-th order moments under $\mathsf{SO}(n)$.
This extraction is formalized in Theorem~\ref{thm:reductionFromAutocorrelationToTensorMoment} and is stated informally below. 

\begin{thm}[Informal: Extracting ${d}$-th order rotational moments from ${(d+2)}$-order rigid motion autocorrelations]
\label{thm:informalTheorem1}
Consider the problem of recovering a signal $f$ up to its orbit under either the rotation group $\mathsf{SO}(n)$ or the rigid motion group $\mathsf{SE}(n)$ (as defined in \eqref{eqn:MRA_model}). Then, any order-$d$ moment under the action of $\mathsf{SO}(n)$ can be recovered from an order-$(d+2)$ autocorrelation under the action of $\mathsf{SE}(n)$.
\end{thm}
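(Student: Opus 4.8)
The plan is to move to Fourier space, where both families of invariants appear as restrictions of a single object. Write $\hat f$ for the Fourier transform of $f$; since $f$ is real and compactly supported in $B(0,R)$, the function $\hat f$ is bounded, entire, and satisfies $\overline{\hat f(\bm\omega)}=\hat f(-\bm\omega)$. First I would record the Fourier description of the $\mathsf{SO}(n)$ moment (Definition~\ref{def:mraTensorMoment}),
\[
\widehat{M_d}(\bm\omega_1,\dots,\bm\omega_d)=\int_{\mathsf{SO}(n)}\prod_{j=1}^{d}\hat f(Q^{-1}\bm\omega_j)\,d\mu(Q),
\]
and then compute directly that the $\mathsf{SE}(n)$ autocorrelation (Definition~\ref{def:autoCorrelationNoiseFree}), which after accounting for translation invariance is a rotational average of the classical $m$-fold autocorrelation of $f$, transforms to
\[
\widehat{A_m}(\bm\omega_1,\dots,\bm\omega_{m-1})=\int_{\mathsf{SO}(n)}\hat f\!\bigl(-Q^{-1}(\bm\omega_1+\dots+\bm\omega_{m-1})\bigr)\prod_{j=1}^{m-1}\hat f(Q^{-1}\bm\omega_j)\,d\mu(Q).
\]
Thus $\widehat{A_m}$ is exactly $\widehat{M_m}$ restricted to the affine ``zero-sum'' slice $\{\bm\omega_0+\bm\omega_1+\dots+\bm\omega_{m-1}=\bm 0\}$. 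Since $M_d$ depends on the choice of origin while no autocorrelation does (translating $f$ changes $M_d$ but leaves every $\widehat{A_m}$ fixed), a small preliminary step is to fix the canonical, centroid-centered representative of the $\mathsf{SE}(n)$-orbit, whose centering data $\hat f(\bm 0)=\int f$ and $\nabla\hat f(\bm 0)=\bm 0$ are read off from the order-$1$ autocorrelation together with the centering convention. The theorem then reduces to: the full moment $\widehat{M_d}$ is reconstructible from the zero-sum slice of $\widehat{M_{d+2}}$, i.e.\ from the order-$(d+2)$ autocorrelation.

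The mechanism is to spend the two extra factors on undoing the zero-sum constraint. Given an arbitrary target $(\bm\mu_1,\dots,\bm\mu_d)$, put $\bm s=\bm\mu_1+\dots+\bm\mu_d$ and complete it to a zero-sum $(d+2)$-tuple using a single auxiliary vector $\bm p$, by taking the last two frequencies to be $-\tfrac12\bm s+\bm p$ and $-\tfrac12\bm s-\bm p$. Evaluating the order-$(d+2)$ autocorrelation on this family gives
\[
\widehat{A_{d+2}}\bigl(\bm\mu_1,\dots,\bm\mu_d,-\tfrac12\bm s+\bm p\bigr)=\int_{\mathsf{SO}(n)}\hat f\!\bigl(Q^{-1}(-\tfrac12\bm s+\bm p)\bigr)\hat f\!\bigl(Q^{-1}(-\tfrac12\bm s-\bm p)\bigr)\prod_{j=1}^{d}\hat f(Q^{-1}\bm\mu_j)\,d\mu(Q),
\]
so what remains is to eliminate the pair of nuisance factors from inside the rotational average. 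The plan is to apply a carefully chosen linear operation in the free vector $\bm p$ — a value or derivative at $\bm p=\bm 0$, or integration against a suitable radial weight — using analyticity of $\hat f$ and of the $\widehat{A_m}$ (a consequence of compact support, which also licenses any analytic continuation from the accessible configurations), and then to divide by the appropriate lower-order constant, leaving exactly $\int_{\mathsf{SO}(n)}\prod_j \hat f(Q^{-1}\bm\mu_j)\,d\mu(Q)=\widehat{M_d}(\bm\mu_1,\dots,\bm\mu_d)$. Inverse transforming returns $M_d$.

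The step I expect to be the main obstacle is precisely this elimination of the nuisance factors — it is what forces the ``$+2$''. The naive moves do not suffice: sending $\bm p$ so that one auxiliary frequency reaches the origin leaves a residual $\hat f(-Q^{-1}\bm s)$ under the integral and merely reproduces the order-$(d+1)$ autocorrelation, while integrating $\bm p$ over all of $\mathbb{R}^n$ replaces it by $\widehat{f^{2}}(-Q^{-1}\bm s)$; in either case the residual still depends on $Q$ and cannot be pulled out of the $\mathsf{SO}(n)$-average unless $\bm s=\bm 0$, which just lands one back on the zero-sum slice. The real work is to exploit the presence of two extra factors (not one) to engineer an operation on the symmetric pair $-\tfrac12\bm s\pm\bm p$ whose residual is constant on each $\mathsf{SO}(n)$-orbit — equivalently a function of $\|\bm s\|$ alone, hence known from lower-order autocorrelations — so that it factors out cleanly. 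Carrying this through, together with bookkeeping of the (lower-order and therefore available) normalization constants, is the technical heart of the argument.
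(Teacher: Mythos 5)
Your Fourier-space reformulation is sound as far as it goes: the $m$-th $\mathsf{SE}(n)$ autocorrelation does correspond to the rotationally averaged product $\int_{\mathsf{SO}(n)}\prod_j\hat f(Q^{-1}\bm\omega_j)\,d\mu(Q)$ restricted to the zero-sum slice, and your diagnosis that one extra factor only reproduces the order-$(d{+}1)$ autocorrelation correctly explains why the increment is $+2$. But the proposal has a genuine gap, and you name it yourself: the entire content of the theorem is the elimination of the two nuisance factors $\hat f\bigl(Q^{-1}(-\tfrac12\bm s+\bm p)\bigr)\hat f\bigl(Q^{-1}(-\tfrac12\bm s-\bm p)\bigr)$ from inside the $\mathsf{SO}(n)$-average, and you leave that step as "the technical heart of the argument" without supplying it. A plan that defers exactly the step which distinguishes the theorem from a triviality is not a proof. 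Moreover, your proposal never identifies the hypotheses under which such an elimination is even possible; the paper's statement requires compact support on a ball of radius $R$ together with a non-vanishing antipodal correlation at the boundary (Assumptions~\ref{assum:support} and~\ref{assum:nonVanishingSupport}), and without some non-degeneracy of this kind the normalizing constant you intend to divide by can vanish.

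For comparison, the paper resolves precisely this obstacle, but in the spatial domain. It evaluates the $(d{+}2)$-order autocorrelation with the two extra arguments placed at nearly antipodal points $\bm\tau_0^{(\delta)},\bm\tau_1^{(\delta)}$ of radius $R(1-\delta)$ and integrates over their common direction $\bm\theta\in S^{n-1}$. Because $f$ is supported on $\mathcal{B}_R^{(n)}$, both antipodal points can lie in the support of the shifted signal only when the translation variable is within $O(\sqrt{\delta})$ of the origin, so as $\delta\to 0^+$ the translational integral concentrates at a single point (Lebesgue differentiation); this is the spatial-domain analogue of the operation in $\bm p$ you are searching for. The residual left behind is the antipodal correlation $\int_{S^{n-1}}f_{\mathcal{R}}(R,\bm\theta)f_{\mathcal{R}}(-R,\bm\theta)\,d\bm\theta$, which is independent of $\mathcal{R}$ by Lemma~\ref{lem:invarianceOfAntipodalProdSum} — exactly the "residual constant on each $\mathsf{SO}(n)$-orbit" you postulate — and nonzero by Assumption~\ref{assum:nonVanishingSupport}, so it factors out of the rotational average and is cancelled by the identically constructed second-order autocorrelation in the denominator. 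If you want to complete your Fourier-domain route, you would need to exhibit the corresponding kernel in $\bm p$ (it is essentially the Fourier transform of the antipodal boundary pairing) and prove the analogues of Lemma~\ref{lem:invarianceOfAntipodalProdSum} and the concentration step; at that point the two arguments become translations of one another, with the paper's version being the more direct.
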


As a direct consequence of Theorem~\ref{thm:informalTheorem1}, we have the following corollary.
\begin{corollary}
\label{cor:informalCorollary1}
If the $d$-th order $\mathrm{SO}(n)$ moment uniquely determines the orbit of $f$, then the order-$(d+2)$ $\mathsf{SE}(n)$ autocorrelation also uniquely determines the orbit of $f$ under rigid motions.
\end{corollary}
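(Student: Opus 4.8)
\textbf{Proof proposal for Corollary~\ref{cor:informalCorollary1}.}
The plan is to deduce the corollary as an injectivity statement layered on top of Theorem~\ref{thm:informalTheorem1} (equivalently, the explicit reduction of Theorem~\ref{thm:reductionFromAutocorrelationToTensorMoment}). Suppose $f_1$ and $f_2$ are two continuous signals supported in a ball of radius $R$ whose order-$(d+2)$ autocorrelations under $\mathsf{SE}(n)$ coincide. We must show that $f_1$ and $f_2$ lie in the same $\mathsf{SE}(n)$-orbit. The first step is to apply Theorem~\ref{thm:informalTheorem1}: there is a fixed (deterministic) map that sends an order-$(d+2)$ $\mathsf{SE}(n)$ autocorrelation to the corresponding order-$d$ $\mathsf{SO}(n)$ moment. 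Applying this map to the common autocorrelation of $f_1$ and $f_2$ shows that they have the same order-$d$ $\mathsf{SO}(n)$ moment — more precisely, that canonically aligned representatives of $f_1$ and $f_2$ (say, the translates placed in a fixed position by the normalization used in Theorem~\ref{thm:reductionFromAutocorrelationToTensorMoment}, e.g.\ centering the center of mass at the origin) carry identical order-$d$ $\mathsf{SO}(n)$ moments. Denote these representatives $\tilde f_1$ and $\tilde f_2$.

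The second step invokes the hypothesis of the corollary: since the order-$d$ $\mathsf{SO}(n)$ moment uniquely determines the $\mathsf{SO}(n)$-orbit, the equality of moments forces $\tilde f_2 = Q \cdot \tilde f_1$ for some $Q \in \mathsf{SO}(n)$. The third step is the trivial observation that $\mathsf{SO}(n)$ embeds in $\mathsf{SE}(n)$ as the stabilizer of the origin, so composing $Q$ with the two centering translations that relate $f_i$ to $\tilde f_i$ produces an element of $\mathsf{SE}(n)$ mapping $f_1$ to $f_2$. Hence $f_1$ and $f_2$ share an $\mathsf{SE}(n)$-orbit, which is exactly the assertion that the order-$(d+2)$ $\mathsf{SE}(n)$ autocorrelation determines the orbit under rigid motions.

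The only point requiring genuine care — rather than being a purely formal chain of implications — is the translation bookkeeping in the first step. Because the $\mathsf{SE}(n)$ autocorrelation is translation-invariant while the $\mathsf{SO}(n)$ moment is not, one must make sure that ``the $\mathsf{SO}(n)$ moment recovered from the $\mathsf{SE}(n)$ autocorrelation'' is unambiguously attached to a well-defined translate of the signal, so that equality of autocorrelations really does yield equality of $\mathsf{SO}(n)$ moments of directly comparable representatives. Once the normalization convention supplied by the proof of Theorem~\ref{thm:reductionFromAutocorrelationToTensorMoment} is fixed, this ambiguity disappears and the remaining steps go through verbatim. No quantitative estimates are needed; the corollary is an identifiability statement and follows from the (constructive) identifiability transfer already established.
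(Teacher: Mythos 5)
Your argument is essentially the paper's: the corollary is obtained by composing the explicit extraction of Theorem~\ref{thm:reductionFromAutocorrelationToTensorMoment} (which recovers $M^{(d)}_{f,\rho}$ from $A^{(d+2)}_{f,\rho}$ together with the normalizing second-order autocorrelation $A^{(2)}_{f,\rho}$) with the hypothesis that the $d$-th moment determines the $\mathsf{SO}(n)$-orbit, exactly as in the proof of Theorem~\ref{thm:mainTheoremRigidmotion}. The only cosmetic difference is that the paper needs none of your translation bookkeeping: the canonical representative is fixed not by a center-of-mass convention but by Assumptions~\ref{assum:support}--\ref{assum:nonVanishingSupport} (support in the origin-centered ball of radius $R$ with non-vanishing antipodal correlation), which is precisely what makes the boundary-limit extraction well defined, and with that substitution your proof is correct.
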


Together, Theorem~\ref{thm:informalTheorem1} and Corollary~\ref{cor:informalCorollary1} establish a structural relation between the orbit recovery problems under $\mathsf{SE}(n)$ and $\mathsf{SO}(n)$. 
They show that the moments of order $d$ for $\mathsf{SO}(n)$ can be explicitly derived from the autocorrelations of order $d+2$ for $\mathsf{SE}(n)$, thereby allowing identifiability guarantees and complexity bounds to transfer between the two models.

\paragraph{The multi-target detection model.} 
The orbit recovery problem under $\mathsf{SE}(n)$ is closely connected to the multi-target detection (MTD) model~\cite{bendory2019multi,bendory2020single}, a mathematical abstraction that captures key aspects of real-world challenges in imaging sciences, particularly in structural biology applications, such as cryo-electron microscopy (cryo-EM) and cryo-electron tomography (cryo-ET) (see Figure~\ref{fig:1}(a)). 

In the MTD model, multiple randomly rotated copies of a signal are embedded at unknown locations within a single noisy observation. 
Formally, let $f: \mathbb{R}^n \to \mathbb{R}$ denote the unknown signal, and consider $N$ transformed instances $f_i(\bm{x}) = \mathcal{R}_i \cdot f(\bm{x}) = f(\mathcal{R}_i^{-1}\bm{x})$, for $i \in \{0,1,\ldots,N-1\}$, where each $\mathcal{R}_i \in \mathsf{SO}(n)$ is drawn i.i.d. from a distribution $\rho$ on $\mathsf{SO}(n)$. 
The observation then takes the form 
\begin{align}
    y(\bm{x}) = \sum_{i=0}^{N-1} \mathcal{T}_i(\mathcal{R}_i \cdot f(\bm{x})) + \xi(\bm{x}), 
    \label{eqn:mtdMainPres}
\end{align}
where $\mathcal{T}_i$ denotes a translation operator and $\xi$ is additive Gaussian white noise with variance $\sigma^2$. 
The goal is to recover the $\mathsf{SO}(n)$-orbit of the underlying signal $f$ under the action of $\mathsf{SO}(n)$. 
A formal definition is given in Problem~\ref{prob:orbitRecoveryMTD}, and a visual illustration appears in Figure~\ref{fig:1}(d). 

Traditional reconstruction pipelines in cryo-EM are based on detecting and extracting signal occurrences from the MTD observation $y$ (defined in~\eqref{eqn:mtdMainPres}), followed by alignment and averaging the signals~\cite{bendory2020single}. However, in high-noise regimes, detection becomes unreliable, and these methods fail. Recent approaches have shifted toward detection-free reconstruction, leveraging the statistical structure of the data to recover the signal directly from MTD observations without explicitly locating signal instances~\cite{bendory2019multi, kreymer2022two, bendory2023toward, kreymer2025expectation}.

\paragraph{The sample complexity problem.}
Despite significant algorithmic progress, a fundamental theoretical question remains open: \begin{quote}
    \emph{How many signal occurrences are required to achieve a target reconstruction accuracy in the orbit recovery problem under SE(n) and the MTD model?}
\end{quote}

This work addresses this question by analyzing the sample complexity of orbit recovery in high-noise regimes, precisely where conventional detection methods fail.
To answer this question, we develop two key reductions. 
First, we reduce the orbit recovery problem under $\mathsf{SE}(n)$ to its compact counterpart over $\mathsf{SO}(n)$ by establishing an explicit algebraic relation between their invariant moments, as presented informally in Theorem~\ref{thm:informalTheorem1}. 
Second, we reduce the MTD problem to the orbit recovery problem under $\mathsf{SE}(n)$, thereby unifying the analysis of both settings. 
These reductions enable us to transfer known sample complexity results from the simpler rotational model to the more challenging scenarios involving rigid motions. 

The main statistical implication of Theorem~\ref{thm:informalTheorem1} is that the sample complexity of the orbit recovery problem in both $\mathsf{SE}(n)$ and the MTD model is tightly coupled to the sample complexity of the simpler rotation-only $\mathsf{SO}(n)$ model.
Specifically, it is lower bounded by the sample complexity of the $\mathsf{SO}(n)$ problem, and in the low-SNR regime ($\sigma \to \infty$), it is upper bounded by a factor of $\sigma^4$ relative to $\mathsf{SO}(n)$ problem. 
An informal statement is given below; formal versions appear in Theorems~\ref{thm:mainTheoremRigidmotion} and~\ref{thm:mainTheorem}.

\begin{thm}[Informal: Sample complexity of the orbit recovery problem under $\mathsf{SE}(n)$ and of the MTD model]
\label{thm:informalMainTheorem}
Assume that the signal in the orbit recovery problem under $\mathsf{SO}(n)$ is determined by $d$-th order moments of the observations. 
Then, the sample complexity of the orbit recovery problem under $\mathsf{SE}(n)$ and in the MTD model is lower bounded by $\omega\!\left(\sigma^{2d}\right)$ and upper bounded by $\omega\!\left(\sigma^{2d+4}\right)$ in the low-SNR regime, as $\sigma,N \to \infty$.
\end{thm}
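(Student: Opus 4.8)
The plan is to prove the informal sample complexity statement of Theorem~\ref{thm:informalMainTheorem} by combining the moment-extraction result of Theorem~\ref{thm:informalTheorem1} with two ingredients: (i) the standard noise-scaling heuristic for the method of moments in orbit recovery under compact groups, which says that estimating an order-$k$ invariant feature from $N$ samples at noise level $\sigma$ requires $N = \omega(\sigma^{2k})$; and (ii) the known lower bound of~\cite{balanov2025note} that the $\mathsf{SO}(n)$ sample complexity is a lower bound for the $\mathsf{SE}(n)$ (and hence MTD) sample complexity. The lower bound half is essentially immediate and I would dispatch it first: since recovering the $\mathsf{SO}(n)$-orbit is an easier problem than recovering the $\mathsf{SE}(n)$-orbit (one can view pure rotations as rigid motions with zero translation, so any $\mathsf{SE}(n)$ estimator restricts to an $\mathsf{SO}(n)$ estimator), the $\omega(\sigma^{2d})$ lower bound transfers verbatim.

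For the upper bound, I would first formalize the autocorrelation estimation step. Given $N$ observations in either the $\mathsf{SE}(n)$ model~\eqref{eqn:MRA_model} or the MTD model~\eqref{eqn:mtdMainPres}, the empirical $(d+2)$-order autocorrelations are unbiased (after the usual Gaussian debiasing that subtracts lower-order noise contributions) estimators of the population autocorrelations, and their variance is dominated, as $\sigma \to \infty$, by the pure-noise term, which scales like $\sigma^{2(d+2)}/N$. Hence taking $N = \omega(\sigma^{2d+4})$ makes the estimation error of every order-$(d+2)$ autocorrelation tend to zero. By Theorem~\ref{thm:informalTheorem1}, the order-$d$ $\mathsf{SO}(n)$ moments are obtained from these autocorrelations by an explicit (fixed, signal-independent) algebraic map; as long as this map is continuous on the relevant domain — which follows because it is a finite composition of polynomial/linear-algebraic operations, the only delicate point being the invertibility of whatever linear system is solved, already guaranteed by Theorem~\ref{thm:informalTheorem1} — the recovered moment estimates also converge. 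Finally, by the hypothesis that the order-$d$ $\mathsf{SO}(n)$ moments uniquely determine the orbit of $f$, a consistency/identifiability argument (the moment-to-orbit map has closed image and the orbit is the unique preimage, so small moment error yields small orbit error in an appropriate metric) upgrades moment convergence to orbit recovery. This gives the $\omega(\sigma^{2d+4})$ upper bound for orbit recovery under $\mathsf{SE}(n)$; the MTD case follows by the separately established reduction of MTD to $\mathsf{SE}(n)$ orbit recovery (the second reduction mentioned in the excerpt), which preserves the autocorrelation structure up to the density/occupancy factor.

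The main obstacle I expect is making the "moment error $\to$ orbit error" step rigorous and quantitative: uniqueness of the moment inversion is only a set-theoretic statement, and one needs a stability (quantitative inverse function) argument to conclude that an $\varepsilon$-accurate moment estimate yields an orbit estimate within some $\delta(\varepsilon) \to 0$. This is where one must either invoke a general semialgebraic/real-algebraic argument (the moment map is polynomial, its image is semialgebraic, and generic fibers are single orbits, so a Łojasiewicz-type inequality gives Hölder stability) or restrict to a generic stratum of signals. A secondary technical point is the precise bookkeeping of the variance of the debiased empirical autocorrelations as $\sigma \to \infty$ — in particular verifying that cross terms between signal and noise, and combinatorial factors from the number of translated copies in the MTD model, do not inflate the dominant $\sigma^{2(d+2)}/N$ scaling — but this is routine moment computation rather than a conceptual difficulty. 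I would therefore structure the proof as: (1) lower bound via reduction from~\cite{balanov2025note}; (2) variance bound for empirical $(d+2)$-order autocorrelations giving consistency at $N = \omega(\sigma^{2d+4})$; (3) continuity of the extraction map of Theorem~\ref{thm:informalTheorem1}; (4) stability of moment inversion under the identifiability hypothesis; (5) transfer to MTD via the second reduction.
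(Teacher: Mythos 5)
Your overall architecture matches the paper's: lower bound by reduction to the $\mathsf{SO}(n)$ problem via \cite{balanov2025note}; upper bound by estimating the $(d+2)$-order autocorrelations with variance $O(\sigma^{2(d+2)}/N)$ (the paper's Lemmas in Appendix~\ref{sec:proofOfsampleComplexityRigidMotion} give exactly this Chebyshev/Markov bound), invoking Theorem~\ref{thm:reductionFromAutocorrelationToTensorMoment}, and transferring to MTD through the well-separated reduction (Proposition~\ref{thm:prop0}), including the density factor $\gamma$. However, your steps (3) and (4) diverge from the paper in a way that introduces a genuine gap. You claim the extraction map of Theorem~\ref{thm:informalTheorem1} is continuous because it is ``a finite composition of polynomial/linear-algebraic operations.'' It is not: the extraction~\eqref{eqn:mainTheoremExtraction} is a singular boundary limit $\delta\to 0^{+}$ of a ratio whose numerator \emph{and} denominator both vanish as $\delta\to 0$ (the integrands are supported on a set $\mathcal{A}_\delta$ shrinking to a point, and the proof proceeds via Lebesgue differentiation). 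Plugging an $\varepsilon$-accurate empirical autocorrelation into this formula produces an error of order $\varepsilon/D_\delta$ with $D_\delta\to 0$, so one must couple $\delta$ with $N$ and control the rate — none of which your proposal addresses. Likewise, your step (4) correctly identifies that set-theoretic uniqueness does not give stability, but the \L{}ojasiewicz-type machinery you invoke is not what is needed.

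The paper circumvents both difficulties with a cleaner route: Theorem~\ref{thm:reductionFromAutocorrelationToTensorMoment} is used \emph{only at the population level}, to conclude that the $(d+2)$-order population autocorrelation identifies the orbit; the statistical argument (Proposition~\ref{thm:sampleComplexityRigidMotion}, Lemma~\ref{lemma:D3}) is then a classical consistency theorem for extremum estimators (Newey--McFadden) applied directly to the empirical autocorrelations, which requires only identifiability, uniform convergence, continuity of $f\mapsto$ autocorrelations (which genuinely \emph{is} polynomial), and \emph{compactness of the parameter space} $\Theta$ — an assumption your proposal omits but which is what resolves your ``moment error $\to$ orbit error'' worry without any quantitative inverse bound. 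If you restructure so that the extraction theorem supplies identifiability rather than an estimator, and add the compactness hypothesis, your argument closes.
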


\paragraph{Provable algorithms.} Finally, this work presents the first complete end-to-end pipeline for rigid-motion orbit recovery in both two and three dimensions, that is, for $\mathsf{SE}(2)$ and $\mathsf{SE}(3)$.
In particular, we demonstrate in the three-dimensional case the reconstruction of a molecular volume directly from its $\mathsf{SE}(3)$ autocorrelations.
These proof-of-concept experiments validate the proposed extraction framework of the $\mathsf{SO}(3)$ moments from the $\mathsf{SE}(3)$ autocorrelations and show that, under a simplified cryo-ET model (see Section~\ref{sec:applications}), accurate recovery is  possible, in principle, at any SNR, provided that a sufficiently large number of observations is available.

\paragraph{Work structure.} 
The next section formally introduces the problems addressed in this paper.
In Section~\ref{sec:reductionFromAutoCorrealtionToTensorMoment} we establish the algebraic extraction of the $d$-th order $\mathsf{SO}(n)$ moment by the order-$(d+2)$ $\mathsf{SE}(n)$ autocorrelation.
Our main statistical results are presented in Section~\ref{sec:sampleComplexityRigidMotion} for the orbit recovery problem under $\mathsf{SE}(n)$ and in Section~\ref{sec:sampleComplexityMTD} for the MTD model.
Importantly, our analysis not only establishes asymptotic bounds but also provides a concrete, explicit algorithmic framework for orbit recovery.
We provide a proof of concept by designing  provable algorithms, as demonstrated in Section~\ref{sec:empirical}, which also presents empirical validation of the underlying algebraic and statistical theory.

\section{Problem statements} \label{sec:problemsStatments}
In this section, we formally introduce the three fundamental problems studied in this work, illustrated in Figure~\ref{fig:1}(b-d). 
We begin with the well-studied case of orbit recovery under the action of the special orthogonal group $G=\mathsf{SO}(n)$ (Problem~\ref{prob:orbitRecoverySOn}), which serves as a baseline for our analysis. 
Building on this foundation, we then consider two related settings: orbit recovery under the rigid motion group (Problem~\ref{prob:orbitRecoverySEn}), and the MTD model (Problem~\ref{prob:orbitRecoveryMTD}). 
These problems are related but not ordered by strict generality: orbit recovery under $\mathsf{SO}(n)$ treats rotations only; orbit recovery under $\mathsf{SE}(n)$ augments the model with translations; and the MTD model addresses detection with multiple rotated copies placed at unknown locations within a single noisy observation. 
Under the separation assumptions specified later, the MTD analysis reduces to the orbit recovery problem under $\mathsf{SE}(n)$, while $\mathsf{SO}(n)$ provides a compact group baseline for moment and sample-complexity comparisons.

\paragraph{Notation.}
We begin with some preliminary definitions and notational conventions that will be used throughout the paper.
Let $\mathcal{B}_R^{(n)}(\bm{x})$ denote the $n$-dimensional Euclidean ball of radius $R$ centered at $\bm{x} \in \mathbb{R}^n$; when centered at the origin, we write $\mathcal{B}_R^{(n)} \triangleq  \mathcal{B}_R^{(n)}(\bm{0})$. The volume of this ball is denoted by $V_n(R)$. The $(n-1)$-dimensional sphere of radius $R$ is denoted by $S^{n-1}(R)$.
We use the notation $\xrightarrow[]{\text{a.s.}}$ to denote almost sure convergence of a sequence of random variables. The index set $\{0, 1, \dots, N-1\}$ is abbreviated as $\pp{N}$. The Frobenius norm is denoted by $\|\cdot\|_F$, and the $L^2$ norm by $\|\cdot\|_{L^2}$. The space $C^0(\mathcal{D})$ denotes the set of continuous functions on a compact domain $\mathcal{D} \subset \mathbb{R}^n$.
Finally, we use $\mathcal{R} \in \mathsf{SO}(n)$ to denote elements of the special orthogonal group (i.e., rotations), $\mathcal{T}$ to denote translation operators, and $g \in \mathsf{SE}(n)$ to denote elements of the rigid motion group (the special Euclidean group).
Throughout this work, we assume the unknown signal $f$ is supported on a ball of radius $R$ (i.e., the closed ball $B_R^{(n)}\subset\mathbb{R}^n$).

\subsection{Orbit recovery under the special orthogonal group SO(n)}
\begin{problem} [Orbit recovery for functions under the special orthogonal group] \label{prob:orbitRecoverySOn}
    Let $V$ be a $p$-dimensional representation of $G=\mathsf{SO}(n)$, where $V\subset L^2(\mathcal{B}_R^{(n)})$ is $G$-invariant and $\dim V=p$. Let $f \in V$ be the unknown signal. For each $i \in \pp{N}$, we observe
    \begin{align}
        y_i = \mathcal{R}_i \cdot f + \xi_i,
    \end{align}
    where the group action is defined by ${\mathcal{R}_i \cdot f (\bm{x}) = f (\mathcal{R}_i^{-1}\bm{x})}$. The rotations $\mathcal{R}_i \in \mathsf{SO}(n)$ are drawn independently from a distribution $\mathcal{R}_i \sim \rho$ and the noise terms $\xi_i \sim \mathcal{N}(0, \sigma^2 I_{p})$ are i.i.d. Gaussian. The sequences $\mathcal{R}_i$ and $\xi_i$ are mutually independent. The goal is to estimate the orbit of $f$ under $\mathsf{SO}(n)$. 
\end{problem} 
Problem~\ref{prob:orbitRecoverySOn} has been extensively studied; see, for example,~\cite{bandeira2023estimation}. A key result for orbit recovery problems over general compact groups, including $\mathsf{SO}(n)$, is that both the mean-squared error (MSE) and the sample complexity in the high-noise regime $\sigma\to\infty$ are governed by the minimal moment that uniquely identifies the orbit of the signal, as defined in Definition~\ref{def:mraTensorMoment}.

\subsection{Orbit recovery under the rigid motion group SE(n)} 

We start with the definition of the special Euclidean group.
\begin{definition} [The Special Euclidean group $\mathsf{SE}(n)$]
The Special Euclidean group $\mathsf{SE}(n)$ is the group of orientation-preserving isometries of $\mathbb{R}^n$, that is, the group of rigid motions that preserve both distances and orientation. 
\end{definition}

It is well-known that $\SE(n)$ decomposes as a semi-direct product $\mathsf{SE}(n) = \mathsf{SO}(n) \ltimes \mathbb{R}^n$. 
This means that as a manifold, the locally compact Lie group $\SE(n)$ can be identified with $\SO(n) \times \R^n$ 
because every element can be uniquely expressed as a pair $(\mathcal{R}, \bm{t})$, where $\mathcal{R} \in \mathsf{SO}(n)$ is a rotation and $\bm{t} \in \mathbb{R}^n$ is a translation. 
The group  operation is given by $(\mathcal{R}_1, \bm{t}_1)\cdot(\mathcal{R}_2, \bm{t}_2) = (\mathcal{R}_1 \mathcal{R}_2, \mathcal{R}_1 \bm{t}_2 + \bm{t}_1)$.

The main challenge posed by translations in the orbit recovery problem under $\mathsf{SE}(n)$ is that the signal is no longer observed at a fixed location but over randomly shifted regions of space. Because translations act non-compactly on $\mathbb{R}^n$, it is necessary to ensure that both the transformed signals and the additive noise are well defined and have finite energy. To this end, we restrict all observations to a fixed compact domain $\mathcal{D} \subset \mathbb{R}^n$ that contains every possible translated and rotated copy of the signal. This domain $\mathcal{D}$ serves a dual purpose: it bounds the support of the translated signals and guarantees that the energy of the white-noise process remains finite. 
With this setup, we can formally define the orbit recovery problem under the rigid motion group.

\begin{problem}[Orbit recovery under the rigid motion group] \label{prob:orbitRecoverySEn}
Let $V\subset L^2(\mathcal{B}_R^{(n)})$ be as in Problem~\ref{prob:orbitRecoverySOn}, and let $f \in V$ be the unknown signal, viewed as a function on $\mathbb{R}^n$ by zero extension outside $\mathcal{B}_R^{(n)}$ (that is, $f(\bm{x}) = 0$ for $\|\bm{x}\|>R$).
For each $i \in \pp{N}$, we observe a function $y_i : \mathcal{D} \to \mathbb{R}$, defined on a compact support $\mathcal{D}$, and given by
\begin{align}
    y_i = g_i \cdot f + \xi_i, \label{eqn:modelSEnProb}
\end{align}
where $g_i = (\mathcal{R}_i, \bm{t}_i) \in \mathsf{SE}(n)$ acts on $f$ via
\begin{align}
    (g_i \cdot f)(\bm{x}) = f\big(\mathcal{R}_i^{-1}(\bm{x} - \bm{t}_i)\big), \quad \bm{x} \in \mathcal{D}. \label{eqn:groupActionRigidMotion}
\end{align}
The elements $\{g_i\}_{i=0}^{N-1}$ are i.i.d. samples drawn from a distribution $\mu_{\mathsf{SE}(n)}$ on $\mathsf{SE}(n)$,  the noise terms $\{\xi_i\}_{i=0}^{N-1}$ are i.i.d. Gaussian white noise processes with variance $\sigma^2$, satisfying
\begin{align}
    \mathbb{E}[\xi_i(\bm{x}) \xi_i(\bm{x}')] = \sigma^2 \delta(\bm{x} - \bm{x}'), \quad \bm{x}, \bm{x}' \in \mathcal{D},
\end{align}
where $\delta$ denotes the Dirac delta function. The sequences $\{g_i\}$ and $\{\xi_i\}$ are mutually independent.
The goal is to estimate the \em{$\SO(n)$} orbit of $f$ under the group action of $\mathsf{SE}(n)$. 
\end{problem}

\begin{remark}
Note that although the model in Problem~\ref{prob:orbitRecoverySEn} operates under the rigid-motion group $\mathsf{SE}(n)$ (rotations and translations), the reconstruction is determined only up to a global rotation. This follows from our assumption on the signal's support, as explained later.
\end{remark}

\subsection{Orbit recovery of the multi-target detection (MTD) problem}
The orbit recovery problem under $\mathsf{SE}(n)$ has a direct connection to the MTD model, which has numerous applications in cryo-electron microscopy imaging. In the MTD framework, multiple signals $\{f_i\}_{i=0}^{N-1}$, each obtained by applying a random rotation to an underlying signal $f$, are embedded at unknown locations within a large, noisy observation $y$~\cite{bendory2019multi}. 

\begin{problem}[Orbit recovery for the multi-target detection model] \label{prob:orbitRecoveryMTD}
Let $V \subset L^2(\mathcal{B}_R^{(n)})$ be as in Problem~\ref{prob:orbitRecoverySOn}, and let $f \in V$ be the unknown signal. Let $G = \mathsf{SO}(n)$ denote the special orthogonal group acting on $\mathbb{R}^n$.
We observe a single MTD measurement $y : [-MR, MR]^n \to \mathbb{R}$ given by
\begin{align}
    y = \sum_{i=0}^{N-1} \mathcal{T}_i(\mathcal{R}_i \cdot f) + \xi,
\end{align}
where $N \in \mathbb{N}$ is the number of signal occurrences; the rotations $\{ \mathcal{R}_i \}_{i=0}^{N-1}$ are i.i.d. samples from a distribution $\rho$ on $\mathsf{SO}(n)$; the translation operators $\{ \mathcal{T}_i \}_{i=0}^{N-1}$ are deterministic but unknown and bounded to the support within the observation domain $[-MR, MR]^n$; and ${\xi}$ is Gaussian white noise process on $[-MR, MR]^n$, that is, a zero-mean Gaussian process with covariance
\begin{align}
    \mathbb{E}[\xi(\bm{x}) \xi(\bm{x}')] = \sigma^2 \delta(\bm{x} - \bm{x}'), \quad \bm{x}, \bm{x}' \in [-MR, MR]^n. \label{eqn:covarianceMatrixMTD}
\end{align}
The sequence $\{\mathcal{R}_i \}_{i \in \pp{N}}$ is independent on $\xi$. The goal is to estimate the orbit of $f$ under the group action of $\mathsf{SO}(n)$. 
\end{problem}

\subsection{Assumptions on the signal} 
\label{sec:assumptionsOnSignal}
Let $f\in \mathbb{R}^n \to \mathbb{R}$ be the signal to be estimated. In spherical coordinates, the signal $f$ can be written as $f(r, \varphi_1, \dots, \varphi_{n-1})$, where the angular coordinates satisfy $\varphi_i \in [0, \pi)$ for $1 \leq i \leq n-2$, and $\varphi_{n-1} \in [0, 2\pi)$. For notational convenience, we denote this as $f(r, \bm{\varphi})$, where $\bm{\varphi} = (\varphi_1, \dots, \varphi_{n-1})$. We define the antipodal point of $(R, \bm{\varphi})$ on the sphere ${S}^{n-1}$ as:
\begin{align}
  (-R,\bm{\varphi}) = (R, \pi - \varphi_1, \pi - \varphi_2, \dots, \pi - \varphi_{n-2}, 2\pi - \varphi_{n-1}).
\end{align}
Then, we have the following assumptions on the signal $f$. 

\begin{assum}[Compact support] \label{assum:support}
    The signal $f \in C^0(\mathcal{B}_R^{(n)})$ is assumed to be continuous and supported on the ball of radius $R$ centered at the origin. In spherical coordinates, this means that $f(r, \varphi_1, \dots, \varphi_{n-1}) = 0$ for all $r > R$.
\end{assum}

\begin{assum}[Non-vanishing antipodal correlation] \label{assum:nonVanishingSupport}
    The following antipodal correlation at the boundary of the support is non-zero:
    \begin{align}
        C_{f} \triangleq \int_{S^{n-1}} f(R, \bm{\varphi}) \, f(-R, \bm{\varphi}) \, d\bm{\varphi} \neq 0, \label{eqn:antipodalProdSum}
    \end{align}
    where $S^{n-1}$ is the $(n-1)$-dimensional unit sphere, and $d\bm{\varphi}$ is its standard surface area element, given in spherical coordinates by
    \begin{align}    
        d\bm{\varphi} = \sin^{n-2}(\varphi_1) \sin^{n-3}(\varphi_2) \cdots \sin(\varphi_{n-2}) \, d\varphi_1 \, d\varphi_2 \cdots d\varphi_{n-1}.
    \end{align}
    We refer to the quantity $C_{f}$ as the \emph{antipodal correlation of $f$}.
\end{assum}

\begin{lem}[Rotational invariance of antipodal correlation under group action] \label{lem:invarianceOfAntipodalProdSum} 
    The antipodal correlation $C_{f}$, defined in~\eqref{eqn:antipodalProdSum}, is invariant under the action of the rotation group $G = \mathsf{SO}(n)$. That is, for all $\mathcal{R} \in \mathsf{SO}(n)$, we have
    \begin{align}
        C_{\mathcal{R} \cdot f} = C_{f}.
    \end{align}
\end{lem}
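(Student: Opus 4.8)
The plan is to show that the map $\mathcal{R}\colon S^{n-1}\to S^{n-1}$ acts as an orthogonal transformation that commutes with the antipodal map and preserves the surface measure $d\bm{\varphi}$, so that the integral defining $C_{\mathcal{R}\cdot f}$ is just the integral defining $C_f$ after a measure-preserving change of variables. First I would record the key observation that the group action $(\mathcal{R}\cdot f)(\bm{x})=f(\mathcal{R}^{-1}\bm{x})$ restricted to the sphere of radius $R$ simply reparametrizes the angular variable: if we write $\bm{x}=R\bm\omega$ with $\bm\omega\in S^{n-1}$, then $(\mathcal{R}\cdot f)(R\bm\omega)=f(R\,\mathcal{R}^{-1}\bm\omega)$, i.e. in terms of the spherical-coordinate notation, $(\mathcal{R}\cdot f)(R,\bm\varphi)=f(R,\bm\psi)$ where $\bm\psi$ is the angular coordinate of $\mathcal{R}^{-1}\bm\omega$.

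Next I would treat the antipodal point. The definition $(-R,\bm\varphi)$ in spherical coordinates is precisely the coordinate representation of $-R\bm\omega$, i.e. the Euclidean antipode $\bm{x}\mapsto-\bm{x}$ on the sphere $S^{n-1}(R)$. Since $\mathcal{R}^{-1}$ is linear, $\mathcal{R}^{-1}(-\bm\omega)=-\mathcal{R}^{-1}(\bm\omega)$, so the angular coordinate of $\mathcal{R}^{-1}(-\bm\omega)$ is the antipode of $\bm\psi$. Hence $(\mathcal{R}\cdot f)(-R,\bm\varphi)=f(-R,\bm\psi)$ with the same $\bm\psi$ as above. Therefore the integrand of $C_{\mathcal{R}\cdot f}$ at angular coordinate $\bm\varphi$ equals $f(R,\bm\psi)f(-R,\bm\psi)$, the integrand of $C_f$ evaluated at $\bm\psi$.

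Then I would invoke rotation-invariance of the uniform surface measure on $S^{n-1}$: the map $\bm\omega\mapsto\mathcal{R}^{-1}\bm\omega$ is an isometry of the sphere, so it pushes the measure $d\bm\varphi$ forward to itself. Changing variables $\bm\varphi\mapsto\bm\psi$ in the integral
\begin{align}
C_{\mathcal{R}\cdot f}=\int_{S^{n-1}} f(R,\bm\psi)\,f(-R,\bm\psi)\,d\bm\varphi
=\int_{S^{n-1}} f(R,\bm\varphi)\,f(-R,\bm\varphi)\,d\bm\varphi = C_f,
\end{align}
completes the argument. (A small technical point: $C_f$ is only well defined up to the $R$-boundary trace of $f$, but since $f$ is continuous and compactly supported in $\overline{\mathcal{B}_R^{(n)}}$, the boundary values $f(R,\bm\varphi)$ are well defined as the restriction of the continuous function to $S^{n-1}(R)$, and likewise for $\mathcal{R}\cdot f$.)

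The only mild obstacle is bookkeeping: one must confirm that the explicit antipodal formula $(-R,\bm\varphi)=(R,\pi-\varphi_1,\dots,\pi-\varphi_{n-2},2\pi-\varphi_{n-1})$ really is the spherical-coordinate expression of the Euclidean map $\bm\omega\mapsto-\bm\omega$, and that the stated $d\bm\varphi$ is genuinely the $\mathsf{SO}(n)$-invariant surface measure; both are standard, so no real difficulty arises. The essential content is simply that $C_f$ is built from $\mathsf{SO}(n)$-equivariant ingredients (the radial trace, the antipodal involution, and the invariant measure), hence is $\mathsf{SO}(n)$-invariant.
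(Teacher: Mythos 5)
Your proof is correct and follows essentially the same route as the paper's: a measure-preserving change of variables under the rotation, combined with the fact that the antipodal involution commutes with linear maps. The paper merely streamlines the bookkeeping by first rewriting $C_f$ in Cartesian form as $\int_{S^{n-1}(R)} f(\bm{x})f(-\bm{x})\,d\bm{x}$, which makes the equivariance of the antipode and the invariance of the surface measure immediate, whereas you verify the same facts directly in spherical coordinates.
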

Lemma~\ref{lem:invarianceOfAntipodalProdSum} is proved in Appendix~\ref{sec:proofOfinvarianceOfAntipodalProdSum}. See Figure~\ref{fig:2}(b) for an illustration of this property.

\begin{figure*}[!t]
    \centering
    \includegraphics[width=0.4 \linewidth]{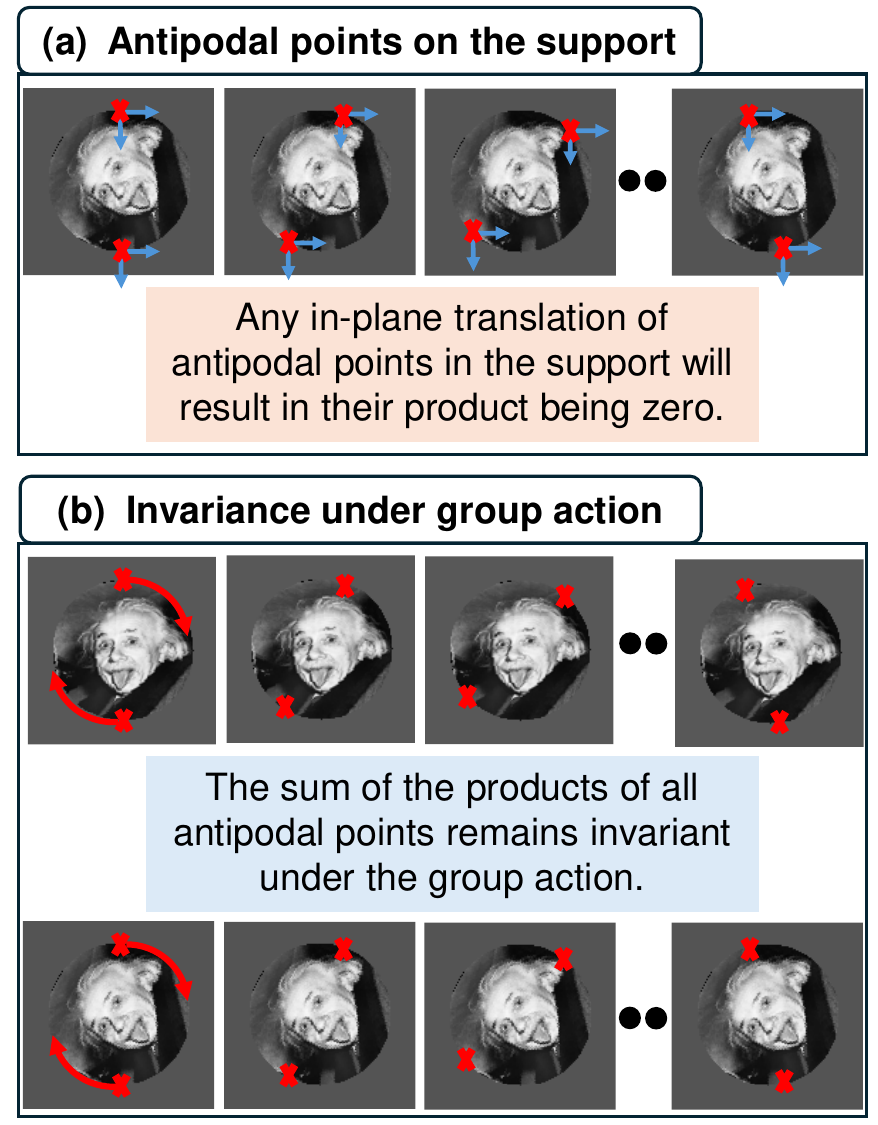}
    \caption{\textbf{Properties of antipodal points.}  
    \textbf{(a)} Applying the same translation to both antipodal points in the support of the signal results in their product being zero, since at least one of the translated antipodal points lies outside the support of the signal $f$.  
    \textbf{(b)} The integral of the product of antipodal points over the support of the signal is invariant under the group action (see Lemma~\ref{lem:invarianceOfAntipodalProdSum}).}
    \label{fig:2}
\end{figure*}

\subsection{Prior work on orbit recovery and synchronization over SE(n)} 
To our best knowledge, the orbit recovery problem under $\mathsf{SE}(n)$ has not been analyzed before. Yet, a substantial literature addresses estimation under rigid motions, but with objectives and statistical models that differ from the orbit recovery setting studied here.

\paragraph{Compactification and invariant construction.}
The authors of~\cite{bendory2022compactification} introduced a compactification of $\mathsf{SE}(2)$ by mapping planar images to a compact domain on which $\mathsf{SO}(3)$-based invariants can be computed.
They prove that the projection is approximately information preserving in the sense that rigid motions map to neighborhoods of rotations on the sphere,  and realize practical approximate $\mathsf{SE}(2)$ invariants via the spherical bispectrum. 
These results provide an effective pipeline for feature construction under rigid motions, but they do not yield sharp sample-complexity guarantees in the high-noise regime, which is the focus of the present work.

\paragraph{Synchronization over $\mathsf{SE}(n)$.}
Synchronization is the problem of estimating absolute poses $\{g_i\in \mathsf{SE}(n)\}$ from noisy pairwise relative ratios $g_{ij}\approx g_i g_j^{-1}$~\cite{singer2011angular}. 
State-of-the-art methods in $\mathsf{SE}(3)$ leverage algebraic and spectral techniques, such as dual-quaternion eigenvector formulations~\cite{hadi2024se}, spectral synchronization~\cite{arrigoni2016spectral}, and robust low-rank~\cite{arrigoni2018robust}, and, more broadly, contraction methods for Cartan motion groups (including $\mathsf{SE}(n)$) that reduce to compact surrogates~\cite{ozyesil2018synchronization}. 
These pipelines are powerful at moderate noise levels, yet they sharply degrade at low SNR: pairwise estimates cease to concentrate, spectral gaps collapse, and robustness guarantees require conditions not met in the high-noise regime. 
Accordingly, these techniques fail to produce accurate estimates in the low-SNR regime.

\paragraph{Our contribution relative to prior art.}
We study orbit recovery rather than synchronization (pairwise relations among many poses), and we provide \emph{high-noise} theory. 
Our main algebraic novelty is an explicit extraction of the $d$-th order $\mathsf{SO}(n)$ moment from the $(d{+}2)$-order $\mathsf{SE}(n)$ autocorrelation, which in turn yields upper bounds on the sample complexity of the orbit recovery under $\mathsf{SE}(n)$ in the low-SNR regime, with downstream implications for MTD and structural biology. 
This fills the gap left by compactification-based invariants and synchronization methods, neither of which supplies sharp, high-noise sample-complexity characterizations.

\section{Extraction of SO(n) moments from SE(n)  autocorrelations} 
\label{sec:reductionFromAutoCorrealtionToTensorMoment}
This section introduces the autocorrelations of the rigid motion group $\mathsf{SE}(n)$ and their connection to the moments of the rotation group $\mathsf{SO}(n)$. 

\subsection{Moments of the special orthogonal group SO(n)}

First, we introduce the invariant features of the special orthogonal group $\mathsf{SO}(n)$.
These features are characterized by the moments of $\mathsf{SO}(n)$, which we define below.

\begin{definition} [Population moments under $\mathsf{SO}(n)$]
\label{def:mraTensorMoment}
    Let $f: \mathcal{B}_R^{(n)} \to \mathbb{R}$, be defined as in Problem \ref{prob:orbitRecoverySOn}.
    The $d$-th order moment of $f$ with respect to the distribution $\rho$ on $\SO(n)$ is defined by
    \begin{align}
        M_{f, \rho}^{(d)}(\bm{\eta}_1, \ldots, \ldots \bmeta_d) \ \triangleq \mathbb{E}_{\mathcal{R} \sim \rho} \left[\prod_{j=1}^{d}f_{\mathcal{R}}(\bmeta_j)\right]
        = \int_{\SO(n)} \prod_{j=0}^{d-1}f_{\mathcal{R}} (\bmeta_j) \, d \rho(\mathcal{R}),
    \label{eqn:sphericalCordinatedTensorMoment}
    \end{align}
where $\bmeta_1, \ldots, \bmeta_d$ lie in the ball of radius $R$.
\end{definition}
Figure \ref{fig:3}(a) illustrates the third-order moment of $\mathsf{SO}(n)$ for $n=2$. It is straightforward to observe that the moment $M_{f, \rho}^{(d)}$ remains invariant under the group action $\mathsf{SO}(n)$, for a Haar-measure of $\rho$ on $\mathsf{SO}(n)$. Many studies have explored the unique recovery of $f$ from its moments \eqref{def:mraTensorMoment}; see, for example, \cite{bandeira2023estimation}.

\subsection{Autocorrelations of the rigid motion group SE(n)} \label{sec:autoCorrealtionInvariant}
Here, we introduce a set of functions, the autocorrelations, which are invariant under rigid group actions, as will be proven later. We then discuss the conditions under which these functions uniquely determine the orbit of $f$.

\begin{figure}
    \centering
    \includegraphics[width=0.85 \linewidth]{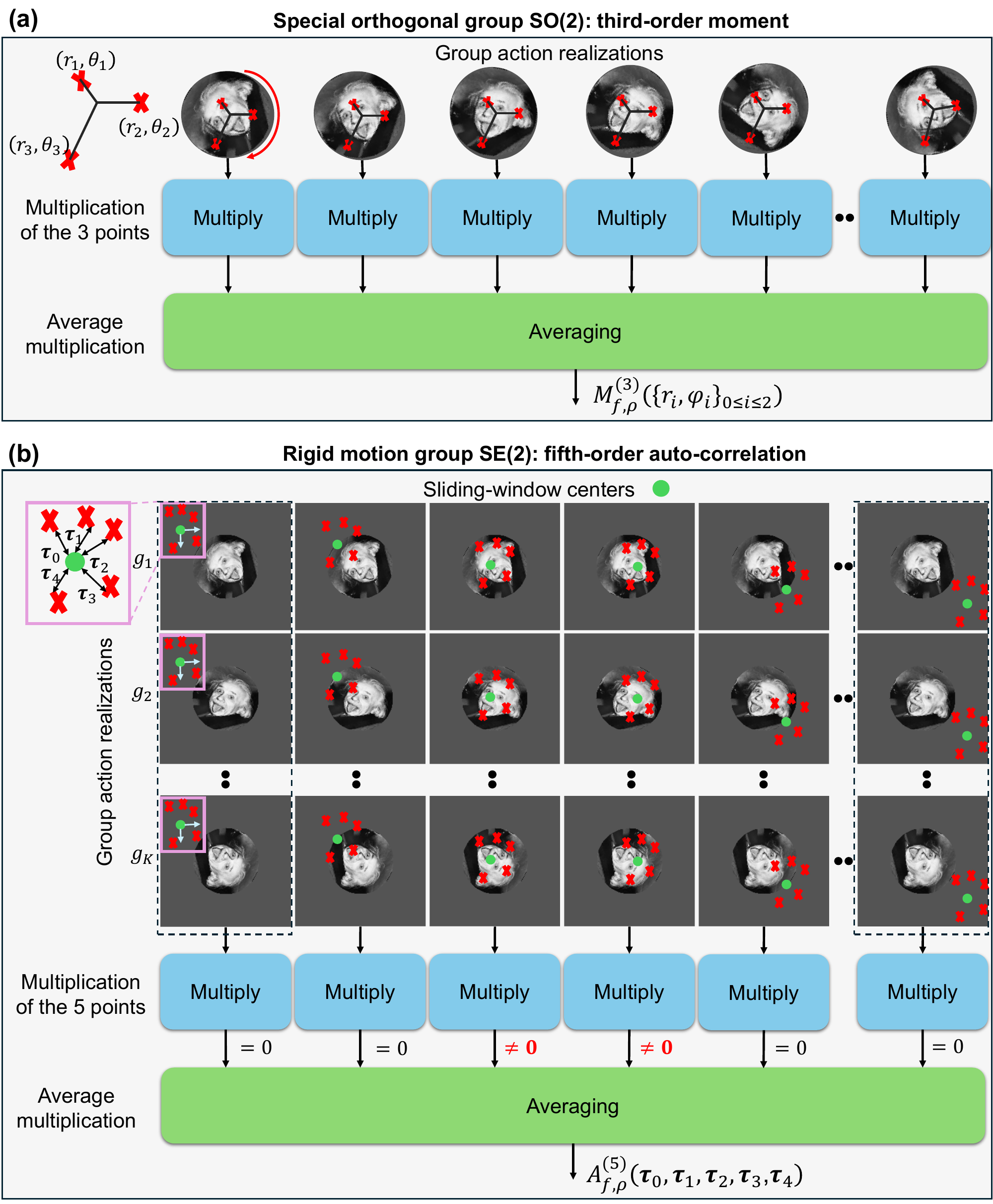}
    \caption{\textbf{Comparing $\mathsf{SO}(2)$ moments and $\mathsf{SE}(2)$ autocorrelations.}
    \textbf{(a) $\mathsf{SO}(2)$ moments:} fix $d=3$ locations in the image and average the product of their intensities over all rotations of the underlying signal (e.g., the Einstein image).
    \textbf{(b) $\mathsf{SE}(2)$ autocorrelation:} Choose five fixed relative offsets $\bm{\tau}_0,\ldots,\bm{\tau}_4$ around a window center. 
    Each row shows a different rotation of the signal. 
    Each column corresponds to a different sliding-window center (green dot). For every (row, column) pair, sample the five locations at the red offsets $\{\bm{x}_{\text{center}}+\bm{\tau}_j\}_{j=0}^4$ in the transformed image and multiply their intensities. Some products are zero when one or more samples fall outside the object, while others are nonzero when the pattern overlaps the object. Averaging these per-window products across all centers and rigid–motion realizations yields $A_f^{(5)}(\bm{\tau}_0,\bm{\tau}_1,\bm{\tau}_2,\bm{\tau}_3,\bm{\tau}_4)$.}    
    \label{fig:3}
\end{figure}

\begin{definition}[Population autocorrelations under $\mathsf{SE}(n)$]
\label{def:autoCorrelationNoiseFree}
Let $f : \mathbb{R}^n \to \mathbb{R}$ be a continuous function supported on a compact subset of $\mathbb{R}^n$. Suppose that $f$ is acted upon by elements of the rigid motion group $g = (\mathcal{R}, \bm{t}) \in \mathsf{SE}(n)$, drawn from a
distribution $\mu_{\mathsf{SE}(n)}$.  
The \emph{$d$-th order autocorrelation} of $f$ under $\mathsf{SE}(n)$ is defined as: 
\begin{align}
     A_{f}^{(d)}(\bm{\tau}_0, \ldots, \bm{\tau}_{d-1}) 
    &= \int_{\mathbb{R}^n} \mathbb{E}_{g \sim \mu_{\mathsf{SE}(n)}} \left[ \prod_{j=0}^{d-1} f_g(\bm{x} + \bm{\tau}_j) \right] d\bm{x} \nonumber\\
    &= \int_{\mathbb{R}^n} \int_{\mathsf{SE}(n)} \prod_{j=0}^{d-1} f_g(\bm{x} + \bm{\tau}_j) \, d\mu_{\mathsf{SE}(n)}(g) \, d\bm{x},     \label{eqn:autoCorrelationNoiseFreeProduct}
\end{align}
where the group action is defined by $\left((\mathcal{R}, \bm{t}) \cdot f\right)(\bm{x}) = f(\mathcal{R}^{-1}(\bm{x} - \bm{t}))$, and $\bm{\tau}_0, \ldots, \bm{\tau}_{d-1} \in \mathbb{R}^n$.
\end{definition}

In what follows, we adopt a mild assumption on the distribution of rigid motions, namely, that rotations and translations are independent (a condition met in many imaging setups). This factorization simplifies the autocorrelation formula, as stated in Lemma~\ref{lem:autocorrelationinvariant} and proved in Appendix~\ref{sec:proofOfInvariance}.

\begin{assum}[Assumptions on the distribution] \label{assum:rotation_translation_indep}
The group $\SE(n)$ is isomorphic as a manifold to the product $\SO(n) \times \R^n$.
In order to relate $\SE(n)$ autocorrelations to $\SO(n)$ moments, we assume that the probability distribution $\mu_{\SE(n)}$ is given by a product measure $\mu_{\R^n} \times \rho_{\SO(n)}$, where $\mu_{\R^n}$ is a distribution on $\R^n$ and $\rho_{\SO(n)}$ is a distribution on $\SO(n)$. Essentially, this means that the distribution of translations is independent of the distribution of rotations. When we relate between $\mathsf{SO}(n)$ moments to $\mathsf{SE}(n)$ autocorrelations, we will assume that the distribution on $\SO(n)$ is also $\rho_{\SO(n)}$. 
\end{assum}

\begin{lem} \label{lem:autocorrelationinvariant}
Under Assumption~\ref{assum:rotation_translation_indep}, the autocorrelation defined by~\eqref{eqn:autoCorrelationNoiseFreeProduct} simplifies to the following expression:
\begin{align}
    A_{f, \rho}^{(d)}(\bm{\tau}_0, \ldots, \bm{\tau}_{d-1}) 
    =\int_{\mathbb{R}^n} \int_{\mathcal{R} \in \mathsf{SO}(n)} 
    \prod_{j=0}^{d-1} f_{\mathcal{R}}(\bm{x} + \bm{\tau}_j) \, \, d\rho(\mathcal{R}) \, d\bm{x},
    \label{eqn:autoCorrelationNoiseFreeProduct2}
\end{align}
where $f_{\mathcal{R}}(\bm{x}) = f(\mathcal{R}^{-1}\bm{x})$. 
\end{lem}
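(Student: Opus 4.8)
The plan is to substitute Assumption~\ref{assum:rotation_translation_indep} directly into Definition~\ref{def:autoCorrelationNoiseFree} and then exploit the translation invariance of the Lebesgue measure on $\mathbb{R}^n$ together with the fact that $\mu_{\mathbb{R}^n}$ is a probability measure, so that the translation variable integrates out to $1$. Concretely, writing $g=(\mathcal{R},\bm{t})$ and using $f_g(\bm{x})=f(\mathcal{R}^{-1}(\bm{x}-\bm{t}))$, the factorization $d\mu_{\mathsf{SE}(n)}(g)=d\mu_{\mathbb{R}^n}(\bm{t})\,d\rho(\mathcal{R})$ turns \eqref{eqn:autoCorrelationNoiseFreeProduct} into a triple integral, over $\bm{x}\in\mathbb{R}^n$, $\mathcal{R}\in\mathsf{SO}(n)$, and $\bm{t}\in\mathbb{R}^n$, of $\prod_{j=0}^{d-1} f\big(\mathcal{R}^{-1}(\bm{x}+\bm{\tau}_j-\bm{t})\big)$.

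Next I would justify an application of the Fubini--Tonelli theorem to reorder the integrals. This is where care is needed but no real difficulty arises: since $f\in C^0(\mathcal{B}_R^{(n)})$ is continuous with compact support it is bounded, say $\|f\|_\infty<\infty$, and for fixed $\mathcal{R},\bm{t}$ the single factor $\bm{x}\mapsto f(\mathcal{R}^{-1}(\bm{x}+\bm{\tau}_0-\bm{t}))$ is supported on $\mathcal{R}\,\mathcal{B}_R^{(n)}+\bm{t}-\bm{\tau}_0=\mathcal{B}_R^{(n)}(\bm{t}-\bm{\tau}_0)$, so the whole product is supported in the ball $\mathcal{B}_{R'}^{(n)}(\bm{t})$ with $R'=R+\max_j\norm{\bm{\tau}_j}$. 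Hence the integrand is dominated by $\|f\|_\infty^{\,d}\,\Ind_{\mathcal{B}_{R'}^{(n)}(\bm{t})}(\bm{x})$, whose triple integral against Lebesgue measure in $\bm{x}$, the probability measure $\mu_{\mathbb{R}^n}$ in $\bm{t}$, and the probability measure $\rho$ in $\mathcal{R}$ equals $\|f\|_\infty^{\,d}\,V_n(R')<\infty$. Thus Fubini applies and we may bring the $\bm{t}$-integral to the inside.

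For each fixed $\mathcal{R}$ and $\bm{t}$, I would perform the change of variables $\bm{x}\mapsto \bm{x}+\bm{t}$ in the inner Lebesgue integral; by translation invariance of Lebesgue measure this yields $\int_{\mathbb{R}^n}\prod_{j=0}^{d-1} f\big(\mathcal{R}^{-1}(\bm{x}+\bm{\tau}_j)\big)\,d\bm{x}$, which no longer depends on $\bm{t}$ (and whose $\bm{x}$-support is now the fixed bounded set $\bigcap_j \mathcal{B}_R^{(n)}(-\bm{\tau}_j)\subseteq\mathcal{B}_{R'}^{(n)}$, independent of $\bm{t}$, re-confirming integrability). Integrating this constant-in-$\bm{t}$ quantity against $\mu_{\mathbb{R}^n}$ multiplies it by $\mu_{\mathbb{R}^n}(\mathbb{R}^n)=1$, and reordering the remaining integrals over $\bm{x}$ and $\mathcal{R}$ produces exactly \eqref{eqn:autoCorrelationNoiseFreeProduct2} with $f_{\mathcal{R}}(\bm{x})=f(\mathcal{R}^{-1}\bm{x})$.

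The only genuine obstacle is the bookkeeping of supports needed to license Fubini and the change of variables simultaneously; once the uniform boundedness of the relevant $\bm{x}$-support is recorded, the argument reduces to a single substitution. I would also emphasize, as the conceptual takeaway, that it is precisely the normalization $\mu_{\mathbb{R}^n}(\mathbb{R}^n)=1$ that makes the translation distribution disappear from the formula, so that $\mathsf{SE}(n)$ autocorrelations depend on the rigid-motion law only through its rotational marginal $\rho$.
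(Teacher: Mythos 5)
Your proof is correct and follows essentially the same route as the paper's own argument in Appendix~\ref{sec:proofOfInvariance}: factor the measure via Assumption~\ref{assum:rotation_translation_indep}, translate the spatial variable to eliminate $\bm{t}$ from the integrand, and integrate out the translation using $\mu_{\R^n}(\R^n)=1$. The only difference is that you make explicit the Fubini/support bookkeeping that the paper leaves implicit, which is a welcome addition but not a change of method.
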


For any rotation distribution $\rho$ on $\mathsf{SO}(n)$, the $d$-th order autocorrelation is invariant under the simultaneous action of $\mathsf{SE}(n)$ on the signal and on the distribution. In particular, when the rotational distribution $\rho$ is Haar (uniform),  the autocorrelation is invariant under the full action of the rigid motion group $\mathsf{SE}(n)$, as established in the next proposition. 
The proof is given in Appendix~\ref{sec:proofOfinvarianceOfAutoCorrelation}.

\begin{proposition}[Invariance of the autocorrelation under the rigid motion group] 
\label{prop:invarianceOfAutoCorrelation} 
The $d$-th order autocorrelation functions ${A}^{(d)}_{f, \rho}$, as defined in Definition~\ref{def:autoCorrelationNoiseFree}, are invariant under the action of the rigid motion group $\mathsf{SE}(n)$, provided that the rotational distribution $\rho$ is a Haar measure on $\mathsf{SO}(n)$.
\end{proposition}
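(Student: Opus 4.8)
The plan is to show that for any fixed $h=(\mathcal{Q},\bm{s})\in\mathsf{SE}(n)$, the autocorrelation of $h\cdot f$ equals that of $f$, when the rotational distribution is Haar. By Lemma~\ref{lem:autocorrelationinvariant}, it suffices to work with the simplified form~\eqref{eqn:autoCorrelationNoiseFreeProduct2}, so the quantity to analyze is
\begin{align}
A_{h\cdot f,\rho}^{(d)}(\bm{\tau}_0,\ldots,\bm{\tau}_{d-1})
=\int_{\mathbb{R}^n}\int_{\mathsf{SO}(n)}\prod_{j=0}^{d-1}(h\cdot f)_{\mathcal{R}}(\bm{x}+\bm{\tau}_j)\,d\rho(\mathcal{R})\,d\bm{x}.
\end{align}
First I would unfold the composition: since $(h\cdot f)(\bm{y})=f(\mathcal{Q}^{-1}(\bm{y}-\bm{s}))$, one gets $(h\cdot f)_{\mathcal{R}}(\bm{y})=f\big(\mathcal{Q}^{-1}(\mathcal{R}^{-1}\bm{y}-\bm{s})\big)$, which I would rewrite as $f\big((\mathcal{R}\mathcal{Q})^{-1}(\bm{y}-\mathcal{R}\bm{s})\big)$ using $\mathcal{Q}^{-1}\mathcal{R}^{-1}=(\mathcal{R}\mathcal{Q})^{-1}$ and factoring out $\mathcal{R}$; equivalently this is $f_{\mathcal{R}\mathcal{Q}}(\bm{y}-\mathcal{R}\bm{s})$. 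Substituting $\bm{y}=\bm{x}+\bm{\tau}_j$ turns the integrand into $\prod_{j}f_{\mathcal{R}\mathcal{Q}}(\bm{x}-\mathcal{R}\bm{s}+\bm{\tau}_j)$.

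The next step is to absorb the two residual transformations by change of variables. For the translation by $\mathcal{R}\bm{s}$: substitute $\bm{x}'=\bm{x}-\mathcal{R}\bm{s}$ in the inner $\mathbb{R}^n$-integral, which has unit Jacobian and (since the integral is over all of $\mathbb{R}^n$) removes the $-\mathcal{R}\bm{s}$ term entirely — this is exactly why the translation part of $h$ disappears for \emph{any} $\rho$, not just Haar. For the rotation by $\mathcal{Q}$: apply the substitution $\mathcal{R}'=\mathcal{R}\mathcal{Q}$ on $\mathsf{SO}(n)$. Here is where I invoke the Haar hypothesis: right-invariance (equivalently bi-invariance) of the Haar measure on the compact group $\mathsf{SO}(n)$ gives $d\rho(\mathcal{R})=d\rho(\mathcal{R}')$, so the substitution leaves the measure unchanged. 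After both substitutions the integral is literally $\int_{\mathbb{R}^n}\int_{\mathsf{SO}(n)}\prod_j f_{\mathcal{R}'}(\bm{x}'+\bm{\tau}_j)\,d\rho(\mathcal{R}')\,d\bm{x}'$, which is $A_{f,\rho}^{(d)}(\bm{\tau}_0,\ldots,\bm{\tau}_{d-1})$, completing the argument.

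The main obstacle — really the only delicate point — is bookkeeping the order of composition and making sure the rotation that gets "peeled off" the translation argument is the \emph{observation-side} rotation $\mathcal{R}$ and not $\mathcal{Q}$, so that it is compatible with the Haar substitution on the integration variable; getting the semidirect-product algebra $(\mathcal{R},\bm{0})(\mathbf{I},-\bm{s})=\ldots$ in the right order is where sign and side errors creep in. A secondary point worth a sentence is a measure-theoretic justification that Fubini applies and the iterated integral is finite, which follows from $f$ being continuous with compact support (hence bounded, and each $\bm{\tau}$-shifted product is supported in a fixed compact set independent of $\mathcal{R}$), so the interchange of the $\mathbb{R}^n$ and $\mathsf{SO}(n)$ integrals used implicitly above is legitimate. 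One should also note explicitly that without the Haar assumption the rotation substitution $\mathcal{R}'=\mathcal{R}\mathcal{Q}$ changes $\rho$ to its pushforward, which is why the statement only claims invariance under the \emph{simultaneous} action on signal and distribution for general $\rho$, and full $\mathsf{SE}(n)$-invariance precisely when $\rho$ is Haar.
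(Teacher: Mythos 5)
Your proposal is correct and follows essentially the same route as the paper's proof: reduce to the simplified form of Lemma~\ref{lem:autocorrelationinvariant}, eliminate the translation by a measure-preserving shift of the spatial variable over $\mathbb{R}^n$, and absorb the rotation into the Haar measure. The only (cosmetic) differences are that your composition order places the residual translation as $\mathcal{R}\bm{s}$ and invokes right-invariance via $\mathcal{R}'=\mathcal{R}\mathcal{Q}$, whereas the paper shifts by the fixed vector $\bm{t}_0$ and uses left-invariance via $\tilde{\mathcal{R}}=\mathcal{R}_0\mathcal{R}$ --- and your explicit identity $(h\cdot f)_{\mathcal{R}}(\bm{y})=f_{\mathcal{R}\mathcal{Q}}(\bm{y}-\mathcal{R}\bm{s})$ is in fact the more careful rendering of the semidirect-product bookkeeping.
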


Figure \ref{fig:3} presents a comparison between the autocorrelations for the rigid motion group and the moments for the special orthogonal group in two dimensions ($n=2$). Specifically, it contrasts the fifth-order autocorrelation of the rigid motion group with the third-order moment of the signal $f$ under the special orthogonal group $\mathsf{SO}(n)$.
In the $\mathsf{SO}(n)$ model, moments are defined on fixed triples of points, incorporating all rotations of the signal $f$. In contrast, the $\mathsf{SE}(n)$ autocorrelation is defined on fixed sets of five points that move transversely to account for both rotation and translation. 
While not immediately apparent, we show next that autocorrelations of order up to $d+2$ suffice to fully and explicitly recover the corresponding $\mathsf{SO}(n)$ moments of order $d$.

\subsection{Extraction of SO(n) moments from SE(n) autocorrelations} 

We now show that the $(d+2)$-order autocorrelation of the rigid motion group, defined in Definition~\ref{def:autoCorrelationNoiseFree}, enables the explicit extraction of the $d$-th order moment.
Specifically, Theorem~\ref{thm:reductionFromAutocorrelationToTensorMoment} establishes that the $d$-th order moment $M_{f, \rho}^{(d)}$ of the $\mathsf{SO}(n)$ model can be extracted from the $(d+2)$-order $A^{(d+2)}_{f, \rho}$ and second-order autocorrelations $A^{(2)}_{f, \rho}$ of the $\mathsf{SE}(n)$ model. The result shows that by evaluating the $(d+2)$-order autocorrelation on a specific slice, where two of the evaluation points are antipodal and near the boundary of the support, and appropriately normalizing it using the second-order autocorrelation, one can isolate and recover the $d$-th order moment invariant to rotations. The proof of Theorem \ref{thm:reductionFromAutocorrelationToTensorMoment} can be found in Appendix \ref{sec:auxiliartForTheoremreductionFromAutocorrelationToTensorMoment}. Figure~\ref{fig:4} provides a visual proof sketch of the theorem.

\begin{thm}[Recovery of the $d$-th order moment of the special orthogonal group from the $(d+2)$-order autocorrelation under the rigid motion group]
\label{thm:reductionFromAutocorrelationToTensorMoment}
Let $f \in C^0 (\mathcal{B}_R^{(n)})$ be a continuous signal satisfying Assumptions~\ref{assum:support} and \ref{assum:nonVanishingSupport}. 
Let $A^{(d+2)}_{f, \rho}$ denote the $(d+2)$-order autocorrelation under the rigid motion group $\mathsf{SE}(n)$ (Problem~\ref{prob:orbitRecoverySEn}), as defined in Definition~\ref{def:autoCorrelationNoiseFree}
where the distribution on $\SE(n)$ satisfies Assumption~\ref{assum:rotation_translation_indep}. Let $M^{(d)}_{f, \rho}$ be the $d$-th order moment associated with the special orthogonal group $\mathsf{SO}(n)$ (Problem~\ref{prob:orbitRecoverySOn}), as defined in Definition~\ref{def:mraTensorMoment}.

Then, for any $\bmeta_1, \ldots, \bmeta_d$ in the ball of radius $R$, the $d$-th order moment $M^{(d)}_{f,\rho}$ is given by the limit:
\begin{align}
    M_{f, \rho}^{(d)}(\bm{\eta}_1, \dots, \bm{\eta}_d) 
    = \lim_{\delta \to 0^+} 
    \frac{\displaystyle \int_{S^{n-1}} A^{(d+2)}_{f, \rho}\big(\bm{\tau}_0^{(\delta)}(\bm{\theta}), \bm{\tau}_1^{(\delta)}(\bm{\theta}), \bm{\eta}_1, \dots, \bm{\eta}_d\big) \, d\bm{\theta}}
         {\displaystyle \int_{S^{n-1}} A^{(2)}_{f, \rho}\big(\bm{\tau}_0^{(\delta)}(\bm{\theta}), \bm{\tau}_1^{(\delta)}(\bm{\theta})\big) \, d\bm{\theta}},
    \label{eqn:mainTheoremExtraction}
\end{align}
where 
\begin{align}
    \bm{\tau}_0^{(\delta)}(\bm{\theta}) = (R(1 - \delta), \bm{\theta}), \quad 
    \bm{\tau}_1^{(\delta)}(\bm{\theta}) = (-R(1 - \delta), \bm{\theta}), \label{eqn:nearlyAntipodalPoints}
\end{align}
for $\bm{\theta} \in S^{n-1}$ are antipodal points.
\end{thm}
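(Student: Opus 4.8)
The plan is to expand the numerator and denominator of \eqref{eqn:mainTheoremExtraction} using the simplified autocorrelation formula \eqref{eqn:autoCorrelationNoiseFreeProduct2} from Lemma~\ref{lem:autocorrelationinvariant}, and then show that, after the angular integration over $\bm{\theta}\in S^{n-1}$ and the limit $\delta\to 0^+$, the ``antipodal pair'' factor decouples from the remaining $d$ factors and cancels between numerator and denominator, leaving precisely $M^{(d)}_{f,\rho}(\bm{\eta}_1,\dots,\bm{\eta}_d)$. Concretely, I would write
\begin{align}
\int_{S^{n-1}} A^{(d+2)}_{f,\rho}\big(\bm{\tau}_0^{(\delta)}(\bm{\theta}),\bm{\tau}_1^{(\delta)}(\bm{\theta}),\bm{\eta}_1,\dots,\bm{\eta}_d\big)\,d\bm{\theta}
= \int_{S^{n-1}}\!\!\int_{\mathbb{R}^n}\!\!\int_{\mathsf{SO}(n)} f_{\mathcal{R}}(\bm{x}+\bm{\tau}_0^{(\delta)})\,f_{\mathcal{R}}(\bm{x}+\bm{\tau}_1^{(\delta)})\prod_{j=1}^{d}f_{\mathcal{R}}(\bm{x}+\bm{\eta}_j)\,d\rho(\mathcal{R})\,d\bm{x}\,d\bm{\theta},
\end{align}
and similarly for the denominator without the product $\prod_j f_{\mathcal{R}}(\bm{x}+\bm{\eta}_j)$.

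The heart of the argument is a localization lemma: as $\delta\to 0^+$, the two shifts $\bm{\tau}_0^{(\delta)}(\bm{\theta})$ and $\bm{\tau}_1^{(\delta)}(\bm{\theta})$ are nearly antipodal at radius $\approx R$, so $f_{\mathcal{R}}(\bm{x}+\bm{\tau}_0^{(\delta)})\,f_{\mathcal{R}}(\bm{x}+\bm{\tau}_1^{(\delta)})$ is supported (in $\bm{x}$) on an asymptotically vanishing neighborhood of $\bm{x}=0$, because $f_{\mathcal{R}}$ has support exactly the ball of radius $R$ and having both $\bm{x}+\bm{\tau}_0^{(\delta)}$ and $\bm{x}-\bm{\tau}_0^{(\delta)}\cdot$(roughly) inside that ball forces $\bm{x}$ to be small (cf.\ Figure~\ref{fig:2}(a)). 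I would make this precise by a change of variables $\bm{x}\mapsto \bm{x}$, using continuity of $f$ (Assumption~\ref{assum:support}) so that on this shrinking set $\prod_{j=1}^{d} f_{\mathcal{R}}(\bm{x}+\bm{\eta}_j)\to \prod_{j=1}^{d} f_{\mathcal{R}}(\bm{\eta}_j)$ uniformly in $\mathcal{R}$ and $\bm{\theta}$. Thus, writing $w_\delta(\mathcal{R})\triangleq \int_{S^{n-1}}\int_{\mathbb{R}^n} f_{\mathcal{R}}(\bm{x}+\bm{\tau}_0^{(\delta)}(\bm{\theta}))\,f_{\mathcal{R}}(\bm{x}+\bm{\tau}_1^{(\delta)}(\bm{\theta}))\,d\bm{x}\,d\bm{\theta}$, the numerator is asymptotically $\int_{\mathsf{SO}(n)} w_\delta(\mathcal{R})\,\prod_{j=1}^{d}f_{\mathcal{R}}(\bm{\eta}_j)\,d\rho(\mathcal{R})\,(1+o(1))$ and the denominator is $\int_{\mathsf{SO}(n)} w_\delta(\mathcal{R})\,d\rho(\mathcal{R})\,(1+o(1))$.

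Next I would evaluate $w_\delta(\mathcal{R})$ and show it is \emph{independent of $\mathcal{R}$} in the limit (after normalization), which is exactly where Lemma~\ref{lem:invarianceOfAntipodalProdSum} and Assumption~\ref{assum:nonVanishingSupport} enter. Rotating coordinates by $\mathcal{R}$ inside the $\bm{x}$-integral and the $\bm{\theta}$-integral, $w_\delta(\mathcal{R})$ reduces to the same double integral with $f$ in place of $f_{\mathcal{R}}$: the leading-order behavior as $\delta\to 0^+$ is governed by the boundary antipodal correlation $C_f=\int_{S^{n-1}} f(R,\bm{\varphi})f(-R,\bm{\varphi})\,d\bm{\varphi}$, which is $\mathcal{R}$-invariant by Lemma~\ref{lem:invarianceOfAntipodalProdSum} and nonzero by Assumption~\ref{assum:nonVanishingSupport}. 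Hence $w_\delta(\mathcal{R})= c(\delta)\,(1+o(1))$ with $c(\delta)>0$ the same constant for every $\mathcal{R}$; it factors out of both integrals and cancels in the ratio, yielding
\begin{align}
\lim_{\delta\to 0^+}\frac{\displaystyle\int_{S^{n-1}} A^{(d+2)}_{f,\rho}(\bm{\tau}_0^{(\delta)},\bm{\tau}_1^{(\delta)},\bm{\eta}_1,\dots,\bm{\eta}_d)\,d\bm{\theta}}{\displaystyle\int_{S^{n-1}} A^{(2)}_{f,\rho}(\bm{\tau}_0^{(\delta)},\bm{\tau}_1^{(\delta)})\,d\bm{\theta}}
= \int_{\mathsf{SO}(n)}\prod_{j=1}^{d} f_{\mathcal{R}}(\bm{\eta}_j)\,d\rho(\mathcal{R}) = M^{(d)}_{f,\rho}(\bm{\eta}_1,\dots,\bm{\eta}_d),
\end{align}
which is the claim.

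The main obstacle I anticipate is making the localization/dominated-convergence argument fully rigorous with uniform control: one must show that the ``mass'' $w_\delta(\mathcal{R})$ of the antipodal-pair factor is asymptotically the same constant for all $\mathcal{R}$ (not merely comparable), and that the error from replacing $\prod_j f_{\mathcal{R}}(\bm{x}+\bm{\eta}_j)$ by $\prod_j f_{\mathcal{R}}(\bm{\eta}_j)$ is $o(w_\delta(\mathcal{R}))$ uniformly in $\mathcal{R}$ and $\bm{\theta}$ before integrating against $d\rho$. This requires a careful geometric estimate of the region $\{\bm{x}: \|\bm{x}+\bm{\tau}_0^{(\delta)}(\bm{\theta})\|\le R,\ \|\bm{x}+\bm{\tau}_1^{(\delta)}(\bm{\theta})\|\le R\}$ — its diameter shrinks like $O(\sqrt{\delta})$ as $\delta\to 0^+$ — combined with uniform continuity of the continuous, compactly supported $f$ on the compact group $\mathsf{SO}(n)$; degenerate cases where $f$ vanishes on part of the boundary sphere must be handled via the nonvanishing of $C_f$. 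Everything else — interchanging the finite-dimensional integrals, the semidirect-product simplification of the autocorrelation, and the algebraic cancellation — is routine given the earlier lemmas.
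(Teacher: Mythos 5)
Your proposal follows essentially the same route as the paper's proof: expand both numerator and denominator via Lemma~\ref{lem:autocorrelationinvariant}, localize the $\bm{x}$-integral to the shrinking region where both near-antipodal shifts land inside the support (the paper computes this region to be a ball of radius $R\sqrt{1-(1-\delta)^2}=O(\sqrt{\delta})$ and evaluates the limit via the Lebesgue Differentiation Theorem, which is the rigorous form of your localization/continuity step), and then cancel the antipodal-correlation factor using Lemma~\ref{lem:invarianceOfAntipodalProdSum} and Assumption~\ref{assum:nonVanishingSupport} exactly as you describe. The uniformity concern you flag is handled in the paper by integrating over $\mathsf{SO}(n)$ and $S^{n-1}$ \emph{before} taking the $\delta$-limit, so the differentiation argument is applied to the already-averaged continuous functions $F_1,F_2$ and no pointwise-in-$\mathcal{R}$ uniformity is needed.
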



\begin{figure*}[!t]
    \centering
    \includegraphics[width=1.0 \linewidth]{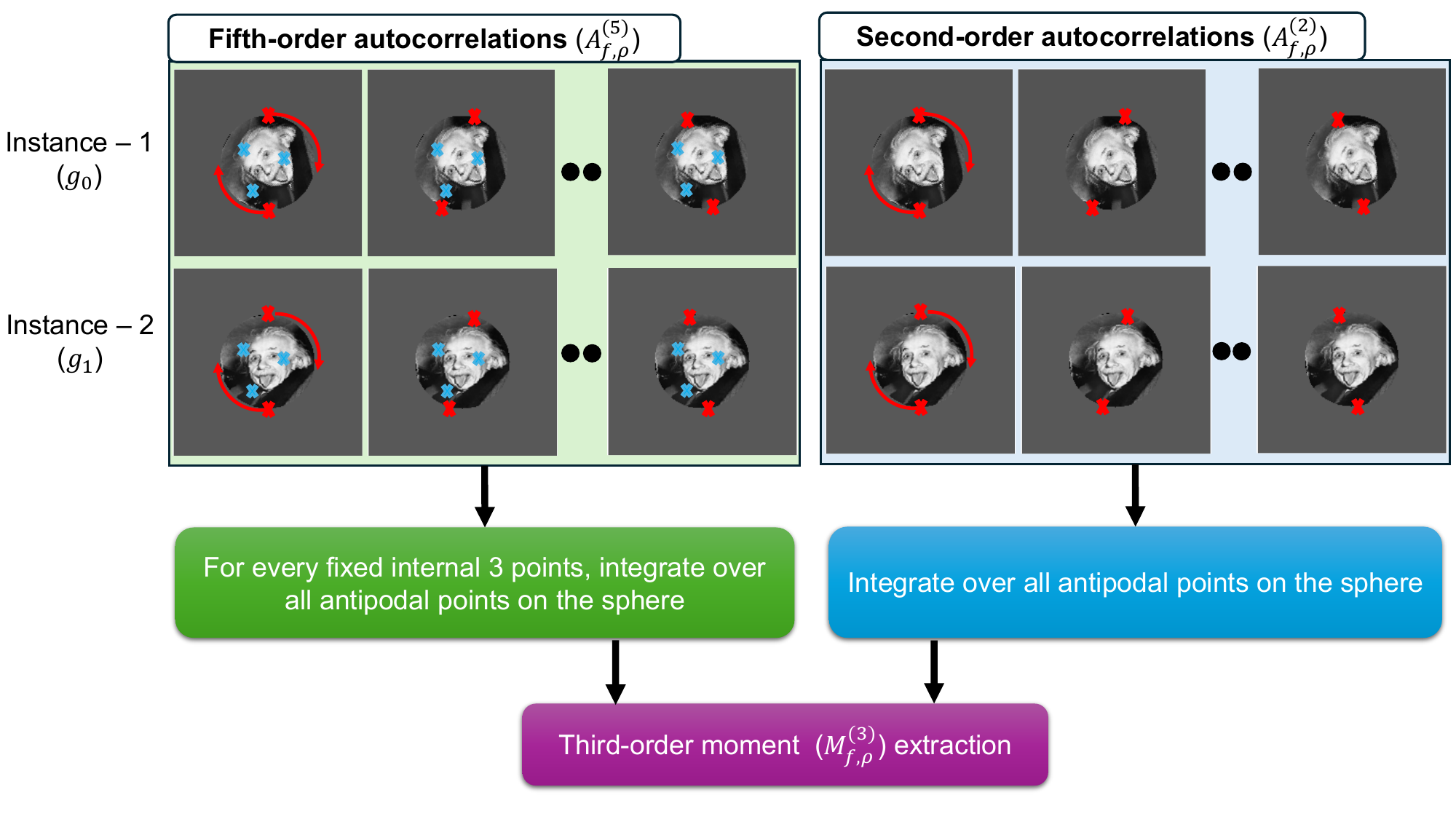}
    \caption{\textbf{Extracting $\mathsf{SO}(2)$ moments from $\mathsf{SE}(2)$ autocorrelations.}
    \textbf{Left panel:} Fifth-order autocorrelation with an antipodal pair. Fix $d=3$ internal points $\bm{\eta}_1,\bm{\eta}_2,\bm{\eta}_3$ inside the support of $f$ (Theorem~\ref{thm:reductionFromAutocorrelationToTensorMoment}). Each row shows a different rigid-motion instance $g_\ell\in\mathsf{SE}(2)$ in which the entire image is rotated and translated. For each transformed image, integrate the product of the intensities at these three fixed points together with the intensities at all pairs of (nearly) antipodal locations on a circle (Definition~\eqref{eqn:nearlyAntipodalPoints}), sweeping over all angles. This yields a directional slice of the $(d+2)$-order autocorrelation $A^{(d+2)}_{f,\rho}$ (the numerator of~\eqref{eqn:mainTheoremExtraction}).
    \textbf{Right panel:} Second-order autocorrelation of antipodal points. For each transformed image, integrate over all antipodal pairs on the boundary, as in Assumption~\ref{assum:nonVanishingSupport} (the denominator of~\eqref{eqn:mainTheoremExtraction}).
    \textbf{Extraction.} The $d$-th order $\mathsf{SO}(2)$ moment $M^{(d)}_{f,\rho}$ is obtained by letting the antipodal pair approach the boundary and normalizing by the second–order autocorrelation, per~\eqref{eqn:mainTheoremExtraction}.}
        \label{fig:4}
\end{figure*}

\section{Sample complexity of the orbit recovery problem under rigid motion} \label{sec:sampleComplexityRigidMotion}

In this section, we harness the algebraic extraction from $\mathsf{SE}(n)$ autocorrelations to $\mathsf{SO}(n)$ moments, developed in the previous section, to derive sample-complexity bounds for the orbit recovery problem under $\mathsf{SE}(n)$ (Problem~\ref{prob:orbitRecoverySEn}). We first set notation and recall the relevant definitions, and then state our main theorem on sample complexity.

\subsection{Performance metric and sample complexity}
A primary goal of this work is to characterize the sample complexity, i.e., the minimum number of observations required to estimate the signal $f$ within a specified mean-squared error (MSE). Since $f$ is only identifiable up to its orbit under the group action, we adopt the standard definition of MSE from~\cite{abbe2018multireference}:
\begin{align}
    \text{MSE}(\hat{f}, f) = \frac{1}{\|f\|_F^2} \mathbb{E} \left[ \min_{g \in G} \| g \cdot \hat{f} - f \|_F^2 \right]. \label{eqn:mseDef}
\end{align}
Here, $\hat{f} = \hat{f}(\{y_i\}_{i=0}^{N-1})$ is any estimator of $f$ based on $N$ noisy observations.
We now formalize the notion of sample complexity.

\begin{definition}[Sample complexity] \label{def:sampleComplexity}
    Let all parameters in the rigid motion orbit recovery problem (Problem~\ref{prob:orbitRecoverySEn}) be fixed, except for the number of observations $N$ and the noise variance $\sigma^2$. Define the MSE as
    \begin{align}
        \mathrm{MSE}^{\ast}_{\mathsf{SE}(n)}(\sigma^2, N) \triangleq  \inf_{\hat{f}} \mathbb{E} \left[ \mathrm{MSE} \left( \hat{f}(\{y_i\}_{i=0}^{N-1}), f \right) \right], \label{eqn:infMtdMSE}
    \end{align}
    and define the sample complexity as
    \begin{align}
        N^{\ast}_{\mathsf{SE}(n)}(\sigma^2, \epsilon) \triangleq  \min \left\{ N : \mathrm{MSE}^{\ast}_{\mathsf{SE}(n)}(\sigma^2, N) \leq \epsilon \right\}. \label{eqn:sampleComplexityMTD}
    \end{align}
\end{definition}

In this definition, $\epsilon > 0$ denotes a target error level, which we consider only in the asymptotic regime $\epsilon \to 0$. 
Analogously, we define the sample complexity for the orbit recovery problem under the special orthogonal group $\mathsf{SO}(n)$, denoted by $N^{\ast}_{\mathsf{SO}(n)}(\sigma^2, \epsilon)$.

\subsection{Statistical analysis of SE(n) autocorrelations} \label{sec:autoCorrealtionAnalysisRigidMotion}
Here we introduce the autocorrelation analysis framework for the rigid motion problem and present several key results that will be instrumental in deriving upper bounds on the sample complexity. 

\begin{definition}[Empirical autocorrelations under the rigid motions group] \label{def:autoCorrelationMomentsRigidMotion}
Recall that $f \in C^0(\mathcal{B}_R^{(n)})$ is the unknown signal, and let $y_i : \mathcal{D} \to \mathbb{R}$ for $i \in \{0, \dots, N-1\}$ denote $N$ observations generated according to the model in Problem~\ref{prob:orbitRecoverySEn}. 
We define the padded signal $\tilde{y}_i : \bar{\mathcal{D}} \to \mathbb{R}$, where $\bar{\mathcal{D}} \supset \mathcal{D}$ is an extended domain such that, for every $\bm{x} \in \mathcal{D}$ and every shift $\bm{\tau} \in \mathcal{B}_R^{(n)}$, the shifted point $\bm{x} + \bm{\tau} \in \bar{\mathcal{D}}$, that is,
\begin{align}
    \bar{\mathcal{D}}
     = 
    \{\, \bm{x}+\bm{\tau} \mid \bm{x}\in\mathcal{D},\ \bm{\tau}\in\mathcal{B}_R^{(n)} \,\}.
\end{align}
The padded signal $\tilde{y}_i$ is defined as:
\begin{align}
    \tilde{y}_i(\bm{x}) = 
        \begin{cases}
        g_i \cdot f(\bm{x}) + \bar{\xi}_i(\bm{x}), & \bm{x} \in \mathcal{D}, \\
        \bar{\xi}_i(\bm{x}), & \bm{x} \in \bar{\mathcal{D}} \setminus \mathcal{D},
        \end{cases}
\end{align}
where $g_i = (\mathcal{R}_i, \bm{t}_i) \in \mathsf{SE}(n)$ and $\bar{\xi}_i : \bar{\mathcal{D}} \to \mathbb{R}$ is a realization of a white noise process with the same statistical properties as defined in Problem~\ref{prob:orbitRecoverySEn}. 
Then, the empirical $d$-th order autocorrelation of the observations $\{y_i\}_{i=0}^{N-1}$ is defined as:
\begin{align}
a_y^{(d)} (\bm{\tau}_0, \bm{\tau}_1, \dots, \bm{\tau}_{d-1}) 
    = \frac{1}{N} \sum_{i=0}^{N-1} \left( \int_{\bm{x} \in \mathcal{D}} \tilde{y}_i(\bm{x} + \bm{\tau}_0) \tilde{y}_i(\bm{x} + \bm{\tau}_1) \cdots \tilde{y}_i(\bm{x} + \bm{\tau}_{d-1}) \, d\bm{x} \right),
\label{eqn:autoCorrealtionMomentsRigidMotion}
\end{align}
for all $\bm{\tau}_0, \bm{\tau}_1, \dots, \bm{\tau}_{d-1} \in \mathcal{B}_R^{(n)}$, the radius-$R$ ball in $\mathbb{R}^n$.
\end{definition}

\begin{remark}
    Padding the observations $y_i$ to the extended domain $\bar{\mathcal{D}}$ with white noise ensures that all shifted points $\bm{x} + \bm{\tau}$ near the boundary of $\mathcal{D}$ are well-defined. This padding preserves statistical consistency and maintains translation invariance in the analysis.
\end{remark}

In essence, the empirical autocorrelation is computed by first evaluating the $d$-th order autocorrelation for each individual observation $y_i$, and then averaging the result over all $N$ observations. The stochastic integral in \eqref{eqn:autoCorrealtionMomentsRigidMotion} is defined formally in Appendix \ref{sec:preliminariesToStatisticalPart}.

The following proposition shows that, as $N \to \infty$, the empirical autocorrelation converges almost surely to a deterministic limit consisting of the population autocorrelation and a correction term. This correction term is composed of lower-order population autocorrelations and noise contributions, as described in \eqref{eqn:noiseCorrectionFunctionMain}. The proof is provided in Appendix~\ref{sec:proofOfpropRigidMotionEquivalence}.

\begin{proposition} \label{thm:propRigidMotionEquivalence}
Let $y_i : \mathcal{D} \to \mathbb{R}$, for $i \in \{0, \dots, N-1\}$, be i.i.d. observations from the model described in Problem~\ref{prob:orbitRecoverySEn}. Let $a^{(d)}_y$ denote the empirical $d$-th order autocorrelation of the observations, as in~\eqref{eqn:autoCorrealtionMomentsRigidMotion}, and let $A^{(d)}_{f, \rho}$ denote the population $d$-th order autocorrelation, as defined in Definition~\ref{def:autoCorrelationNoiseFree}. Then, for every  $\bm{\tau}_0, \bm{\tau}_1, \dots, \bm{\tau}_{d-1} \in \mathcal{B}_R^{(n)}$, as $N \to \infty$, 
\begin{align}
    \lim_{N \to \infty} a^{(d)}_y(\bm{\tau}_0, \bm{\tau}_1, \dots, \bm{\tau}_{d-1})
    \ \equalityAS \ 
    A^{(d)}_{f, \rho}(\bm{\tau}_0, \bm{\tau}_1, \dots, \bm{\tau}_{d-1}) 
    + P^{(d)}_{f, \rho}(\bm{\tau}_0, \dots, \bm{\tau}_{d-1}), \label{eqn:propRigidMotionEquivalence}
\end{align}
where the correction term $P^{(d)}_{f, \rho}$ accounts for noise contributions, is determined by lower-order population autocorrelations and the noise level $\sigma$, and is given explicitly by
\begin{align}
    P^{(d)}_{f, \rho}(\bm{\tau}_0, \dots, \bm{\tau}_{d-1}) 
    \triangleq \sum_{\substack{S \subsetneq \{0, \dots, d-1\} \\ |S^c| \text{ even}}} 
    \left( \sigma^{|S^c|} \cdot A^{(|S|)}_{f, \rho}(\{\bm{\tau}_j\}_{j \in S}) 
    \cdot \sum_{\text{pairings of } S^c} \prod_{(i,j)} \delta(\bm{\tau}_i - \bm{\tau}_j) \right). \label{eqn:noiseCorrectionFunctionMain}
\end{align}
Here, $S \subsetneq \{0, \dots, d-1\}$ is a proper subset whose complement $S^c$ has even cardinality, $A^{(|S|)}_{f, \rho}$ denotes the lower-order population autocorrelation corresponding to the indices in $S$, and the inner sum is taken over all pairings (i.e., partitions into unordered pairs) of the indices in $S^c$. The product over delta functions $\prod_{(i,j)} \delta(\bm{\tau}_i - \bm{\tau}_j)$ enforces that noise terms only contribute when their arguments coincide. 

\end{proposition}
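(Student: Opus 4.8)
The plan is to expand the product inside the empirical autocorrelation~\eqref{eqn:autoCorrealtionMomentsRigidMotion} into signal and noise contributions, then apply a strong law of large numbers observation-by-observation. Concretely, write $\tilde y_i(\bm x + \bm\tau_j) = (g_i \cdot f)(\bm x + \bm\tau_j) + \bar\xi_i(\bm x + \bm\tau_j)$ on $\mathcal D$ (with the signal term vanishing on $\bar{\mathcal D}\setminus\mathcal D$), and multiply out the $d$-fold product. This produces $2^d$ terms indexed by subsets $S \subseteq \{0,\dots,d-1\}$: in the term associated with $S$, the factors with index in $S$ carry the signal $f_{g_i}$ and those in $S^c$ carry the noise $\bar\xi_i$. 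First I would fix one such term, integrate over $\bm x \in \mathcal D$, and average over $i$; the $N\to\infty$ limit is handled by the SLLN, so we need to identify the conditional expectation (over the noise, given $g_i$) of the summand and then the expectation over $g_i \sim \mu_{\mathsf{SE}(n)}$.

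The noise bookkeeping is the heart of the computation. For a fixed subset $S$, the integrand contains $|S^c|$ white-noise factors $\prod_{j\in S^c}\bar\xi_i(\bm x+\bm\tau_j)$; taking the conditional expectation, Isserlis'/Wick's theorem says this vanishes unless $|S^c|$ is even, and otherwise equals $\sigma^{|S^c|}$ times the sum over perfect matchings (pairings) of $S^c$ of products of covariances $\E[\bar\xi_i(\bm x+\bm\tau_i)\bar\xi_i(\bm x+\bm\tau_j)] = \sigma^2\delta(\bm\tau_i-\bm\tau_j)$ — here the $\bm x$-dependence cancels in the difference, which is exactly why the delta functions come out as $\delta(\bm\tau_i-\bm\tau_j)$ after the $\bm x$-integration collapses against one delta per pair. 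The surviving signal factors contribute $\int_{\mathbb R^n}\prod_{j\in S} f_{g_i}(\bm x+\bm\tau_j)\,d\bm x$; averaging over $g_i$ and invoking Lemma~\ref{lem:autocorrelationinvariant} (Assumption~\ref{assum:rotation_translation_indep}) identifies this with the population autocorrelation $A^{(|S|)}_{f,\rho}(\{\bm\tau_j\}_{j\in S})$. The term $S = \{0,\dots,d-1\}$ (no noise factors, $S^c=\emptyset$) gives exactly $A^{(d)}_{f,\rho}$, and all terms with $\emptyset \subsetneq S \subsetneq \{0,\dots,d-1\}$ (and $|S^c|$ even) assemble into the correction $P^{(d)}_{f,\rho}$ of~\eqref{eqn:noiseCorrectionFunctionMain}; the term $S=\emptyset$, if $d$ is even, is the pure-noise constant, which should be absorbed into the $|S|=0$ case of the sum (or checked to match the stated convention).

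Summing the $2^d$ limits — justified because a finite sum of almost surely convergent sequences converges almost surely to the sum of the limits — yields~\eqref{eqn:propRigidMotionEquivalence}. The main obstacle is making the white-noise manipulations rigorous: the products $\bar\xi_i(\bm x+\bm\tau_j)$ are not genuine pointwise products of functions but must be interpreted as iterated stochastic integrals (Wiener chaos), so the "integrate against $\delta$" steps, the Isserlis expansion, and the interchange of $\E[\cdot]$, $\int_{\mathcal D}d\bm x$, and the $N$-average all need the formal stochastic-integral framework promised in Appendix~\ref{sec:preliminariesToStatisticalPart}. Once that machinery is in place, the SLLN applies termwise because for each fixed $S$ the summands are i.i.d. (the $(g_i,\bar\xi_i)$ are i.i.d.) with finite mean — finiteness following from the compactness of $\mathcal D$, continuity and boundedness of $f$, and finite noise moments — so the remaining work is the deterministic combinatorial identification of the limit with the right-hand side.
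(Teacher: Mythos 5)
Your proposal follows essentially the same route as the paper's proof in Appendix~\ref{sec:proofOfpropRigidMotionEquivalence}: a binomial expansion over subsets $S\subseteq\{0,\dots,d-1\}$ separating signal and noise factors, Wick/Isserlis for the noise products yielding $\sigma^{|S^c|}$ times sums over pairings of delta functions, identification of the signal factors with lower-order population autocorrelations, and the SLLN over the i.i.d.\ observations, all grounded in the Wiener-chaos construction of Appendix~\ref{sec:preliminariesToStatisticalPart} (the paper applies the SLLN before expanding rather than termwise, but this is immaterial). Your flag about the $S=\emptyset$ convention is a legitimate and careful observation, but the argument is correct and matches the paper's.
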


\subsection{Sample complexity bounds for SE(n)}
In the method of moments, it is well known that, in the high-noise regime, estimating $d$-th–order moments requires $N = \omega(\sigma^{2d})$ samples. The same scaling carries over to autocorrelations: as we show next, if the
population autocorrelations up to order $\bar d$ uniquely determine the orbit of the signal, then the sample complexity of recovering the orbit is upper bounded by $\omega(\sigma^{2\bar d})$,
as proved in Appendix~\ref{sec:proofOfsampleComplexityRigidMotion}.

\begin{proposition}
\label{thm:sampleComplexityRigidMotion}
Assume the conditions of Proposition \ref{thm:propRigidMotionEquivalence} hold. In addition, assume the parameter space $\Theta$ of the unknown signals $f \in \Theta$ is compact. Then, if the population autocorrelations $\{A_{f, \rho}^{(d)}\}_{d=0}^{\bar{d}}$ up to order $\bar{d}$, as defined in  Definition~\ref{def:autoCorrelationNoiseFree}, uniquely determines the $\mathsf{SO}(n)$-orbit of the signal $f$
, then the sample complexity of recovering the $\mathsf{SO}(n)$-orbit is upper bounded by $\omega (\sigma^{2 \bar{d}})$.
\end{proposition}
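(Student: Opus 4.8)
The plan is to convert the identifiability hypothesis into a concrete method-of-moments estimator of the $\mathsf{SO}(n)$-orbit whose expected MSE vanishes along every scaling with $N/\sigma^{2\bar d}\to\infty$; by Definition~\ref{def:sampleComplexity} and \eqref{eqn:sampleComplexityMTD}, this is precisely the claimed $\omega(\sigma^{2\bar d})$ upper bound on the sample complexity. Fix the true signal $f\in\Theta$. For each $d\le\bar d$, regard the empirical autocorrelation $a^{(d)}_y$ of \eqref{eqn:autoCorrealtionMomentsRigidMotion} as a random element of $L^2$ over the compact cube of shift tuples $(\bm{\tau}_0,\dots,\bm{\tau}_{d-1})\in(\mathcal{B}_R^{(n)})^d$. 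On the full-measure open set of tuples with pairwise-distinct entries the correction term $P^{(d)}_{f,\rho}$ of \eqref{eqn:noiseCorrectionFunctionMain} vanishes identically: every proper subset $S\subsetneq\{0,\dots,d-1\}$ has $|S^c|$ even and positive, hence $\ge 2$, and so carries a factor $\delta(\bm{\tau}_i-\bm{\tau}_j)$ with $i\neq j$. Consequently, by the i.i.d.\ structure of \eqref{eqn:autoCorrealtionMomentsRigidMotion} and the computation underlying Proposition~\ref{thm:propRigidMotionEquivalence}, $\mathbb{E}[a^{(d)}_y]=A^{(d)}_{f,\rho}$ exactly (so no explicit debiasing, and no knowledge of $\sigma$, is needed), where $A^{(d)}_{f,\rho}$ is the population autocorrelation of Definition~\ref{def:autoCorrelationNoiseFree}; and since $A^{(d)}_{f,\rho}$ is continuous in the shifts (as $f$ is continuous with compact support), its restriction to the distinct-shift set still carries the identifiability hypothesis.

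I would next establish the variance bound
\begin{align}
\mathbb{E}\left[\left(a^{(d)}_y(\bm{\tau}_0,\dots,\bm{\tau}_{d-1})-A^{(d)}_{f,\rho}(\bm{\tau}_0,\dots,\bm{\tau}_{d-1})\right)^2\right]\ \le\ \frac{C\,\sigma^{2d}}{N},
\end{align}
for a constant $C$ independent of $N$ and $\sigma$, uniformly over the shift cube. This follows by writing each padded observation as $\tilde y_i=g_i\cdot f+\bar\xi_i$, expanding the product in \eqref{eqn:autoCorrealtionMomentsRigidMotion} over the $2^d$ choices of which factors are ``signal'' and which are ``noise,'' and applying Wick's theorem to the Gaussian factors as in the proof of Proposition~\ref{thm:propRigidMotionEquivalence} (using the stochastic-integral conventions of Appendix~\ref{sec:preliminariesToStatisticalPart}): the largest power of $\sigma$ comes from the all-noise term, whose per-observation second moment is $O(\sigma^{2d})$ (the surviving Wick pairings cross the two integration variables, each contributing a finite $\mathrm{vol}(\mathcal{D})$ factor), while all lower-order cross terms carry strictly smaller powers of $\sigma$; division by $N$ gives the stated rate. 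Integrating this pointwise bound over the shift cube yields $\mathbb{E}\|a^{(d)}_y-A^{(d)}_{f,\rho}\|_{L^2}^2=O(\sigma^{2d}/N)$, and summing over $d\le\bar d$, $\mathbb{E}\|\widehat\Psi-\Psi(f)\|^2=O(\sigma^{2\bar d}/N)$, where $\Psi(h)\triangleq(A^{(0)}_{h,\rho},\dots,A^{(\bar d)}_{h,\rho})$ and $\widehat\Psi\triangleq(a^{(0)}_y,\dots,a^{(\bar d)}_y)$, both in $\bigoplus_{d=0}^{\bar d}L^2$. I expect this variance/Wick bookkeeping --- in particular, confirming that the white-noise stochastic integrals are well defined and that the leading term scales exactly as $\sigma^{2d}$ with a uniformly bounded constant --- to be the main technical obstacle.

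It remains to turn $L^2$-accuracy of $\widehat\Psi$ into orbit-accuracy. The map $\Psi$ is continuous on $\Theta$ and, by hypothesis, depends on $h$ only through its $\mathsf{SO}(n)$-orbit and determines that orbit; since $\Theta$ is compact and $\mathsf{SO}(n)$ is a compact group acting continuously by $L^2$-isometries, the quotient $\Theta/\mathsf{SO}(n)$ is a compact metric space with metric $d([h],[h'])=\min_{g\in\mathsf{SO}(n)}\|g\cdot h-h'\|_{L^2}$, and $\Psi$ descends to a continuous injection $\bar\Psi$ on it, hence a homeomorphism onto its compact image; let $\omega_\Psi$ be a modulus of continuity for $\bar\Psi^{-1}$. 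Define $\widehat f\in\argmin_{h\in\Theta}\|\Psi(h)-\widehat\Psi\|$ (a minimizer exists by compactness and continuity). The minimizing inequality gives $\|\Psi(\widehat f)-\Psi(f)\|\le 2\|\widehat\Psi-\Psi(f)\|=:2\Delta$, whence $\min_{g\in\mathsf{SO}(n)}\|g\cdot\widehat f-f\|_{L^2}\le\omega_\Psi(2\Delta)$. Substituting into \eqref{eqn:mseDef} and using that $\omega_\Psi$ is bounded on the compact quotient gives $\mathrm{MSE}(\widehat f,f)\le\|f\|_{L^2}^{-2}\,\omega_\Psi(2\Delta)^2$ with $\omega_\Psi(2\Delta)^2$ uniformly bounded; since $\Delta\to 0$ in probability (from $\mathbb{E}\Delta^2=O(\sigma^{2\bar d}/N)$ and Markov's inequality), bounded convergence gives $\mathbb{E}[\mathrm{MSE}(\widehat f,f)]\to 0$ along any sequence with $\sigma^{2\bar d}/N\to 0$. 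Taking the infimum over estimators in \eqref{eqn:infMtdMSE} shows $\mathrm{MSE}^{\ast}_{\mathsf{SE}(n)}(\sigma^2,N)\to 0$ for every such scaling, i.e., for any $\epsilon>0$ one has $N^{\ast}_{\mathsf{SE}(n)}(\sigma^2,\epsilon)\le N$ once $N/\sigma^{2\bar d}$ is large enough, which is exactly the stated $\omega(\sigma^{2\bar d})$ bound.
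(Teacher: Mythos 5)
Your proof is correct and follows the same overall skeleton as the paper's: a per-point variance bound of order $\sigma^{2d}/N$ obtained from the Wiener--It\^{o} isometry, integration over the shift cube to get an $L^2$ rate, and then a compactness-plus-identifiability argument to convert moment accuracy into orbit accuracy (the paper's Lemmas~\ref{lemma:D1}--\ref{lemma:D3}). Two of your choices genuinely diverge from the paper and are worth noting. First, you observe that the correction term $P^{(d)}_{f,\rho}$ of \eqref{eqn:noiseCorrectionFunctionMain} is supported on the diagonal set where some shifts coincide, hence vanishes Lebesgue-a.e.\ on $(\mathcal{B}_R^{(n)})^d$; you therefore target $A^{(d)}_{f,\rho}$ directly with an exactly unbiased estimator and need no debiasing step and no knowledge of $\sigma$. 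The paper instead carries $\bar{\mu}^{(d)}_{f,\rho}=A^{(d)}_{f,\rho}+P^{(d)}_{f,\rho}$ through the $L^2$ analysis, which is slightly awkward since $P^{(d)}$ is a distribution rather than an $L^2$ function; your treatment is the cleaner reading of the same computation (your claim of a variance bound ``uniformly over the shift cube'' should technically be restricted to the distinct-shift set, but that is a null set and irrelevant for the $L^2$ integral). Second, for the consistency step the paper frames the estimator as a likelihood maximization and cites the Newey--McFadden extremum-estimator theorem, whereas you give a self-contained argument: the invariant map $\Psi$ descends to a continuous injection on the compact quotient $\Theta/\mathsf{SO}(n)$, hence is a homeomorphism onto its image with a modulus of continuity $\omega_\Psi$ for the inverse, and the minimum-distance estimator then satisfies an explicit error bound $\omega_\Psi(2\Delta)$. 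This is essentially the content of the cited theorem made explicit, and it buys a quantitative (if non-rate) link between moment error and orbit error; both routes rely on the same unstated premise that the autocorrelations are constant on $\mathsf{SO}(n)$-orbits, so neither is more general on that point.
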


A direct consequence of Theorem \ref{thm:reductionFromAutocorrelationToTensorMoment}, and Proposition \ref{thm:sampleComplexityRigidMotion} is Theorem \ref{thm:mainTheoremRigidmotion}, which relates to the sample complexity of the rigid motion problem (Problem \ref{prob:orbitRecoverySEn}). 

\begin{thm}[Sample complexity of orbit recovery under rigid motion] 
\label{thm:mainTheoremRigidmotion}
Consider the orbit recovery problem under $\mathsf{SO}(n)$ (Problem~\ref{prob:orbitRecoverySOn}) with rotation distribution $\rho$ supported on $G=\mathsf{SO}(n)$. 
Let $d$ denote the minimal order such that the $d$-th order moment 
$M_{f, \rho}^{(d)}$ (see~\eqref{eqn:sphericalCordinatedTensorMoment}) uniquely determines the orbit of a signal $f$ under the group action of $\mathsf{SO}(n)$.

Now, consider the orbit recovery problem under $\mathsf{SE}(n)$ (Problem~\ref{prob:orbitRecoverySEn}), where the marginal distribution on the rotational component is $\rho$ (Assumption~\ref{assum:rotation_translation_indep}). 
Assume that $f$ satisfies Assumptions~\ref{assum:support} and \ref{assum:nonVanishingSupport}. 
Then, the $(d+2)$ order autocorrelation $A_{f, \rho}^{(d+2)}$ in the $\mathsf{SE}(n)$ model (as defined in~\eqref{eqn:autoCorrelationNoiseFreeProduct2}) uniquely determines the $\mathsf{SO}(n)$-orbit of $f$. 
Moreover, the sample complexity of orbit recovery in this setting satisfies
\begin{align}
    \omega\!\left(\sigma^{2d}\right) 
     \le N^{\ast}_{\mathsf{SE}(n)}(\sigma^2) 
     \le \omega\!\left(\sigma^{2d+4}\right),
    \qquad \text{as } \sigma,N \to \infty.
\end{align}
\end{thm}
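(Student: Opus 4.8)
The plan is to assemble the statement from three ingredients already available: the algebraic extraction of Theorem~\ref{thm:reductionFromAutocorrelationToTensorMoment}, the sample-complexity transfer of Proposition~\ref{thm:sampleComplexityRigidMotion}, and — for the lower bound — the known method-of-moments lower bound for the compact $\mathsf{SO}(n)$ problem combined with the reduction of \cite{balanov2025note}. No new machinery is needed; the work is in checking that the hypotheses of these results are met and that the limiting extraction in Theorem~\ref{thm:reductionFromAutocorrelationToTensorMoment} is compatible with the compactness-based estimation argument behind Proposition~\ref{thm:sampleComplexityRigidMotion}.

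\textbf{Step 1: identifiability from order-$(d+2)$ autocorrelations.} First I would argue that knowing the $\mathsf{SE}(n)$ autocorrelations up to order $\bar d := d+2$ determines the $\mathsf{SO}(n)$-orbit of $f$. By Theorem~\ref{thm:reductionFromAutocorrelationToTensorMoment}, evaluating $A^{(d+2)}_{f,\rho}$ on the nearly-antipodal slice \eqref{eqn:nearlyAntipodalPoints}, integrating over $S^{n-1}$, normalizing by the matching slice of $A^{(2)}_{f,\rho}$ — which is nonzero for small $\delta$ by Assumption~\ref{assum:nonVanishingSupport} and Lemma~\ref{lem:invarianceOfAntipodalProdSum} — and letting $\delta \to 0^+$ recovers $M^{(d)}_{f,\rho}(\bm{\eta}_1,\dots,\bm{\eta}_d)$ for every choice of interior points $\bm{\eta}_1,\dots,\bm{\eta}_d$. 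Since $d$ is chosen so that $M^{(d)}_{f,\rho}$ uniquely determines the orbit of $f$ under $\mathsf{SO}(n)$, the functions $\{A^{(j)}_{f,\rho}\}_{j=0}^{d+2}$ (indeed already $\{A^{(2)}_{f,\rho}, A^{(d+2)}_{f,\rho}\}$) determine the $\mathsf{SO}(n)$-orbit, establishing the identifiability claim in the theorem.

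\textbf{Step 2: upper bound.} With identifiability in hand, I would invoke Proposition~\ref{thm:sampleComplexityRigidMotion} with $\bar d = d+2$, using the standing compactness of the parameter space $\Theta$: it yields $N^{\ast}_{\mathsf{SE}(n)}(\sigma^2) \le \omega(\sigma^{2\bar d}) = \omega(\sigma^{2d+4})$ as $\sigma, N \to \infty$. Concretely this reuses Proposition~\ref{thm:propRigidMotionEquivalence}: the empirical autocorrelations of order $\le d+2$ converge almost surely to the population autocorrelations plus an explicit $\sigma$-dependent correction built from lower-order autocorrelations and delta terms; inverting this correction, then invoking a continuity/compactness argument on $\Theta$, produces a consistent orbit estimator once $N/\sigma^{2(d+2)} \to \infty$.

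\textbf{Step 3: lower bound, and the main obstacle.} For the matching lower bound I would chain two facts: the reduction of \cite{balanov2025note}, $N^{\ast}_{\mathsf{SE}(n)}(\sigma^2,\epsilon) \ge N^{\ast}_{\mathsf{SO}(n)}(\sigma^2,\epsilon)$, together with the standard method-of-moments lower bound for compact-group orbit recovery (e.g.\ \cite{abbe2018estimation,bandeira2023estimation}): when $d$ is the minimal moment order identifying the orbit under $\mathsf{SO}(n)$, a likelihood-expansion/Fisher-information argument forces $N^{\ast}_{\mathsf{SO}(n)}(\sigma^2) = \omega(\sigma^{2d})$, hence $N^{\ast}_{\mathsf{SE}(n)}(\sigma^2) \ge \omega(\sigma^{2d})$. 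The one genuinely delicate point is the Step~1/Step~2 interface: Theorem~\ref{thm:reductionFromAutocorrelationToTensorMoment} recovers $M^{(d)}_{f,\rho}$ only in the limit $\delta \to 0^+$, whereas Proposition~\ref{thm:sampleComplexityRigidMotion} works with the population autocorrelation \emph{functions}. I would need to verify that the map from the autocorrelation functions (as elements of a space of continuous functions on a compact domain) to $M^{(d)}_{f,\rho}$, and thence to the orbit, is continuous uniformly over $\Theta$ — in particular that the normalizing quantity $C_f$ appearing in \eqref{eqn:mainTheoremExtraction} stays bounded away from zero on $\Theta$ — so that estimation error in the empirical autocorrelations propagates to a controlled error in the recovered orbit. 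This is precisely where Assumptions~\ref{assum:support}--\ref{assum:nonVanishingSupport} and the compactness of $\Theta$ are used, and it is the step that demands care rather than routine bookkeeping.
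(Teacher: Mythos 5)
Your proposal follows essentially the same route as the paper's proof: identifiability of the orbit from $A^{(2)}_{f,\rho}$ and $A^{(d+2)}_{f,\rho}$ via Theorem~\ref{thm:reductionFromAutocorrelationToTensorMoment}, the upper bound by applying Proposition~\ref{thm:sampleComplexityRigidMotion} with $\bar d = d+2$, and the lower bound imported from~\cite{balanov2025note}. The uniformity/continuity concern you flag at the Step~1/Step~2 interface is a fair observation, but the paper handles it only implicitly through the compactness of $\Theta$ and the Newey--McFadden consistency conditions inside Proposition~\ref{thm:sampleComplexityRigidMotion}, so your account matches the paper's argument.
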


In other words, the sample complexity for the rigid motion group is lower bounded by that of the special orthogonal group, and upper bounded by the sample complexity of the corresponding orbit recovery problem under $\mathsf{SO}(n)$ multiplied by a factor of $\sigma^4$. The upper bound is established via an explicit recovery procedure in the limit as $N \to \infty$, as detailed in Appendix~\ref{sec:proofOfmainTheoremRigidmotion}, and demonstrated numerically in Section~\ref{sec:empirical}.


\section{Sample complexity of the multi-target detection (MTD) problem} \label{sec:sampleComplexityMTD}
In this section, we derive lower and upper bounds on the sample complexity required to estimate the orbit of the unknown signal $f$ in the MTD model (Problem~\ref{prob:orbitRecoveryMTD}). Our main interest is the low-SNR regime, where, as illustrated in Figure~\ref{fig:1}(d), it is infeasible to reliably estimate the locations of the individual occurrences.
Our approach is to relate the MTD problem to the orbit recovery problem under $\mathsf{SE}(n)$ studied in Section~\ref{sec:sampleComplexityRigidMotion}. In particular, we show in Proposition~\ref{thm:prop0} that the analysis of the MTD model can be reduced to an instance of orbit recovery under $\mathsf{SE}(n)$, which enables us to transfer sample-complexity bounds between the two settings.


\subsection{Continuous MTD model}
We begin by presenting the continuous MTD model together with the assumptions that underlie its formulation. For clarity, we restate the model from Problem~\ref{prob:orbitRecoveryMTD} in a form that emphasizes the separation conditions imposed on the signal instances. 

\paragraph{Model description.}
Let $f \in C^0 (\mathcal{B}_R^{(n)})$ be an unknown continuous signal supported on the $n$-dimensional ball of radius $R$. The observed signal is a function $y: [-MR, MR]^n \to \mathbb{R}$ given by
\begin{align}
    y = \sum_{i=0}^{N-1} s_i \ast f_i + \xi, \label{eqn:MTDmodelGeneral}
\end{align}
where $f_i = \mathcal{R}_i \cdot f$ for $\mathcal{R}_i \in \mathsf{SO}(n)$, the symbol $\ast$ denotes a linear convolution, $N$ is the total number of signal occurrences in the observation $y$, and $M \in \mathbb{N}^{+}$ is a positive integer.
The noise term $\xi$ is a Gaussian white noise process on the domain $[-MR, MR]^n$, i.e., a zero-mean Gaussian process with covariance defined in~\eqref{eqn:covarianceMatrixMTD}. The group elements $\{\mathcal{R}_i\}_{i \in \{0, \ldots, N-1\}}$ are i.i.d. samples drawn from a distribution $\rho$, and are independent of the noise $\xi$.
The positions of the embedded signals are encoded by the location indicator functions $\{s_i\}_{i=0}^{N-1}$, which are also unknown. The central goal in the MTD model is to recover the $\mathsf{SO}(n)$ orbit of $f$ from the noisy observation $y$. The assumptions on the signal $f$ are the same as those stated in Section~\ref{sec:assumptionsOnSignal}.

We define the signal density by
\begin{align}
    \gamma \triangleq \frac{N \cdot V_n(2R)}{(2MR)^n}, \label{eqn:densityMTDmodel}
\end{align}
where $V_n(2R) = \frac{\pi^{n/2}(2R)^n}{\Gamma(n/2+1)}$ denotes the volume of an $n$-dimensional ball of radius $2R$. We assume that the noise level $\sigma$ and the density parameter $\gamma$ are known, noting that prior work has shown that $\gamma$ can be reliably estimated from the data~\cite{bendory2019multi}. 
We consider the asymptotic regime where $\sigma, N, M \to \infty$, while keeping the density $\gamma < 1$ fixed.

\begin{remark}
    In contrast to previously studied MTD models (see, e.g.,~\cite{bendory2019multi}), the formulation in~\eqref{eqn:MTDmodelGeneral} is posed in the continuous domain. 
    This choice reflects the fact that translations are naturally expressed in Cartesian coordinates, while group actions of $\mathsf{SO}(n)$ are defined in spherical coordinates. By working in the continuous setting, the model avoids discretization and sampling artifacts.
\end{remark}

\paragraph{Assumptions on the unknown locations.}
Each location indicator $s_i(\bm{x})$ is a Dirac delta function centered at a point $\bm{x}_i$, given by
\begin{align}
    s_i(\bm{x}) = \delta(\bm{x} - \bm{x}_i),
\end{align}
which specifies that the center of the signal instance $f_i$ is located at position $\bm{x}_i$ within the observation $y$. The positions $\{\bm{x}_i\}_{i=0}^{N-1}$ are assumed to be deterministic but unknown. 

Throughout, we assume that signal occurrences are non-overlapping. 
Formally, the centers $\bm{x}_i$ and $\bm{x}_j$ of any two signals $f_i$ and $f_j$ must be separated by a distance of at least $2R$, which ensures that their supports are disjoint and their contributions to the observation $y$ do not interfere. 
For the purpose of establishing the upper bound on sample complexity, we further impose a stronger condition, referred to as the \emph{well-separated} case: the minimal distance between any two centers is greater than $4R$. 
This guarantees a buffer zone of width at least $2R$ between the supports of all signal instances.
\begin{definition} [Separation] \label{def:wellSeperatedModel}
Consider the following separation condition:
\begin{align} \label{eq:sep}
  \nonumber \text{If} \ s_{i_1}\p{\bm{x}} = \delta\p{\bm{x} - \bm{x}_{i_1}} \ \text{and} \ s_{i_2}\p{\bm{x}} = \delta\p{\bm{x} - \bm{x}_{i_2}} \ \text{for} \ i_1 \neq i_2, \\
   \text{then} \ \norm{\bm{x}_{i_1} - \bm{x}_{i_2}}_F \geq R_{\mathsf{sep}} + \epsilon,
\end{align}
for $\epsilon > 0$.
If the MTD model satisfies~\eqref{eq:sep} with $R_{\mathsf{sep}}\geq 2R$, we say it is the non-overlapping; if $R_{\mathsf{sep}}\geq 4R$, we say it is well-separated. 
\end{definition}

Although the well-separation assumption is essential for our analysis, previous work has demonstrated that the effect of adjacent signals that fail to meet this condition, but still obey the non-overlapping constraint, remains limited, albeit not negligible~\cite{kreymer2022two}.

\subsection{Statistical analysis of the autocorrelations in MTD} \label{sec:autoCorrealtionAnalysis}
In analogy to the autocorrelation analysis developed for the rigid motion problem in Section~\ref{sec:autoCorrealtionAnalysisRigidMotion}, this section presents the autocorrelation framework for the MTD problem. We derive several key results that are instrumental in establishing upper bounds on the sample complexity.

To address the challenge of localizing signal instances in high-noise settings~\cite{aguerrebere2016fundamental,dadon2024detection}, previous works have proposed using the autocorrelations, which are invariant to both unknown signal locations and unknown group actions~\cite{bendory2019multi,bendory2023toward,kreymer2022two,lan2020multi,bendory2023multi}, as we define next.

\begin{definition} [Empirical autocorrelations of the MTD model] \label{def:autoCorrelationMomentsEmpirical}
    Let $y \in [-MR, MR]^n \to \mathbb{R}$ be an MTD observation following the MTD model in \eqref{eqn:MTDmodelGeneral}. The empirical $d$-th order autocorrelation of $y$ is defined by: 
\begin{align}
    \nonumber {a}^{(d)}_{y} & (\bm{\tau}_0, \bm{\tau}_1, \bm{\tau}_2, ..., \bm{\tau}_{d-1}) 
    \\ & = \frac{1}{(2MR)^n} \int_{\bm{x} \in [-MR, MR]^n} { \tilde{y}(\bm{x} + \bm{\tau}_0)\tilde{y}(\bm{x}+\bm{\tau}_1) ... \tilde{y}(\bm{x}+\bm{\tau}_{d-1}) \ d \bm{x} }, \label{eqn:autoCorrealtionMoments}
\end{align}
where $\tilde{y}$ is the padding with zeros of $y$ to size of $\pp{-(M+1)R, \p{M+1}R}^n$, for $\bm{\tau}_0, \bm{\tau}_1, \bm{\tau}_2, ..., \bm{\tau}_{d-1} \in \mathcal{B}_R^{(n)}$. 
\end{definition}

Definition~\ref{def:autoCorrelationMomentsEmpirical} describes the empirical autocorrelation computed over the full MTD observation defined in~\eqref{eqn:MTDmodelGeneral}. Next, we introduce the autocorrelation ensemble mean, which represents the expected autocorrelation assuming each signal instance in the MTD model is spatially well-separated (Definition~\ref{def:wellSeperatedModel}); see Figure~\ref{fig:5} for an illustration.

\begin{definition} [Autocorrelation ensemble mean] \label{def:autoCorrelationMomentsEnsembele}
    Let $f_{\mathcal{R}} = \mathcal{R} \cdot f$, where $\mathcal{R} \in \mathsf{SO}(n)$ is a random group element drawn from the distribution $\mathcal{R} \sim \rho$.
    Define the random field $Y_{\mathcal{R}}: \mathcal{B}_{3R}^{(n)} \to \mathbb{R}$ by, 
    \begin{align}
        Y_{\mathcal{R}} (\bm{x}) = \tilde{f}_{\mathcal{R}} (\bm{x}) + \xi (\bm{x}), \label{eqn:ensembeleRV}
    \end{align}
    where $\tilde{f}_{\mathcal{R}}: \mathcal{B}_{3R}^{(n)} \to \mathbb{R}$ is given by,
    \begin{align}
        \tilde{f}_{\mathcal{R}}(\bm{x}) = f_{\mathcal{R}}(\bm{x}) \mathbf{1}_{\|\bm{x}\| \leq R} + 0 \cdot \mathbf{1}_{R < \bm{\|x\|} \leq 3R}, 
    \end{align}
    that is, the padding of $f_{\mathcal{R}}$ with zeros to an $n$-dimensional ball with radius $3R$, where the signal $f_{\mathcal{R}}$ is located at the center of the ball. The term $\xi$ is a white Gaussian noise with a variance $\sigma^2$. 
    Then, the $d$-order autocorrelation ensemble mean of $Y_{\mathcal{R}}$ is defined by,
    \begin{align}
        \nonumber \bar{a}^{(d)}_{Y,\rho} & \p{\bm{\tau}_0, \bm{\tau}_1, \bm{\tau}_2, ..., \bm{\tau}_{d-1}} 
        \\ & = \frac{1}{V_n\p{2R}} \int_{\bm{x} \in \mathcal{B}_{2R}^{(n)}} \mathbb{E}_{\mathcal{R} \sim \rho, \xi} \ppp{Y_{\mathcal{R}} (\bm{x} + \bm{\tau}_0, )Y_{\mathcal{R}} (\bm{x} + \bm{\tau}_1) ... Y_{\mathcal{R}}(\bm{x}+\bm{\tau}_{d-1})}  d \bm{x}, \label{eqn:autoCorrealtionMomentsEnsembele}
    \end{align}
    for $\bm{\tau}_0, \bm{\tau}_1, \bm{\tau}_2, ..., \bm{\tau}_{d-1} \in \mathcal{B}_R^{(n)}$, where $\mathbb{E}_{\mathcal{R} \sim \rho, \xi} \ppp{\cdot}$ is an expectation  over the group actions $\mathcal{R} \in \mathsf{SO}(n)$ following a distribution $\rho$, and on the additive noise $\xi$, and $V_n\p{2R}$ is the volume of an $n$-dimensional ball with a radius $2R$. 
\end{definition}

\begin{remark}
\label{remark:5.5} 
    The autocorrelation ensemble mean, $\bar{a}^{(d)}_{Y,\rho}$, defined in Definition~\ref{def:autoCorrelationMomentsEnsembele} and equation~\eqref{eqn:autoCorrealtionMomentsEnsembele} matches the right-hand side of equation~\eqref{eqn:propRigidMotionEquivalence}, i.e., the limiting form of the rigid motion autocorrelation as $N \to \infty$, when the integration domain is taken to be $\mathcal{D} = \mathcal{B}_{2R}^{(n)}$. 
\end{remark}

The following proposition establishes that, in the well-separated regime, the empirical autocorrelations converge almost surely to their ensemble means. This result, illustrated in Figure~\ref{fig:5}, extends the findings of~\cite{balanov2025note}. The proof is provided in Appendix~\ref{sec:proofOfProp0}.

\begin{proposition} \label{thm:prop0}
Let $y: [-MR, MR]^n \to \mathbb{R}$ be an MTD observation following the model in~\eqref{eqn:MTDmodelGeneral}. 
Assume the model is well-separated, 
and that the group elements are drawn according to a distribution $\rho$.
Let $a^{(d)}_y$ denote the empirical $d$-th order autocorrelation as defined in~\eqref{eqn:autoCorrealtionMoments}, and let $\bar{a}^{(d)}_{Y,\rho}$ denote the $d$-th order autocorrelation ensemble mean as defined in~\eqref{eqn:autoCorrealtionMomentsEnsembele}. Then, for every $\bm{\tau}_0, \bm{\tau}_1, \dots, \bm{\tau}_{d-1} \in \mathcal{B}_R^{(n)}$, we have
\begin{align}
    \nonumber \lim_{N, M \to \infty} a^{(d)}_y (\bm{\tau}_0, \bm{\tau}_1, \dots, \bm{\tau}_{d-1})
    \equalityAS \ & \gamma \cdot \bar{a}^{(d)}_{Y,\rho} (\bm{\tau}_0, \bm{\tau}_1, \dots, \bm{\tau}_{d-1}) \\
    & + (1 - \gamma) \cdot \chi (\bm{\tau}_0, \bm{\tau}_1, \dots, \bm{\tau}_{d-1}), \label{eqn:asymptoticAutoCorrelationEquivalence}
\end{align}
where the noise term is given by
\[
    \chi(\bm{\tau}_0, \bm{\tau}_1, \dots, \bm{\tau}_{d-1}) = \mathbb{E} \left[ \xi(\bm{\tau}_0) \xi(\bm{\tau}_1) \cdots \xi(\bm{\tau}_{d-1}) \right].
\]
\end{proposition}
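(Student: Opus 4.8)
The plan is to prove Proposition~\ref{thm:prop0} by decomposing the empirical autocorrelation integral over $[-MR,MR]^n$ into local contributions associated with each signal instance plus a "noise-only" contribution from the regions not touched by any instance, and then apply a law of large numbers to each family. First I would fix the offsets $\bm{\tau}_0,\dots,\bm{\tau}_{d-1}\in\mathcal{B}_R^{(n)}$ and partition the integration domain: since the model is well-separated (Definition~\ref{def:wellSeperatedModel} with $R_{\mathsf{sep}}\ge 4R$), each ball $\mathcal{B}_{2R}^{(n)}(\bm{x}_i)$ centered at a signal instance is disjoint from all the others, and moreover the padded observation $\tilde y$ restricted to $\mathcal{B}_{2R}^{(n)}(\bm{x}_i)$ equals (a translate of) the random field $Y_{\mathcal{R}_i}$ of Definition~\ref{def:autoCorrelationMomentsEnsembele} — because the $2R$-buffer guarantees that when $\bm{x}$ ranges over $\mathcal{B}_{R}^{(n)}(\bm{x}_i)$ and $\bm{\tau}_j$ over $\mathcal{B}_R^{(n)}$, all shifted points $\bm{x}+\bm{\tau}_j$ still see only the $i$-th instance and its surrounding noise, never a neighbor. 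The integral $\int_{[-MR,MR]^n}$ then splits as $\sum_{i=0}^{N-1}\int_{\mathcal{B}_R^{(n)}(\bm{x}_i)}(\cdots)\,d\bm{x}$ (the "signal windows") plus $\int_{\Omega_{\mathrm{free}}}(\cdots)\,d\bm{x}$ over the leftover region $\Omega_{\mathrm{free}}$ where $\tilde y$ is pure noise.

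Next I would normalize by $(2MR)^n$ and identify the two asymptotic pieces. For the signal-window sum, each term $\int_{\mathcal{B}_R^{(n)}(\bm{x}_i)}\tilde y(\bm{x}+\bm{\tau}_0)\cdots\tilde y(\bm{x}+\bm{\tau}_{d-1})\,d\bm{x}$, after the change of variables recentering at $\bm{x}_i$, is an i.i.d.\ copy (over $i$, through the independent $\mathcal{R}_i$ and the independent local noise) of the random variable $\int_{\mathcal{B}_R^{(n)}}\prod_j Y_{\mathcal{R}}(\bm{x}+\bm{\tau}_j)\,d\bm{x}$. Its expectation is, by Fubini (and the formal definition of the stochastic integral from Appendix~\ref{sec:preliminariesToStatisticalPart}), exactly $V_n(2R)\,\bar a^{(d)}_{Y,\rho}(\bm{\tau}_0,\dots,\bm{\tau}_{d-1})$ after matching the normalization in~\eqref{eqn:autoCorrealtionMomentsEnsembele}. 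Multiplying the sum of $N$ such i.i.d.\ terms by $1/(2MR)^n$ and using $\gamma = N V_n(2R)/(2MR)^n$, the strong law of large numbers gives almost-sure convergence to $\gamma\cdot\bar a^{(d)}_{Y,\rho}$. For the free region, $\vol(\Omega_{\mathrm{free}})/(2MR)^n \to 1-\gamma$ (since the total measure covered by signal balls $\mathcal{B}_{2R}^{(n)}(\bm{x}_i)$ is $N V_n(2R)=\gamma(2MR)^n$, up to a vanishing $O(MR)^{n-1}$ boundary correction), and on $\Omega_{\mathrm{free}}$ the integrand is $\xi(\bm{x}+\bm{\tau}_0)\cdots\xi(\bm{x}+\bm{\tau}_{d-1})$ with $\xi$ stationary white noise; a spatial ergodic/LLN argument (or a direct second-moment computation plus Borel–Cantelli along $M\to\infty$) yields that $\frac{1}{(2MR)^n}\int_{\Omega_{\mathrm{free}}}\prod_j\xi(\bm{x}+\bm{\tau}_j)\,d\bm{x}\to (1-\gamma)\,\chi(\bm{\tau}_0,\dots,\bm{\tau}_{d-1})$ almost surely, where $\chi$ is precisely the Gaussian moment $\mathbb{E}[\xi(\bm{\tau}_0)\cdots\xi(\bm{\tau}_{d-1})]$ — nonzero only when the $\bm{\tau}_j$'s pair up, via Isserlis/Wick. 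Adding the two limits gives~\eqref{eqn:asymptoticAutoCorrelationEquivalence}; invoking Remark~\ref{remark:5.5} confirms $\bar a^{(d)}_{Y,\rho}$ matches the $N\to\infty$ rigid-motion limit with $\mathcal{D}=\mathcal{B}_{2R}^{(n)}$.

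The main obstacle will be making the white-noise integrals rigorous, since $\xi$ is only a distribution and products like $\xi(\bm{x}+\bm{\tau}_0)\cdots\xi(\bm{x}+\bm{\tau}_{d-1})$ are not classically defined. I expect to handle this exactly as in the companion statistical appendix: interpret the stochastic integral via a mollified/band-limited approximation $\xi_\epsilon$ (or via multiple Wiener–Itô integrals), verify that the $L^2$ limit exists and is independent of the regularization, and then carry the LLN through for the regularized version before passing to the limit. A secondary technical point is controlling the boundary effects: the well-separation must be compatible with instances near the edge of $[-MR,MR]^n$, and the padding of $y$ to $[-(M+1)R,(M+1)R]^n$ (Definition~\ref{def:autoCorrelationMomentsEmpirical}) is what makes the window decomposition exact even for edge instances; I would note that the number of edge instances is $O(M^{n-1})=o(N)$ so their contribution vanishes after normalization. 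Finally, one must check that the cross-terms are genuinely absent — i.e.\ that no window $\mathcal{B}_R^{(n)}(\bm{x}_i)$ shifted by offsets in $\mathcal{B}_R^{(n)}$ ever reaches into a neighboring instance's support — which is precisely where the strengthened $4R$ separation (rather than mere $2R$ non-overlap) is used, and I would state this as the one place the well-separated hypothesis is essential.
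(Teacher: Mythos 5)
Your proposal follows essentially the same route as the paper's proof: partition $[-MR,MR]^n$ into the disjoint balls of radius $2R$ around the instance centers plus the noise-only complement, recognize the recentered window integrals as i.i.d.\ copies (over $i$) whose expectation is $V_n(2R)\,\bar{a}^{(d)}_{Y,\rho}$, apply the strong law of large numbers to each family, and invoke the $4R$ separation exactly where the paper does (to keep the shifted sample points $\bm{x}+\bm{\tau}_j$ from reaching a neighboring instance, cf.\ Lemma~\ref{lemma:A1}). The only issues are cosmetic: in your window decomposition you write $\mathcal{B}_R^{(n)}(\bm{x}_i)$ where the windows must have radius $2R$ (as your own volume count $N V_n(2R)=\gamma(2MR)^n$ and the normalization in Definition~\ref{def:autoCorrelationMomentsEnsembele} require), and you apply the LLN directly to the window integrals whereas the paper applies it pointwise in $\bm{x}$ and then integrates via dominated convergence --- an equivalent ordering of the same steps.
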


\begin{figure*}[!t]
    \centering
    \includegraphics[width=1.0 \linewidth]{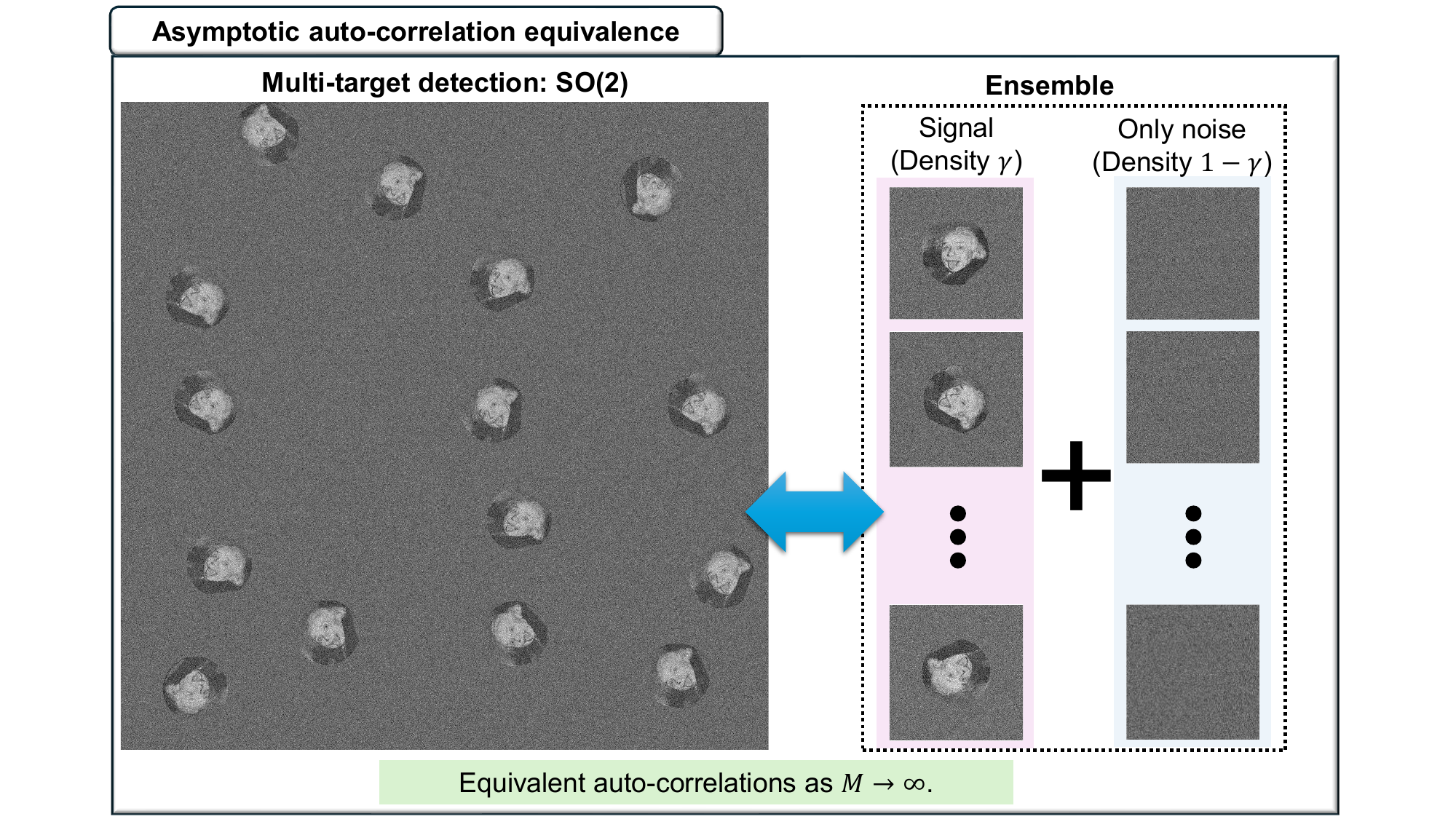}
    \caption{\textbf{Asymptotic equivalence between empirical and ensemble mean autocorrelations.} This figure illustrates the convergence of the empirical autocorrelation (Definition~\ref{eqn:autoCorrealtionMoments}) to the ensemble mean autocorrelation (Definition~\ref{eqn:ensembeleRV}) as the number of samples $N \to \infty$, as formalized in Proposition~\ref{thm:prop0}. In the ensemble model, each observation is drawn either from the signal distribution with probability $\gamma$ or from pure noise with probability $1 - \gamma$. The empirical autocorrelation thus asymptotically matches the expected autocorrelation under this probabilistic mixture.}
    \label{fig:5}
\end{figure*}

\subsection{Sample complexity bounds for MTD}
We now state the corresponding sample complexity theorem for the MTD model, analogous to Theorem~\ref{thm:mainTheoremRigidmotion}, which addressed the rigid motion model.

\begin{thm}[Sample complexity of MTD orbit recovery] 
\label{thm:mainTheorem}
Consider the orbit recovery problem under $\mathsf{SO}(n)$ (Problem~\ref{prob:orbitRecoverySOn}) with rotation distribution $\rho$ supported on $G=\mathsf{SO}(n)$. 
Let $d$ denote the minimal order such that the $d$-th order moment 
$M_{f, \rho}^{(d)}$ uniquely determines the orbit of a signal $f$ under the group action of $\mathsf{SO}(n)$.

Now, consider the MTD model~\eqref{eqn:MTDmodelGeneral} with the same distribution $\rho$, consisting of $N$ occurrences of the signal $f$ located at unknown positions $\{s_i\}_{i=0}^{N-1}$. 
Assume the well-separated case (Definition~\ref{def:wellSeperatedModel}) and that $f$ satisfies Assumptions~\ref{assum:support} and~\ref{assum:nonVanishingSupport}. 
Then the sample complexity of the $\mathsf{SO}(n)$ orbit recovery in the MTD model satisfies
\begin{align}
    \omega\!\left(\sigma^{2d}\right) 
     \le  N^{\ast}_{\mathrm{MTD}}(\sigma^2) \le \omega\!\left(\sigma^{2d+4}\right),
    \qquad \text{as } \sigma,N,M \to \infty.
\end{align}
\end{thm}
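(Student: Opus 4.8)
The plan is to reduce the MTD sample-complexity problem to the rigid-motion orbit recovery problem already analyzed in Section~\ref{sec:sampleComplexityRigidMotion}, and then invoke Theorem~\ref{thm:mainTheoremRigidmotion}. The lower bound $N^{\ast}_{\mathrm{MTD}}(\sigma^2) \ge \omega(\sigma^{2d})$ should follow from the general principle that $d$-th order moments are required to identify the $\mathsf{SO}(n)$-orbit, combined with the known method-of-moments lower bound $N = \omega(\sigma^{2d})$ for estimating such moments; since the MTD observation is, up to the separation structure, a superposition of rigidly-moved copies of $f$ in noise, any estimator of the orbit yields an estimator of the $d$-th moment, so the lower bound transfers. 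This is essentially the content of the cited result~\cite{balanov2025note} and should require only a short argument. For the upper bound, I would first apply Proposition~\ref{thm:prop0}: in the well-separated regime, as $N, M \to \infty$ the empirical autocorrelations $a_y^{(d)}$ converge almost surely to $\gamma \bar a^{(d)}_{Y,\rho} + (1-\gamma)\chi$, where $\bar a^{(d)}_{Y,\rho}$ is, by Remark~\ref{remark:5.5}, exactly the limiting rigid-motion autocorrelation with integration domain $\mathcal{D} = \mathcal{B}_{2R}^{(n)}$ (i.e., the right-hand side of~\eqref{eqn:propRigidMotionEquivalence}).

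The next step is to argue that, since $\gamma < 1$ is known and the pure-noise moment term $\chi$ is an explicitly known function of $\sigma$ (products of Gaussian white-noise values, which is a sum over pairings of products of $\sigma^2\delta$-functions), one can subtract it off and divide by $\gamma$ to recover the rigid-motion population autocorrelation $A^{(d)}_{f,\rho} + P^{(d)}_{f,\rho}$ on $\mathcal{D} = \mathcal{B}_{2R}^{(n)}$ exactly in the limit. From this, following the procedure in the proof of Theorem~\ref{thm:mainTheoremRigidmotion} (Appendix~\ref{sec:proofOfmainTheoremRigidmotion}), one peels off the noise correction terms $P^{(d)}_{f,\rho}$ inductively (they involve only strictly-lower-order autocorrelations times known powers of $\sigma$ and delta functions), obtaining the clean population autocorrelations $A^{(d+2)}_{f,\rho}$ and $A^{(2)}_{f,\rho}$. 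Then Theorem~\ref{thm:reductionFromAutocorrelationToTensorMoment} extracts the $d$-th order $\mathsf{SO}(n)$ moment $M^{(d)}_{f,\rho}$ via the antipodal-slice limit in~\eqref{eqn:mainTheoremExtraction}, which by hypothesis uniquely determines the orbit of $f$. Thus $(d+2)$-order autocorrelations suffice for identifiability, and the general sample-complexity-from-identifiability argument (Proposition~\ref{thm:sampleComplexityRigidMotion}, adapted to the MTD empirical autocorrelations, which concentrate around their means at the same $\sigma^{2\bar d}$ rate for $\bar d = d+2$) yields $N^{\ast}_{\mathrm{MTD}}(\sigma^2) \le \omega(\sigma^{2(d+2)}) = \omega(\sigma^{2d+4})$.

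I expect the main obstacle to be the upper-bound concentration step: one must verify that the empirical MTD autocorrelations of order up to $d+2$ concentrate around the ensemble-mean quantity $\gamma \bar a^{(d+2)}_{Y,\rho} + (1-\gamma)\chi$ at a rate that, after the division by $\gamma$, the subtraction of known noise terms, the inductive peeling of $P^{(\cdot)}_{f,\rho}$, and the $\delta \to 0^+$ antipodal-limit in Theorem~\ref{thm:reductionFromAutocorrelationToTensorMoment}, still gives a consistent orbit estimator whenever $N/\sig^{2(d+2)} \to \infty$ — here one must track how the variance of each empirical autocorrelation scales in $\sigma$ (the dominant term being of order $\sigma^{2(d+2)}/N$ from the highest-order pure-noise contribution) and confirm that the well-separation assumption (Definition~\ref{def:wellSeperatedModel}) controls the cross-terms between distinct occurrences so they do not contaminate the limit. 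The $\delta\to 0^+$ limit must also be interchanged carefully with the $N,M\to\infty$ limit; I would handle this by first estimating $A^{(d+2)}_{f,\rho}$ and $A^{(2)}_{f,\rho}$ at a fixed small $\delta$, then letting $\delta$ shrink slowly with $N$, using the continuity of $f$ (Assumption~\ref{assum:support}) and the non-vanishing antipodal correlation (Assumption~\ref{assum:nonVanishingSupport}) to ensure the denominator stays bounded away from zero. The remaining steps — the lower bound, the algebraic subtraction of $\chi$, and the invocation of Theorems~\ref{thm:reductionFromAutocorrelationToTensorMoment} and~\ref{thm:mainTheoremRigidmotion} — I expect to be routine given the machinery already established.
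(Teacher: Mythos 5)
Your proposal follows essentially the same route as the paper's proof: the lower bound is quoted from~\cite{balanov2025note}, and the upper bound proceeds by the identical three-stage reduction --- Proposition~\ref{thm:prop0} to pass from empirical MTD autocorrelations to the ensemble means, Proposition~\ref{thm:propRigidMotionEquivalence} (with Remark~\ref{remark:5.5}) to recover the population autocorrelations $A^{(2)}_{f,\rho}$ and $A^{(d+2)}_{f,\rho}$, Theorem~\ref{thm:reductionFromAutocorrelationToTensorMoment} to extract $M^{(d)}_{f,\rho}$, and finally Proposition~\ref{thm:sampleComplexityRigidmotion} with $\bar d = d+2$ to conclude $\omega(\sigma^{2d+4})$. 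Your additional remarks on tracking the variance scaling and on interchanging the $\delta\to 0^+$ limit with $N,M\to\infty$ are reasonable points of care, but they refine rather than alter the paper's argument.
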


Theorem~\ref{thm:mainTheorem} is established via an explicit algorithmic reduction, described in Appendix~\ref{sec:proofSampleComplexityMTD}, which reduces the MTD empirical autocorrelations into the $\mathsf{SO}(n)$ moments. The proof unfolds in three main steps. First, Proposition~\ref{thm:prop0} shows that as the number of signals and observation length $N, M \to \infty$, the empirical autocorrelations converge to their ensemble means. Next, by Remark~\ref{remark:5.5} and equation~\eqref{eqn:propRigidMotionEquivalence}, these ensemble-mean autocorrelations yield the population autocorrelations of the underlying signal. Finally, Theorem~\ref{thm:reductionFromAutocorrelationToTensorMoment} demonstrates that the $d$-th order $\mathsf{SO}(n)$ moment can be recovered from the second- and $(d+2)$-order population autocorrelations. Since these suffice for orbit recovery, Proposition~\ref{thm:sampleComplexityRigidMotion} implies a sample complexity of $\omega(\sigma^{2d+4})$, yielding the upper bound for the MTD model.

\begin{remark}
The sample complexity bounds in Theorem~\ref{thm:mainTheorem} also depend on the signal density parameter~$\gamma$, defined in \eqref{eqn:densityMTDmodel}, which measures the average number of signal occurrences per MTD observation length. 
In particular, the lower bound scales as $\omega(\sigma^{2d} \cdot  \gamma^{-1})$ and the upper bound as $\omega(\sigma^{2d+4} \cdot\gamma^{-1})$. 
This dependence reflects the fact that smaller values of $\gamma$ correspond to sparser signal placements, which effectively decrease the SNR and therefore require larger observation sizes for accurate recovery. 
For clarity of presentation, Theorem~\ref{thm:mainTheorem} states the bounds only in terms of the noise level, with the factor $\gamma$ absorbed into the asymptotic notation. 
\end{remark}

\section{Algorithmic proof-of-concept and empirical validation} \label{sec:empirical}

This section provides empirical validation of the algebraic extraction established in Theorem~\ref{thm:reductionFromAutocorrelationToTensorMoment} and of the sample-complexity scaling predicted by Theorem~\ref{thm:mainTheoremRigidmotion}. In all experiments, we first compute the rigid-motion autocorrelations $A_{f, \rho}^{(d)}$ (Definition~\ref{def:autoCorrelationNoiseFree}), then extract the rotational moments via the boundary-limit expression~\eqref{eqn:mainTheoremExtraction}, and finally use these moments for signal reconstruction. To our knowledge, this constitutes the \emph{first complete implementation} of a three-dimensional reconstruction pipeline that directly parallels the cryo-electron tomography setting. Section~\ref{sec:applications} clarifies how this pipeline relates to the MTD formulation and discusses implications for cryo-ET and cryo-EM.

\paragraph{Scope of the validation.}
The validation framework in this section has two stages: 
(i) extraction of $\mathsf{SO}(n)$ moments from the $\mathsf{SE}(n)$ autocorrelations, and 
(ii) an inversion from $\mathsf{SO}(n)$ moments to the orbit of the underlying signal. 
Our emphasis is on the extraction side, where for the inversion, we adopt specific and convenient implementations, as we detail in the sequel.
We use the present inversion approaches purely for computational convenience. 
Accordingly, the reconstructions should be viewed as a \emph{proof of concept}: we do not claim optimality, noise robustness, or numerical stability of the inversion routines. 
The main contribution here is to establish and validate the extraction itself; once a stable inversion algorithm is employed, the overall pipeline naturally extends to noisy settings.

\subsection{Pipeline overview and implementation details}
Although the theoretical analysis is continuous, numerical evaluation requires discretization on finite grids. This discretization introduces finite-resolution errors that can affect both the estimated moments and the reconstructed signals. Our experiments, therefore, quantify how sampling resolution influences the accuracy of extracted rotational moments and the quality of the resulting reconstructions. In addition, we examine the impact of additive noise on the stability and accuracy of the recovery process. Specifically, we focus on two primary sources of deviation from the continuum theory:  
(i) discretization errors arising from finite radial and angular resolution, including the dependence on the boundary parameter $\delta > 0$ in~\eqref{eqn:nearlyAntipodalPoints}; and  
(ii) noise and finite-sample effects introduced by the empirical estimators described in Section~\ref{sec:autoCorrealtionAnalysisRigidMotion}.

Simulations are performed in both 2D and 3D settings, with the rotational component of $\mathsf{SE}(n)$ sampled from the uniform Haar measure. Consequently, for notation brevity, we omit the dependence on $\rho$ in the notation and denote the corresponding autocorrelations and moments simply by $A_f^{(d)}$ and $M_f^{(d)}$, respectively.
Under Haar-uniform rotations, the extracted moments admit a natural representation in terms of relative angular coordinates, $M_f^{(2)}(r_1, r_2, \Delta\varphi)$ and $M_f^{(3)}(r_1, r_2, r_3, \Delta\varphi_1, \Delta\varphi_2)$, which depend only on angular differences $(\Delta\varphi_1, \Delta\varphi_2)$. Detailed formulations and implementation procedures for both the 2D and 3D cases are provided in Appendices~\ref{app:experimentalMethods2D}–\ref{app:experimentalMethods3D}, and the code implementation is available at 
\begin{center}
\href{https://github.com/AmnonBa/rigid-motion-orbit-recovery}{https://github.com/AmnonBa/rigid-motion-orbit-recovery}.    
\end{center}

\paragraph{Notation.}
For $d \in \{2,3\}$, let $M_{f, \text{true}}^{(d)}$ denote the ground-truth $\SO(n)$ moments computed directly from the underlying volume $f$ (Definition~\ref{def:mraTensorMoment}). The extraction~\eqref{eqn:mainTheoremExtraction} yields discretized moments $M_{f, \text{ext}}^{(d)}(\delta; R, N_\varphi)$,
where $R$ is the number of radial rings (in 2D) or shells (in 3D), $N_\varphi$ is the angular sampling density, and $\delta$ is the boundary parameter.  
In the presence of $N$ noisy observations, we denote by $\widehat{M}_{f, \text{ext}}^{(d)}$ the empirical extracted moments obtained by substituting the empirical autocorrelations $a_y^{(d)}$ (Definition~\ref{def:autoCorrelationMomentsRigidMotion}) into~\eqref{eqn:mainTheoremExtraction}.  
Moment-level errors are reported as normalized MSE,
\begin{align}
    \mathrm{MSE}(M_1, M_2)
    = \frac{\|M_1 - M_2\|_F^2}{\|M_1\|_F^2}, \label{eqn:mse-of-moments}
\end{align}
where $\|\cdot\|_F$ denotes the Frobenius norm evaluated over the discretized grids used in the experiments.

\subsection{Discretization study: boundary-limit and resolution effects}
Figure~\ref{fig:6} illustrates how discretization affects the accuracy of the extraction~\eqref{eqn:mainTheoremExtraction} in the 2D setting (without noise). Two main sources of error are analyzed: the finite boundary parameter $\delta>0$ and the limited resolution of the radial and angular sampling grids.  

Figure~\ref{fig:6}(a) shows the effect of the boundary parameter $\delta$. For fixed $(R, N_\varphi)$, decreasing $\delta$ leads to a monotonic decrease in $\mathrm{MSE}\left(M_{f, \text{true}}^{(d)}, M_{f, \text{ext}}^{(d)}\right)$, consistent with the limit in~\eqref{eqn:mainTheoremExtraction}. Empirically, this decay scales proportionally to $\delta$ as $\delta \to 0$.
Figure~\ref{fig:6}(b) examines the influence of radial resolution. With $\delta$ fixed, increasing the number of rings $R$ reduces radial discretization error and improves the numerical approximation of~\eqref{eqn:mainTheoremExtraction}.  
Figure~\ref{fig:6}(c)-(d) compare $M_{f, \text{true}}^{(d)}$ and $M_{f, \text{ext}}^{(d)}$ for fixed $(\delta, R, N_\varphi)$. In both cases, the extracted moments follow the ground-truth moments closely but exhibit a mild low-pass behavior due to finite angular resolution. This effect is most noticeable at sharp angular transitions, where coarse sampling introduces smoothing artifacts. 

\begin{figure*}[!t]
    \centering
    \includegraphics[width=0.85\linewidth]{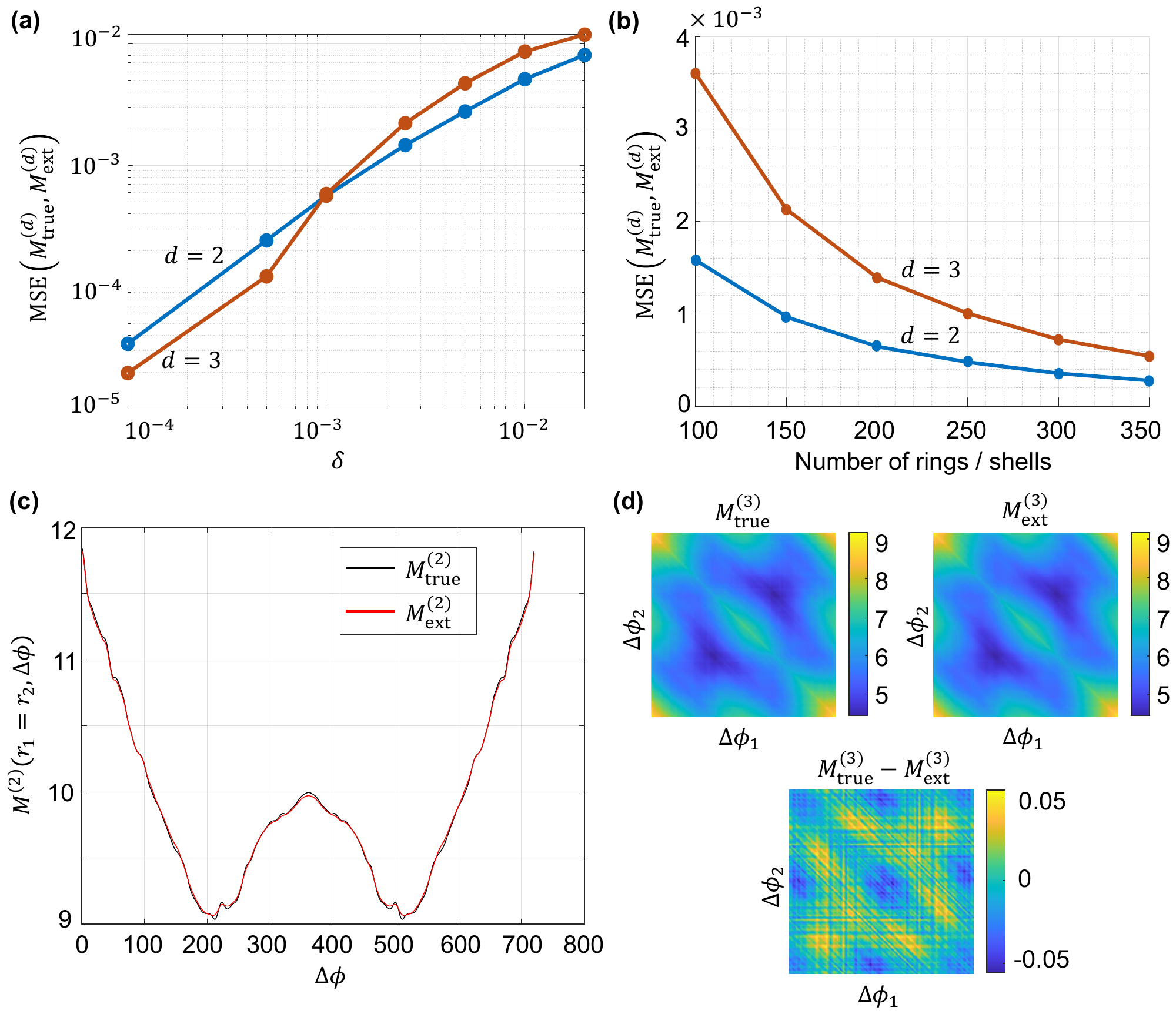}
    \caption{\textbf{Impact of discretization on extraction accuracy.}
   \textbf{(a)}~Effect of the boundary parameter $\delta$: normalized MSE between $M_{f, \text{true}}^{(d)}$ and $M_{f, \text{ext}}^{(d)}(\delta; R, N_\varphi)$ for $d\in\{2,3\}$. As $\delta\to 0$, the error decreases, consistent with Theorem~\ref{thm:reductionFromAutocorrelationToTensorMoment}. Parameters: $R=100$, $N_\varphi=720$.  
    \textbf{(b)}~Effect of radial resolution: normalized MSE versus the number of rings $R$ (with $\delta=0.002$, $N_\varphi=720$). Increasing $R$ reduces radial discretization error.  
    \textbf{(c)}~Second moment at $(r_1, r_2) = (99, 99)$ as a function of relative angle $\Delta\varphi$: $M_{f, \text{ext}}^{(2)}$ follows $M_{f, \text{true}}^{(2)}$ closely, with a mild low-pass effect.  
    \textbf{(d)}~Representative slice of the third moments $M_{f, \text{true}}^{(3)}$ and $M_{f, \text{ext}}^{(3)}$ at $(r_1, r_2, r_3) = (99, 99, 99)$ as a function of relative angles $(\Delta\varphi_1, \Delta\varphi_2)$. The same low-pass effect appears as in panel~(c), with sharper angular transitions producing larger errors. Parameters: $R=200$, $N_\varphi=720$.}
    \label{fig:6}
\end{figure*}

\subsection{Noise and finite-sample scaling}
We next empirically examine the statistical predictions of Section~\ref{sec:sampleComplexityRigidMotion}. These experiments quantify how additive noise and the number of observations $N$ affect the accuracy of the empirical extraction $\widehat{M}_{f, \text{ext}}^{(d)}$. Following the procedure in Section~\ref{sec:autoCorrealtionAnalysisRigidMotion}, we compute $a_y^{(d)}$ from $N$ i.i.d. noisy observations and substitute it into~\eqref{eqn:mainTheoremExtraction} to obtain empirical extracted moments. 

Figure~\ref{fig:7} plots the normalized MSE~\eqref{eqn:mse-of-moments} between the noise-free extracted moment $M_{f,\text{ext}}^{(d)}$ (computed without additive noise), and its noisy empirical estimate $\widehat{M}_{f,\text{ext}}^{(d)}$ for $d\in\{2,3\}$, as functions of SNR and the number of observations $N$.
Figure~\ref{fig:7}(a) shows the dependence on SNR. In the high-SNR regime ($\mathrm{SNR}\gg 1$), the errors decay approximately as $\mathrm{SNR}^{-1}$, whereas in the low-SNR regime ($\mathrm{SNR}\ll 1$), the slopes steepen to $\mathrm{SNR}^{-4}$ for $d=2$ and $\mathrm{SNR}^{-5}$ for $d=3$. These observations are consistent with the upper-bound scaling $\sigma^{2d+4}$ from Theorem~\ref{thm:mainTheoremRigidmotion}. We emphasize, however, that while this result validates the upper bound, the true sample complexity required for accurate signal recovery may be lower.  

Figure~\ref{fig:7}(b) examines the dependence on the number of observations $N$ at fixed SNR. For both $d=2$ and $d=3$, the MSE decreases proportionally to $N^{-1}$, in agreement with the law of large numbers expected for empirical estimators (see Appendix~\ref{sec:statistical-analysis-orbit-recoverry}). Taken together, these results show that the empirical behavior of the extraction follows the theoretical scaling prediction in Section~\ref{sec:sampleComplexityRigidMotion}.

\begin{figure*}[!t]
    \centering
    \includegraphics[width=0.9\linewidth]{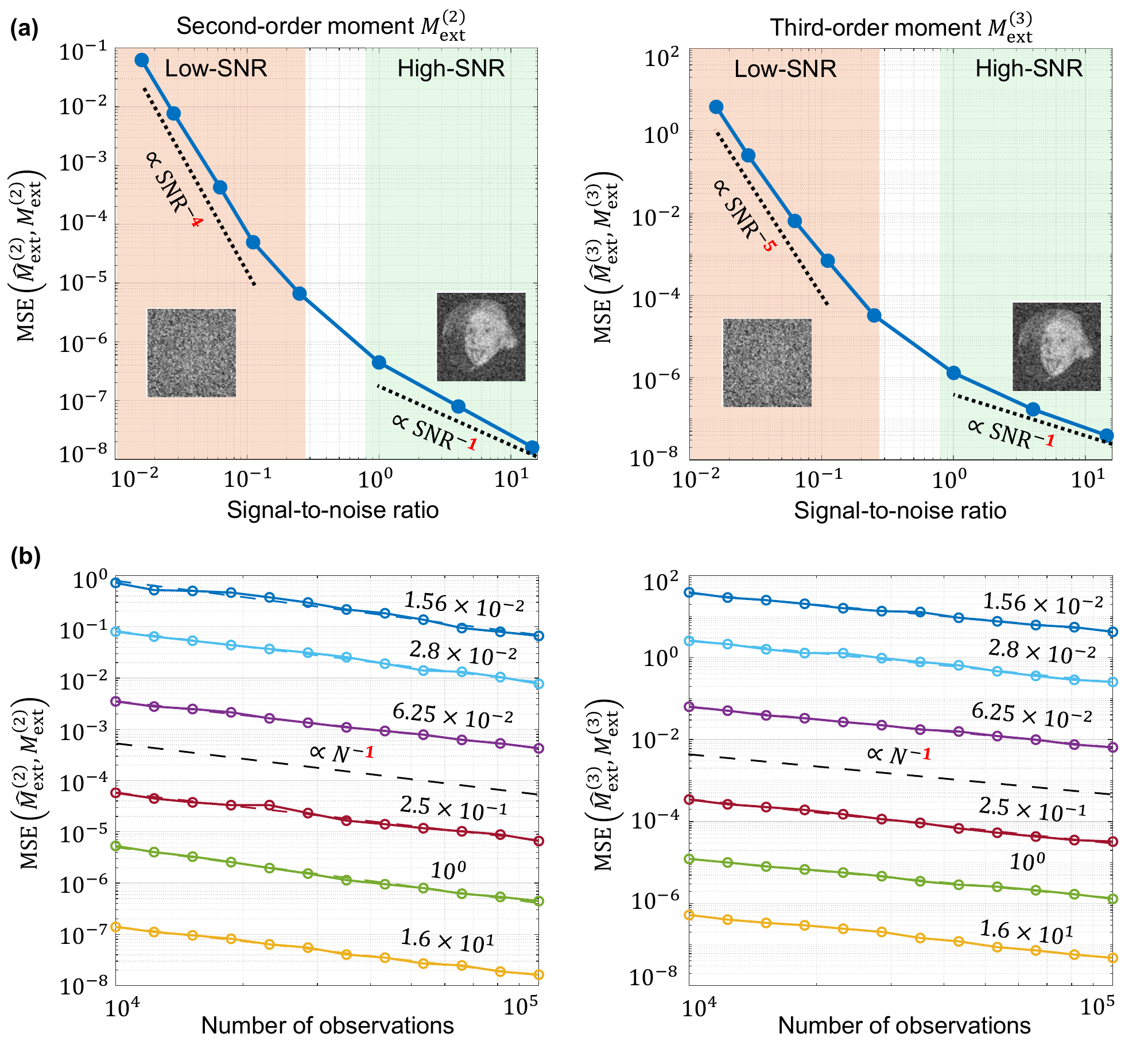}
    \caption{\textbf{Impact of noise and sample size on the empirical moment estimation.}
    \textbf{(a)}~Normalized MSE between $M_{f, \text{ext}}^{(d)}$ (extraction without additive noise) and $\widehat{M}_{f, \text{ext}}^{(d)}$ (empirical extraction from $N$ noisy observations) versus SNR, for $d\in\{2,3\}$. In the high-SNR regime, errors scale as $\mathrm{SNR}^{-1}$, while at low SNR, slopes steepen to $\mathrm{SNR}^{-4}$ ($d=2$) and $\mathrm{SNR}^{-5}$ ($d=3$), consistent with the $\sigma^{2d+4}$ upper-bound scaling predicted by the analysis in~\eqref{eqn:noiseCorrectionFunctionMain}. Parameters: $N=10^5$ observations; 200 Monte Carlo trials per point; Einstein image; $R=30$, $N_\varphi=180$.  
    \textbf{(b)}~Normalized MSE versus $N$ for several fixed SNR levels (left: $d=2$; right: $d=3$). The MSE decreases proportionally to $N^{-1}$, in agreement with the law of large numbers (Appendix~\ref{sec:statistical-analysis-orbit-recoverry}). Each curve corresponds to a different SNR level} 
    \label{fig:7}
\end{figure*}

\subsection{End-to-end reconstructions}
We conclude by demonstrating an end-to-end pipeline that validates the complete extraction–reconstruction framework. Starting from a discretized, noise-free input signal, we compute its rigid-motion autocorrelations (Definition~\ref{def:autoCorrelationNoiseFree}), extract the $\mathsf{SO}(n)$ moments using Theorem~\ref{thm:reductionFromAutocorrelationToTensorMoment}, and then reconstruct the signal from these extracted moments. All reconstructions rely on algebraic simplifications that substantially reduce the computational complexity (see Appendices~\ref{app:experimentalMethods2D}–\ref{app:experimentalMethods3D}).

In the 2D setting (Figure~\ref{fig:8}(a)), we process test images (e.g., Mona Lisa and Einstein portraits). We compute $A_f^{(2)}, A_f^{(4)}, A_f^{(5)}$, extract the $M_{f}^{(2)}$ and $M_{f}^{(3)}$ moments, and apply a two-stage inversion:  
(i) per-ring bispectrum inversion via a frequency-marching solver~\cite{kakarala2012bispectrum,bendory2017bispectrum}, followed by  
(ii) spectral angular synchronization to consistently align all rings~\cite{singer2011angular}.  
After a global alignment to the ground truth, reconstructions achieve MSE $\approx 1\%$, using $R=100$ rings and $N_\varphi=720$ angular samples.  

In the 3D setting (Figure~\ref{fig:8}(b)), we reconstruct a Ribosome-S80 molecular volume~\cite{wong2014cryo}. As in the 2D case, we compute $A_f^{(2)}, A_f^{(4)}, A_f^{(5)}$, extract the $M_f^{(2)}$ and $M_f^{(3)}$ moments, and perform inversion using a frequency-marching solver in the spherical harmonic domain~\cite{bendory2025orbit}. The reconstruction uses angular bandlimit $L_{\max}=12$, $R=32$ concentric shells, and $N_\varphi=12{,}288$ spherical samples. The recovered volume captures the expected molecular structure at low resolution, with fidelity limited by the shell resolution and the chosen bandlimit $L_{\max}$.

\begin{figure*}[!t]
    \centering
    \includegraphics[width=0.95\linewidth]{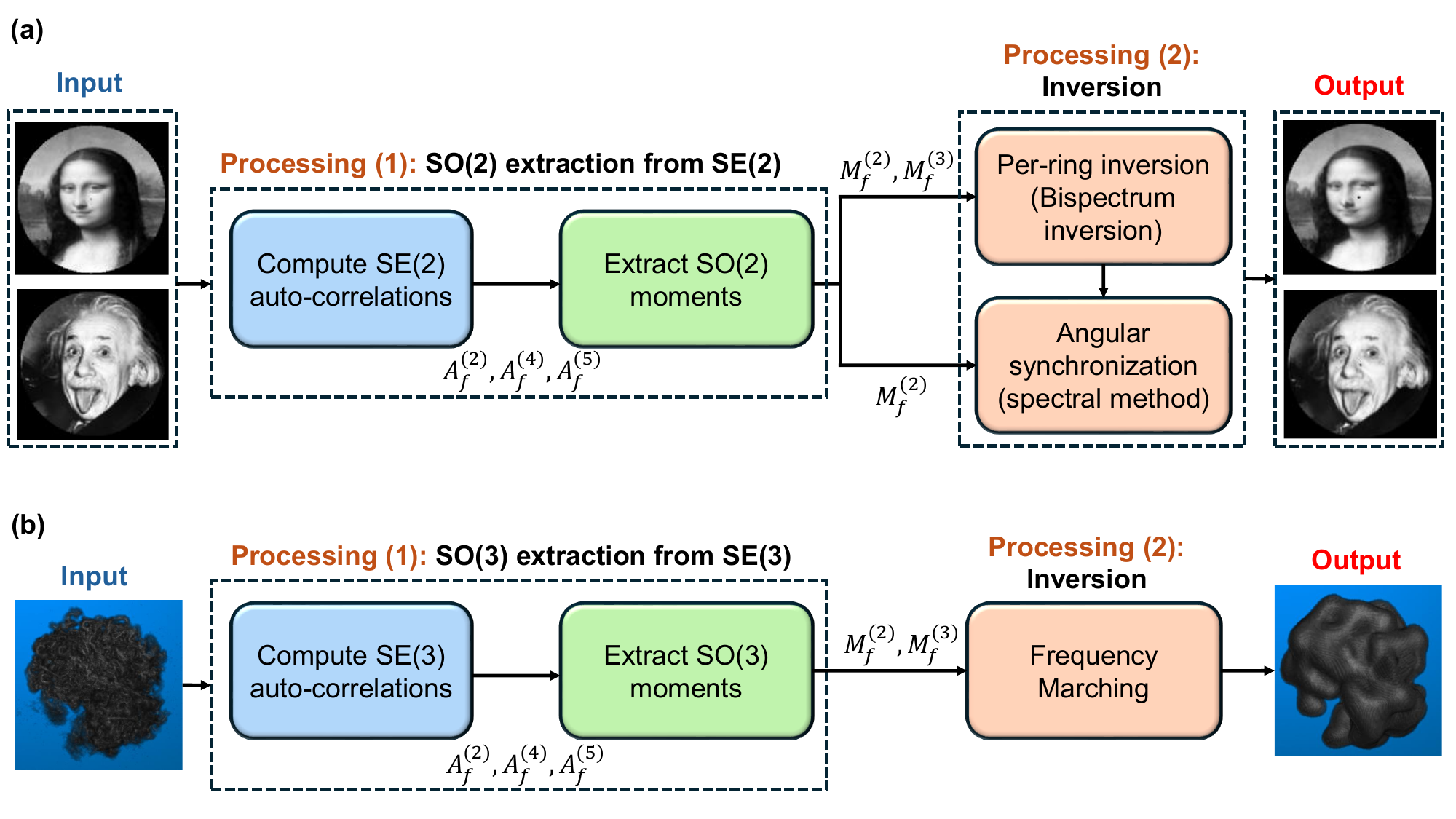}
    \caption{\textbf{End-to-end validation pipeline.} 
    \textbf{(a)}~2D case: a test image (Mona Lisa or Einstein) is processed by reducing $\SE(2)$ autocorrelations to $\SO(2)$ moments, followed by per-ring bispectrum inversion~\cite{kakarala2012bispectrum} and spectral angular synchronization~\cite{singer2011angular}. The reconstruction matches the ground truth with MSE $\approx 1\%$ and SSIM $\approx 0.9$, using $R=100$ rings and $N_\varphi=720$ samples.  
    \textbf{(b)}~3D case: the autocorrelations of a Ribosome-S80 volume~\cite{wong2014cryo} are computed, the $\SO(3)$ moments are extracted, and reconstructed via frequency-marching inversion~\cite{bendory2025orbit} with $L_{\max}=12$, $R=32$ shells, and $N_\varphi=12{,}288$ samples, producing a low-resolution recovery consistent with the imposed bandlimit.}
    \label{fig:8}
\end{figure*}

\section{Applications and outlook} \label{sec:applications}

One of the primary motivations for this work comes from foundational problems in cryo-EM~\cite{milne2013cryo, bai2015cryo, nogales2016development, renaud2018cryo, singer2020computational, yip2020atomic} and cryo-ET~\cite{chen2019complete, schaffer2019cryo, zhang2019advances, turk2020promise, watson2024advances}, where the central goal is to reconstruct a high-resolution 3D molecular structure from a large number of highly noisy 2D projection images. While the main focus of this work is the theoretical analysis of the orbit recovery problem under $\mathsf{SE}(n)$, these results naturally imply quantitative statements about the sample complexity of cryo-EM and cryo-ET reconstruction.

\subsection{Applications to cryo-ET} 
Cryo-ET enables the visualization of cellular structures such as organelles, viruses, and macromolecular complexes in situ~\cite{chen2019complete, schaffer2019cryo, zhang2019advances, turk2020promise, watson2024advances}. 
Its core objective is to reconstruct the three-dimensional structure of a whole cell or a cellular region by acquiring a tilt series of two-dimensional projections at known angles (typically ranging from $-60^\circ$ to $+60^\circ$). 
These projections are computationally combined to form a tomogram, a 3D reconstruction that preserves the spatial context of macromolecules inside the cell. 
From such tomograms, smaller sub-volumes centered on macromolecules of interest are extracted and referred to as \emph{subtomograms}.

Although in practice each tomogram contains many different macromolecules, it is instructive to consider a simplified statistical model in which the tomogram consists of multiple copies of the same structure $f$, placed at unknown positions and orientations. 
This abstraction fits naturally within the three-dimensional MTD framework ($n=3$): given an observation $y:\mathbb{R}^3 \to \mathbb{R}$, we write
\begin{align}
    y = \sum_{i=0}^{N-1} \mathcal{T}_i(\mathcal{R}_i \cdot f) + \xi,
    \label{eqn:cryoET}
\end{align}
where $\mathcal{R}_i \in \mathsf{SO}(3)$ are unknown rotations, $\mathcal{T}_i$ are translations, and $\xi$ is additive Gaussian white noise. 
In this setting, each transformed copy $f_i = \mathcal{R}_i \cdot f$ corresponds to a subtomogram, while the full observation $y$ models the entire tomogram. 
We emphasize that~\eqref{eqn:cryoET} is a simplified model of cryo-ET, not a full account of real data where macromolecular heterogeneity is common. 
Nevertheless, it serves as a mathematically tractable surrogate that links cryo-ET to the MTD framework and enables a rigorous sample-complexity analysis.

Traditional cryo-ET pipelines detect and extract subtomograms, then align and average them to improve resolution~\cite{zhang2019advances, watson2024advances}. 
However, at low SNR, especially for small or low-contrast complexes, detection becomes unreliable, and these approaches break down. 
Our analysis suggests that in principle, one can bypass explicit detection by working directly with tomograms via invariant statistics. 
It has been proved~\cite{bendory2025orbit} that the minimum moment order for the orbit recovery problem under $\mathsf{SO}(3)$ is $d=3$, which implies a sample complexity of $\omega(\sigma^6)$. 
By transferring this known result from the orbit recovery problem under $\mathsf{SO}(3)$ to the MTD setting, our bounds yield the following implication. 

\begin{corollary}[Sample complexity for the simplified cryo–ET model]
\label{cor:cryoET-bounds}
Under the simplified cryo–ET model~\eqref{eqn:cryoET} and the assumptions of Theorem~\ref{thm:mainTheorem}, suppose that the $\mathsf{SO}(3)$-orbit of a generic signal is identified by third-order moments, as proved in~\cite{bendory2025orbit}. Then, in the low-SNR regime, the sample complexity of the simplified cryo-ET model is bounded as follows: 
\begin{align}
    \omega(\sigma^6) \le  N^{\ast}_{\mathrm{cryo-ET}}(\sigma^2)  \le \omega(\sigma^{10}).
\end{align}
\end{corollary}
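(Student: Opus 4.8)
The plan is to derive the corollary as the $n=3$, $d=3$ specialization of Theorem~\ref{thm:mainTheorem}, so the argument is essentially an instantiation. First I would observe that the simplified cryo--ET model~\eqref{eqn:cryoET} is literally a case of the continuous MTD model~\eqref{eqn:MTDmodelGeneral} with $n=3$: each translation operator $\mathcal{T}_i$ is encoded by a Dirac location indicator $s_i(\bm{x})=\delta(\bm{x}-\bm{x}_i)$, so that $\mathcal{T}_i(\mathcal{R}_i\cdot f)=s_i\ast f_i$ with $f_i=\mathcal{R}_i\cdot f$; the rotations $\mathcal{R}_i\sim\rho$ are the i.i.d.\ group elements; and $\xi$ is the same Gaussian white-noise process on $[-MR,MR]^3$. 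Hence every structural hypothesis of Theorem~\ref{thm:mainTheorem} --- the well-separation condition of Definition~\ref{def:wellSeperatedModel} on the centers $\{\bm{x}_i\}$ and Assumptions~\ref{assum:support}--\ref{assum:nonVanishingSupport} on the volume $f$ --- is exactly what the corollary assumes, and $N^{\ast}_{\mathrm{cryo-ET}}(\sigma^2)=N^{\ast}_{\mathrm{MTD}}(\sigma^2)$ for this instance.

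Next I would pin down the minimal identifying moment order. By the stated hypothesis and the result of~\cite{bendory2025orbit}, the $\mathsf{SO}(3)$-orbit of a generic signal is determined by its third-order moment $M_{f,\rho}^{(3)}$, while (as is standard, since the second-order moment is the power spectrum and cannot resolve a generic structure) lower-order moments do not suffice; thus $d=3$ is the minimal order in the sense required by Theorem~\ref{thm:mainTheorem}. Applying that theorem with $d=3$ then gives, in the low-SNR regime $\sigma,N,M\to\infty$ (with the density $\gamma$ absorbed into the asymptotic notation, per the remark following Theorem~\ref{thm:mainTheorem}),
\begin{align}
    \omega\!\left(\sigma^{2\cdot 3}\right)\ \le\ N^{\ast}_{\mathrm{cryo-ET}}(\sigma^2)\ \le\ \omega\!\left(\sigma^{2\cdot 3+4}\right),
\end{align}
that is, $\omega(\sigma^6)\le N^{\ast}_{\mathrm{cryo-ET}}(\sigma^2)\le\omega(\sigma^{10})$.

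For completeness I would recall how each bound arises in this instance. The lower bound $\omega(\sigma^6)$ comes from the fact, via~\cite{balanov2025note}, that MTD orbit recovery is at least as hard as the underlying $\mathsf{SO}(3)$ orbit-recovery problem, whose method-of-moments lower bound at minimal order $d=3$ is $\omega(\sigma^{2d})=\omega(\sigma^6)$. The upper bound $\omega(\sigma^{10})$ follows from the chain already assembled in the paper: Proposition~\ref{thm:prop0} reduces the empirical MTD autocorrelations to the population autocorrelations (Remark~\ref{remark:5.5} and~\eqref{eqn:propRigidMotionEquivalence}), Theorem~\ref{thm:reductionFromAutocorrelationToTensorMoment} extracts $M_{f,\rho}^{(3)}$ from the second- and fifth-order autocorrelations (order $d+2=5$), and Proposition~\ref{thm:sampleComplexityRigidMotion} then yields sample complexity $\omega(\sigma^{2(d+2)})=\omega(\sigma^{10})$, the extra $\sigma^4$ being the price of needing order-$(d+2)$ rather than order-$d$ statistics. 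The point that genuinely needs care --- and the main obstacle --- is conceptual rather than technical: one must check that the idealization~\eqref{eqn:cryoET} legitimately sits inside the well-separated, homogeneous MTD hypotheses, since real tomograms contain heterogeneous macromolecules and need not obey a $4R$-separation. Accordingly the corollary should be read as a bound for this simplified surrogate model, a caveat already emphasized in the surrounding discussion.
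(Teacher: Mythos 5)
Your proposal is correct and follows essentially the same route as the paper: the corollary is obtained by recognizing the simplified cryo--ET model~\eqref{eqn:cryoET} as an instance of the MTD model~\eqref{eqn:MTDmodelGeneral} with $n=3$ and instantiating Theorem~\ref{thm:mainTheorem} with the minimal identifying order $d=3$ from~\cite{bendory2025orbit}, yielding $\omega(\sigma^{6})\le N^{\ast}_{\mathrm{cryo-ET}}(\sigma^2)\le\omega(\sigma^{10})$. Your recap of how the lower and upper bounds arise (via~\cite{balanov2025note} and the chain Proposition~\ref{thm:prop0}, Theorem~\ref{thm:reductionFromAutocorrelationToTensorMoment}, Proposition~\ref{thm:sampleComplexityRigidMotion}) and the caveat about the idealized, well-separated homogeneous setting match the paper's discussion.
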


An important message of this work is that, under the simplified model~\eqref{eqn:cryoET}, recovery of a 3D macromolecule is possible at any SNR, provided that sufficiently many tomograms are available.
This suggests that, in principle, even small macromolecules that induce very low SNR may still be recoverable if enough data is collected. 
This stands in contrast to the prevailing belief in the cryo-EM community that small molecules cannot be reconstructed~\cite{henderson1995potential}. 

Finally, our sample complexity framework can incorporate additional structural priors. 
For example, Romanov et al.~\cite{romanov2021multi} analyzed the orbit recovery problem under $\mathsf{SO}(n)$ under a Gaussian prior in high dimensions, while Bendory et al.~\cite{bendory2024transversality} studied signals restricted to low-dimensional semi-algebraic sets. 
In such cases, the sample complexity of the orbit recovery problem under $\mathsf{SO}(3)$ drops to $\omega(\sigma^4)$, and by our bounds the corresponding cryo-ET MTD model has sample complexity between $\omega(\sigma^4)$ and $\omega(\sigma^8)$.

\subsection{Extensions toward cryo-EM} 
Our analysis focuses on orbit recovery models where a group acts directly on a signal $f$, without projection operations. 
In contrast, single-particle cryo-EM involves imaging frozen 3D macromolecules to generate noisy 2D projection images, where both the orientation and in-plane position of each particle are unknown (see Figure~\ref{fig:1}(a)). 
This process can be modeled for $y \in \mathbb{R}^2 \to \mathbb{R}$ as
\begin{align} 
    y = \sum_{i=0}^{N-1} \mathcal{T}_i \left( \Pi\left( \mathcal{R}_i \cdot f \right) \right) + \xi, \label{eqn:cryoEM} 
\end{align}
where $\mathcal{R}_i \in \mathsf{SO}(3)$ are unknown rotations, $\Pi$ is the tomographic projection operator mapping a 3D volume to a 2D image, $\mathcal{T}_i$ are random in-plane translations, and $\xi$ is additive noise. 
Thus, the observation $y \in \mathbb{R}^2 \to \mathbb{R}$ is a noisy superposition of randomly rotated and translated 2D projections of the underlying volume $f$.

Our main results do not directly extend to this full cryo-EM model because of the projection step, which violates the structure used in the extraction of $\mathsf{SO}(3)$ moments from $\mathsf{SE}(3)$ autocorrelations. 
However, in a special case where the structural envelope of the volume $f$ is preserved across all projection directions, the assumptions underlying our analysis remain valid. 
Specifically, if $f \in C^0(\mathcal{B}_R^{(3)})$ satisfies the radial boundary condition
\begin{align}
    f(R,\bm{\varphi}) = C \neq 0, \qquad \text{for all } \bm{\varphi} \in S^2,
    \label{eqn:cryoEMrequirement}
\end{align}
then every projection $\Pi(\mathcal{R}\cdot f)$ shares the same spherical envelope at radius $R$. 
Equivalently, if $h_{\mathcal{R}} = \Pi(\mathcal{R}\cdot f)$, then for any $\mathcal{R}_1,\mathcal{R}_2 \in \mathsf{SO}(3)$,
\begin{align}
    h_{\mathcal{R}_1}(R,\theta) = h_{\mathcal{R}_2}(R,\theta), \qquad \forall\, \theta \in [0,\pi].
\end{align}
That is, the projected images agree in their  boundary regardless of orientation. 
This radial invariance preserves the antipodal correlation structure (as in Lemma~\ref{lem:invarianceOfAntipodalProdSum}), thereby allowing the extraction and sample-complexity bounds to extend to this simplified cryo-EM model.

\begin{corollary}
[Sample complexity for cryo-EM under radial-envelope invariance]
\label{conj:cryoEM} 
If the volume $f$ satisfies condition~\eqref{eqn:cryoEMrequirement}, then the sample complexity bounds established in Theorem~\ref{thm:mainTheorem} for the orbit recovery problem under $\mathsf{SE}(3)$ extend to the cryo-EM observation model~\eqref{eqn:cryoEM}.
\end{corollary}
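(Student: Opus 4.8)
The strategy is to replay, for the projected images, the same three–step reduction that establishes Theorem~\ref{thm:mainTheorem}, treating~\eqref{eqn:cryoEM} as an MTD model whose signal occurrences are the two–dimensional images $h_{\mathcal{R}}\triangleq\Pi(\mathcal{R}\cdot f)$, $\mathcal{R}\sim\rho$ on $\mathsf{SO}(3)$. The point is that every $h_{\mathcal{R}}$ is continuous and supported on the same disk $\mathcal{B}_R^{(2)}$, so the boundedness and common–compact–support properties that the proofs of Propositions~\ref{thm:prop0} and~\ref{thm:propRigidMotionEquivalence} use are unaffected by the projection. First I would apply the well–separated MTD argument of Proposition~\ref{thm:prop0}, with the occurrences $\mathcal{T}_i(\mathcal{R}_i\cdot f)$ read as $\mathcal{T}_i h_{\mathcal{R}_i}$, to obtain that the empirical $d$-th order autocorrelations of $y$ converge almost surely, as $N,M\to\infty$, to $\gamma\,\bar a^{(d)}_{Y,\rho}+(1-\gamma)\chi$ with $Y_{\mathcal{R}}=\widetilde h_{\mathcal{R}}+\xi$; deconvolving the noise terms exactly as in Proposition~\ref{thm:propRigidMotionEquivalence} then yields the population autocorrelations
\[
A^{(d)}_{\Pi f,\rho}(\bm{\tau}_0,\dots,\bm{\tau}_{d-1})
=\int_{\mathbb{R}^2}\int_{\mathsf{SO}(3)}\prod_{j=0}^{d-1} h_{\mathcal{R}}(\bm{x}+\bm{\tau}_j)\,d\rho(\mathcal{R})\,d\bm{x}.
\]

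The second, and central, step is to run the antipodal–boundary extraction of Theorem~\ref{thm:reductionFromAutocorrelationToTensorMoment} on $A^{(d+2)}_{\Pi f,\rho}$ and $A^{(2)}_{\Pi f,\rho}$; this is exactly where hypothesis~\eqref{eqn:cryoEMrequirement} enters. Since $f\equiv C$ on the sphere $\{\|\bm{x}\|=R\}$ and this boundary condition is $\mathsf{SO}(3)$–invariant, the same holds for every $\mathcal{R}\cdot f$. The line integral defining $h_{\mathcal{R}}$ at radius $R(1-\delta)$ runs along a chord of that sphere of length $\sim 2R\sqrt{2\delta}$ lying within $O(\delta)$ of the boundary, so uniform continuity gives $h_{\mathcal{R}}(R(1-\delta)\bm{\theta})=C\cdot 2R\sqrt{2\delta}\,(1+o(1))$ uniformly in $\bm{\theta}\in S^1$ and in $\mathcal{R}$: the near–boundary profile of $h_{\mathcal{R}}$ is, to leading order, independent of $\mathcal{R}$. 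Hence the angular integral $\int_{S^1}A^{(2)}_{\Pi f,\rho}(\bm{\tau}_0^{(\delta)}(\bm{\theta}),\bm{\tau}_1^{(\delta)}(\bm{\theta}))\,d\bm{\theta}$ in the denominator of~\eqref{eqn:mainTheoremExtraction} reduces to an $\mathcal{R}$–independent constant --- the analogue of the $\mathsf{SO}(n)$–invariant antipodal correlation of Lemma~\ref{lem:invarianceOfAntipodalProdSum}, now of order $\delta$ rather than of order $1$. Combining this with the same localization of the outer $\bm{x}$–integral as in the proof of Theorem~\ref{thm:reductionFromAutocorrelationToTensorMoment} (two antipodal arguments near the boundary confine $\bm{x}$ to a shrinking neighbourhood of the origin, so $h_{\mathcal{R}}(\bm{x}+\bm{\eta}_j)\to h_{\mathcal{R}}(\bm{\eta}_j)$) and Fubini, the common normalization --- including the common $\bm{x}$–volume factor --- cancels between numerator and denominator, and the ratio in~\eqref{eqn:mainTheoremExtraction}, formed from $A^{(d+2)}_{\Pi f,\rho}$ and $A^{(2)}_{\Pi f,\rho}$, converges as $\delta\to0^+$ to $\int_{\mathsf{SO}(3)}\prod_{j=1}^{d} h_{\mathcal{R}}(\bm{\eta}_j)\,d\rho(\mathcal{R})$ --- the $d$-th order rotation–averaged projection moment that drives the method of moments for cryo–EM.

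With these moments available, the third step is identical to the proof of Theorem~\ref{thm:mainTheorem}. If the projection moments up to order $d$ identify the $\mathsf{SO}(3)$–orbit of $f$ --- which, via the projection–slice correspondence, happens at the same order as for the direct $\mathsf{SO}(3)$ action, namely $d=3$ for generic $f$~\cite{bendory2025orbit} --- then Proposition~\ref{thm:sampleComplexityRigidMotion}, applied with $\{A^{(k)}_{\Pi f,\rho}\}_{k\le d+2}$ as the identifying statistics, gives $N^{\ast}_{\mathrm{cryo-EM}}(\sigma^2)\le\omega(\sigma^{2d+4})$, while the matching lower bound $\omega(\sigma^{2d})$ follows from the information–theoretic argument of~\cite{balanov2025note}, since the cryo–EM model is at least as hard as recovering $f$ from single clean projected–rotated observations. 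In particular, for $d=3$ this reproduces $\omega(\sigma^6)\le N^{\ast}\le\omega(\sigma^{10})$, matching Corollary~\ref{cor:cryoET-bounds}.

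I expect the main obstacle to be the middle step. Unlike the $\mathsf{SE}(n)$ case of Theorem~\ref{thm:reductionFromAutocorrelationToTensorMoment}, the projection $\Pi$ does not restrict cleanly to the boundary, so one must show carefully --- with error bounds uniform in $\bm{\theta}$ and $\mathcal{R}$, so that they persist in the $\delta\to0^+$ limit --- that near its boundary circle $\Pi(\mathcal{R}\cdot f)$ factors into an $\mathcal{R}$–independent universal profile (of magnitude $\Theta(\sqrt\delta)$) times the interior moment content, and one must track the altered $\delta$–scaling of the normalizer to confirm that it is the ratio in~\eqref{eqn:mainTheoremExtraction}, not its numerator, that admits a finite limit. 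The outer two steps transfer with only cosmetic changes, since for each fixed $\mathcal{R}$ the map $\Pi(\mathcal{R}\cdot f)$ is a bounded, compactly supported deterministic function, exactly as the proofs of Propositions~\ref{thm:prop0} and~\ref{thm:propRigidMotionEquivalence} require.
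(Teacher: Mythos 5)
Your proposal follows the same route the paper itself sketches for this corollary --- treat the projections $h_{\mathcal{R}}=\Pi(\mathcal{R}\cdot f)$ as two-dimensional signals supported on $\mathcal{B}_R^{(2)}$, run the MTD reduction of Proposition~\ref{thm:prop0}, extract moments via the antipodal boundary limit of Theorem~\ref{thm:reductionFromAutocorrelationToTensorMoment}, and invoke Proposition~\ref{thm:sampleComplexityRigidMotion} --- so your outer steps coincide with the paper's. Where you genuinely add something is the middle step. The paper's stated justification is only that $h_{\mathcal{R}_1}(R,\theta)=h_{\mathcal{R}_2}(R,\theta)$ for all rotations, so that the antipodal correlation structure is ``preserved as in Lemma~\ref{lem:invarianceOfAntipodalProdSum}''; but that identity is vacuous, because the line of integration through a boundary point of the disk is tangent to the sphere $\|\bm{x}\|=R$, so $h_{\mathcal{R}}(R,\theta)=0$ for every $\mathcal{R}$ and $\theta$. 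The projected images therefore violate Assumption~\ref{assum:nonVanishingSupport}, and Theorem~\ref{thm:reductionFromAutocorrelationToTensorMoment} cannot be cited verbatim. You diagnose this correctly: the near-boundary values scale as $h_{\mathcal{R}}\bigl(R(1-\delta)\bm{\theta}\bigr)=2RC\sqrt{2\delta}\,(1+o(1))$ uniformly in $\bm{\theta}$ and $\mathcal{R}$ (chord length times the constant envelope value $C$), and it is the $\mathcal{R}$-independence of this leading factor that lets it cancel between numerator and denominator of~\eqref{eqn:mainTheoremExtraction}, restoring a finite limit equal to the rotation-averaged projection moment. That is the right repair, and your explicit flagging of the required uniformity of the error terms is exactly where the remaining work lies. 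Two smaller points. First, your parenthetical claim that identifiability from projection moments occurs at the same order as for the direct $\mathsf{SO}(3)$ action ``via the projection--slice correspondence,'' citing~\cite{bendory2025orbit}, is not established by that reference (which treats the unprojected action); the corollary only needs the conditional form you also state, so keep the hypothesis as ``the $d$-th projection moment identifies the orbit'' rather than importing $d=3$. Second, when you replace the pointwise Lebesgue-differentiation step of the original proof by a ratio of $\bm{x}$-integrals of the $\mathcal{R}$-independent profile, you should record that this profile integrates to a strictly positive quantity of order $\delta\cdot\mu(\mathcal{A}_\delta)$ with constant proportional to $C^2\neq 0$, which is what legitimizes dividing by it before passing to the limit.
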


We emphasize that while the boundary condition~\eqref{eqn:cryoEMrequirement}, that the volume $f$ attains a constant value on the its  envelope, is an unrealistic idealization, we adopt it as a tractable surrogate that approximates common envelopes: many macromolecules have roughly spherical outer boundaries with near-uniform density (e.g., globular proteins via hydrophobic packing; viral capsids via icosahedral symmetry)~\cite{harrison2007principles,rossmann1989icosahedral}, and practical cryo-EM pipelines routinely apply spherical masks to approximate molecular support~\cite{scheres2012relion}. 
The analysis under this idealized boundary offers a clean baseline; in future work we aim to relax this assumption, accounting for non-spherical or spatially varying envelopes.

\subsection{Open questions and future work}
In what follows, we discuss the open questions remaining for future work.

\paragraph{Tightness of the bounds.}  
This work establishes both lower and upper bounds on the sample complexity of the orbit recovery problem under $\mathsf{SE}(n)$ and the MTD problem. However, the exact sample complexity remains unresolved. Notably, the upper bound is derived by extracting the $d$-th order $\mathsf{SO}(n)$ moment from only a limited subset (or slice) of the $(d+2)$-order $\mathsf{SE}(n)$ autocorrelation. As such, this bound may not be tight, and the true sample complexity could be substantially lower.
Interestingly, recent studies suggest that the same number of autocorrelation moments may be necessary for solving both the $\mathsf{SO}(2)$-orbit recovery and MTD problems for some specific cases~\cite{kreymer2022approximate, balanov2025note}.

\paragraph{Extension to the non-well-separated MTD model.} In the MTD model, a key open question is whether the current analysis can be extended to the non-well-separated regime, where signal occurrences are no longer guaranteed to be non-overlapping. In this setting, interactions between nearby occurrences introduce dependencies among the autocorrelations, complicating both the statistical modeling and the recovery process. It remains unclear whether the upper bound on sample complexity and its reduction to the rigid motion orbit recovery problem continue to hold in this more challenging setting.
A recent paper proposes analyzing this case through the lens of Markov chains and hard-core models, but it does not provide definitive sample complexity bounds~\cite{abraham2025sample}.

Notably, empirical studies suggest that violating the well-separation condition may have only a limited effect in practice. For example, \cite{kreymer2022two} observed that autocorrelations, and hence signal recovery performance, often remain robust even when signal occurrences are not well-separated, provided they do not overlap. These findings indicate that rigorous guarantees in the non–well-separated regime may still be achievable and merit further investigation.

\paragraph{Provable algorithms for the MTD problem.}
While prior work on the MTD model has largely relied on heuristic methods, this paper provides a proof-of-concept, provable extraction and recovery pipeline for orbit recovery in the MTD setting. A direct strategy is to estimate the orbit of $f$ from the empirical autocorrelations (Definition~\ref{def:autoCorrelationMomentsEmpirical}), but existing algorithms for MTD typically lack formal guarantees of stability or exact recovery. Proposition~\ref{thm:prop0} offers a more principled route by reducing MTD to a well-studied orbit recovery framework, thereby enabling the use of algorithms with established guarantees. Looking ahead, a natural direction is to design algorithms that are both provably accurate and computationally efficient, for instance by formulating orbit estimation as a semidefinite program (SDP) that leverages the polynomial structure of the autocorrelations; SDP relaxations have already shown strong theoretical and empirical performance in related orbit recovery problems~\cite{bandeira2014multireference,bendory2017bispectrum}, suggesting a promising pathway to scalable, provable algorithms for MTD.

\paragraph{Sampling in the MTD model.}
While our analysis of the MTD model is formulated in the continuous domain, practical imaging systems such as cryo-EM and cryo-ET operate on discretized micrographs sampled over finite Cartesian grids.
This raises fundamental questions about how sampling density influences the model’s statistical behavior and whether the theoretical sample-complexity guarantees persist under sampling. 

Our empirical results (Section~\ref{sec:empirical}) indicate that the scaling laws predicted by the continuous theory remain valid when the sampling resolution is adequate. 
In addition, previous studies~\cite{bendory2022signal,bendory2022super} have shown that invariant moments and bispectrum-based methods exhibit notable robustness to discretization, but these works only begin to address the intricate relationship between sampling resolution, identifiability, and invariant-based recovery, an area that warrants substantial further investigation.
Thus, a rigorous theoretical understanding of sampling artifacts remains largely open.

\paragraph{Heterogeneous MTD.}  
This work focuses on the homogeneous MTD model, where a single signal $f$ (together with its rigid-motion transformations) is embedded in the observation. In the heterogeneous setting, which is more faithful to the cryo-ET problem, multiple distinct signals appear, each undergoing its own rotations and translations. 
Here, invariants must be defined jointly with respect to $\mathsf{SO}(n)$ or $\mathsf{SE}(n)$ actions and the permutation group acting on the signal set. This greatly increases the algebraic and statistical complexity, mirroring the challenges observed in heterogeneous variants of the MRA problem. 
While some progress has been made in related orbit-recovery models~\cite{bandeira2023estimation,boumal2018heterogeneous}, establishing rigorous sample-complexity bounds for heterogeneous MTD remains a significant open problem.



\section*{Acknowledgment}
T.B. and D.E. are supported in part by BSF under Grant 2020159. T.B. is also supported in part by NSF-BSF under Grant 2024791, in part by ISF under Grant 1924/21, and in part by a grant from The Center for AI and Data Science at Tel Aviv University (TAD).

\bibliographystyle{plain}

\begin{appendices}

{\centering{\section*{Appendix}}}
\paragraph{Appendix organization.} 
This appendix is organized as follows.
Appendix~\ref{sec:reduction-from-SEn-to-SOn} develops the algebraic foundations of our analysis.
Appendices~\ref{sec:preliminariesToStatisticalPart}--\ref{sec:statistical-analysis-orbit-recoverry} treat the statistical aspects of orbit recovery under $\mathsf{SE}(n)$.
Appendix~\ref{sec:MTD-statistical-analysis} presents the statistical analysis of the MTD model.
Finally, Appendices~\ref{app:experimentalMethods2D}--\ref{app:experimentalMethods3D} detail the empirical setups and simulations for the 2D and 3D problems, respectively.

\section{Extraction of SO(n) moments from SE(n) autocorrelations}
\label{sec:reduction-from-SEn-to-SOn}

\subsection{Proof of Lemma \ref{lem:invarianceOfAntipodalProdSum}} \label{sec:proofOfinvarianceOfAntipodalProdSum}
In Cartesian coordinates, the antipodal correlation of a function $f$ can be written as 
\begin{align}
    C_{f} = \int_{S^{n-1}(R)} f(\bm{x}) \, f(-\bm{x}) \, d\bm{x}, \label{eqn:app_A0_1}
\end{align}
where the integration is taken over the sphere of radius $R$. In particular,
if $\mathcal{R}$ is any rotation, then 
\begin{align}
    C_{\mathcal{R} \cdot f} = \int_{S^{n-1}(R)} f(\mathcal{R}^{-1}\bm{x}) \, f(-\mathcal{R}^{-1}\bm{x}) \, d\bm{x}. \label{eqn:app_A0_2}
\end{align}

Letting $\bm{y} = \mathcal{R}^{-1} \bm{x}$, we note that since $\mathcal{R} \in \mathsf{SO}(n)$ is a rotation, i.e. a continuous, orthogonal transformation with determinant one, the change of variables $\bm{x} \mapsto \bm{y}$ preserves volume and maps the sphere $S^{n-1}(R)$ onto itself bijectively. Therefore, we can rewrite the integral as  
\begin{align}
    \int_{S^{n-1}(R)} f(\mathcal{R}^{-1}\bm{x}) \, f(-\mathcal{R}^{-1}\bm{x}) \, d\bm{x} 
    = \int_{S^{n-1}(R)} f(\bm{y}) \, f(-\bm{y}) \, d\bm{y}. \label{eqn:app_A0_3}
\end{align}
Since this holds for all $\mathcal{R} \in \mathsf{SO}(n)$, we conclude that the antipodal correlation is invariant under the action of $\mathsf{SO}(n)$.

\subsection{Proof of Lemma~\ref{lem:autocorrelationinvariant}}\label{sec:proofOfInvariance}
Let $f : \mathbb{R}^n \to \mathbb{R}$ be a continuous function supported on a compact subset of $\mathbb{R}^n$. The $d$-th order autocorrelation of $f$, with respect to a rotational distribution $\rho$ (as defined in Definition~\ref{def:autoCorrelationNoiseFree}), is given by:
\begin{align}
    \nonumber A_{f}^{(d)}(\bm{\tau}_0, \ldots, \bm{\tau}_{d-1}) 
    &= \int_{\mathbb{R}^n} \mathbb{E}_{g \sim \mu_{\mathsf{SE}(n)}} \left[ \prod_{j=0}^{d-1} f_g(\bm{x} + \bm{\tau}_j) \right] d\bm{x} \\
    \nonumber
    &= \int_{\mathbb{R}^n} \int_{\mathsf{SE}(n)} \prod_{j=0}^{d-1} f_g(\bm{x} + \bm{\tau}_j) \, d\mu_{\mathsf{SE}(n)}(g) \, d\bm{x} \\
    &= \int_{\mathbb{R}^n} \int_{\mathcal{R} \in \mathsf{SO}(n)} \int_{\bm{t} \in \mathbb{R}^n} 
    \prod_{j=0}^{d-1} (\mathcal{R}, \bm{t}) \cdot f(\bm{x} + \bm{\tau}_j) \, d\mu(t)\, d\rho(\mathcal{R}) \, d\bm{x},
    \label{eqn:autoCorrelationNoiseFreeProduct3}
\end{align}
where the second equality follows from Assumption~\ref{assum:rotation_translation_indep} that the rotations and translations are drawn independently.

We now show that \eqref{eqn:autoCorrelationNoiseFreeProduct3} is equivalent to \eqref{eqn:autoCorrelationNoiseFreeProduct2}. By definition
\begin{align}
    A_{f, \rho}^{(d)}(\bm{\tau}_0, \ldots, \bm{\tau}_{d-1}) 
    &= \int_{\mathbb{R}^n} \int_{\mathcal{R} \in \mathsf{SO}(n)} \int_{\bm{t} \in \mathbb{R}^n} 
    \prod_{j=0}^{d-1} f((\mathcal{R}, \bm{t}) \cdot (\bm{x} + \bm{\tau}_j)) \, d\mu(\bm{t})\, d\rho(\mathcal{R}) \, d\bm{x} \nonumber \\
    &= \int_{\mathbb{R}^n} \int_{\mathcal{R} \in \mathsf{SO}(n)} \int_{\bm{t} \in \mathbb{R}^n} 
    \prod_{j=0}^{d-1} f(\mathcal{R}^{-1}(\bm{x} + \bm{\tau}_j - \bm{t}))  d\mu(\bm{t})\, d\rho(\mathcal{R}) \, d\bm{x}. 
    \label{eqn:app_B04}
\end{align}    
Applying the change of variables $\bm{z} = \bm{x} - \bm{t}$, we can rewrite~\eqref{eqn:app_B04} as
\begin{align}
    &= \int_{\mathbb{R}^n} \int_{\mathcal{R} \in \mathsf{SO}(n)} \int_{\bm{t} \in \mathbb{R}^n} 
    \prod_{j=0}^{d-1} f(\mathcal{R}^{-1}(\bm{z} + \bm{\tau}_j)) \, d\mu(\bm{t})\, d\rho(\mathcal{R}) \, d\bm{z}.
    \label{eqn:app_B05}
\end{align}

Since $\mu(\bm{t})$ is a probability measure, it satisfies
$\int_{\bm{t} \in \mathbb{R}^n} d\mu(\bm{t})= 1$.
Substituting this into~\eqref{eqn:app_B05}, we obtain
\begin{align}
    A_{f, \rho}^{(d)}(\bm{\tau}_0, \ldots, \bm{\tau}_{d-1}) 
    = \int_{\mathbb{R}^n} \int_{\mathcal{R} \in \mathsf{SO}(n)} 
    \prod_{j=0}^{d-1} f_\mathcal{R}(\bm{x} + \bm{\tau}_j) \, d\rho(\mathcal{R}) \, d\bm{x},
    \label{eqn:autoCorrelationNoiseFreeProduct4}
\end{align}
where $f_\mathcal{R}(\bm{x}) = f(\mathcal{R}^{-1} \bm{x})$. This concludes the proof.

\subsection{Proof of Proposition \ref{prop:invarianceOfAutoCorrelation}} \label{sec:proofOfinvarianceOfAutoCorrelation}

To establish invariance under rigid motions, it suffices to show that 
\begin{align}
    A^{(d)}_{h, \rho} = A^{(d)}_{f, \rho},
\end{align}
for every transformation $g = (\mathcal{R}_0, \bm{t}_0) \in \SE(n)$, where 
\begin{align}
    h(\bm{x}) = (g \cdot f)(\bm{x}) = f\!\left(\mathcal{R}_0^{-1}(\bm{x} - \bm{t}_0)\right).
\end{align}
By Lemma~\ref{lem:autocorrelationinvariant}, the autocorrelation of $h$ is given by
\begin{align}
    A^{(d)}_{h, \rho}(\bm{\tau}_0, \ldots, \bm{\tau}_{d-1}) 
    &= \int_{\mathbb{R}^n} \int_{\mathsf{SO}(n)} \prod_{i=0}^{d-1} h_{\mathcal{R}}(\bm{x} + \bm{\tau}_i) \, d\rho({\mathcal{R}}) \, d\bm{x} \nonumber\\
    &= \int_{\mathbb{R}^n} \int_{\mathsf{SO}(n)} \prod_{i=0}^{d-1} f\left({\mathcal{R}}^{-1} {\mathcal{R}}_0^{-1}(\bm{x} + \bm{\tau}_i - \bm{t}_0)\right) \, d\rho({\mathcal{R}}) \, d\bm{x}.
\end{align}

Applying the change of variables $ \bm{z} = \bm{x} - \bm{t}_0 $ yields
\begin{align} \label{eq:rho0_shift}
    A^{(d)}_{h, \rho}(\bm{\tau}_0, \ldots, \bm{\tau}_{d-1}) 
    &= \int_{\mathbb{R}^n} \int_{\mathsf{SO}(n)} \prod_{i=0}^{d-1} f\left({\mathcal{R}}^{-1} {\mathcal{R}}_0^{-1}(\bm{z} + \bm{\tau}_i)\right) \, d\rho({\mathcal{R}}) \, d\bm{z}. 
\end{align}
Since we assume that $d\rho(\mathcal{R})$ is a Haar measure on $ \mathsf{SO}(n) $, it is left-invariant, that is, for every $Q \in \mathsf{SO}(n)$, $\rho (Q\mathcal{R}) = \rho(\mathcal{R})$,
we can rewrite~\eqref{eq:rho0_shift} as
\begin{align}
    A^{(d)}_{h, \rho}(\bm{\tau}_0, \ldots, \bm{\tau}_{d-1}) 
    &= \int_{\mathbb{R}^n} \int_{\mathsf{SO}(n)} \prod_{i=0}^{d-1} f\left(\tilde{{\mathcal{R}}}^{-1}( \bm{z} +  \bm{\tau}_i\right) \, d\rho(\tilde{{\mathcal{R}}}) \, d\bm{z} \nonumber \\
    &= A^{(d)}_{f, \rho}(\bm{\tau}_0, \ldots, \bm{\tau}_{d-1}),
\end{align}
where $\tilde{\mathcal{R}} = \mathcal{R}_0 \mathcal{R}$. 
Hence, the autocorrelation function is invariant under rigid motions, as claimed.

\subsection{Proof of Theorem \ref{thm:reductionFromAutocorrelationToTensorMoment}}
\label{sec:auxiliartForTheoremreductionFromAutocorrelationToTensorMoment}

\begin{proposition}   
\label{prop:quivalenceBeweenAutoCorrelationAndSphericalCordinated}
With the hypothesis as in the statement of Theorem~\ref{thm:reductionFromAutocorrelationToTensorMoment}, 
we have: 
    \begin{align}
        \nonumber \lim_{\delta \to 0^+} &\frac{\displaystyle\int_{S^{n-1}} A_{f, \rho}^{(d+2)}\left( \bm{\tau}_0^{(\delta)}(\bm{\theta}), \bm{\tau}_1^{(\delta)}(\bm{\theta}), \bm{\eta}_1, \dots, \bm{\eta}_d \right) \, d\bm{\theta}}{ \displaystyle\int_{S^{n-1}} A_{f, \rho}^{(2)}\left( \bm{\tau}_0^{(\delta)}(\bm{\theta}), \bm{\tau}_1^{(\delta)}(\bm{\theta}) \right) \, d\bm{\theta}} 
        \\ &= \frac{\displaystyle\int_{S^{n-1}} \int_{\SO(n)} f_{\mathcal{R}}(R, \bm{\theta}) f_{\mathcal{R}}(-R, \bm{\theta}) \prod_{i=1}^d f_{\mathcal{R}}(\bm{\eta}_i)\, d\rho(\mathcal{R}) \, d\bm{\theta}}
        {\displaystyle\int_{S^{n-1}} \int_{\SO(n)} f_{\mathcal{R}}(R, \bm{\theta}) f_{\mathcal{R}}(-R, \bm{\theta})  \, d\rho(\mathcal{R}) d\bm{\theta}}.   \label{eqn:quivalenceBeweenAutoCorrelationAndSphericalCordinated}
    \end{align}
\end{proposition}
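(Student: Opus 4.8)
The plan is to exploit the compact support of $f$. When the $\mathsf{SE}(n)$ autocorrelation is evaluated on the nearly-antipodal slice $\big(\bm{\tau}_0^{(\delta)}(\bm{\theta}),\bm{\tau}_1^{(\delta)}(\bm{\theta})\big)$, the spatial integral appearing in Lemma~\ref{lem:autocorrelationinvariant} is forced to localize near the origin as $\delta\to0^+$. In that limit the two ``antipodal'' factors of $f_{\mathcal{R}}$ select the boundary values $f_{\mathcal{R}}(R,\bm{\theta})$ and $f_{\mathcal{R}}(-R,\bm{\theta})$, while each of the remaining factors $f_{\mathcal{R}}(\bm{x}+\bm{\eta}_i)$ tends to $f_{\mathcal{R}}(\bm{\eta}_i)$; dividing by the second-order autocorrelation cancels the (vanishing, direction-independent) volume of the localization region and leaves exactly the stated ratio.

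First I would use Lemma~\ref{lem:autocorrelationinvariant} together with Tonelli's theorem to write, with $\hat{\bm{\theta}}$ the Cartesian unit vector in direction $\bm{\theta}$ so that $\bm{\tau}_0^{(\delta)}(\bm{\theta})=R(1-\delta)\hat{\bm{\theta}}$ and $\bm{\tau}_1^{(\delta)}(\bm{\theta})=-R(1-\delta)\hat{\bm{\theta}}$,
\[
A_{f,\rho}^{(d+2)}\big(\bm{\tau}_0^{(\delta)}(\bm{\theta}),\bm{\tau}_1^{(\delta)}(\bm{\theta}),\bm{\eta}_1,\dots,\bm{\eta}_d\big)
=\int_{\mathsf{SO}(n)}\!\!\int_{\mathbb{R}^n} f_{\mathcal{R}}\big(\bm{x}+R(1-\delta)\hat{\bm{\theta}}\big)\,f_{\mathcal{R}}\big(\bm{x}-R(1-\delta)\hat{\bm{\theta}}\big)\prod_{i=1}^{d}f_{\mathcal{R}}(\bm{x}+\bm{\eta}_i)\,d\bm{x}\,d\rho(\mathcal{R}),
\]
and similarly for $A_{f,\rho}^{(2)}$ with the product over $i$ dropped. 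Since $f_{\mathcal{R}}$ is supported on $\mathcal{B}_R^{(n)}$, the inner integrand vanishes unless $\bm{x}$ lies in the lens $L_\delta(\bm{\theta})=\{\bm{x}\in\mathbb{R}^n:\|\bm{x}+R(1-\delta)\hat{\bm{\theta}}\|\le R,\ \|\bm{x}-R(1-\delta)\hat{\bm{\theta}}\|\le R\}$; adding the two inequalities yields $\|\bm{x}\|^2\le R^2\delta(2-\delta)$, so $L_\delta(\bm{\theta})\subseteq\mathcal{B}^{(n)}_{r(\delta)}$ with $r(\delta):=R\sqrt{\delta(2-\delta)}\to0$. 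For $\delta\in(0,1)$ the two radius-$R$ balls, whose centers are at distance $2R(1-\delta)<2R$, overlap in a set of positive Lebesgue measure, so $V(\delta):=\mathrm{Vol}\big(L_\delta(\bm{\theta})\big)\in(0,\infty)$, and $V(\delta)$ is independent of $\bm{\theta}$ because a rotation carries one lens to another.

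Next I would normalize by $V(\delta)$. Using uniform continuity of $f$ on $\mathcal{B}_R^{(n)}$ (write $\omega_f$ for its modulus of continuity, noting $\omega_{f_{\mathcal{R}}}=\omega_f$ for every $\mathcal{R}$ since rotations are isometries) together with $\|f\|_{\infty}<\infty$, a telescoping estimate over the $d+2$ factors gives a constant $C=C(d,\|f\|_\infty)$ with
\[
\Big|f_{\mathcal{R}}(\bm{x}+R(1-\delta)\hat{\bm{\theta}})f_{\mathcal{R}}(\bm{x}-R(1-\delta)\hat{\bm{\theta}})\!\prod_{i=1}^{d}\!f_{\mathcal{R}}(\bm{x}+\bm{\eta}_i)-f_{\mathcal{R}}(R,\bm{\theta})f_{\mathcal{R}}(-R,\bm{\theta})\!\prod_{i=1}^{d}\!f_{\mathcal{R}}(\bm{\eta}_i)\Big|\le C\,\omega_f\big(r(\delta)+R\delta\big)
\]
for every $\bm{x}\in L_\delta(\bm{\theta})$, uniformly in $\mathcal{R}\in\mathsf{SO}(n)$ and $\bm{\theta}\in S^{n-1}$ (here $f_{\mathcal{R}}(R\hat{\bm{\theta}})=f_{\mathcal{R}}(R,\bm{\theta})$ and $f_{\mathcal{R}}(-R\hat{\bm{\theta}})=f_{\mathcal{R}}(-R,\bm{\theta})$, and the radial shift from $R(1-\delta)$ to $R$ contributes the extra $R\delta$). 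Integrating over $L_\delta(\bm{\theta})$, dividing by $V(\delta)$, and integrating against the probability measure $\rho$ then gives
\[
\frac{A_{f,\rho}^{(d+2)}\big(\bm{\tau}_0^{(\delta)}(\bm{\theta}),\bm{\tau}_1^{(\delta)}(\bm{\theta}),\bm{\eta}_1,\dots,\bm{\eta}_d\big)}{V(\delta)}\xrightarrow[\delta\to0^+]{}\int_{\mathsf{SO}(n)}f_{\mathcal{R}}(R,\bm{\theta})f_{\mathcal{R}}(-R,\bm{\theta})\prod_{i=1}^{d}f_{\mathcal{R}}(\bm{\eta}_i)\,d\rho(\mathcal{R})
\]
uniformly in $\bm{\theta}$, and likewise $A_{f,\rho}^{(2)}(\cdot)/V(\delta)\to\int_{\mathsf{SO}(n)}f_{\mathcal{R}}(R,\bm{\theta})f_{\mathcal{R}}(-R,\bm{\theta})\,d\rho(\mathcal{R})$ uniformly in $\bm{\theta}$.

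Finally I would divide numerator and denominator of the left side of~\eqref{eqn:quivalenceBeweenAutoCorrelationAndSphericalCordinated} by $V(\delta)$ (legitimate since $V(\delta)>0$ for $\delta\in(0,1)$ and is $\bm{\theta}$-free), pass the $\delta\to0^+$ limit through the $\int_{S^{n-1}}(\cdot)\,d\bm{\theta}$ integral using the uniform convergence and uniform boundedness just obtained, and observe that the limiting denominator equals $\int_{\mathsf{SO}(n)}C_{f_{\mathcal{R}}}\,d\rho(\mathcal{R})=C_f\ne0$ by Lemma~\ref{lem:invarianceOfAntipodalProdSum} and Assumption~\ref{assum:nonVanishingSupport}. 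Hence the limit of the ratio equals the ratio of the limits, which is precisely the right side of~\eqref{eqn:quivalenceBeweenAutoCorrelationAndSphericalCordinated}. The step I expect to be the main obstacle is this normalization/uniformity argument: controlling the spatial integral after dividing by the shrinking lens volume $V(\delta)$, uniformly in both $\mathcal{R}$ and $\bm{\theta}$ simultaneously, since that uniformity is exactly what justifies exchanging the $\delta\to0^+$ limit with both the $\rho$- and $S^{n-1}$-integrations, and it is where the continuity of $f$ and the hypothesis $C_f\ne0$ enter.
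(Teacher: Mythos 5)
Your proof is correct and rests on the same core mechanism as the paper's: the compact support forces the spatial integral onto a region that shrinks to the origin as $\delta\to0^+$, one normalizes by the (direction-independent) volume of that region, evaluates the limit by continuity of $f$, and the normalization cancels in the ratio, with Lemma~\ref{lem:invarianceOfAntipodalProdSum} and Assumption~\ref{assum:nonVanishingSupport} guaranteeing a nonzero limiting denominator. The execution differs in one respect worth noting. The paper first swaps the order of integration so that $\bm{x}$ is outermost, observes that the relevant set $\mathcal{A}_\delta=\bigcup_{\bm\theta}L_\delta(\bm\theta)$ is the ball of radius $R\sqrt{1-(1-\delta)^2}$, and invokes the Lebesgue Differentiation Theorem with normalizer $\mu(\mathcal{A}_\delta)$; you instead work per fixed $\bm\theta$, normalize by the lens volume $V(\delta)$, and control the error by an explicit modulus-of-continuity bound that is uniform in $(\bm\theta,\mathcal{R},\bm{x})$. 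Your route is arguably cleaner at the one delicate point: the paper's appeal to Lebesgue differentiation is slightly nonstandard because $F_1(\bm{x},\delta)$ depends on $\delta$ through the integrand as well as the domain, whereas your telescoping estimate handles that dependence explicitly and directly justifies exchanging the $\delta$-limit with both the $\rho$- and $S^{n-1}$-integrations. One caveat applies to both arguments equally: since $f$ generally does not vanish on the boundary sphere (Assumption~\ref{assum:nonVanishingSupport} in fact forbids this), the zero extension is discontinuous there, so the factor-wise continuity estimate for $f_{\mathcal{R}}(\bm{x}+\bm\eta_i)$ strictly requires each $\bm\eta_i$ to lie in the open ball (the admissible range of $\delta$ then depends on $\mathrm{dist}(\bm\eta_i,S^{n-1}(R))$, which is harmless since the $\bm\eta_i$ are fixed); for $\bm\eta_i$ exactly on the boundary the claimed pointwise limit can fail, a technicality the paper's own proof also glosses over.
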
 

\begin{proof}[Proof of Proposition~\ref{prop:quivalenceBeweenAutoCorrelationAndSphericalCordinated}]
By Lemma~\ref{lem:autocorrelationinvariant},
\begin{align}
    & A_{f, \rho}^{(d)}(\bm{\tau}_0^{(\delta)}(\theta) , \tau_1^{(\delta)}(\theta), \eta_1 \ldots, \eta_{d-1}) \notag \\
     & = \int_{\mathbb{R}^n} \int_{\mathcal{R} \in \mathsf{SO}(n)} 
    \left[f_{\mathcal{R}}(\bm{x} + \bm{\tau}_0^{(\delta)}(\theta)) f_{\mathcal{R}}(\bm{x} + \bm{\tau}^{(\delta)}_1(\theta)) \prod_{i=1}^d f_{\mathcal{R}}(\bm{x} + \bm{\eta}_i) \right] d\rho(\mathcal{R}) \, d\bm{x},
    \label{eqn:autoCorrelationOrderDplus2}
\end{align}
and 
\begin{align}
    A_{f, \rho}^{(2)}(\bm{\tau}_0^{(\delta)}(\theta) , \tau_1^{(\delta)}(\theta)) = \int_{\mathbb{R}^n} \int_{\mathcal{R} \in \mathsf{SO}(n)} 
   \left[ f_{\mathcal{R}}(\bm{x} + \bm{\tau}_0^{(\delta)}(\theta)) f_{\mathcal{R}}(\bm{x} + \bm{\tau}^{(\delta)}_1(\theta))\right]
d\rho(\mathcal{R}) \, d\bm{x}. \label{eqn:autoCorrelationOrder2}
\end{align}

By Assumption~\ref{assum:support}, the support of $f$ is contained in the ball $\mathcal{B}_R^{(n)}$, so the integrand in~\eqref{eqn:autoCorrelationOrderDplus2} and~\eqref{eqn:autoCorrelationOrder2} vanishes if $(\bm{\tau}^{(\delta)}(\theta) + \bm{x})$ or $(\bm{\tau}^{(\delta)}(\theta) + \bm{x})$ lie outside the ball of radius $R$. In particular, when $\delta = 0$ the integrand is zero except when $\bm{x} = \bm{x}_0 \triangleq {0}$, that is, it vanishes everywhere except in the origin. 
Thus, both the numerator and the denominator in~\eqref{eqn:quivalenceBeweenAutoCorrelationAndSphericalCordinated} approach zero as $\delta \to 0$, so to evaluate the limiting ratio, we must consider the behavior of the integrands as $\delta \to 0$ rather than their pointwise values at $\delta = 0$. 
By assumption, $f$ is a continuous, bounded function supported on a compact subset of $\mathbb{R}^n$. Thus, by Fubini's Theorem, we may interchange the order of integration in the numerator on the left-hand side of~\eqref{eqn:quivalenceBeweenAutoCorrelationAndSphericalCordinated} as follows:
\begin{align}
    &\int_{S^{n-1}} A_{f, \rho}^{(d+2)}\left( \bm{\tau}_0^{(\delta)}(\bm{\theta}), \bm{\tau}_1^{(\delta)}(\bm{\theta}), \bm{\eta}_1, \dots, \bm{\eta}_d \right) \, d\bm{\theta} \notag \\
    &= \int_{S^{n-1}} \int_{\mathbb{R}^n} \int_{\SO(n)} \left[ f_{\mathcal{R}}(\bm{x} + \bm{\tau}_0^{(\delta)}(\bm{\theta})) f_{\mathcal{R}}(\bm{x} + \bm{\tau}_1^{(\delta)}(\bm{\theta})) \prod_{i=1}^d f_{\mathcal{R}}(\bm{x} + \bm{\eta}_i) \right] \, d\rho(\mathcal{R}) \,d\bm{x} \, d\bm{\theta} \notag \\
    &= \int_{\mathbb{R}^n} \int_{S^{n-1}} \int_{\SO(n)} \left[ f_{\mathcal{R}}(\bm{x} + \bm{\tau}_0^{(\delta)}(\bm{\theta})) f_{\mathcal{R}}(\bm{x} + \bm{\tau}_1^{(\delta)}(\bm{\theta})) \prod_{i=1}^d f_{\mathcal{R}}(\bm{x} + \bm{\eta}_i) \right] \,  d\rho{\mathcal{R}} \, d\bm{\theta} \, d\bm{x}.
    \label{eqn:numeratorSwappedIntegration}
\end{align}
Similarly, for the denominator on the left-hand side of~\eqref{eqn:quivalenceBeweenAutoCorrelationAndSphericalCordinated}, we have
\begin{align}
     &\int_{S^{n-1}} A_{f, \rho}^{(2)}\left( \bm{\tau}_0^{(\delta)}(\bm{\theta}), \bm{\tau}_1^{(\delta)}(\bm{\theta}) \right) \, d\bm{\theta} \notag \\
     &= \int_{\mathbb{R}^n} \int_{S^{n-1}} \int_{\SO(n)} \left[ f_{\mathcal{R}}(\bm{x} + \bm{\tau}_0^{(\delta)}(\bm{\theta})) f_{\mathcal{R}}(\bm{x} + \bm{\tau}_1^{(\delta)}(\bm{\theta})) \right] \,
     d\rho(\mathcal{R}) \, d\bm{\theta} \, d\bm{x}.
     \label{eqn:denominatorSwappedIntegration}
\end{align}
Set
\begin{align}
    F_1({\bm{x}}, \delta) & \triangleq \int_{S^{n-1}} \int_{\SO(n)} \left[ f_{\mathcal{R}}(\bm{x} +\bm{\tau}_0^{(\delta)}\p{\bm{\theta}}) f_{\mathcal{R}}(\bm{x} + \bm{\tau}_1^{(\delta)}\p{\bm{\theta}}) \prod_{i=1}^d f_{\mathcal{R}}(\bm{x} + \bm{\eta}_i) d\rho(\mathcal{R}) \right] \,d {\bm{\theta}}
    \nonumber\\ & =  \int_{\SO(n)} \left[\prod_{i=1}^d f_{\mathcal{R}}(\bm{x} + \bm{\eta}_i) \left(\int_{S^{n-1}}f_{\mathcal{R}}(\bm{x} +\bm{\tau}_0^{(\delta)}\p{\bm{\theta}}) f_{\mathcal{R}}(\bm{x} + \bm{\tau}_1^{(\delta)}{\bm{\theta}})d \bm{\theta} \right) \right] \, d\rho(\mathcal{R}),
\end{align}
where the second equality follows by Fubini’s theorem. Similarly, define
\begin{align}
    F_2(\bm{x},\delta) \triangleq \int_{S^{n-1}} \int_{SO(n) } \left[ f_{\mathcal{R}}(\bm{x} +\bm{\tau}_0^{(\delta)}\p{\bm{\theta}}) f_{\mathcal{R}}(\bm{x} + \bm{\tau}_1^{(\delta)}\p{\bm{\theta}})  d \rho(\mathcal{R})\right] \, d {\bm{\theta}} \nonumber
\\  = 
\int_{S^{n-1}} \int_{SO(n) } \left[ f_{\mathcal{R}}(\bm{x} +\bm{\tau}_0^{(\delta)}\p{\bm{\theta}}) f_{\mathcal{R}}(\bm{x} + \bm{\tau}_1^{(\delta)}\p{\bm{\theta}}) d {\bm{\theta}} \right] d \rho(\mathcal{R}).
\end{align}
With the above definitions, the left-hand side of \eqref{eqn:quivalenceBeweenAutoCorrelationAndSphericalCordinated} can be rewritten as
\begin{align}
    \lim_{\delta \to 0^+} \frac{\int_{\mathbb{R}^n} F_1(\bm{x},\delta) d\bm{x}}{\int_{\mathbb{R}^n} F_2(\bm{x}, \delta) d\bm{x}}. \label{eqn:app_B13}
\end{align}

For $\delta \in [0,1)$,  define
\begin{align}
    \mathcal{A}_\delta \triangleq \left\{ \bm{x} \in \mathbb{R}^n  \middle|  
    \exists \bm{\theta} \in S^{n-1} \text{ such that } 
    \norm{\bm{x} + \bm{\tau}_0^{\delta}(\bm{\theta})} \leq R 
    \text{ and } 
    \norm{\bm{x} + \bm{\tau}_1^{\delta}(\bm{\theta})} \leq R 
    \right\}.
\end{align}
Note that as $\mathcal{A}_{0} = \ppp{\bm{x}_0}$, i.e., the point $\bm{x}_0$ is the only point in the set of $\mathcal{A}_{\delta}$, for $\delta = 0$ and a simple computation using the law of cosines shows that for $0 \leq \delta < 1$, $A_\delta$ is the ball of radius $R \sqrt{1 - (1-\delta)^2}$.

Under the definition of the set $\mathcal{A}_\delta$, and of the assumption that $f$ is defined on $\mathcal{B}_R^{(n)}$, it is clear that $F_1({\bm{x}},\delta) = 0$,  $F_2(\bm{x},\delta) = 0$, for every $\bm{x} \notin \mathcal{A}_\delta$, leading to,
\begin{align}
    \int_{\bm{x} \in \mathbb{R}^n} F_r(\bm{x},\delta) d \bm{x} = \int_{\bm{x} \in \mathcal{A}_\delta} F_r(\bm{x},\delta) d \bm{x},
\end{align}
for $r = 1,2$.
Then, applying the Lebesgue Differentiation Theorem~\cite{folland1999real} to $F_1$ and $F_2$ around $\bm{x}_0$, we obtain:
\begin{align}
    \nonumber \lim_{\delta \to 0^+} \frac{1}{\mu \p{\mathcal{A}_\delta}} & \int_{\bm{x} \in \mathcal{A}_\delta} F_2(\bm{x}, \delta)  d \bm{x} = F_2(\bm{x}_0, 0)
    \\ & = \int_{S^{n-1}}\int_{\SO(n)} \pp{f_{\mathcal{R}} \p{R, \bm{\theta}}f_{\mathcal{R}} \p{-R, \bm{\theta}}} \  d\rho(\mathcal{R}) \, d{\bm{\theta}}, \label{eqn:app_B15}
\end{align}
and,
\begin{align}
    \nonumber \lim_{\delta \to 0^+}&  \frac{1}{\mu(\mathcal{A}_\delta)} \int_{\bm{x} \in \mathcal{A}_\delta} F_1(\bm{x}, \delta) d\bm{x}
    = F_1(\bm{x}_0,0) \
    \\ &= \int_{S^{n-1}} \int_{\SO(n)} \left[ f_{\mathcal{R}}(R, \bm{\theta}) f_{\mathcal{R}}(-R, \bm{\theta}) \prod_{i=1}^d f_{\mathcal{R}}
    (\bm{\eta}_i) d\rho(\mathcal{R}) \, \right] d\bm{\theta}. \label{eqn:app_B16}
\end{align}
Note that the continuity of $f$ ensures that the above limits exist.
Substituting \eqref{eqn:app_B15}, \eqref{eqn:app_B16} into \eqref{eqn:app_B13} results:
\begin{align}
    \lim_{\delta \to 0^+} \frac{\int_{\mathbb{R}^n} F_1(\bm{x}, \delta) d\bm{x}}{\int_{\mathbb{R}^n} F_2(\bm{x},\delta) d\bm{x}} = \frac{\int_{S^{n-1}} \int_{\SO(n)} \left[ f_{\mathcal{R}}(R, \bm{\theta}) f_{\mathcal{R}}(-R, \bm{\theta}) \prod_{i=1}^d f_{\mathcal{R}}(\bm{\eta}_i) d\rho(\mathcal{R}) \right] d\bm{\theta}}{\int_{S^{n-1}}\int_{\SO(n)} \pp{f_{\mathcal{R}} \p{R, \bm{\theta}}f_{\mathcal{R}} \p{-R, \bm{\theta}}} \ d{\bm{\theta}}}. 
\end{align}
Recognizing $F_1$, $F_2$, as the integrands of the left-hand-side numerator and denominator of \eqref{eqn:quivalenceBeweenAutoCorrelationAndSphericalCordinated}, respectively, completes the proof of the proposition.
\end{proof}

We can now complete the proof of Theorem~\ref{thm:reductionFromAutocorrelationToTensorMoment} by showing that the right-hand side of~\eqref{eqn:quivalenceBeweenAutoCorrelationAndSphericalCordinated} equals the $d$-th moment of $\SO(n)$ as defined in Definition~\ref{def:mraTensorMoment}.

As $f$ is a bounded function defined on a compact domain of $\mathbb{R}^n$, we apply Fubini's Theorem to interchange the order of integration 
\begin{align}
    \nonumber \int_{S^{n-1}} & \int_{\SO(n)} f_{\mathcal{R}} (R, \bm{\theta})f_{\mathcal{R}} (-R, \bm{\theta})
    f_{\mathcal{R}} (\bm{\eta}_1)  \dots f_{\mathcal{R}}(\bm{\eta}_d)  d\bm{\theta} \, d\rho(\mathcal{R})\\
     & = \int_{\SO(n)} \left(\int_{S^{n-1}} f_{\mathcal{R}} (R, \bm{\theta})f_{\mathcal{R}} (-R, \bm{\theta}) \ d\bm{\theta} \right) f_{\mathcal{R}}(\bm{\eta}_1) \dots f_{\mathcal{R}} (\bm{\eta}_d) d \rho(\mathcal{R}). \label{eqn:interchangeIntegralOrder}
\end{align}

By Assumption \ref{assum:nonVanishingSupport}, and Lemma \ref{lem:invarianceOfAntipodalProdSum}, the expression,
\begin{align}
    C_{f,{\mathcal{R}}} = \int_{S^{n-1}} f_{\mathcal{R}} \p{R, \bm{\theta}}f_{\mathcal{R}} \p{-R, \bm{\theta}} d\bm{\theta} \neq 0,
\end{align}
and is independent of the group element ${\mathcal{R}} \in \mathsf{SO}(n)$. Therefore, the expression 
\begin{align}
    \notag \int_{\SO(n)} \left(\int_{S^{n-1}} f_{\mathcal{R}} (R, \bm{\theta})f_{\mathcal{R}} (-R, \bm{\theta}) \ d\bm{\theta} \right) f_{\mathcal{R}}(\bm{\eta}_1) \dots f_{\mathcal{R}} (\bm{\eta}_d) d \rho(\mathcal{R}) & = \\ \notag 
    \left(\int_{S^{n-1}} f(R, \bm{\theta}) f(-R, \bm{\theta}) d\bm{\theta}\right) \int_{\SO(n)} f_\mathcal{R}(\bm{\eta}_1) \ldots f_{\mathcal{R}}(\bmeta_d) d\rho(\mathcal{R}) 
    & = \\ 
    \left(\int_{S^{n-1}} f(R, \bm{\theta}) f(-R, \bm{\theta}) d\bm{\theta}\right) M^{(d)}_f(\bm{\eta}_1, \ldots , \bm{\eta}_d), 
\end{align}
and the theorem follows.

\section{Preliminaries for the statistical framework}
\label{sec:preliminariesToStatisticalPart}

\subsection{Problem formulation}
The generative model for the $\mathsf{SE}(n)$ orbit-recovery problem, introduced in Problem~\ref{prob:orbitRecoverySEn}, assumes that each observation $y_i: \mathcal{D} \to \mathbb{R}$ is given by
\begin{align}
    y_i(\bm{x}) = (g_i \cdot f)(\bm{x}) + \xi_i(\bm{x}), 
    \qquad i \in \pp{N}, \bm{x} \in \mathcal{D}, \label{eqn:modelSEnProb2}
\end{align}
where $\mathcal{D} \subset \mathbb{R}^n$ is a compact domain, $\{ g_i \}_{i \in \pp{N}}$ are i.i.d. samples from $\mathsf{SE}(n)$, and $\{\xi_i\}_{i \in \pp{N}}$ are i.i.d. Gaussian white noise processes with variance $\sigma^2$. That is,
\begin{align}
    \mathbb{E}[\xi_i(\bm{x}) \xi_i(\bm{x}')] = \sigma^2 \, \delta(\bm{x} - \bm{x}'), 
    \qquad \bm{x}, \bm{x}' \in \mathcal{D},
\end{align}
and the sequences $\{ g_i \}$ and $\{ \xi_i \}$ are independent across $i$.

A rigorous treatment of this model requires a continuous-space representation of the noise $\xi_i$. In the discrete setting, white noise is modeled as i.i.d. Gaussian random variables. In the continuum, however, such a pointwise definition breaks down: sample paths are almost surely discontinuous, and the variance diverges at each location. Instead, white noise must be understood as a random distribution, i.e., a continuous linear functional acting on Schwartz test functions~\cite{holden1996stochastic,geng2020wiener}. The exposition in this section follows the standard construction of spatial white noise processes presented in~\cite{holden1996stochastic,geng2020wiener,chung1990introduction, peccati2011wiener, adler2007random}.

Formally, the analogue of discrete white noise is a \emph{spatial white noise process}, a random field $\xi(\bm{x})$ indexed by $\bm{x}\in\mathbb{R}^n$ with uncorrelated values at distinct points. The random quantities that arise in our analysis include linear functionals,
\begin{align}
    \int_{\mathcal{D}} F(\bm{x}) \, \xi(\bm{x}) \, d\bm{x},
    \label{eq:stochastic_integral_linear}
\end{align}
and multilinear functionals (corresponding to the autocorrelation quantities as defined in~\eqref{eqn:autoCorrealtionMomentsRigidMotion}),
\begin{align}
    \int_{\mathcal{D}} F(\bm{x}) \,\xi(\bm{x}+\bm{\tau}_0)\cdots \xi(\bm{x}+\bm{\tau}_{d-1}) \, d\bm{x}. \label{eq:stochastic_integral_multilinear}
\end{align}
To carefully define and analyze~\eqref{eq:stochastic_integral_linear}-\eqref{eq:stochastic_integral_multilinear}, we first recall the connection between the spatial white noise process and Brownian sheet. The Brownian sheet generalizes a one-dimensional Brownian motion to multiple parameters, as defined next.

\begin{definition}[Gaussian white-noise random measure and Brownian sheet]
\label{def:nparamBM}
A stochastic process $\{W(A)\}_{A\in\mathcal{B}_b(\mathbb{R}^n)}$ indexed by bounded Borel sets is called a \emph{Gaussian white-noise random measure} if:
\begin{enumerate}
    \item For each bounded Borel set $A\subset\mathbb{R}^n$, $W(A)$ is a centered Gaussian random variable with variance $\mathbb{E}[W(A)^2]=|A|$, where $|A|$ denotes the Lebesgue measure of the set $A$.
    \item For any bounded Borel sets $A,B\subset\mathbb{R}^n$, $\mathbb{E}[W(A)W(B)] = |A\cap B|$.
    \item If $A_1,\dots,A_k$ are pairwise disjoint, then $W(A_1),\dots,W(A_k)$ are independent.
\end{enumerate}
Equivalently (and in the terminology often used in the stochastic partial differential equations literature), the cumulative field
\begin{align}
    \mathcal{W}(t) \triangleq  W\!\big([0,t_1]\times\cdots\times[0,t_n]\big),\qquad t\in\mathbb{R}_+^n,
\end{align}
is the standard $n$-parameter \emph{Brownian sheet} with covariance $\mathbb{E}[\mathcal{W}(t)\mathcal{W}(s)]=\prod_{k=1}^n \min\{t_k,s_k\}$.
\end{definition}

Then, the spatial white noise process at noise level $\sigma>0$ is the generalized derivative of the cumulative Brownian sheet, expressed heuristically as
\begin{align}
    \xi(\bm{x})\,d\bm{x}  =  \sigma\, dW(\bm{x}).
    \label{eqn:app_B6}
\end{align}

\paragraph{White noise as an isonormal Gaussian process.}
Let $\mathcal{S}(\mathbb{R}^n)$ denote the Schwartz space.
For any $\varphi \in \mathcal{S}(\mathbb{R}^n)$, define
\begin{align}
    \xi(\varphi)  \triangleq   \sigma \int_{\mathbb{R}^n} \varphi(\bm{x}) \, dW(\bm{x}),
    \label{eqn:WienerItoIntegral}
\end{align}
which is a centered Gaussian random variable with variance $\sigma^2 \|\varphi\|_{L^2}^2$.
Then, the linear functional in~\eqref{eq:stochastic_integral_linear} corresponds to $\varphi(\bm{x})=F(\bm{x})\mathbf{1}_{\mathcal{D}}(\bm{x})$, while higher-order functionals such as~\eqref{eq:stochastic_integral_multilinear} are defined using multiple Wiener–It$\hat{\text{o}}$ integrals (see Appendix~\ref{sec:multiple-Wiener-ito-integral}).

Equivalently, spatial white noise may be characterized as an isonormal Gaussian process
\begin{align}
    \xi: \mathcal{S}(\mathbb{R}^n) \to \mathbb{R},
\end{align}
with covariance
\begin{align}
    \mathbb{E}[\xi(\varphi)\,\xi(\psi)]
     = \sigma^2 \cdot \langle \varphi,\psi\rangle_{L^2}
     = \sigma^2 \int_{\mathbb{R}^n} \varphi(\bm{x})\,\psi(\bm{x})\,d\bm{x},
    \label{eqn:def-white-noise-cov-structure}
\end{align}
for all $\varphi,\psi\in\mathcal{S}(\mathbb{R}^n)$.
Each realization of white noise is a tempered distribution
$\omega\in\mathcal{S}'(\mathbb{R}^n)$ acting via
$\xi(\varphi)(\omega)=\sigma \langle \omega,\varphi\rangle$.

\paragraph{Observations in differential form.}
We now return to the description of~\eqref{eqn:modelSEnProb2} in a differential form.
Let $W_i(\bm{x})$, for $i \in \pp{N}$, denote independent $n$-parameter Gaussian white-noise random measure. Defining the spatial white noise process as their distributional derivative (see~\eqref{eqn:app_B6}) gives
\begin{align}
    \xi_i(\bm{x}) \, d\bm{x} = \sigma \cdot dW_i(\bm{x}). \label{eq:whitenoise_diff_form}
\end{align}
If $f:\mathcal{B}_R^{(n)} \to \mathbb{R}$ is an unknown bounded signal supported on a ball of radius $R$, then
\begin{align}
    h_{g_i}(\bm{x}) \triangleq (g_i \cdot f)(\bm{x})
\end{align}
denotes the rigid-motion transform of $f$. Each observation can then be expressed as
\begin{align}
    y_i(\bm{x}) \, d\bm{x} = h_{g_i}(\bm{x}) \, d\bm{x} + \sigma \cdot dW_i(\bm{x}), 
    \label{eq:obs_diff_form}
\end{align}
which may be interpreted as a stochastic partial differential equation: a rigidly transformed signal corrupted by spatial white noise.

\subsection{Wick's Theorem}
Let $\xi$ be a centered, zero-mean Gaussian process. A fundamental property of such processes is that all higher-order moments can be expressed entirely in terms of second-order moments. This is formalized by \emph{Wick's theorem} (also known as \emph{Isserlis' theorem}), which characterizes the expectation of products of jointly Gaussian random variables~\cite{isserlis1918formula, janson1997gaussian}. 

\begin{thm}[Wick's Theorem] \label{WicksTheorem}
Let $(X_1, \ldots, X_n)$ be a zero-mean multivariate Gaussian random vector. Then, the expectation of the product $X_1 X_2 \cdots X_n$ is given by
\begin{align}
    \mathbb{E}[X_1 X_2 \cdots X_n] 
    = \sum_{p \in \mathcal{P}_n^2} \prod_{\{i,j\} \in p} \mathbb{E}[X_i X_j] = \sum_{p \in \mathcal{P}_n^2} \prod_{\{i,j\} \in p} \mathrm{Cov}(X_i, X_j),
\end{align}
where the sum is taken over all pairwise partitions $\mathcal{P}_n^2$ of the index set $\{1, \ldots, n\}$, i.e., all distinct ways of dividing the indices into unordered pairs $\{i, j\}$. The product is taken over all such pairs in each partition $p$. This formula holds only when $n$ is even; for odd $n$, the expectation is zero.
\end{thm}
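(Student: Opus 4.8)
The plan is to prove Wick's theorem by induction on $n$, using Gaussian integration by parts (Stein's identity) as the inductive engine. The odd case is immediate: if $n$ is odd, then $(-X_1,\dots,-X_n)$ is again a zero-mean Gaussian vector with the same covariance matrix, hence the same law, so comparing $\E[X_1\cdots X_n]$ with $\E[(-X_1)\cdots(-X_n)]=(-1)^n\,\E[X_1\cdots X_n]=-\E[X_1\cdots X_n]$ forces the moment to vanish, which matches the empty sum over pairwise partitions of an odd-size set. So from now on assume $n$ is even, with base case $n=2$ being exactly the definition $\E[X_1X_2]=\Cov(X_1,X_2)$ and the single partition $\{\{1,2\}\}$.

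For the inductive step I would invoke Stein's identity: for a zero-mean jointly Gaussian vector $(X_1,\dots,X_n)$ and a $C^1$ function $\varphi$ of polynomial growth,
\[
\E\!\left[X_1\,\varphi(X_2,\dots,X_n)\right]=\sum_{j=2}^{n}\Cov(X_1,X_j)\,\E\!\left[\frac{\partial\varphi}{\partial x_j}(X_2,\dots,X_n)\right].
\]
Applying this with $\varphi(x_2,\dots,x_n)=x_2x_3\cdots x_n$, for which $\partial\varphi/\partial x_j=\prod_{2\le k\le n,\,k\neq j}x_k$, yields
\[
\E[X_1X_2\cdots X_n]=\sum_{j=2}^{n}\Cov(X_1,X_j)\;\E\!\Big[\textstyle\prod_{2\le k\le n,\,k\neq j}X_k\Big].
\]
Each inner expectation is a product of $n-2$ jointly Gaussian zero-mean variables, so the induction hypothesis rewrites it as $\sum_{p'}\prod_{\{a,b\}\in p'}\Cov(X_a,X_b)$, the sum running over all pairwise partitions $p'$ of $\{2,\dots,n\}\setminus\{j\}$. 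The combinatorial assembly is then routine: every pairwise partition $p\in\mathcal{P}_n^2$ of $\{1,\dots,n\}$ is determined uniquely by the block $\{1,j\}$ containing the index $1$ together with a pairwise partition of $\{2,\dots,n\}\setminus\{j\}$, and this bijection turns the double sum above precisely into $\sum_{p\in\mathcal{P}_n^2}\prod_{\{i,j\}\in p}\Cov(X_i,X_j)$, completing the induction.

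The main obstacle is justifying Stein's identity at the level of generality we use, namely an arbitrary (possibly degenerate) covariance and the polynomially growing test function $\varphi(x)=x_2\cdots x_n$. A clean route is to establish it first for a nondegenerate $\Sigma$ by ordinary integration by parts against the Gaussian density, and then pass to the degenerate case by replacing $\Sigma$ with $\Sigma+\varepsilon I$ and letting $\varepsilon\to 0^+$, using that both sides of the claimed identity are polynomials in the entries of $\Sigma$ and hence continuous in $\varepsilon$. An alternative, purely analytic proof avoids integration by parts entirely: differentiate the characteristic function $\E[e^{\mathrm{i}\langle t,X\rangle}]=e^{-\frac12 t^{\top}\Sigma t}$ in $t_1,\dots,t_n$ and evaluate at $t=0$, then match the coefficient of $t_1\cdots t_n$ in $e^{-\frac12 t^{\top}\Sigma t}$ with the sum over perfect matchings; this replaces the Stein identity by bookkeeping of the $(n/2)!\,2^{n/2}$ overcounting factor, which is the only delicate point of that route.
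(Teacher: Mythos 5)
Your proof is correct. Note, however, that the paper does not actually prove Theorem~\ref{WicksTheorem}: it states it as a classical result and cites Isserlis (1918) and Janson (1997), so there is no in-paper argument to compare against. Your route — the sign-symmetry argument $X\mapsto -X$ for odd $n$, then induction via Stein's identity with the bijection between perfect matchings of $\{1,\dots,n\}$ and (partner of index $1$) $\times$ (matching of the remaining $n-2$ indices) — is the standard textbook proof and is complete as sketched. You are also right to flag the only delicate point, namely the validity of Gaussian integration by parts for a possibly degenerate covariance and a polynomially growing test function; your regularization $\Sigma\mapsto\Sigma+\varepsilon I$ (or equivalently $X\mapsto X+\sqrt{\varepsilon}\,Z$ with $Z$ an independent standard Gaussian, whose moments converge as $\varepsilon\to 0^+$) handles this, as does the characteristic-function alternative you mention.
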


Wick's theorem expresses higher-order moments of a Gaussian vector as sums over products of covariances corresponding to pairings of indices. For odd $n$, no complete pairing exists, so the expectation vanishes. This result extends naturally to Gaussian processes, including Gaussian white noise.

\begin{corollary}
\label{lem:WickLemmaForWhiteNoise}
Let $\mathcal{D} \subset \mathbb{R}^n$ be bounded, and let $\xi$ be Gaussian white noise on $\mathcal{D}$.
Then, for any $k \in \mathbb{N}$, the following identities hold 
\begin{align}
    \mathbb{E}\!\left[\xi(\bm{x}_1)\cdots \xi(\bm{x}_{2k})\right] &= \sigma^{2k} \sum_{p \in \mathcal{P}_{2k}^2} \ \prod_{\{i,j\}\in p} \delta(\bm{x}_i - \bm{x}_j),
    \label{eq:white-noise-even}\\[4pt] \mathbb{E}\!\left[\xi(\bm{x}_1)\cdots \xi(\bm{x}_{2k+1})\right] &= 0.
    \label{eq:white-noise-odd}
\end{align}
Here, $\mathcal{P}_{2k}^2$ denotes the set of all pairings of $\{1,\dots,2k\}$, and $\delta(\bm{x})$ is the Dirac delta function.
\end{corollary}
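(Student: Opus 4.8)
The plan is to deduce the distributional identities~\eqref{eq:white-noise-even}--\eqref{eq:white-noise-odd} from the finite-dimensional Wick/Isserlis formula (Theorem~\ref{WicksTheorem}) by smearing the white noise against test functions and then inserting the covariance kernel $\sigma^2\delta$. First I would fix an integer $m\ge 1$ and test functions $\varphi_1,\dots,\varphi_m\in\mathcal{S}(\mathbb{R}^n)$ (eventually chosen to approximate the shifted data windows appearing in~\eqref{eq:stochastic_integral_multilinear}). By the isonormal-Gaussian-process description recalled around~\eqref{eqn:WienerItoIntegral}--\eqref{eqn:def-white-noise-cov-structure}, the vector $(\xi(\varphi_1),\dots,\xi(\varphi_m))$ is jointly centered Gaussian with $\mathbb{E}[\xi(\varphi_i)\xi(\varphi_j)]=\sigma^2\langle\varphi_i,\varphi_j\rangle_{L^2}$. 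Applying Theorem~\ref{WicksTheorem} to this vector: when $m=2k+1$ is odd there is no perfect matching of $\{1,\dots,m\}$, so $\mathbb{E}[\xi(\varphi_1)\cdots\xi(\varphi_m)]=0$, which is~\eqref{eq:white-noise-odd} read in the distributional sense; when $m=2k$ is even,
\begin{align*}
\mathbb{E}\!\left[\xi(\varphi_1)\cdots\xi(\varphi_{2k})\right]
= \sum_{p\in\mathcal{P}_{2k}^2}\ \prod_{\{i,j\}\in p} \sigma^2\,\langle\varphi_i,\varphi_j\rangle_{L^2}
= \sigma^{2k}\sum_{p\in\mathcal{P}_{2k}^2}\ \prod_{\{i,j\}\in p}\int_{\mathbb{R}^{2n}}\varphi_i(\bm{u})\,\varphi_j(\bm{v})\,\delta(\bm{u}-\bm{v})\,d\bm{u}\,d\bm{v}.
\end{align*}

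Next I would rewrite the right-hand side as the pairing of the product test function $\prod_{i=1}^{2k}\varphi_i(\bm{x}_i)$ against the kernel $\sigma^{2k}\sum_{p}\prod_{\{i,j\}\in p}\delta(\bm{x}_i-\bm{x}_j)$ on $\mathbb{R}^{2nk}$; since the equality holds for every choice of $\varphi_1,\dots,\varphi_{2k}$, this is exactly the claimed identity~\eqref{eq:white-noise-even} for the tempered distribution $(\bm{x}_1,\dots,\bm{x}_{2k})\mapsto\mathbb{E}[\xi(\bm{x}_1)\cdots\xi(\bm{x}_{2k})]$. Extending from Schwartz test functions to the $L^2$-windows $F(\bm{x})\mathbf{1}_{\mathcal{D}}(\bm{x})$ (and their translates) needed for the autocorrelation functionals~\eqref{eq:stochastic_integral_multilinear} is then a routine density/approximation argument, with the multiple Wiener--It\^o integral of Appendix~\ref{sec:multiple-Wiener-ito-integral} providing the underlying rigorous object.

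The only real subtlety --- and the point I would flag as the ``hard part'' --- is that the left-hand sides of~\eqref{eq:white-noise-even}--\eqref{eq:white-noise-odd} have no literal pointwise meaning: white noise is a random distribution, so a product of its ``values'' at points $\bm{x}_1,\dots,\bm{x}_{2k}$ acquires meaning only after integration (equivalently, through the chaos decomposition / multiple Wiener--It\^o integral). Once one commits to reading the statement distributionally there is no further analytic obstacle: the concentration of the expectation on the partial diagonals $\{\bm{x}_i=\bm{x}_j\}$ is precisely what the product of Dirac deltas encodes, and nothing beyond Theorem~\ref{WicksTheorem} and the covariance~\eqref{eqn:def-white-noise-cov-structure} enters. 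I would close by noting that this is the exact form in which the corollary is invoked in Proposition~\ref{thm:propRigidMotionEquivalence}, where the deltas collapse pairs of offsets $\bm{\tau}_i=\bm{\tau}_j$ inside the empirical autocorrelations.
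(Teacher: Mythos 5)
Your proposal is correct and follows essentially the same route as the paper, which simply asserts that Wick's theorem (Theorem~\ref{WicksTheorem}) extends to white noise once the covariance $\mathbb{E}[\xi(\bm{x})\xi(\bm{x}')]=\sigma^2\delta(\bm{x}-\bm{x}')$ is substituted into the pairing formula. The only difference is that you make the distributional reading explicit by smearing against test functions and invoking the isonormal covariance~\eqref{eqn:def-white-noise-cov-structure}, which is a careful formalization of the step the paper leaves implicit rather than a genuinely different argument.
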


\subsection{The multiple Wiener-It$\hat{\text{o}}$ integral}\label{sec:multiple-Wiener-ito-integral}
To formally define the integral in~\eqref{eq:stochastic_integral_multilinear}, we first need to introduce the multiple Wiener-It$\hat{\text{o}}$ integral.  
The single Wiener-It$\hat{\text{o}}$ integral $\int_{\mathbb{R}^n} \varphi(\bm{x}) \, dW(\bm{x})$, as defined in~\eqref{eqn:WienerItoIntegral}, admits a natural extension to the multiple Wiener-It$\hat{\text{o}}$ integral, which plays a central role in the Wiener chaos expansion~\cite{holden1996stochastic,geng2020wiener}.

\begin{definition}[$m$-tuple Wiener-It$\hat{\text{o}}$ integral] \label{def:m-tuple-Ito-integral}  
Let $(\mathcal{D}, \mathcal{B}(\mathcal{D}), \mu)$ be a measure space, where $\mathcal{D} \subset \mathbb{R}^n$ is a compact domain, $\mathcal{B}(\mathcal{D})$ denotes its Borel $\sigma$-algebra, and $\mu$ is the Lebesgue measure on $\mathcal{D}$. Let $(\Omega,\mathcal{F},\mathbb{P})$ be a probability space, and let $W$ be an $n$-parameter Gaussian white-noise random measure defined on this space.
For each integer $m \geq 1$, the $m$-tuple Wiener-It$\hat{\text{o}}$ integral is a linear map
\begin{align}
    I_m : L^2(\mathcal{D}^m) \to L^2(\Omega),
\end{align}
defined for any $f \in L^2(\mathcal{D}^m)$ by
\begin{align}
    I_m(f) \triangleq \int_{\mathcal{D}^m} f(\bm{t}_1, \dots, \bm{t}_m) \, dW(\bm{t}_1) \cdots dW(\bm{t}_m) \quad \in L^2(\Omega) .\label{eqn:itoIntegralMulti}
\end{align}
\end{definition}

This integral satisfies the following key properties:
\begin{enumerate}
    \item \textit{Linearity and symmetry:} The map $I_m$ is linear and invariant under permutations of its arguments; that is, for any permutation $\pi$ of $\{1,\dots,m\}$,
    \begin{align}
        I_m(f) = I_m(f^\pi),
    \end{align}
    where $f^\pi(\bm{t}_1, \dots, \bm{t}_m) \triangleq  f(\bm{t}_{\pi(1)}, \dots, \bm{t}_{\pi(m)})$.
    \item \textit{Isometry and orthogonality:} The multiple integrals satisfy the isometry relation
    \begin{align}
        \mathbb{E}[I_m(f)^2] = m! \, \|\tilde{f}\|_{L^2(\mathcal{D}^m)}^2,
    \end{align}
    where $\tilde{f}$ is the symmetrization of $f$. Moreover, for $f \in L^2(\mathcal{D}^{m_1})$ and $g \in L^2(\mathcal{D}^{m_2})$,
    \begin{align}
        \mathbb{E}[I_{m_1}(f) I_{m_2}(g)] = 
        \begin{cases}
            m_1! \, \langle \tilde{f}, \tilde{g} \rangle_{L^2(\mathcal{D}^{m_1})} & \text{if } m_1 = m_2, \\
            0 & \text{if } m_1 \neq m_2.
        \end{cases}
    \end{align}
\end{enumerate}

\subsection{Autocorrelations: Lebesgue construction}
\label{subsec:LebesgueAutocorr}

Recall that in the $\mathsf{SE}(n)$ rigid-motion model, the $d$-th order autocorrelation (Definition~\ref{def:autoCorrelationMomentsRigidMotion}) formally takes the form
\begin{align}
    \int_{\mathcal{D}}
    F(\bm{x}) \,
    \xi(\bm{x}+\bm{\tau}_0)\cdots
    \xi(\bm{x}+\bm{\tau}_{d-1}) \, d\bm{x},
    \label{eqn:app_c6}
\end{align}
for $F \in L^2(\mathcal{D})$ and fixed shifts $\bm{\tau}_0,\dots,\bm{\tau}_{d-1} \in \mathbb{R}^n$.
Here, $\xi$ denotes spatial white noise, satisfying $\mathbb{E}[\xi(\varphi)\,\xi(\psi)] = \sigma^2 \langle \varphi,\psi\rangle_{L^2(\mathbb{R}^n)}$ (as defined in~\eqref{eqn:def-white-noise-cov-structure}).
Since $\xi$ is not a function, pointwise products $\xi(\cdot)^d$ are not defined a priori.
To employ~\eqref{eqn:app_c6} in our statistical analysis, we must: (i) define the integral rigorously; (ii) establish that the integral in~\eqref{eqn:app_c6} is square-integrable with finite variance; (iii) justify exchanging expectation and integration; and (iv) relate the construction to Wick’s theorem for signal-noise decompositions.
This subsection develops the required framework.

The principal analytical difficulty in \eqref{eqn:app_c6} is that the integrand is supported on the shifted diagonal
\begin{align}
    \big\{(\bm{x}+\bm{\tau}_0,\dots,\bm{x}+\bm{\tau}_{d-1})
    : \bm{x}\in\mathcal{D}\big\}
    \subset \mathcal{D}^d,
\end{align}
which is not an $L^2$ kernel.  
To make the expression well defined, we introduce a regularization based on mollifiers, a standard tool for handling products of distributions; see, e.g., \cite{janson1997gaussian,holden1996stochastic,evans2022partial}.

\begin{definition}[Mollifier]
\label{def:mollifier}
A \emph{mollifier} is a smooth, compactly supported function
$\rho \in C_c^\infty(\mathbb{R}^n)$ such that $\int_{\mathbb{R}^n} \rho(\bm x)\, d\bm x = 1$. For $\varepsilon>0$, define the rescaled mollifier
\begin{align}
    \rho_\varepsilon(\bm z) = \varepsilon^{-n}\rho\left(\frac{\bm z}{\varepsilon}\right).
    \label{eqn:rho-epsilon}
\end{align}
Then, $\{\rho_\varepsilon\}_{\varepsilon>0}$ is an \emph{approximate identity}:
for every $\varphi \in C_c^\infty(\mathbb{R}^n)$,
\begin{align}
    \lim_{\varepsilon\to 0}
    \int_{\mathbb{R}^n} \rho_\varepsilon(\bm z)\,\varphi(\bm z)\,d\bm z
    = \varphi(\bm 0),
\end{align}
or equivalently, $\rho_\varepsilon(\bm{z}) \to \delta(\bm{z})$ in the sense of distributions.
\end{definition}

\paragraph{Regularized kernel and normalization.}
Extend $F$ by zero outside $\mathcal{D}$, so $F \in L^2(\mathbb{R}^n)$, and let $W$ be the $n$-parameter Gaussian white-noise random measure from Definition~\ref{def:nparamBM}.
For fixed shifts $\bm\tau_0,\dots,\bm\tau_{d-1}\in\mathbb{R}^n$,
define the \emph{regularized kernel}
\begin{align}
    (J_{\bm{\tau},\varepsilon}F)(\bm t_1,\dots,\bm t_d) \triangleq \int_{\mathbb{R}^n} F(\bm x) \prod_{j=1}^d \rho_\varepsilon \, \!\big(\bm t_j-(\bm x+\bm\tau_{j-1})\big)\, d\bm x.
    \label{eq:regularized-kernel}
\end{align}
Since $F\in L^2$ and $\rho_\varepsilon$ is smooth with compact support,
we have $J_{\bm{\tau},\varepsilon}F \in L^2((\mathbb{R}^n)^d)$.

We then define the regularized multiple Wiener–It$\hat{\text{o}}$ integral
\begin{align}
    I_d^{(\varepsilon)}(F)
    \triangleq \int_{(\mathbb{R}^n)^d} \widetilde{J_{\bm{\tau},\varepsilon}F}(\bm t_1,\dots,\bm t_d)\, dW(\bm t_1)\cdots dW(\bm t_d), \label{eqn:unnormalized-Ito-integral}
\end{align}
where $\widetilde{J_{\bm{\tau},\varepsilon}F}$ denotes the symmetrization of
$J_{\bm{\tau},\varepsilon}F$
(Definition~\ref{def:m-tuple-Ito-integral}).
This quantity is well defined by the standard theory of multiple
Wiener-It$\hat{\text{o}}$ integrals (Definition~\ref{def:m-tuple-Ito-integral}).

Next, to ensure a finite asymptotic variance of the integral in~\eqref{eqn:unnormalized-Ito-integral} as $\varepsilon \to 0$, we must normalize it.
Set $\tilde{\rho}_\varepsilon(\bm z)=\rho_\varepsilon(-\bm z)$ and $k_\varepsilon = \rho_\varepsilon \ast \tilde{\rho}_\varepsilon$.
Define
\begin{align}
    m_d(\varepsilon) \triangleq  \int_{\mathbb{R}^n}[k_\varepsilon(\bm z)]^d\, d\bm z = \varepsilon^{-n(d-1)}\, m_d(1),
\end{align}
where we have defined
\begin{align}
    m_d(1) \triangleq \int_{\mathbb{R}^n} [(\rho \ast \tilde{\rho})(\bm z)]^d\, d\bm z
    \in (0,\infty).
\end{align}
Define the normalized integral
\begin{align}
    \widetilde{I}_d^{(\varepsilon)}(F)
    \triangleq m_d(\varepsilon)^{-1/2}\, I_d^{(\varepsilon)}(F).
    \label{eq:regularized-multi}
\end{align}
The scaling $m_d(\varepsilon)^{-1/2}$ ensures a finite asymptotic variance, as would be shown in the proof of Proposition~\ref{prop:limit-shifted} in Appendix~\ref{sec:proofOfPropLimitShifted}. We are now in position to state the proposition.

\begin{proposition}
\label{prop:limit-shifted}
Let $(\Omega,\mathcal{F},\mathbb{P})$ be the probability space, and let $W$ be an $n$-parameter Gaussian white-noise random measure defined on this space. Let $\mathcal{D} \subset \mathbb{R}^n$ be bounded and $F \in L^2(\mathcal{D})$, extended by zero outside $\mathcal{D}$. Fix shifts $\bm\tau_0,\dots,\bm\tau_{d-1} \in \mathbb{R}^n$.
Then:
\begin{enumerate}
    \item \emph{($L^2$–convergence).}
    The family $\{\widetilde{I}_d^{(\varepsilon)}(F)\}_{\varepsilon>0}$ defined in~\eqref{eq:regularized-multi}
    is Cauchy in $L^2(\Omega)$, hence
    \begin{align}
        I_d(F) \triangleq \lim_{\varepsilon\to 0} \widetilde{I}_d^{(\varepsilon)}(F)
        \label{eqn:limiting-cauchy-sequence}
    \end{align}
    exists in $L^2(\Omega)$.
    
    \item \emph{(Independence of the mollifier).}
    The limit $I_d(F)$ is independent of the choice of mollifier
    $\rho\in C_c^\infty(\mathbb{R}^n)$ satisfying $\int\rho=1$.
    
    \item \emph{(Mean and variance).}
    By the It$\hat{\text{o}}$ isometry (Definition~\ref{def:m-tuple-Ito-integral}),
    \begin{align}
        \mathbb{E}[I_d(F)] &= 0, 
        \\ \mathbb{E}[I_d(F)^2] &= d!\,|F\|_{L^2(\mathcal{D})}^2.
    \end{align}
\end{enumerate}
\end{proposition}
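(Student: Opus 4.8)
The idea is to work entirely at the level of second moments, since each $\widetilde{I}_d^{(\varepsilon)}(F)$ lives in the $d$-th Wiener chaos, where the isometry and orthogonality relations recorded after Definition~\ref{def:m-tuple-Ito-integral} turn every question about $L^2(\Omega)$ into a question about $L^2$ of kernels. First I would check well-posedness: after zero extension $F\in L^2(\R^n)$, and $J_{\bm{\tau},\varepsilon}F$ is obtained from $F$ by convolving with $\rho_\varepsilon\in C_c^\infty$ separately in each slot, hence lies in $L^2((\R^n)^d)$, so $I_d^{(\varepsilon)}(F)=I_d(\widetilde{J_{\bm{\tau},\varepsilon}F})$ is a genuine order-$d$ multiple integral. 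By the isometry, for mollifiers $\rho,\rho'$ and parameters $\varepsilon,\varepsilon'>0$ the cross second moment $\E[\widetilde{I}_d^{(\varepsilon)}(F)\,\widetilde{I}_d^{(\varepsilon')}(F)]$ equals $(m_d(\varepsilon)m_d(\varepsilon'))^{-1/2}\sum_{\nu\in S_d}\int\!\int F(\bm x)F(\bm y)\prod_{k=1}^d k_{\varepsilon,\varepsilon'}(\bm y-\bm x+\bm c_k(\nu))\,d\bm x\,d\bm y$ (the isometry constant $d!$ cancelling the factor $1/d!$ produced by the two symmetrizations), where $k_{\varepsilon,\varepsilon'}$ is the approximate identity obtained by convolving the two rescaled mollifiers, supported in a ball of radius $O(\varepsilon+\varepsilon')$, and each $\bm c_k(\nu)$ is a difference of two of the prescribed shifts $\bm{\tau}_0,\dots,\bm{\tau}_{d-1}$ determined by $\nu$.

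Next I would carry out the limit analysis, which is the core of the argument. Because $k_{\varepsilon,\varepsilon'}$ has support of radius $O(\varepsilon+\varepsilon')$, a product $\prod_k k_{\varepsilon,\varepsilon'}(\bm y-\bm x+\bm c_k(\nu))$ vanishes identically once $\varepsilon+\varepsilon'$ is small unless all offsets $\bm c_k(\nu)$ coincide; summing the induced relations $\bm{\tau}_{\nu(k)}=\bm{\tau}_k+\bm c$ over $k$ forces $\bm c=\bm 0$, so only the permutations that fix the multiplicity pattern of $(\bm{\tau}_0,\dots,\bm{\tau}_{d-1})$ can contribute, and for each such $\nu$ every $\bm c_k(\nu)=\bm 0$. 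The corresponding term is then $\int R_F(\bm z)\,[k_{\varepsilon,\varepsilon'}(\bm z)]^d\,d\bm z$, where $R_F(\bm z)=\int F(\bm x)F(\bm x+\bm z)\,d\bm x$ is the continuous autocorrelation of $F$ with $R_F(\bm 0)=\norm{F}_{L^2(\mathcal D)}^2$; since $[k_{\varepsilon,\varepsilon'}]^d$ is a nonnegative bump of mass $\int[k_{\varepsilon,\varepsilon'}]^d$ concentrating at the origin, this equals $\big(\int[k_{\varepsilon,\varepsilon'}]^d\big)\,\norm{F}_{L^2(\mathcal D)}^2$ up to a vanishing error, and the normalization $(m_d(\varepsilon)m_d(\varepsilon'))^{-1/2}$ is exactly what cancels the $\varepsilon^{-n(d-1)}$ blow-up of $\int[k_{\varepsilon,\varepsilon'}]^d$. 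One then concludes that $\E[\widetilde{I}_d^{(\varepsilon)}(F)\,\widetilde{I}_d^{(\varepsilon')}(F)]$ converges, as $\varepsilon,\varepsilon'\to 0$, to a finite limit $L$ independent of the mollifiers.

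From here the three claims follow. For (1): polarization gives $\E[(\widetilde{I}_d^{(\varepsilon)}(F)-\widetilde{I}_d^{(\varepsilon')}(F))^2]=\E[(\widetilde{I}_d^{(\varepsilon)}(F))^2]-2\,\E[\widetilde{I}_d^{(\varepsilon)}(F)\widetilde{I}_d^{(\varepsilon')}(F)]+\E[(\widetilde{I}_d^{(\varepsilon')}(F))^2]\to L-2L+L=0$, so the family is Cauchy in $L^2(\Omega)$ and $I_d(F)$ is well defined. For (2): running the same computation with two different mollifiers $\rho\ne\rho'$, and noting that $k_{\varepsilon,\varepsilon'}$ remains an approximate identity with $\int[k_{\varepsilon,\varepsilon'}]^d$ of the same order, gives $\E[(\widetilde{I}_d^{(\varepsilon)}(F;\rho)-\widetilde{I}_d^{(\varepsilon')}(F;\rho'))^2]\to 0$, so the two $L^2$-limits coincide almost surely. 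For (3): every $\widetilde{I}_d^{(\varepsilon)}(F)$ belongs to the $d$-th Wiener chaos with $d\ge 1$ and is therefore centered, and $L^2$-convergence preserves the mean, giving $\E[I_d(F)]=0$; finally $\E[I_d(F)^2]=\lim_{\varepsilon\to 0}\E[(\widetilde{I}_d^{(\varepsilon)}(F))^2]=L$, which the It\^o isometry evaluates as $d!\,\norm{F}_{L^2(\mathcal D)}^2$.

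I expect the real difficulty to be concentrated in the limit step of the second paragraph. Two points need genuine care: first, the vanishing of all off-diagonal permutation terms, which relies on the compact-support geometry of $k_{\varepsilon,\varepsilon'}$ together with the elementary arithmetic forcing $\bm c=\bm 0$; and second — more delicately — the uniform control of the cross-parameter covariances as $\varepsilon,\varepsilon'\to 0$, i.e.\ of the ratio $\int[k_{\varepsilon,\varepsilon'}]^d/(m_d(\varepsilon)m_d(\varepsilon'))^{1/2}$, which is precisely what the choice of normalizing exponent in $m_d$ must handle and where the bulk of the analytic work sits. The remaining ingredients — bookkeeping which permutations survive and the symmetrization, the approximate-identity estimate using $L^2$-continuity of translation on the bounded domain $\mathcal D$, and the mean-zero property of multiple Wiener--It\^o integrals — are routine by comparison.
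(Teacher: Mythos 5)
Your proposal is correct in all essentials and follows the same strategy as the paper's proof: reduce everything to second moments via the multiple Wiener--It\^o isometry, express the kernel inner products as $\langle F, [k_{\varepsilon,\varepsilon'}]^d \ast F\rangle$ with $k_{\varepsilon,\varepsilon'}$ the convolution of the two rescaled mollifiers, observe that the normalization $m_d(\varepsilon)^{-1/2}$ exactly cancels the $\varepsilon^{-n(d-1)}$ blow-up so that $[k_{\varepsilon,\varepsilon'}]^d/m_d(\varepsilon,\varepsilon')$ is an approximate identity, and deduce the Cauchy property, mollifier-independence, and the variance from the fact that all cross-covariances converge to the same limit. Your polarization step for (1) and the cross-mollifier argument for (2) are precisely Steps (2b) and (2) of the paper's proof.

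The one place where you genuinely diverge is the symmetrization. The paper's proof writes $\mathbb{E}\!\big[|I_d^{(\varepsilon)}(F)|^2\big]=d!\,\|J_{\bm\tau,\varepsilon}F\|^2$ using the \emph{unsymmetrized} kernel norm, even though $I_d^{(\varepsilon)}(F)$ is defined via the symmetrization $\widetilde{J_{\bm\tau,\varepsilon}F}$ and the isometry in Definition~\ref{def:m-tuple-Ito-integral} involves $\|\tilde f\|^2$; since $J_{\bm\tau,\varepsilon}F$ is not symmetric when the shifts differ, this silently discards the permutation cross-terms. You instead keep the full sum $\sum_{\nu\in S_d}\langle J, J^\nu\rangle$ and show, via the compact support of $k_{\varepsilon,\varepsilon'}$ and the arithmetic argument forcing the common offset $\bm c$ to vanish, that only permutations preserving the multiplicity pattern of $(\bm\tau_0,\dots,\bm\tau_{d-1})$ survive. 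This is the more careful computation, and it buys you something the paper does not state: the limiting second moment is $N_{\bm\tau}\,\|F\|_{L^2(\mathcal D)}^2$ where $N_{\bm\tau}=\prod_i n_i!$ counts the surviving permutations for the multiplicity pattern $(n_1,\dots,n_k)$ of the shifts. In particular, when the $\bm\tau_j$ are pairwise distinct your own analysis yields $\|F\|^2$, not $d!\,\|F\|^2$. Your final sentence, which asserts the limit ``evaluates as $d!\,\|F\|_{L^2(\mathcal D)}^2$,'' therefore contradicts the permutation count you just established; it agrees with the proposition's stated variance (and with the paper's proof) only because both make the same unsymmetrized-norm substitution. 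You should either state the variance as $N_{\bm\tau}\|F\|^2$ or note explicitly that the constant $d!$ corresponds to the case of coincident shifts. None of this affects parts (1) and (2), which only require that all cross second moments converge to a common finite limit.
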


Recalling $\xi(\bm x)\, d\bm x = \sigma\, dW(\bm x)$,
Proposition~\ref{prop:limit-shifted} provides a rigorous meaning to
the formal expression~\eqref{eqn:app_c6}:
\begin{align}
    \int_{\mathcal{D}} F(\bm x)\,
    \xi(\bm x+\bm\tau_0)\cdots \xi(\bm x+\bm\tau_{d-1})\, d\bm x
    \triangleq \sigma^{d}\, \lim_{\varepsilon\to 0} \widetilde{I}_d^{(\varepsilon)}(F) \triangleq \sigma^d I_d(F),
\end{align}
which is understood as a multiple Wiener–It$\hat{\text{o}}$ integral of order~$d$.
This construction connects the formal definition of autocorrelations with the stochastic analytic framework and underlies the proofs of Proposition~\ref{thm:propRigidMotionEquivalence} and subsequent statistical results.
In particular, the proposition establishes three essential properties:
\begin{enumerate}
    \item \emph{Well-defined:}
    $I_d(F)$ is an $L^2(\Omega)$ limit of smooth approximations.
    \item \emph{Finite variance:}
    The It$\hat{\text{o}}$ isometry yields explicit variance control, ensuring that empirical averages satisfy the law of large numbers.
    \item \emph{Gaussian structure:}
    As $I_d(F)$ lies in the $d$-th Wiener chaos, Wick–Isserlis expansions (Theorem~\ref{WicksTheorem},    Corollary~\ref{lem:WickLemmaForWhiteNoise}) decompose mixed signal–noise terms into delta function and lower-order autocorrelations.
\end{enumerate}

\subsection{Proof of Proposition~\ref{prop:limit-shifted}} \label{sec:proofOfPropLimitShifted}

We begin with an auxiliary result for Proposition~\ref{prop:limit-shifted}, which is a standard result adapted to our setting.

Let $\rho,\eta\in C_c^\infty(\mathbb{R}^n)$ be mollifiers (Definition~\ref{def:mollifier}) and define
\begin{align}
    \rho_\varepsilon(\bm z) = \varepsilon^{-n}\,\rho\!\left(\frac{\bm z}{\varepsilon}\right),\qquad
    \tilde{\eta}_{\varepsilon'}(\bm z) = (\varepsilon')^{-n}\,\eta\!\left(-\frac{\bm z}{\varepsilon'}\right). \label{eqn:app_A46}
\end{align}
For $d\in\mathbb{N}$, define the convolution between the mollifiers,
\begin{align}
    k_{\varepsilon,\varepsilon'} \triangleq \rho_\varepsilon \ast \tilde{\eta}_{\varepsilon'} \qquad m_d(\varepsilon,\varepsilon') &\triangleq  \int_{\mathbb{R}^n}\!\big[k_{\varepsilon,\varepsilon'}(\bm z)\big]^d\,d\bm z,
\end{align}
and define
\begin{align}
    H_{\varepsilon,\varepsilon'}(\bm z) & \triangleq  \frac{\big[k_{\varepsilon,\varepsilon'}(\bm z)\big]^d}{m_d(\varepsilon,\varepsilon')}. \label{eqn:H-ee}
\end{align}

\begin{lemma}\label{cl:Hee-scaling}
Recall the definition of $H_{\varepsilon,\varepsilon'}$ in~\eqref{eqn:H-ee}. Then,
\begin{enumerate}
    \item With $\lambda\triangleq \varepsilon/\varepsilon'$, there exists $h_\lambda\in L^1(\mathbb{R}^n)$ with $\int_{\mathbb{R}^n} h_\lambda = 1$, such that
    \begin{align}\label{eq:Hee-ai-statement}
        H_{\varepsilon,\varepsilon'}(\bm z) = (\varepsilon')^{-n}\,h_\lambda\!\left(\frac{\bm z}{\varepsilon'}\right).
    \end{align}

    \item The family $\{H_{\varepsilon,\varepsilon'}\}$ is a standard approximate identity, that is, for every $F\in L^2(\mathcal{D})$,
    \begin{align}
        \big\langle F,\ H_{\varepsilon,\varepsilon'} \ast F\big \rangle_{L^2(\mathcal{D})}     \xrightarrow[\varepsilon,\varepsilon'\to 0]{}  \|F\|_{L^2(\mathcal{D})}^2.
    \end{align}
\end{enumerate}
\end{lemma}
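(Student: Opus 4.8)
The plan is to prove part~(1) by a direct rescaling computation and part~(2) by verifying the three defining properties of an approximate identity for the family $\{H_{\varepsilon,\varepsilon'}\}$, which will follow from part~(1) together with uniform-in-$\lambda$ bounds.

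\medskip

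\emph{Part (1): the scaling identity.} First I would write out $k_{\varepsilon,\varepsilon'} = \rho_\varepsilon \ast \tilde\eta_{\varepsilon'}$ explicitly. Using \eqref{eqn:app_A46} and the substitution $\bm w = \bm u/\varepsilon'$ in the convolution integral, one obtains $k_{\varepsilon,\varepsilon'}(\bm z) = (\varepsilon')^{-n}\, (\rho_\lambda \ast \tilde\eta)(\bm z/\varepsilon')$, where $\lambda = \varepsilon/\varepsilon'$ and $\rho_\lambda(\bm w) = \lambda^{-n}\rho(\bm w/\lambda)$, $\tilde\eta(\bm w) = \eta(-\bm w)$. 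Write $g_\lambda \triangleq \rho_\lambda \ast \tilde\eta$, so $k_{\varepsilon,\varepsilon'}(\bm z) = (\varepsilon')^{-n} g_\lambda(\bm z/\varepsilon')$. Raising to the $d$-th power and integrating gives $m_d(\varepsilon,\varepsilon') = (\varepsilon')^{-n(d-1)} \int_{\mathbb{R}^n} [g_\lambda(\bm w)]^d\, d\bm w$; denote the last integral by $c_d(\lambda)\in(0,\infty)$ (positive and finite because $g_\lambda$ is smooth, compactly supported, and not identically zero). Then
\begin{align}
  H_{\varepsilon,\varepsilon'}(\bm z)
   = \frac{(\varepsilon')^{-nd}\,[g_\lambda(\bm z/\varepsilon')]^d}{(\varepsilon')^{-n(d-1)}\,c_d(\lambda)}
   = (\varepsilon')^{-n}\,\frac{[g_\lambda(\bm z/\varepsilon')]^d}{c_d(\lambda)},
\end{align}
so setting $h_\lambda(\bm w) \triangleq [g_\lambda(\bm w)]^d / c_d(\lambda)$ yields \eqref{eq:Hee-ai-statement}, and $\int h_\lambda = 1$ by construction. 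This settles part~(1).

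\medskip

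\emph{Part (2): approximate identity.} To show $\langle F, H_{\varepsilon,\varepsilon'} \ast F\rangle_{L^2(\mathcal{D})} \to \|F\|_{L^2(\mathcal{D})}^2$, I would verify that $\{H_{\varepsilon,\varepsilon'}\}$ satisfies the standard three conditions for an approximate identity uniformly as $\varepsilon,\varepsilon'\to 0$: (i) $\int H_{\varepsilon,\varepsilon'} = 1$ (immediate from part~(1)); (ii) a uniform $L^1$-bound $\sup \|H_{\varepsilon,\varepsilon'}\|_{L^1} < \infty$; and (iii) mass concentration near the origin, $\int_{|\bm z|>\eta} |H_{\varepsilon,\varepsilon'}(\bm z)|\, d\bm z \to 0$ for every $\eta>0$. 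The subtlety is that the profile $h_\lambda$ depends on $\lambda = \varepsilon/\varepsilon'$, which need not stay bounded; so (ii) and (iii) must be established \emph{uniformly in $\lambda \in (0,\infty)$}. Since $h_\lambda \geq 0$ and $\int h_\lambda = 1$, property (ii) is automatic: $\|H_{\varepsilon,\varepsilon'}\|_{L^1} = 1$. For (iii), note $g_\lambda = \rho_\lambda \ast \tilde\eta$ is supported in $\operatorname{supp}(\rho_\lambda) + \operatorname{supp}(\tilde\eta)$; if $\operatorname{supp}\rho \subset B_{r_1}$ and $\operatorname{supp}\eta \subset B_{r_2}$, then $\operatorname{supp} g_\lambda \subset B_{\lambda r_1 + r_2}$, hence $\operatorname{supp} H_{\varepsilon,\varepsilon'} \subset B_{\varepsilon r_1 + \varepsilon' r_2}$, which shrinks to $\{\bm 0\}$ as $\varepsilon,\varepsilon'\to 0$ regardless of $\lambda$. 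Thus (iii) holds. With (i)–(iii) in hand, the standard argument gives $H_{\varepsilon,\varepsilon'}\ast F \to F$ in $L^2(\mathcal{D})$ (first for $F \in C_c$, then by density and the uniform $L^1$-bound), and the claimed inner-product limit follows by Cauchy–Schwarz: $\big|\langle F, H_{\varepsilon,\varepsilon'}\ast F\rangle - \|F\|^2\big| \le \|F\|\,\|H_{\varepsilon,\varepsilon'}\ast F - F\| \to 0$.

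\medskip

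\emph{Main obstacle.} The only real point to be careful about is the uniformity in $\lambda$ in part~(2): one must avoid any estimate that implicitly uses $\lambda$ bounded. This is handled cleanly above by working directly with the support of $g_\lambda$ (which is controlled by $\varepsilon r_1 + \varepsilon' r_2$, not by $\lambda$ alone) and by exploiting nonnegativity of $h_\lambda$ so that the $L^1$-norm is exactly $1$. One should also confirm $c_d(\lambda) > 0$ is bounded away from $0$ and $\infty$ on any regime actually encountered, but for the approximate-identity conclusion only the normalization $\int h_\lambda = 1$ and the shrinking support are needed, so this is not strictly required.
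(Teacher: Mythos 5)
Your proposal is correct and follows essentially the same route as the paper's proof: the same rescaling computation for part (1) (your $g_\lambda=\rho_\lambda\ast\tilde\eta$ is exactly the paper's $q_\lambda$), and the same approximate-identity verification for part (2). Your treatment of the uniformity in $\lambda$ is in fact slightly more careful than the paper's, which loosely asserts that $h_\lambda$ is "compactly supported uniformly in $\lambda$" (its support actually grows like $\lambda r_1+r_2$ in the rescaled variable); your observation that $\operatorname{supp} H_{\varepsilon,\varepsilon'}\subset B_{\varepsilon r_1+\varepsilon' r_2}$, which shrinks regardless of $\lambda$, is the right way to close that point.
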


\begin{proof}[Proof of Lemma~\ref{cl:Hee-scaling}]
By definition of convolution,
\begin{align}
    k_{\varepsilon,\varepsilon'}(\bm z)
    &= \int_{\mathbb{R}^n} \rho_\varepsilon(\bm u)\,\tilde{\eta}_{\varepsilon'}(\bm z-\bm u)\,d\bm u .
\end{align}
Using the scaling of $\rho_\varepsilon$ and $\tilde{\eta}_{\varepsilon'}$ in~\eqref{eqn:app_A46},
\begin{align}
    k_{\varepsilon,\varepsilon'}(\bm z)
    &= \varepsilon^{-n}(\varepsilon')^{-n} \int_{\mathbb{R}^n}
    \rho\!\left(\frac{\bm u}{\varepsilon}\right)\,
    \eta\!\left(\frac{\bm u-\bm z}{\varepsilon'}\right) d\bm u .
\end{align}
Set $\bm u = \bm z + \varepsilon'\bm v$ so that $d\bm u = (\varepsilon')^n d\bm v$. Then,
\begin{align}
    k_{\varepsilon,\varepsilon'}(\bm z)
    &= \varepsilon^{-n} \int_{\mathbb{R}^n}
    \rho\!\left(\frac{\bm z + \varepsilon'\bm v}{\varepsilon}\right)\,\eta(\bm v)\, d\bm v .
\end{align}
Let $\lambda \triangleq  \varepsilon/\varepsilon'$. Since $\varepsilon = \lambda \varepsilon'$, we have
\begin{align}
    \frac{\bm z + \varepsilon'\bm v}{\varepsilon} = \frac{1}{\lambda}\left(\frac{\bm z}{\varepsilon'} + \bm v\right).
\end{align}
Therefore,
\begin{align}
    k_{\varepsilon,\varepsilon'}(\bm z)
    &= \varepsilon^{-n} \int_{\mathbb{R}^n}
    \rho\!\left( \frac{1}{\lambda}\left(\frac{\bm z}{\varepsilon'} + \bm v\right) \right)\,\eta(\bm v)\, d\bm v\nonumber \\
    &= (\varepsilon')^{-n}\,\lambda^{-n} \int_{\mathbb{R}^n}
    \rho\!\left( \frac{\bm z/\varepsilon' + \bm v}{\lambda} \right)\,\eta(\bm v)\, d\bm v \nonumber
    \\ & \triangleq (\varepsilon')^{-n}\, q_\lambda\!\left(\frac{\bm z}{\varepsilon'}\right), \label{eq:k-cross-scaled}
\end{align}
where in the last transition we have defined
\begin{align}
    q_\lambda(\bm w)
    &\triangleq  \lambda^{-n} \int_{\mathbb{R}^n}
    \rho\!\left( \frac{\bm w + \bm v}{\lambda} \right)\,\eta(\bm v)\, d\bm v.
\end{align}

From \eqref{eq:k-cross-scaled}, raise to the $d$-th power,
\begin{align}
    \big[k_{\varepsilon,\varepsilon'}(\bm z)\big]^d
    = (\varepsilon')^{-nd}\, \big[q_\lambda\!\left(\tfrac{\bm z}{\varepsilon'}\right)\big]^d . \label{eqn:Hee-num}
\end{align}
Integrating and changing variables $\bm w = \bm z/\varepsilon'$ (so $d\bm z = (\varepsilon')^n d\bm w$),
\begin{align}
    m_d(\varepsilon,\varepsilon')
    &= \int_{\mathbb{R}^n} \big[k_{\varepsilon,\varepsilon'}(\bm z)\big]^d\, d\bm z \nonumber\\
    &= (\varepsilon')^{-nd} \int_{\mathbb{R}^n} \big[q_\lambda(\bm z/\varepsilon')\big]^d\, d\bm z \nonumber\\
    &= (\varepsilon')^{-nd}\,(\varepsilon')^n \int_{\mathbb{R}^n} \big[q_\lambda(\bm w)\big]^d\, d\bm w \nonumber\\
    &= (\varepsilon')^{-n(d-1)}\, M_d(\lambda), \label{eqn:Hee-den}
\end{align}
where we have defined,
\begin{align}
    M_d(\lambda)
    \triangleq  \int_{\mathbb{R}^n} \big[q_\lambda(\bm w)\big]^d\, d\bm w
    \in (0,\infty).
\end{align}
Therefore, by the definition of $H_{\varepsilon,\varepsilon'}$ in~\eqref{eqn:H-ee}, in~\eqref{eqn:H-ee}, and substituting~\eqref{eqn:Hee-num} for the numerator and~\eqref{eqn:Hee-den} for the denominator, we obtain,
\begin{align}
    H_{\varepsilon,\varepsilon'}(\bm z)
    &= \frac{(\varepsilon')^{-nd}\,[q_\lambda(\bm z/\varepsilon')]^d}{(\varepsilon')^{-n(d-1)}\,M_d(\lambda)} \nonumber\\
    &= (\varepsilon')^{-n}\, h_\lambda\!\left(\frac{\bm z}{\varepsilon'}\right),
\end{align}
with
\begin{align}
    h_\lambda(\bm w)
    \triangleq \frac{[q_\lambda(\bm w)]^d}{M_d(\lambda)} .
\end{align}
By construction, $h_\lambda \in L^1(\mathbb{R}^n)$ and
\begin{align}
    \int_{\mathbb{R}^n} h_\lambda(\bm w)\, d\bm w = 1,
\end{align}
which proves \eqref{eq:Hee-ai-statement}.

Fix $\lambda>0$. The family $\{(\varepsilon')^{-n} h_\lambda(\cdot/\varepsilon')\}_{\varepsilon'>0}$ has unit mass and is uniformly $L^1$-bounded. If $\rho,\eta$ are compactly supported, then $h_\lambda$ is compactly supported uniformly in $\lambda$, and the supports shrink with $\varepsilon'$. Hence,  this family is a standard approximate identity, satisfying,
\begin{align}
    (\varepsilon')^{-n} h_\lambda(\cdot/\varepsilon') \ast F    \xrightarrow[\varepsilon'\to 0]{L^2(\mathcal{D})} F .
\end{align}
This completes the proof.
\end{proof}

\begin{proof} [Proof of Proposition~\ref{prop:limit-shifted}]
We prove the proposition in three steps.    

\paragraph{Step 1: Expressing $J_{\bm{\tau},\varepsilon}F$ as a convolution kernel.}
By the multiple Wiener–It$\hat{\text{o}}$ isometry (Definition~\ref{def:m-tuple-Ito-integral}),
\begin{align}
    \mathbb{E} \! \big[\,|I_d^{(\varepsilon)}(F)|^2\,\big]
    = d! \, \|J_{\bm{\tau} ,\varepsilon} F\|_{L^2((\mathbb{R}^n)^d)}^2.
\end{align}
Our aim is to show that as $\varepsilon\to 0$, 
\begin{align}
    \lim_{\varepsilon \to 0} \frac{\|J_{\bm\tau,\varepsilon}F\|_{L^2}^2}{m_d(\varepsilon)}   =   \|F\|_{L^2(\mathcal{D})}^2.
\end{align}
With $F$ extended by $0$ outside $\mathcal{D}$, recall by~\eqref{eq:regularized-kernel} that
\begin{align}
    (J_{\bm\tau,\varepsilon}F)(\bm t_1,\dots,\bm t_d)
    = \int_{\mathbb{R}^n} F(\bm x)\prod_{j=1}^d
    \rho_\varepsilon\!\big(\bm t_j-(\bm x+\bm\tau_{j-1})\big)\,d\bm x .
\end{align}
Then, by Fubini,
\begin{align}
    \nonumber \|J_{\bm\tau,\varepsilon}F\|_{L^2((\mathbb{R}^n)^d)}^2
    &= \int_{(\mathbb{R}^n)^d}\!\left|\int_{\mathbb{R}^n} F(\bm x)\prod_{j=1}^d
    \rho_\varepsilon\!\big(\bm t_j-(\bm x+\bm\tau_{j-1})\big)\,d\bm x\right|^2 d\bm t_1\cdots d\bm t_d \\
    &= \int_{(\mathbb{R}^n)^d}\int_{\mathbb{R}^n}\int_{\mathbb{R}^n}
    F(\bm x)F(\bm y)\prod_{j=1}^d
    \rho_\varepsilon\!\big(\bm t_j-(\bm x+\bm\tau_{j-1})\big)\,
    \rho_\varepsilon\!\big(\bm t_j-(\bm y+\bm\tau_{j-1})\big)\,
    d\bm x\,d\bm y\,d\bm t . \label{eqn:app_A49}
\end{align}
For each $j$, we use the convolution identity
\begin{align}
    \nonumber \int_{\mathbb{R}^n}\rho_\varepsilon(\bm t-\bm a)\,\rho_\varepsilon(\bm t-\bm b)\,d\bm t
    & = (\rho_\varepsilon \ast \tilde{\rho}_\varepsilon)(\bm a-\bm b)
    \\ & \triangleq k_\varepsilon(\bm a-\bm b),
\end{align}
where $\tilde{\rho}_\varepsilon(\bm z)=\rho_\varepsilon(-\bm z)$, with $\bm a=\bm x+\bm\tau_{j-1}$ and $\bm b=\bm y+\bm\tau_{j-1}$.
Since $\bm a-\bm b=\bm x-\bm y$, the product over $j$ yields $[k_\varepsilon(\bm x-\bm y)]^d$.
Thus, the expression in~\eqref{eqn:app_A49} simplifies to,
\begin{align}
    \|J_{\bm\tau,\varepsilon}F\|_{L^2((\mathbb{R}^n)^d)}^2
    = \iint_{\mathbb{R}^n\times\mathbb{R}^n}
    F(\bm x)F(\bm y)\,[k_\varepsilon(\bm x-\bm y)]^d\,d\bm x\,d\bm y .
\end{align}
Set $\bm z=\bm x-\bm y$ (so $\bm y=\bm x-\bm z$) and integrate first in $\bm x$ to get
\begin{align}
    \|J_{\bm\tau,\varepsilon}F\|_{L^2((\mathbb{R}^n)^d)}^2
    & = \int_{\mathbb{R}^n}\left(\int_{\mathbb{R}^n}F(\bm x)F(\bm x-\bm z)\,d\bm x\right)\,[k_\varepsilon(\bm z)]^d\,d\bm z
    \nonumber\\ & = \big\langle F,\ [k_\varepsilon]^d \ast F\big\rangle_{L^2(\mathbb{R}^n)}. \label{eqn:app-A80}
\end{align}

\paragraph{Step (2a): $L^2$-convergence.}
With the definition $\rho_\varepsilon(\bm z)=\varepsilon^{-n}\rho(\bm z/\varepsilon)$, 
we compute the convolution
$k_\varepsilon = \rho_\varepsilon \ast \tilde{\rho}_\varepsilon$,
where $\tilde{\rho}_\varepsilon(\bm z) \triangleq \rho_\varepsilon(-\bm z)$.
By the definition of convolution, and using the explicit form of $\rho_\varepsilon$ and $\tilde{\rho}_\varepsilon$, we obtain
\begin{align}
    k_\varepsilon(\bm x)
    &= \int_{\mathbb{R}^n} 
       \varepsilon^{-n}\rho\!\left(\frac{\bm u}{\varepsilon}\right)\,
       \varepsilon^{-n}\rho\!\left(-\,\frac{\bm x - \bm u}{\varepsilon}\right)\,d\bm u \nonumber\\
    &= \varepsilon^{-2n}\int_{\mathbb{R}^n}
       \rho\!\left(\frac{\bm u}{\varepsilon}\right)\,
       \rho\!\left(\frac{\bm u - \bm x}{\varepsilon}\right)\,d\bm u.
\end{align}
Now apply the change of variables $\bm v = \bm u/\varepsilon$, so that 
$\bm u = \varepsilon \bm v$ and $d\bm u = \varepsilon^n d\bm v$. Then
\begin{align}
    k_\varepsilon(\bm x)
    &= \varepsilon^{-2n}\,\varepsilon^n \int_{\mathbb{R}^n}
       \rho(\bm v)\,
       \rho\!\left(\bm v - \frac{\bm x}{\varepsilon}\right)\,d\bm v \nonumber\\
    &= \varepsilon^{-n}\int_{\mathbb{R}^n}
       \rho(\bm v)\,
       \rho\!\left(\bm v - \frac{\bm x}{\varepsilon}\right)\,d\bm v.
\end{align}
Define
\begin{align}
    k_1(\bm z) & \triangleq (\rho \ast \tilde{\rho})(\bm z) = \int_{\mathbb{R}^n} \rho(\bm v)\,\rho(\bm v - \bm z)\,d\bm v,
\end{align}
so that, for any $\bm x\in\mathbb{R}^n$, we can write
\begin{align}
    k_\varepsilon(\bm x)
    &= \varepsilon^{-n}\,k_1\!\left(\frac{\bm x}{\varepsilon}\right). \label{eqn:app-80}
\end{align}

Next, raising~\eqref{eqn:app-80} to the $d$-th power yields
\begin{align}
    [k_\varepsilon]^d(\bm x) 
    &= \big(\varepsilon^{-n}\,k_1(\bm x/\varepsilon)\big)^d
      =  \varepsilon^{-nd}\,\big[k_1(\bm x/\varepsilon)\big]^d,
\end{align}
that is,
\begin{align}
    [k_\varepsilon]^d 
    = \varepsilon^{-nd}\,[k_1(\cdot/\varepsilon)]^d.
    \label{eqn:app-A82}
\end{align}
Integrating~\eqref{eqn:app-A82} over $\mathbb{R}^n$ and again using the change of 
variables $\bm w = \bm x/\varepsilon$ (so $d\bm x = \varepsilon^n d\bm w$), we obtain
\begin{align}
    m_d(\varepsilon) 
    &\triangleq \int_{\mathbb{R}^n} [k_\varepsilon(\bm x)]^d\,d\bm x \nonumber\\
    &= \varepsilon^{-nd} \int_{\mathbb{R}^n} [k_1(\bm x/\varepsilon)]^d\,d\bm x \nonumber\\
    &= \varepsilon^{-nd}\,\varepsilon^n 
       \int_{\mathbb{R}^n} [k_1(\bm w)]^d\,d\bm w \nonumber\\
    &= \varepsilon^{-n(d-1)}\,m_d(1),
    \label{eqn:app-A83}
\end{align}
where we have defined,
\begin{align}
    m_d(1) 
    &\triangleq \int_{\mathbb{R}^n} [k_1(\bm w)]^d\,d\bm w.
\end{align}

Therefore, combining~\eqref{eqn:app-A82},~\eqref{eqn:app-A83}, results,
\begin{align}
    \frac{[k_\varepsilon]^d}{m_d(\varepsilon)}
    = \varepsilon^{-n}\,h(\cdot/\varepsilon),
\end{align}
where we have defined
\begin{align}
    h \triangleq \frac{[k_1]^d}{m_d(1)}
    = \frac{[(\rho \ast \tilde{\rho})]^d}{\int_{\mathbb{R}^n} [(\rho \ast \tilde{\rho})]^d}
    \in L^1.
\end{align}
In particular, we note that $\int_{\mathbb{R}^n} h = 1$. Thus,  $[k_\varepsilon]^d/m_d(\varepsilon)$ is a standard approximate identity, that is,
\begin{align}
    \lim_{\varepsilon \to 0}     [k_\varepsilon (\cdot)]^d/m_d(\varepsilon) = \delta (\cdot). \label{eqn:app-A87}
\end{align}
Combining~\eqref{eqn:app-A80} with~\eqref{eqn:app-A87} results,
\begin{align}\label{eqn:app_A_54}
    \frac{\|J_{\bm\tau,\varepsilon}F\|_{L^2}^2}{m_d(\varepsilon)}
    = \big\langle F,\ \tfrac{[k_\varepsilon]^d}{m_d(\varepsilon)} \ast F\big\rangle_{L^2(\mathbb{R}^n)} \longrightarrow  \|F\|_{L^2(\mathcal{D})}^2,
\end{align}
as $\varepsilon\to 0$, by the $L^2$–continuity of convolution with $L^1$ kernels and the approximate-identity property.
Consequently, for $\varepsilon\to 0$,
\begin{align}
    \mathrm{Var}\big(\widetilde{I}_d^{(\varepsilon)}(F)\big)
    = d! \, \frac{\|J_{\bm\tau,\varepsilon} F\|_{L^2}^2}{m_d(\varepsilon)}
     \longrightarrow  d!\,\|F\|_{L^2(\mathcal{D})}^2. \label{eqn:app_B68}
\end{align}

\paragraph{Step (2b): $L^2$–Cauchy property.}
Fix $\varepsilon,\varepsilon'>0$. 
By the $d$-fold Wiener–It$\hat{\text{o}}$ isometry (Definition~\ref{def:m-tuple-Ito-integral}),
\begin{align}\label{eq:cauchy-first}
    \mathbb{E}\!\left[\big(\widetilde{I}_d^{(\varepsilon)}(F)-\widetilde{I}_d^{(\varepsilon')}(F)\big)^2\right]
    = d!\,\left\|
    \frac{J_{\bm\tau,\varepsilon}F}{\sqrt{m_d(\varepsilon)}}-\frac{J_{\bm\tau,\varepsilon'}F}{\sqrt{m_d(\varepsilon')}}\right\|_{L^2((\mathbb{R}^n)^d)}^2 .
\end{align}
Define $A_\varepsilon \triangleq  \frac{J_{\bm\tau,\varepsilon}F}{\sqrt{m_d(\varepsilon)}}$.
Then, expanding the square in~\eqref{eq:cauchy-first},
\begin{align}
    \|A_\varepsilon-A_{\varepsilon'}\|_{L^2}^2
    = \|A_\varepsilon\|_{L^2}^2+\|A_{\varepsilon'}\|_{L^2}^2-2\langle A_\varepsilon,A_{\varepsilon'}\rangle_{L^2}.
\end{align}
We already showed in \eqref{eqn:app_A_54} that
\begin{align}\label{eq:var-limit}
    \|A_\varepsilon\|_{L^2}^2
    = \frac{\|J_{\bm\tau,\varepsilon}F\|_{L^2((\mathbb{R}^n)^d)}^2}{m_d(\varepsilon)}
    = \Big\langle F,\,\frac{[k_\varepsilon]^d}{m_d(\varepsilon)} \ast F\Big\rangle_{L^2(\mathbb{R}^n)}
     \longrightarrow  \|F\|_{L^2(\mathcal{D})}^2,
\end{align}
as $\varepsilon\to 0$, and the same holds with $\varepsilon'$ in place of $\varepsilon$.

It remains to handle the mixed inner product $\langle A_\varepsilon,A_{\varepsilon'}\rangle_{L^2}$.
Define the cross-kernel
\begin{align}
    k_{\varepsilon,\varepsilon'} \triangleq  \rho_\varepsilon \ast \tilde{\rho}_{\varepsilon'}.
\end{align}
A derivation analogous to \eqref{eqn:app_A49} yields
\begin{align}\label{eq:mixed-inner-product}
    \langle A_\varepsilon,A_{\varepsilon'}\rangle_{L^2} = \Big\langle F,\ \frac{[k_{\varepsilon,\varepsilon'}]^d}{\sqrt{m_d(\varepsilon) m_d (\varepsilon')}} \ast F \Big\rangle_{L^2(\mathbb{R}^n)} .
\end{align}
Set
\begin{align}
    H_{\varepsilon,\varepsilon'} \triangleq \frac{[k_{\varepsilon,\varepsilon'}]^d}{m_d(\varepsilon,\varepsilon')}, \qquad \text{where} \quad     m_d(\varepsilon,\varepsilon') \triangleq  \int_{\mathbb{R}^n}[k_{\varepsilon,\varepsilon'}(\bm z)]^d\,d\bm z.
\end{align}
By Lemma~\ref{cl:Hee-scaling}, there exists $h_\lambda\in L^1(\mathbb{R}^n)$ with $\int h_\lambda=1$, such that $\lambda = \varepsilon / \varepsilon'$, and
\begin{align}\label{eq:Hee-ai-use}
    H_{\varepsilon,\varepsilon'}(\bm z) = (\varepsilon')^{-n}\,h_\lambda\!\left(\frac{\bm z}{\varepsilon'}\right).
\end{align}
Hence, $\{H_{\varepsilon,\varepsilon'}\}$ is a standard approximate identity, and therefore
\begin{align}\label{eq:AI-limit}
    \Big\langle F,\,H_{\varepsilon,\varepsilon'} \ast F\Big\rangle_{L^2(\mathbb{R}^n)}
     \longrightarrow  \|F\|_{L^2(\mathcal{D})}^2,
\end{align}
as $\big((\varepsilon,\varepsilon')\to (0,0)\big)$.
Moreover, by the scaling relations in Lemma~\ref{cl:Hee-scaling},
\begin{align}\label{eq:mass-ratio}
    R(\varepsilon,\varepsilon') \triangleq  \frac{m_d(\varepsilon,\varepsilon')}{\sqrt{m_d(\varepsilon)\,m_d(\varepsilon')}}  \longrightarrow  1.
\end{align}
Combining \eqref{eq:mixed-inner-product}-\eqref{eq:mass-ratio},
\begin{align}\label{eq:mix-limit}
    \langle A_\varepsilon,A_{\varepsilon'}\rangle_{L^2} = R(\varepsilon,\varepsilon')\,\Big\langle F, \,H_{\varepsilon,\varepsilon'} \ast F\Big\rangle_{L^2} \longrightarrow  \|F\|_{L^2(\mathcal{D})}^2,
\end{align}
as $\big((\varepsilon,\varepsilon')\to (0,0)\big)$. Finally, inserting \eqref{eq:var-limit} and \eqref{eq:mix-limit} into \eqref{eq:cauchy-first} shows that the right-hand side converges to $0$ as $(\varepsilon,\varepsilon')\to (0,0)$.
Therefore $\{\widetilde{I}_d^{(\varepsilon)}(F)\}_{\varepsilon>0}$ is Cauchy in $L^2(\Omega)$.

\paragraph{(2) Independence of the mollifier.}
Let $\rho,\eta\in C_c^\infty(\mathbb{R}^n)$ be two mollifiers with associated kernels $k_\varepsilon^{(\rho)}$, $k_\varepsilon^{(\eta)}$ and normalizers $m_d^{(\rho)}(\varepsilon)$, $m_d^{(\eta)}(\varepsilon)$.
Define $k_{\varepsilon,\varepsilon'}^{(\rho,\eta)}\triangleq \rho_\varepsilon \ast \tilde{\eta}_{\varepsilon'}$ and $m_d^{(\rho,\eta)}(\varepsilon,\varepsilon')\triangleq \int_{\mathbb{R}^n}[k_{\varepsilon,\varepsilon'}^{(\rho,\eta)}]^d$.
By Lemma~\ref{cl:Hee-scaling}, the normalized cross-kernels $[k_{\varepsilon,\varepsilon'}^{(\rho,\eta)}]^d \, / \, {m_d^{(\rho,\eta)}(\varepsilon,\varepsilon')}$ form a standard approximate identity, and
\begin{align}
    \frac{m_d^{(\rho,\eta)}(\varepsilon,\varepsilon')}{\sqrt{m_d^{(\rho)}(\varepsilon)\,m_d^{(\eta)}(\varepsilon')}} \longrightarrow  1,
\end{align}
as $\big((\varepsilon,\varepsilon')\to (0,0)\big)$. The same arguments as in \eqref{eq:mixed-inner-product}-\eqref{eq:mix-limit} shows that the corresponding normalized cross inner products converge to $\|F\|_{L^2(\mathcal{D})}^2$, hence the $L^2$-limits coincide and $I_d(F)$ is independent of the mollifier.

\paragraph{(3) Mean and variance.}
Each $\widetilde{I}_d^{(\varepsilon)}(F)$ is centered (being a $d$-fold Wiener–It$\hat{\text{o}}$ integral), so the limit is centered.
Using the isometry and the limit in~\eqref{eqn:app_B68},
\begin{align}
    \mathbb{E}\big[I_d(F)^2\big] = d!\,\|F\|_{L^2(\mathcal{D})}^2,
\end{align}
which completes the proof of the proposition. 
\end{proof}

\section{Statistical analysis for orbit recovery under SE(n)} \label{sec:statistical-analysis-orbit-recoverry}
\subsection{Proof of Proposition \ref{thm:propRigidMotionEquivalence}}
\label{sec:proofOfpropRigidMotionEquivalence}

\paragraph{Convergence of the empirical autocorrelation.}
By the definition of the empirical $d$-th order autocorrelation in \eqref{eqn:autoCorrealtionMomentsRigidMotion}, we have
\begin{align}
    a^{(d)}_y(\bm{\tau}_0, \dots, \bm{\tau}_{d-1}) 
    &= \frac{1}{N} \sum_{i=0}^{N-1} \left( \int_{\bm{x} \in \mathcal{D}} \prod_{j=0}^{d-1} \tilde{y}_i(\bm{x} + \bm{\tau}_j) \, d\bm{x} \right). \label{eqn:app_B1}
\end{align}
The integral of~\eqref{eqn:app_B1} should be referred to as the limit of the mollified and normalized integral used in Proposition~\ref{prop:limit-shifted}, so that the integral in \eqref{eqn:app_B1} is an $L^2(\Omega)$ random variable, for the probability space $(\Omega, \mathcal{F}, \mathbb{P})$ defined in Appendix~\ref{sec:preliminariesToStatisticalPart}.

From the orbit recovery problem under $\mathsf{SE}(n)$, as described in Problem~\ref{prob:orbitRecoverySEn}, and from the construction of the integral in the right-hand-side of \eqref{eqn:app_B1}, as defined in Proposition~\ref{prop:limit-shifted}, the samples $\{\tilde{y}_i\}_{i=0}^{N-1}$ are i.i.d. with finite mean and variance. Therefore, by the strong law of large numbers (SLLN), the empirical average converges almost surely to its expected value as $N \to \infty$, yielding
\begin{align}
    a^{(d)}_y(\bm{\tau}_0, \dots, \bm{\tau}_{d-1}) 
    &\xrightarrow{\text{a.s.}} \mathbb{E}_{g, \xi} \left[ \int_{\bm{x} \in \mathcal{D}} \prod_{j=0}^{d-1} \tilde{y}_1(\bm{x} + \bm{\tau}_j) \, d\bm{x} \right], \label{eqn:app_B2}
\end{align}
where the expectation is over both the group action $g \in \mathsf{SE}(n)$, distributed according to $\mu_{\mathsf{SE}(n)}$ and over the white noise $\xi$.

\paragraph{Decomposition of the moment into signal and noise terms.}
Given the observation model $y_i = (g_i \cdot f) + \xi_i$, we define $f_{\mathcal{R}} = \mathcal{R} \cdot f$, where ${\mathcal{R}} \in \mathsf{SO}(n)$ is drawn according to the distribution $\rho$, and $f_g = g \cdot f$, where $g \in \mathsf{SE}(n)$ drawn according to the distribution $\mu_{\mathsf{SE}(n)}$. The following property holds:
\begin{align}
    \mathbb{E}_{g, \xi} \left[\int_{\bm{x} \in \mathcal{D}} \prod_{j=0}^{d-1} \tilde{y}_1(\bm{x} + \bm{\tau}_j) \, d\bm{x} \right]
    & =  \mathbb{E}_{g, \xi} \left[\int_{\bm{x} \in \mathcal{D}} \prod_{j=0}^{d-1} \pp{f_g(\bm{x} + \bm{\tau}_j) + \xi(\bm{x} + \bm{\tau}_j)}d\bm{x} \right] \nonumber
    \\ & = \mathbb{E}_{\mathcal{R} \sim \rho, \xi} \left[\int_{\bm{x} \in \mathcal{D}} \prod_{j=0}^{d-1} \pp{f_\mathcal{R}(\bm{x} + \bm{\tau}_j) + \xi(\bm{x} + \bm{\tau}_j)}d\bm{x} \right], \label{eqn:app_G4}
\end{align}
where \eqref{eqn:app_G4} follows from the translation invariance of the integral (as established in Proposition~\ref{prop:invarianceOfAutoCorrelation}) and the fact that Gaussian white noise is statistically invariant under spatial translations. Consequently, we may, without loss of generality, restrict our attention to signals centered at the origin, denoting $h_\mathcal{R} = f_\mathcal{R} + \xi$.
This allows us to write
\begin{align}
    \mathbb{E}_{\mathcal{R} \sim \rho, \xi} \left[\int_{\bm{x} \in \mathcal{D}} \prod_{j=0}^{d-1} \tilde{y}_1(\bm{x} + \bm{\tau}_j)d\bm{x}  \right]
    = \mathbb{E}_{\mathcal{R} \sim \rho, \xi} \left[\int_{\bm{x} \in \mathcal{D}} \prod_{j=0}^{d-1} h_\mathcal{R}(\bm{x} + \bm{\tau}_j)d\bm{x}  \right]. \label{eqn:app_D8}
\end{align}

As the unknown function $f$ is bounded on a bounded domain $\mathcal{D}$ (by its continuity from Assumption~\ref{assum:support}), all products involving $f$ are also bounded, thus the expectation and integration order can be interchanged by Fubini, since
\begin{align}
    \mathbb{E}\!\left[\int_{\mathcal{D}} \Big|\prod_{j=0}^{d-1} \tilde{y}_1(\bm{x}+\bm{\tau}_j)\Big| \, d\bm{x}\right]
    &\leq |\mathcal{D}|^{1/2}       \left(\mathbb{E}\!\int_{\mathcal{D}} \Big|\prod_{j=0}^{d-1} \tilde{y}_1(\bm{x}+\bm{\tau}_j)\Big|^2            d\bm{x}\right)^{1/2}
    < \infty, \label{eqn:app_C5}
\end{align}
where the right-hand side finiteness follows from the $L^2$ construction in Proposition~\ref{prop:limit-shifted}. 
Hence,
\begin{align}
    \mathbb{E}_{g, \xi} \left[ \int_{\bm{x} \in \mathcal{D}} \prod_{j=0}^{d-1} \tilde{y}_1(\bm{x} + \bm{\tau}_j) \, d\bm{x} \right] 
    = \int_{\bm{x} \in \mathcal{D}} \mathbb{E}_{g, \xi} \left[ \prod_{j=0}^{d-1} \tilde{y}_1(\bm{x} + \bm{\tau}_j) \right] d\bm{x}. \label{eqn:app_D5}
\end{align}

By interchanging the order of integration and expectation in~\eqref{eqn:app_D8}, expanding the product using multilinearity of expectation and the independence between the signal and the additive noise, we obtain
\begin{align}
    \mathbb{E}_{\mathcal{R} \sim \rho, \xi} \left[ \prod_{j=0}^{d-1} h_\mathcal{R}(\bm{x} + \bm{\tau}_j) \right]
    &= \mathbb{E}_{\mathcal{R} \sim \rho, \xi} \left[ \prod_{j=0}^{d-1} \left( f_\mathcal{R}(\bm{x} + \bm{\tau}_j) + \xi(\bm{x} + \bm{\tau}_j) \right) \right] \nonumber\\
    &= \sum_{S \subseteq \{0, \dots, d-1\}} \mathbb{E}_{\mathcal{R} \sim \rho} \left[ \prod_{j \in S} f_\mathcal{R}(\bm{x} + \bm{\tau}_j) \right] 
    \cdot \mathbb{E}_\xi \left[ \prod_{j \in S^c} \xi(\bm{x} + \bm{\tau}_j) \right], \label{eqn:app_D9}
\end{align}
where $S^c = \{0, \dots, d-1\} \setminus S$ denotes the complement of $S$.
Separating the purely signal-dependent term in \eqref{eqn:app_D9} from the rest, the expression becomes
\begin{align}
    & \mathbb{E}_{\mathcal{R} \sim \rho}  \left[ \prod_{j \in S} f_\mathcal{R}(\bm{x} + \bm{\tau}_j) \right] 
    \cdot \mathbb{E}_\xi \left[ \prod_{j \in S^c} \xi(\bm{x} + \bm{\tau}_j) \right] 
\nonumber    \\ & \quad = \mathbb{E}_{\mathcal{R} \sim \rho} \left[ \prod_{j=0}^{d-1} f_\mathcal{R}(\bm{x} + \bm{\tau}_j) \right]+ \sum_{S \subsetneq \{0, \dots, d-1\}} \mathbb{E}_{\mathcal{R} \sim \rho} \left[ \prod_{j \in S} f_\mathcal{R}(\bm{x} + \bm{\tau}_j) \right] 
    \cdot \mathbb{E}_\xi \left[ \prod_{j \in S^c} \xi(\bm{x} + \bm{\tau}_j) \right]. \label{eqn:app_D10}
\end{align}

Substituting \eqref{eqn:app_D5}-\eqref{eqn:app_D10} into the convergence statement in \eqref{eqn:app_B2} gives:
\begin{align}
    a^{(d)}_y(\bm{\tau}_0, \dots, \bm{\tau}_{d-1}) 
    &\xrightarrow{\text{a.s.}} \int_{\bm{x} \in \mathcal{D}} \mathbb{E}_{\mathcal{R} \sim \rho} \left[ \prod_{j=0}^{d-1} f_\mathcal{R}(\bm{x} + \bm{\tau}_j) \right] d\bm{x} \label{eqn:app_D12} \\
    &\quad +  \sum_{S \subsetneq \{0, \dots, d-1\}} \int_{\bm{x} \in \mathcal{D}} \mathbb{E}_{\mathcal{R} \sim \rho} \left[ \prod_{j \in S} f_\mathcal{R}(\bm{x} + \bm{\tau}_j) \right] 
    \cdot \mathbb{E}_\xi \left[ \prod_{j \in S^c} \xi(\bm{x} + \bm{\tau}_j) \right] d\bm{x}. \label{eqn:app_D13}
\end{align}

\paragraph{The population autocorrelation.}
Since the function $f_\mathcal{R}$ is compactly supported in $\mathcal{D}$, we may extend the domain of integration to $\mathbb{R}^n$. We identify the first term in \eqref{eqn:app_D12} with the population $d$-th order autocorrelation, as defined in Definition~\ref{def:autoCorrelationNoiseFree}:
\begin{align}
    \int_{\mathcal{D}} \mathbb{E}_{\mathcal{R} \sim \rho} \left[ \prod_{j=0}^{d-1} f_\mathcal{R}(\bm{x} + \bm{\tau}_j) \right] d\bm{x}
    &= \int_{\mathbb{R}^n} \mathbb{E}_{\mathcal{R} \sim \rho} \left[ \prod_{j=0}^{d-1} f_\mathcal{R}(\bm{x} + \bm{\tau}_j) \right] d\bm{x} \nonumber\\
    &= A^{(d)}_{f,\rho}(\bm{\tau}_0, \dots, \bm{\tau}_{d-1}) . \label{eqn:app_D15}
\end{align}

\paragraph{The noise term autocorrelation.}
According to Corollary~\ref{lem:WickLemmaForWhiteNoise}, the expectation of the product of independent, zero-mean Gaussian white noise terms in \eqref{eqn:app_D13} satisfies:
\begin{align}
    \mathbb{E}_\xi \left[ \prod_{j \in S^c} \xi(\bm{x} + \bm{\tau}_j) \right] = 
    \begin{cases}
        \sigma^{|S^c|} \sum_{\text{pairings}} \prod_{(i,j)} \delta(\bm{\tau}_i - \bm{\tau}_j), & \text{if } |S^c| \text{ is even}, \\
        0, & \text{otherwise}, \label{eqn:app_D16}
    \end{cases}
\end{align}
and this expression is independent of $\bm{x}$. Again noticing that the function $f_\mathcal{R}$ is compactly supported in $\mathcal{D}$, we may extend the domain of integration to $\mathbb{R}^n$, and the remaining integral in \eqref{eqn:app_D13} can be identified as a lower-order autocorrelation:
\begin{align}
    \mathbb{E}_{\mathcal{R} \sim \rho} \left[ \int_{\bm{x} \in \mathcal{D}}  \prod_{j \in S} f_\mathcal{R}(\bm{x} + \bm{\tau}_j) \, d\bm{x} \right]
    & = \mathbb{E}_{\mathcal{R} \sim \rho} \left[ \int_{\bm{x} \in \mathbb{R}^n}  \prod_{j \in S} f_\mathcal{R}(\bm{x} + \bm{\tau}_j) \, d\bm{x} \right]\nonumber
    \\ & = A^{(|S|)}_{f,\rho}(\{\bm{\tau}_j\}_{j \in S}). \label{eqn:app_D17}
\end{align}

Substituting \eqref{eqn:app_D16}-\eqref{eqn:app_D17} into \eqref{eqn:app_D13} results,
\begin{align}
    \sum_{S \subsetneq \{0, \dots, d-1\}} & \int_{\bm{x} \in \mathcal{D}} \mathbb{E}_{\mathcal{R} \sim \rho} \left[ \prod_{j \in S} f_\mathcal{R}(\bm{x} + \bm{\tau}_j) \right] 
    \cdot \mathbb{E}_\xi \left[ \prod_{j \in S^c} \xi(\bm{x} + \bm{\tau}_j) \right] d\bm{x}
\nonumber    \\ & = \sum_{\substack{S \subsetneq \{0, \dots, d-1\} \\ |S^c| \text{ even}}} \left( \sigma^{|S^c|} \cdot A^{(|S|)}_{f,\rho}(\{\bm{\tau}_j\}_{j \in S}) 
    \cdot \sum_{\text{pairings of } S^c} \prod_{(i,j)} \delta(\bm{\tau}_i - \bm{\tau}_j) \right). \label{eqn:noiseCorrectionFunction}
\end{align}
The last term in \eqref{eqn:noiseCorrectionFunction} is $ P^{(d)}_{f, \rho}(\bm{\tau}_0, \dots, \bm{\tau}_{d-1})$ by the definition in \eqref{eqn:noiseCorrectionFunctionMain}.

Substituting the expressions for the population \eqref{eqn:app_D15} and noise-induced \eqref{eqn:noiseCorrectionFunction} terms into \eqref{eqn:app_D13}, we conclude:
\begin{align}
    a^{(d)}_y(\bm{\tau}_0, \dots, \bm{\tau}_{d-1}) 
    &\xrightarrow{\text{a.s.}} A^{(d)}_{f, \rho}(\bm{\tau}_0, \dots, \bm{\tau}_{d-1}) + P^{(d)}_{f, \rho}(\bm{\tau}_0, \dots, \bm{\tau}_{d-1}), \label{eqn:empiricalAutoCorrLimit}
\end{align}
which completes the proof of the proposition.

\subsection{Proof of Proposition \ref{thm:sampleComplexityRigidMotion}}
\label{sec:proofOfsampleComplexityRigidMotion}
To establish this proposition, our goal is to show that the autocorrelations up to order $\bar{d}$ uniquely determine the orbit of the signal, with a sample complexity of $\omega(\sigma^{2\bar{d}})$. The key components of the proof are as follows:
\begin{enumerate}
    \item In Proposition~\ref{thm:propRigidMotionEquivalence}, we proved a point-wise convergence of the empirical autocorrelation. We now aim to extend this result to the convergence in $L^2((\mathbb{R}^n)^d)$ of the  empirical autocorrelation for all $\bm{\tau}_0, \dots, \bm{\tau}_{d-1} \in \mathcal{B}_R^{(n)}$.
    \item Even if the empirical autocorrelation converges to its ensemble mean at the rate $\omega(\sigma^{2\bar{d}})$, this alone does not guarantee convergence of the orbit estimator to the true orbit of the signal $f$ at the same rate. To establish this, we require certain additional mild conditions, which are formalized and proved in Lemma~\ref{lemma:D3}.
\end{enumerate}

Formally, let $y_i : \mathcal{D} \to \mathbb{R}$ for $i \in [N]$ denote observations generated according to the rigid motion model (Problem~\ref{prob:orbitRecoverySEn}). Let $a_y^{(d)}$ be the empirical $d$-th order autocorrelation as defined in~\eqref{eqn:autoCorrealtionMomentsRigidMotion}. Then, by Proposition~\ref{thm:propRigidMotionEquivalence}, we have:
\begin{align}
    a^{(d)}_y(\bm{\tau}_0, \dots, \bm{\tau}_{d-1}) 
    &\xrightarrow{\text{a.s.}} A^{(d)}_{f, \rho}(\bm{\tau}_0, \dots, \bm{\tau}_{d-1}) + P^{(d)}_{f, \rho}(\bm{\tau}_0, \dots, \bm{\tau}_{d-1}), \label{eqn:app_C1}
\end{align}
for every $\bm{\tau}_0, \dots, \bm{\tau}_{d-1} \in \mathcal{B}_R^{(n)}$, where $A^{(d)}_{f, \rho}$ is the population $d$-th order autocorrelation.
We define the right-hand-side of \eqref{eqn:app_C1}:
\begin{align}
    \bar{\mu}_{f, \rho}^{(d)}(\bm{\tau}_0, \dots, \bm{\tau}_{d-1}) 
    \triangleq A^{(d)}_{f, \rho}(\bm{\tau}_0, \dots, \bm{\tau}_{d-1}) + P^{(d)}_{f, \rho}(\bm{\tau}_0, \dots, \bm{\tau}_{d-1}). \label{eqn:app_A29}
\end{align}

We now present a series of results leading to the main proposition.

\begin{lem}\label{lemma:D1}
Fix $\bm{\tau}_0, \dots, \bm{\tau}_{d-1} \in \mathcal{B}_R^{(n)}$. Let $a_y^{(d)}(\bm{\tau}_0, \dots, \bm{\tau}_{d-1})$ denote the empirical $d$-th order autocorrelation evaluated at this point, and let $\bar{\mu}_{f, \rho}^{(d)}(\bm{\tau}_0, \dots, \bm{\tau}_{d-1})$ be defined as in~\eqref{eqn:app_A29}. Then, for every $\delta > 0$, 
\begin{align}
    \mathbb{P} \left\{ \left| a_y^{(d)}(\bm{\tau}_0, \dots, \bm{\tau}_{d-1}) - \bar{\mu}_{f, \rho}^{(d)}(\bm{\tau}_0, \dots, \bm{\tau}_{d-1}) \right| \geq \delta \right\}
    \leq O \left( \frac{\sigma^{2d}}{N \delta^2} \right). \label{eqn:A29}
\end{align}
\end{lem}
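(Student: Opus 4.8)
The plan is to recognize $a_y^{(d)}(\bm\tau)$, for a fixed tuple $\bm\tau=(\bm\tau_0,\dots,\bm\tau_{d-1})$, as an empirical mean of i.i.d. random variables and then apply Chebyshev's inequality, reducing everything to a variance estimate. Writing
\[
    Z_i \triangleq \int_{\mathcal{D}} \prod_{j=0}^{d-1} \tilde{y}_i(\bm{x}+\bm{\tau}_j)\,d\bm{x},
\]
understood as the mollified-and-normalized multiple Wiener--It\^o integral constructed in Proposition~\ref{prop:limit-shifted} (applied to the padded observation $\tilde y_i$ on $\bar{\mathcal D}$), we have $a_y^{(d)}(\bm\tau)=\frac1N\sum_{i=0}^{N-1}Z_i$ with the $Z_i$ i.i.d.\ (inherited from Problem~\ref{prob:orbitRecoverySEn}), square-integrable, and $\mathbb{E}[Z_i]=A^{(d)}_{f,\rho}(\bm\tau)+P^{(d)}_{f,\rho}(\bm\tau)=\bar\mu^{(d)}_{f,\rho}(\bm\tau)$ by the computation in the proof of Proposition~\ref{thm:propRigidMotionEquivalence}. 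Chebyshev's inequality then gives
\[
    \mathbb{P}\Big\{\,\big|a_y^{(d)}(\bm\tau)-\bar\mu^{(d)}_{f,\rho}(\bm\tau)\big|\ge\delta\,\Big\}
    \le \frac{\mathrm{Var}(Z_1)}{N\delta^2}
    \le \frac{\mathbb{E}[Z_1^2]}{N\delta^2},
\]
so it suffices to show $\mathbb{E}[Z_1^2]=O(\sigma^{2d})$ as $\sigma\to\infty$ (equivalently, for $\sigma$ bounded away from $0$), with a constant depending only on $f$, $\mathcal D$, and $d$.

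To bound the second moment I would expand $\tilde y_1 = h_{g_1}+\xi_1$ with $h_{g_1}=g_1\cdot f$ and distribute the product, writing $Z_1=\sum_{S\subseteq\{0,\dots,d-1\}} Z_1^{(S)}$, where
\[
    Z_1^{(S)} = \int_{\mathcal D}\Big(\prod_{j\in S} h_{g_1}(\bm x+\bm\tau_j)\Big)\prod_{j\in S^c}\xi_1(\bm x+\bm\tau_j)\,d\bm x .
\]
Conditioning on $g_1$, the kernel $F_S(\bm x)\triangleq \mathbf{1}_{\mathcal D}(\bm x)\prod_{j\in S} h_{g_1}(\bm x+\bm\tau_j)$ is deterministic, bounded, and compactly supported, and (by the construction in Appendix~\ref{subsec:LebesgueAutocorr}) $Z_1^{(S)}=\sigma^{|S^c|}I_{|S^c|}(F_S)$ is a multiple Wiener--It\^o integral of order $|S^c|$ in the noise $\xi_1$, which is independent of $g_1$; the case $S^c=\emptyset$ is simply the deterministic scalar $\int_{\mathcal D}F_S$. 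By the It\^o isometry in Proposition~\ref{prop:limit-shifted}, $\mathbb{E}\big[(Z_1^{(S)})^2\mid g_1\big]=\sigma^{2|S^c|}\,|S^c|!\,\|F_S\|_{L^2(\mathcal D)}^2$ (and $\le\|F_S\|_{L^1(\mathcal D)}^2$ when $S^c=\emptyset$). Since $f$ is continuous on the compact ball $\mathcal B_R^{(n)}$ (Assumption~\ref{assum:support}), we have $\|f\|_\infty<\infty$ and hence $\|F_S\|_{L^2(\mathcal D)}^2\le \|f\|_\infty^{2|S|}|\mathcal D|$, a bound uniform in $g_1$ and in $\bm\tau$. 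Taking expectation over $g_1$ yields $\mathbb{E}[(Z_1^{(S)})^2]\le C_S\,\sigma^{2|S^c|}$ with $C_S=|S^c|!\,\|f\|_\infty^{2|S|}(|\mathcal D|+|\mathcal D|^2)$.

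Finally, by the triangle inequality in $L^2(\Omega)$,
\[
    \|Z_1\|_{L^2(\Omega)} \le \sum_{S\subseteq\{0,\dots,d-1\}}\|Z_1^{(S)}\|_{L^2(\Omega)}
    \le \sum_{S}\sqrt{C_S}\,\sigma^{|S^c|},
\]
and as $\sigma\to\infty$ the dominant contribution is $S=\emptyset$, $|S^c|=d$, so $\|Z_1\|_{L^2(\Omega)}\le C\sigma^{d}$ for $\sigma\ge1$ and thus $\mathrm{Var}(Z_1)\le\mathbb{E}[Z_1^2]\le C^2\sigma^{2d}$; combined with the Chebyshev bound this gives \eqref{eqn:A29}. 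The main technical obstacle is making the conditional second-moment computation for the limiting object $Z_1$ fully rigorous --- namely, verifying that the mollified-limit construction of Proposition~\ref{prop:limit-shifted} is compatible with the signal/noise binomial expansion, that distinct terms $Z_1^{(S)}$ with $|S^c|\neq|S'^c|$ are orthogonal (different Wiener chaoses, equivalently the structure of Corollary~\ref{lem:WickLemmaForWhiteNoise}), and that the resulting constants are genuinely independent of $g_1$ and of $\bm\tau$; once these points are settled the remaining steps are elementary.
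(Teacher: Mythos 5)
Your proposal is correct and follows essentially the same route as the paper: Chebyshev's inequality applied to the i.i.d. average, reduced to the single-sample variance bound $\mathbb{E}[Z_1^2]=O(\sigma^{2d})$ obtained from the Wiener--It\^o isometry of Proposition~\ref{prop:limit-shifted}. The only difference is that the paper simply asserts this variance bound by citing Proposition~\ref{thm:propRigidMotionEquivalence} and the isometry, whereas you carry out the signal/noise binomial expansion, the conditioning on $g_1$, and the uniform $L^\infty$ bounds explicitly --- a welcome amount of extra detail, with the caveats you flag (chaos orthogonality, compatibility of the mollified limit with the expansion) being exactly the points the paper leaves implicit.
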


Before stating the next lemma, we define the $L^2$ norm between the empirical autocorrelation $a_y^{(d)}$ and  $\bar{\mu}_{f, \rho}^{(d)}$:
\begin{align}
    \norm{a_y^{(d)} - \bar{\mu}_{f, \rho}^{(d)}}_{L^2}^2 
    \triangleq \int_{\bm{\tau}_0, \ldots, \bm{\tau}_{d-1} \in \mathcal{B}_R^{(n)}}
    \left( a_y^{(d)}(\bm{\tau}_0, \dots, \bm{\tau}_{d-1}) - \bar{\mu}_{f, \rho}^{(d)}(\bm{\tau}_0, \dots, \bm{\tau}_{d-1}) \right)^2
    d \bm{\tau}_0 \ldots d \bm{\tau}_{d-1}. \label{eqn:app_E4}
\end{align}

\begin{lem}[$L^2$-convergence] \label{lemma:D2}
Let $a_y^{(d)}$ be the empirical $d$-th order autocorrelation, and $\bar{\mu}_{f, \rho}^{(d)}$ as defined in~\eqref{eqn:app_A29}. Then, for every $\delta > 0$,
\begin{align}
    \mathbb{P} \left\{ \norm{a_y^{(d)} - \bar{\mu}_{f, \rho}^{(d)}}_{L^2} \geq \delta \right\}
    \leq O \left( \frac{\sigma^{2d} \cdot (V_n(R))^d}{N \delta^2} \right). \label{eqn:A32}
\end{align}
\end{lem}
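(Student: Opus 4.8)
The plan is to derive Lemma~\ref{lemma:D2} from the pointwise estimate of Lemma~\ref{lemma:D1} by combining a Tonelli interchange with Markov's inequality, the key technical point being that the per-observation variance bound is \emph{uniform} in the shift parameters. Concretely, I would first apply Markov's inequality to the nonnegative random variable $\norm{a_y^{(d)}-\bar{\mu}_{f,\rho}^{(d)}}_{L^2}^2$, so that it suffices to bound $\mathbb{E}\,\norm{a_y^{(d)}-\bar{\mu}_{f,\rho}^{(d)}}_{L^2}^2$, and then use Tonelli to pass the expectation inside the $(\bm{\tau}_0,\dots,\bm{\tau}_{d-1})$-integral defining this norm in~\eqref{eqn:app_E4}, reducing the problem to controlling $\mathbb{E}\big[(a_y^{(d)}(\bm{\tau}_0,\dots,\bm{\tau}_{d-1})-\bar{\mu}_{f,\rho}^{(d)}(\bm{\tau}_0,\dots,\bm{\tau}_{d-1}))^2\big]$ pointwise and integrating.

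For the pointwise bound, write $Z_i(\bm{\tau}_0,\dots,\bm{\tau}_{d-1})\triangleq\int_{\mathcal{D}}\prod_{j=0}^{d-1}\tilde{y}_i(\bm{x}+\bm{\tau}_j)\,d\bm{x}$, so that $a_y^{(d)}=\tfrac1N\sum_i Z_i$; by Proposition~\ref{thm:propRigidMotionEquivalence}, $\mathbb{E}[Z_1]=\bar{\mu}_{f,\rho}^{(d)}$ for Lebesgue-a.e.\ tuple (those with pairwise distinct entries, where the $\delta$-terms in $P^{(d)}$ are inactive and $\bar{\mu}_{f,\rho}^{(d)}=A^{(d)}_{f,\rho}$ is an honest bounded function). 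Since the $Z_i$ are i.i.d., for a.e.\ such tuple
\begin{align}
    \mathbb{E}\Big[\big(a_y^{(d)}-\bar{\mu}_{f,\rho}^{(d)}\big)^2\Big]=\frac1N\,\mathrm{Var}\big(Z_1(\bm{\tau}_0,\dots,\bm{\tau}_{d-1})\big).
\end{align}
Expanding $\tilde{y}_1=f_{\mathcal{R}}+\xi$ and applying Wick's theorem (Corollary~\ref{lem:WickLemmaForWhiteNoise}) to the noise factors, every term in $\mathbb{E}[Z_1^2]$ is a constant times $\sigma^{2k}$ with $k\le d$, and each such constant is controlled via the It\^o-isometry bound of Proposition~\ref{prop:limit-shifted} applied to kernels of the form $\mathbf{1}_{\mathcal{D}}(\bm{x})\prod_{j\in S}f_{\mathcal{R}}(\bm{x}+\bm{\tau}_j)$, whose $L^2$-norms are bounded by a power of $\|f\|_\infty$ times $|\mathcal{D}|$ \emph{independently of the} $\bm{\tau}_j$. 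Hence there is $C=C(d,\|f\|_\infty,|\mathcal{D}|)$, not depending on $N$, $\sigma$, or the tuple, with $\mathrm{Var}(Z_1)\le C\sigma^{2d}$ in the regime $\sigma\ge 1$ of interest.

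Integrating this uniform bound over the parameter cube $(\mathcal{B}_R^{(n)})^d$, whose volume is $(V_n(R))^d$ and on whose null coincidence locus the integrand contributes nothing, gives $\mathbb{E}\,\norm{a_y^{(d)}-\bar{\mu}_{f,\rho}^{(d)}}_{L^2}^2\le C\sigma^{2d}(V_n(R))^d/N$, and Markov's inequality then yields $\mathbb{P}\{\norm{a_y^{(d)}-\bar{\mu}_{f,\rho}^{(d)}}_{L^2}\ge\delta\}=\mathbb{P}\{\norm{a_y^{(d)}-\bar{\mu}_{f,\rho}^{(d)}}_{L^2}^2\ge\delta^2\}\le O(\sigma^{2d}(V_n(R))^d/(N\delta^2))$, as claimed. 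The main obstacle is precisely the uniformity of the per-observation variance bound in $(\bm{\tau}_0,\dots,\bm{\tau}_{d-1})$: without it the integration over $(\mathcal{B}_R^{(n)})^d$ could spoil the $\sigma^{2d}/N$ rate, and securing it requires invoking Proposition~\ref{prop:limit-shifted} with kernels supported in $\mathcal{D}$ and bounded by powers of $\|f\|_\infty$, together with the observation that the $\delta$-function terms of $P^{(d)}$ live on a Lebesgue-null set. The remaining ingredients — joint measurability of $(\omega,\bm{\tau})\mapsto a_y^{(d)}(\bm{\tau})(\omega)$ for Tonelli, the i.i.d.\ reduction, and Markov — are routine.
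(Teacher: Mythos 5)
Your proposal is correct and follows essentially the same route as the paper: Markov's inequality on $\norm{a_y^{(d)}-\bar{\mu}_{f,\rho}^{(d)}}_{L^2}^2$, interchange of expectation and the $\bm{\tau}$-integral, the pointwise per-tuple variance bound $O(\sigma^{2d}/N)$ from the i.i.d.\ structure and the It\^o isometry, and integration over $(\mathcal{B}_R^{(n)})^d$ to pick up the factor $(V_n(R))^d$. You are in fact somewhat more careful than the paper on two points it leaves implicit — the uniformity of the variance bound in $(\bm{\tau}_0,\dots,\bm{\tau}_{d-1})$ and the fact that the $\delta$-function terms of $P^{(d)}$ are supported on a Lebesgue-null coincidence locus — but these refinements do not change the argument.
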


We now arrive at the main result of this proposition:

\begin{lem} [Sample-complexity of orbit recovery] \label{lemma:D3}
Assume that the population autocorrelations up to order $\bar{d}$ (Definition~\ref{def:autoCorrelationNoiseFree}) almost everywhere determine the orbit of the signal $f$ in Problem~\ref{prob:orbitRecoverySEn}. Suppose also that the parameter space $f \in \Theta$ is compact. Let $\mathrm{MSE}^{\ast}_{\mathsf{SE}(n)}(\sigma^2, N)$ denote the MSE as defined in~\eqref{eqn:infMtdMSE}. Then, if $N = \omega(\sigma^{2\bar{d}})$,
\begin{align}
    \lim_{N, \sigma \to \infty} \mathrm{MSE}^{\ast}_{\mathsf{SE}(n)}(\sigma^2, N) = 0. \label{eqn:app_E6}
\end{align}
\end{lem}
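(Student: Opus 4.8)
The plan is to prove \eqref{eqn:app_E6} by a standard ``identifiability plus compactness implies consistency'' (M-estimation) argument, cf.\ \cite{newey1994chapter}: construct an estimator by matching empirical to model autocorrelations, use Lemma~\ref{lemma:D2} for the stochastic concentration, and use the stated identifiability hypothesis for the deterministic inversion. The first step is to replace the empirical autocorrelations by the population ones. In \eqref{eqn:noiseCorrectionFunctionMain} every summand of the correction $P^{(d)}_{f,\rho}$ carries a product $\prod_{(i,j)}\delta(\bm{\tau}_i-\bm{\tau}_j)$ containing at least one Dirac factor with $i\neq j$, hence is supported on the index-diagonal of $(\mathcal{B}_R^{(n)})^d$, a Lebesgue-null set. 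Restricting the lag variables to the open dense set $U_d=\{(\bm{\tau}_0,\dots,\bm{\tau}_{d-1}):\bm{\tau}_i\neq\bm{\tau}_j\ \text{for}\ i\neq j\}$, the function $\bar{\mu}^{(d)}_{f,\rho}$ of \eqref{eqn:app_A29} coincides there with the bounded continuous population autocorrelation $A^{(d)}_{f,\rho}$, whose values on $U_d$ determine it everywhere by continuity. Thus Lemma~\ref{lemma:D2} gives, for $0\le d\le\bar d$, $\mathbb{P}\{\|a_y^{(d)}-A^{(d)}_{f,\rho}\|_{L^2}\ge\delta\}\le O(\sigma^{2d}(V_n(R))^d/(N\delta^2))$. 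Stacking $d=0,\dots,\bar d$ into $\mathbf{a}_y=(a_y^{(0)},\dots,a_y^{(\bar d)})$ and $\mathbf{A}_h=(A^{(0)}_{h,\rho},\dots,A^{(\bar d)}_{h,\rho})$ inside $\mathcal{H}=\bigoplus_{d=0}^{\bar d}L^2((\mathcal{B}_R^{(n)})^d)$ and union-bounding (the $d=\bar d$ term dominates),
\begin{align}
    \mathbb{P}\big\{\|\mathbf{a}_y-\mathbf{A}_f\|_{\mathcal{H}}\ge\delta\big\}\ \le\ O\!\left(\frac{\sigma^{2\bar d}}{N\delta^2}\right),
\end{align}
which tends to $0$ for every fixed $\delta>0$ once $N=\omega(\sigma^{2\bar d})$. (If one prefers to use all lags, the same follows after the iterative debiasing $\widehat A^{(d)}=a_y^{(d)}-\widehat P^{(d)}$, where $\widehat P^{(d)}$ is built from the known $\sigma$ and the already-debiased lower-order estimates via \eqref{eqn:noiseCorrectionFunctionMain}.)

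Next I would define $\widehat f\in\argmin_{h\in\Theta}\|\mathbf{a}_y-\mathbf{A}_h\|_{\mathcal{H}}$ and record three facts. (i) Since $f\in V$ with $\dim V=p<\infty$, $\Theta$ is a compact subset of $\mathbb{R}^p$, and for fixed lags $A^{(d)}_{h,\rho}$ is a degree-$d$ polynomial in the coordinates of $h$, so $h\mapsto\mathbf{A}_h$ is continuous $\Theta\to\mathcal{H}$ and the minimum is attained. (ii) By Proposition~\ref{prop:invarianceOfAutoCorrelation} (under Assumption~\ref{assum:rotation_translation_indep}) each $A^{(d)}_{h,\rho}$ is $\mathsf{SO}(n)$-orbit invariant, so $\mathbf{A}$ descends to the compact Hausdorff orbit space $\Theta/G$; the identifiability hypothesis states exactly that this induced map is injective, hence a homeomorphism onto its image. (iii) Consequently, for every $\epsilon>0$ there is a modulus $\eta(\epsilon)>0$ with $\|\mathbf{A}_h-\mathbf{A}_{h'}\|_{\mathcal{H}}<\eta(\epsilon)\Rightarrow\|f\|_F^{-2}\min_{g\in G}\|g\cdot h'-h\|_F^2<\epsilon^2$.

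Finally I would combine the pieces. Fix $\epsilon>0$ and set $E=\{\|\mathbf{a}_y-\mathbf{A}_f\|_{\mathcal{H}}<\eta(\epsilon)/2\}$. On $E$, optimality gives $\|\mathbf{a}_y-\mathbf{A}_{\widehat f}\|_{\mathcal{H}}\le\|\mathbf{a}_y-\mathbf{A}_f\|_{\mathcal{H}}$, so $\|\mathbf{A}_{\widehat f}-\mathbf{A}_f\|_{\mathcal{H}}<\eta(\epsilon)$ and the random quantity $Z:=\|f\|_F^{-2}\min_{g\in G}\|g\cdot\widehat f-f\|_F^2$ satisfies $Z<\epsilon^2$; off $E$ one has the crude deterministic bound $Z\le C_\Theta:=4\max_{h\in\Theta}\|h\|_F^2/\|f\|_F^2<\infty$. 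Hence, since $\mathrm{MSE}^{\ast}_{\mathsf{SE}(n)}(\sigma^2,N)$ is the infimum over estimators,
\begin{align}
    \mathrm{MSE}^{\ast}_{\mathsf{SE}(n)}(\sigma^2,N)\ \le\ \mathbb{E}[Z]\ \le\ \epsilon^2+C_\Theta\,\mathbb{P}(E^c)\ \le\ \epsilon^2+O\!\left(\frac{\sigma^{2\bar d}}{N\,\eta(\epsilon)^2}\right).
\end{align}
Letting $N,\sigma\to\infty$ with $N=\omega(\sigma^{2\bar d})$ gives $\limsup\mathrm{MSE}^{\ast}_{\mathsf{SE}(n)}\le\epsilon^2$, and $\epsilon\downarrow 0$ yields \eqref{eqn:app_E6}.

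I expect the main obstacle to be the first step: reconciling the genuinely distributional correction $P^{(d)}_{f,\rho}$ with the $L^2$ content of Lemma~\ref{lemma:D2}, i.e.\ arguing cleanly that on generic lags $a_y^{(d)}$ concentrates on the true population autocorrelation $A^{(d)}_{f,\rho}$ (equivalently, tracking error propagation through the $\sigma$-aware debiasing, which is bookkeeping since $\sigma$ is known). The remaining ingredients---attainment of the $\argmin$, the homeomorphism property of $\mathbf{A}$ on $\Theta/G$, and boundedness of the loss---are routine given compactness of $\Theta$ (automatic since $\dim V<\infty$) together with continuity and orbit-invariance of the autocorrelation map, so the essential content lies in Lemma~\ref{lemma:D2} (already proved) and in the identifiability hypothesis.
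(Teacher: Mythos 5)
Your proposal is correct and follows the same basic strategy as the paper: a moment-matching (extremum) estimator whose consistency follows from identifiability, compactness of $\Theta$, continuity of the autocorrelation map, and the concentration supplied by Lemma~\ref{lemma:D2}. The paper packages these four ingredients and invokes Theorem 2.1 of Newey--McFadden to conclude $\widehat{G\cdot f}\xrightarrow{\mathcal{P}} G\cdot f$; you instead make the inversion explicit by noting that a continuous injective map from the compact orbit space $\Theta/G$ is a homeomorphism onto its image, which yields the modulus $\eta(\epsilon)$. Two places where your write-up is actually more complete than the paper's. First, the paper takes the target of concentration to be the biased limit $\bar{\mu}^{(d)}_{f,\rho}=A^{(d)}_{f,\rho}+P^{(d)}_{f,\rho}$ and never reconciles the distributional correction $P^{(d)}_{f,\rho}$ with the $L^2$ statement of Lemma~\ref{lemma:D2}; your observation that the correction is supported on the lag-diagonal (a Lebesgue-null set), or alternatively your explicit $\sigma$-aware debiasing, addresses exactly the issue you flag as the main obstacle, and either route is legitimate. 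Second, the paper stops at convergence in probability of the orbit estimator, whereas the lemma asserts convergence of $\mathrm{MSE}^{\ast}_{\mathsf{SE}(n)}$, i.e., of an expectation; your bounded-loss argument ($Z\le C_\Theta$ off the good event, using compactness of $\Theta$) is the missing step that upgrades convergence in probability to convergence of the MSE, and your quantitative bound $\epsilon^2+O\bigl(\sigma^{2\bar d}/(N\,\eta(\epsilon)^2)\bigr)$ makes the role of the condition $N=\omega(\sigma^{2\bar d})$ transparent. In short: same skeleton, but your version fills in two details the paper glosses over.
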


Lemma~\ref{lemma:D3} concludes the proof of the proposition. It remains to prove Lemma~\ref{lemma:D1}, Lemma~\ref{lemma:D2}, and Lemma~\ref{lemma:D3}.

\begin{proof}[Proof of Lemma~\ref{lemma:D1}]
By Chebyshev's inequality, for a fixed-point of the empirical autocorrelation corresponding to 
$\bm{\tau}_0, \bm{\tau}_1, \dots, \bm{\tau}_{d-1} \in \mathcal{B}_R^{(n)},$ 
we have:
\begin{align}
    \mathbb{P} & \left\{ \left| a^{(d)}_{y}(\bm{\tau}_0, \dots, \bm{\tau}_{d-1}) - \bar{\mu}_{f, \rho}^{(d)}(\bm{\tau}_0, \dots, \bm{\tau}_{d-1}) \right| \geq \delta \right\}
    \leq 
    \\ & \qquad \frac{\mathbb{E} \left[ \left( a^{(d)}_{y}(\bm{\tau}_0, \dots, \bm{\tau}_{d-1}) - \bar{\mu}_{f, \rho}^{(d)}(\bm{\tau}_0, \dots, \bm{\tau}_{d-1}) \right)^2 \right]}{\delta^2}. \label{eqn:A24}
\end{align}

According to Proposition~\ref{thm:propRigidMotionEquivalence}, and the isometry property established in Proposition~\eqref{prop:limit-shifted}, the variance of the empirical autocorrelation is bounded above by $O\left( \tfrac{\sigma^{2d}}{N} \right)$. Therefore, we have:
\begin{align}
    \mathbb{E} \left[ \left( a^{(d)}_{y}(\bm{\tau}_0, \dots, \bm{\tau}_{d-1}) - \bar{\mu}_{f, \rho}^{(d)} (\bm{\tau}_0, \dots, \bm{\tau}_{d-1})\right)^2 \right] = O\left( \frac{\sigma^{2d}}{N} \right). \label{eqn:app_E8}
\end{align}
Substituting the bound from~\eqref{eqn:app_E8} into~\eqref{eqn:A24} yields:
\begin{align}
    \mathbb{P} \left\{ \left| a^{(d)}_{y} - \bar{\mu}_{f, \rho}^{(d)} \right| \geq \delta \right\}
    \leq O\left( \frac{\sigma^{2d}}{N \delta^2} \right), \label{eqn:A26}
\end{align}
which completes the proof.

\end{proof}

\begin{proof}[Proof of Lemma~\ref{lemma:D2}]
To establish the sample complexity for the full empirical autocorrelation over the domain $\bm{\tau}_0, \bm{\tau}_1, \dots, \bm{\tau}_{d-1} \in \mathcal{B}_R^{(n)}$, we begin by applying Markov's inequality to bound the deviation of the empirical autocorrelation from its expectation:
\begin{align}
    \mathbb{P} \left\{ \norm{a^{(d)}_{y} - \bar{\mu}_{f, \rho}^{(d)}}_{L^2}^2 \geq \delta^2 \right\}
    \leq \frac{1}{\delta^2} \ \mathbb{E} \left[ \norm{a^{(d)}_{y} - \bar{\mu}_{f, \rho}^{(d)}}_{L^2}^2 \right]. \label{eqn:app_E10}
\end{align}

From the definition given in~\eqref{eqn:app_E4}, the $L^2$ norm can be expressed as an integral:
\begin{align}
    \mathbb{E} & \left[ \norm{a^{(d)}_{y} - \bar{\mu}_{f, \rho}^{(d)}}_{L^2}^2 \right]
    \nonumber\\ &= \mathbb{E} \left[ \int_{\bm{\tau}_0, \ldots, \bm{\tau}_{d-1} \in \mathcal{B}_R^{(n)}}
    \left( a^{(d)}_y(\bm{\tau}_0, \dots, \bm{\tau}_{d-1}) - \bar{\mu}_{f, \rho}^{(d)}(\bm{\tau}_0, \dots, \bm{\tau}_{d-1}) \right)^2
    \, d\bm{\tau}_0 \cdots d\bm{\tau}_{d-1} \right] \nonumber\\
    &= \int_{\bm{\tau}_0, \ldots, \bm{\tau}_{d-1} \in \mathcal{B}_R^{(n)}}
    \mathbb{E} \left[
    \left( a^{(d)}_y(\bm{\tau}_0, \dots, \bm{\tau}_{d-1}) - \bar{\mu}_{f, \rho}^{(d)}(\bm{\tau}_0, \dots, \bm{\tau}_{d-1}) \right)^2
    \right] d\bm{\tau}_0 \cdots d\bm{\tau}_{d-1}, \label{eqn:app_E12}
\end{align}
where the interchange of the expectation and the integral in~\eqref{eqn:app_E12} is due to the dominated convergence theorem (see~\eqref{eqn:app_C5}).

Now, substituting the variance bound from~\eqref{eqn:app_E8} into~\eqref{eqn:app_E12}, we obtain:
\begin{align}
    \mathbb{E} \left[ \norm{a^{(d)}_{y} - \bar{\mu}_{f, \rho}^{(d)}}_{L^2}^2 \right]
    = O\left( \frac{\sigma^{2d} \cdot \left( V_n(R) \right)^d}{N} \right), \label{eqn:app_E13}
\end{align}
where $V_n(R)$ is the volume of the radius-$R$ ball in $\mathbb{R}^n$.
Finally, substituting the bound~\eqref{eqn:app_E13} into~\eqref{eqn:app_E10} completes the proof.

\end{proof}

\begin{proof}[Proof of Lemma~\ref{lemma:D3}]
We reformulate the lemma as a likelihood estimation problem over the orbit $G \cdot f$, where $G = \mathsf{SO}(n)$, using the empirical autocorrelations up to order $\bar{d}$. This allows us to apply classical results from the literature on likelihood estimation, such as those found in~\cite{newey1994chapter}.

Under the assumptions of the proposition, the population autocorrelations $\ppp{A_{f,\rho}^{(d)}}_{d \leq \bar{d}}$, defined in Definition~\ref{def:autoCorrelationNoiseFree}, determine the orbit $G \cdot f$ almost everywhere.
Let $y_i : \mathcal{D} \to \mathbb{R}$, for $i \in \{0, \dots, N-1\}$, be i.i.d. samples from the model specified in Problem~\ref{prob:orbitRecoverySEn}. Let $\mathcal{Y}_N = \{y_i\}_{i=0}^{N-1}$ denote the dataset of observations. Define $a_y^{(d)}$ as the empirical $d$-th order autocorrelation of the observations, as in~\eqref{eqn:autoCorrealtionMomentsRigidMotion}.

Now, define the mapping $h: \mathcal{D}^N \to U$ from the observations $\mathcal{Y}_N$ to the collection of empirical autocorrelations up to order $\bar{d}$, where $U$ is the codomain:
\begin{align}
    h(\mathcal{Y}_N; G \cdot f) = \left(a_y^{(\bar{d})}, a_y^{(\bar{d}-1)}, \dots, a_y^{(1)}\right),
\end{align}
with the dependence on the orbit $G \cdot f$ made explicit through the distribution of $y$.

Let $\hat{Q}_N(\mathcal{Y}_N; G \cdot f)$ denote the distribution of $h(\mathcal{Y}_N; G \cdot f)$, which depends on the number of samples $N$. Then, the problem of estimating the orbit of $f$ reduces to solving the likelihood maximization problem:
\begin{align}
    \widehat{G \cdot f} = \argmax_{G \cdot f,  f \in \Theta} \hat{Q}_N(\mathcal{Y}_N; G \cdot f),
\end{align}
i.e., selecting the orbit that maximizes the likelihood of the observed autocorrelations.

We now verify the standard conditions for consistency of likelihood estimators:
\begin{enumerate}
    \item \textit{Identifiability}: By assumption, the orbit $G \cdot f$ is uniquely determined by the autocorrelations up to order $\bar{d}$. Hence, for any $f \neq f_0$,
    \begin{align}
        \hat{Q}_N(\mathcal{Y}_N; G \cdot f) \neq \hat{Q}_N(\mathcal{Y}_N; G \cdot f_0),
    \end{align}
    which ensures identifiability of the orbit.

    \item \textit{Convergence}: By Corollary~\ref{lemma:D2}, if $N = \omega(\sigma^{2\bar{d}})$, then as $N \to \infty$,
    \begin{align}
        \hat{Q}_N(\mathcal{Y}_N; G \cdot f) \xrightarrow{\text{a.s.}} Q_0(G \cdot f), \label{eqn:app_A40}
    \end{align}
    where $Q_0(G \cdot f)$ denotes the expected autocorrelations:
    \begin{align}
        Q_0(G \cdot f) = \left(\bar{\mu}_{f,\rho}^{(\bar{d})}, \dots, \bar{\mu}_{f,\rho}^{(1)}\right),
    \end{align}
    with $\bar{\mu}_{f,\rho}^{(d)}$ defined in~\eqref{eqn:app_A29}.

    \item \textit{Continuity}: Since the autocorrelations are polynomial functions of $f$, the mapping $f \mapsto \hat{Q}_N(\mathcal{Y}_N; G \cdot f)$ is continuous.

    \item \textit{Compactness}: The parameter space $\Theta$ is assumed to be compact.
\end{enumerate}

These four properties satisfy the conditions of~\cite[Theorem 2.1]{newey1994chapter}, which guarantees consistency of the maximum likelihood estimator. Therefore,
\begin{align}
    \widehat{G \cdot f} \xrightarrow{\mathcal{P}} G \cdot f. \label{eqn:A_41}
\end{align}
Since convergence in~\eqref{eqn:app_A40} holds for $N = \omega(\sigma^{2 \bar{d}})$, the consistency result~\eqref{eqn:A_41} holds under the same condition, which completes the proof.
\end{proof}

\subsection{Proof of Theorem \ref{thm:mainTheoremRigidmotion}}
\label{sec:proofOfmainTheoremRigidmotion}
The lower bound $\omega\left(\sigma^{2d}\right)$ is established in~\cite{balanov2025note}. We now prove the matching upper bound by presenting an explicit algorithmic reduction, given in Algorithm~\ref{alg:redcutiobFromRigidMotiontoSO}. Specifically, Algorithm~\ref{alg:redcutiobFromRigidMotiontoSO} describes how, in the asymptotic regime $N \to \infty$, one can extract $d$-th order $\mathsf{SO}(n)$ moment (Problem~\ref{prob:orbitRecoverySOn}) from the $\mathsf{SE}(n)$ empirical autocorrelations of orders up to $d+2$ (Problem~\ref{prob:orbitRecoverySEn}).

Specifically, Theorem~\ref{thm:reductionFromAutocorrelationToTensorMoment} (whose assumptions are satisfied under the conditions of the theorem) shows that the $d$-th order  $\mathsf{SO}(n)$ moment, $M_{f,\rho}^{(d)}$, can be recovered from the second-order and $(d+2)$-order population autocorrelations, $A^{(2)}_{f,\rho}$ and $A^{(d+2)}_{f,\rho}$.
By assumption, $d$ is the minimal $\mathsf{SO}(n)$ moment order required to uniquely determine the orbit of the signal $f$. Therefore, Theorem~\ref{thm:reductionFromAutocorrelationToTensorMoment} implies that the $(d+2)$-order population autocorrelation suffices to uniquely determine the orbit of $f$.

Furthermore, Proposition~\ref{thm:sampleComplexityRigidMotion} (which is applicable under our assumptions) asserts that if the population autocorrelations up to order $\bar{d}$ uniquely determine the orbit of the signal (in the sense of Definition~\ref{def:autoCorrelationNoiseFree}), then the sample complexity of orbit recovery in the $\mathsf{SE}(n)$ model is upper bounded by $\omega(\sigma^{2\bar{d}})$.
Since Theorem~\ref{thm:reductionFromAutocorrelationToTensorMoment} guarantees that the $(d+2)$-order population autocorrelation uniquely determines the orbit of $f$, applying Proposition~\ref{thm:sampleComplexityRigidMotion} with $\bar{d} = d+2$ yields the desired upper bound $\omega(\sigma^{2d+4})$.

\begin{algorithm}[t!]
  \caption{\texttt{Extraction of $\mathsf{SO}(n)$ moment from $\mathsf{SE}(n)$ empirical autocorrelation}
  \label{alg:redcutiobFromRigidMotiontoSO}}
  \textbf{Input:} Observations $\mathcal{Y}_N = \{y_i\}_{i \in [N]}$ drawn according to Problem~\ref{prob:orbitRecoverySEn}, where the underlying signal $f$ satisfies Assumptions~\ref{assum:support} and \ref{assum:nonVanishingSupport}. \\
\textbf{Output:} $d$-th order $\mathsf{SO}(n)$ moment, $M_{f, \rho}^{(d)}$ (Definition~\ref{def:mraTensorMoment}).
\begin{enumerate}
    \item Compute the empirical autocorrelations of the observations $\mathcal{Y}_N$ up to order $d+2$:
    \[
        \left\{ a_y^{(1)}, a_y^{(2)}, \dots, a_y^{(d+2)} \right\},
    \]
    as defined in Definition~\ref{def:autoCorrelationMomentsRigidMotion}.
    \item For $N \to \infty$, extract the population autocorrelations, as defined in  Definition~\ref{def:autoCorrelationNoiseFree},
    \[
        \left\{ A_{f, \rho}^{(1)}, A_{f, \rho}^{(2)}, \dots, A_{f, \rho}^{(d+2)} \right\}
    \]
    from the empirical autocorrelations using ~\eqref{eqn:propRigidMotionEquivalence} and~\eqref{eqn:noiseCorrectionFunctionMain} .
    \item Recover the $d$-th order $\mathsf{SO}(n)$ moment $M_{f, \rho}^{(d)}$ from the population autocorrelations $A_{f, \rho}^{(d+2)}$ and $A_{f, \rho}^{(2)}$, following the procedure in Theorem~\ref{thm:reductionFromAutocorrelationToTensorMoment}.
\end{enumerate}
\end{algorithm}

\section{Statistical analysis for MTD}
\label{sec:MTD-statistical-analysis}

\subsection{Proof of Proposition \ref{thm:prop0}} \label{sec:proofOfProp0}
Let $z \in [-MR, MR]^n$ be an observation sampled from the well-separated MTD model~\eqref{eqn:MTDmodelGeneral}. The locations ${\bm{x}_0, \bm{x}_1, \bm{x}_2, \ldots, \bm{x}_{N-1}}$ denote the centers of the signals $\ppp{f_i}_{i=0}^{N-1}$ within $z$, where each translation convolution operator is defined by the Dirac delta function $s_i(\bm{x}) = \delta(\bm{x} - \bm{x}_i)$ for $i \in \pp{N}$. According to the well-separated assumption (Definition~\ref{def:wellSeperatedModel}), the pairwise distances satisfy $|\bm{x}_{m_1} - \bm{x}_{m_2}| \geq 4R + \epsilon$ for all $m_1 \ne m_2$.

Fix $\bm{w} \in \mathcal{B}_{2R}^{(n)}$, i.e., a point in the $n$-dimensional ball of radius $2R$. Define the collection of signals $h_i^{(\bm{w})} : \mathcal{B}_R^{(n)} \to \mathbb{R}$, for $i \in [N]$, which represent disjoint regions of the observation $z$
\begin{align}
    h_i^{(\bm{w})} \p{\bm{x}} \triangleq z\p{\bm{x} - (\bm{x}_i -{\bm{w})}},
    \label{eqn:seqAi}
\end{align}
for $\bm{w} \in \mathcal{B}_R^{(n)}$. That is, each $h_i^{(\bm{w})}$ is supported on an $n$-dimensional ball of radius $R$, centered at $\bm{x}_i - \bm{w}$ in the observation $z$.

Then, we have the following straightforward lemma that characterizes the spatial disjointness of these signals. 
    
\begin{lem} \label{lemma:A1} 
    Assume the well-separated MTD model (Definition \ref{def:wellSeperatedModel}). Then, for every $i_1 \neq i_2$, and for every $\bm{w} \in \mathcal{B}_{2R}^{(n)}$,  
    \begin{align}
         \mathcal{B}_{R}^{(n)} \p{\bm{x_{i_1}} + \bm{w}} \  \bigcap \ \mathcal{B}_{R}^{(n)} \p{\bm{x}_{i_2}} = \emptyset,  
    \end{align}
    and 
    \begin{align}
         \mathcal{B}_{R}^{(n)} \p{\bm{x_{i_1}} + \bm{w}} \  \bigcap \ \mathcal{B}_{R}^{(n)} \p{\bm{x}_{i_2} + \bm{w}} = \emptyset.
    \end{align}
\end{lem}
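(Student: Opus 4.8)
The plan is to prove Lemma~\ref{lemma:A1} as a direct geometric consequence of the triangle inequality together with the well-separation hypothesis. First I would recall the setup: under Definition~\ref{def:wellSeperatedModel} in the well-separated case, any two distinct centers satisfy $\norm{\bm{x}_{i_1} - \bm{x}_{i_2}}_F \geq 4R + \epsilon$ for some $\epsilon > 0$, and $\bm{w}$ is an arbitrary point in $\mathcal{B}_{2R}^{(n)}$, i.e.\ $\norm{\bm{w}} \leq 2R$. Two open balls $\mathcal{B}_R^{(n)}(\bm{p})$ and $\mathcal{B}_R^{(n)}(\bm{q})$ are disjoint whenever $\norm{\bm{p} - \bm{q}} \geq 2R$, so it suffices to lower-bound the distance between the relevant centers in each of the two claimed statements.

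For the first statement, the centers are $\bm{x}_{i_1} + \bm{w}$ and $\bm{x}_{i_2}$. By the reverse triangle inequality,
\begin{align}
  \norm{(\bm{x}_{i_1} + \bm{w}) - \bm{x}_{i_2}}
  \geq \norm{\bm{x}_{i_1} - \bm{x}_{i_2}} - \norm{\bm{w}}
  \geq (4R + \epsilon) - 2R = 2R + \epsilon > 2R,
\end{align}
which gives disjointness of the two balls. For the second statement, the centers are $\bm{x}_{i_1} + \bm{w}$ and $\bm{x}_{i_2} + \bm{w}$; since the same shift $\bm{w}$ is applied to both, the difference is simply $\bm{x}_{i_1} - \bm{x}_{i_2}$, whose norm is $\geq 4R + \epsilon > 2R$, so again the balls are disjoint. (Strictly, one should note the minor discrepancy that the lemma writes $\mathcal{B}_R^{(n)}(\bm{x}_{i_1} + \bm{w})$ whereas the definition~\eqref{eqn:seqAi} of $h_i^{(\bm{w})}$ places the support at $\bm{x}_i - \bm{w}$; this sign is immaterial because $\bm{w}$ ranges over the symmetric ball $\mathcal{B}_{2R}^{(n)}$, so replacing $\bm{w}$ by $-\bm{w}$ changes nothing in the bounds.)

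There is essentially no obstacle here — the statement is called ``straightforward'' in the excerpt for good reason, and the entire content is the two triangle-inequality estimates above. The only thing to be slightly careful about is bookkeeping with the radii: one must use the full separation constant $4R + \epsilon$ (not merely the non-overlapping $2R$), since we are translating balls of radius $R$ by displacements of size up to $2R$, and we need the resulting centers to still be separated by at least $2R$. The ``well-separated'' rather than ``non-overlapping'' hypothesis is exactly what provides the needed slack: $4R + \epsilon - 2R = 2R + \epsilon$, and the extra $\epsilon$ even yields strict separation, which is convenient when passing to limits or unions in the downstream proof of Proposition~\ref{thm:prop0}. I would present the proof in two short displayed inequalities, one per claim, and conclude that disjointness follows since balls with centers at distance $\geq 2R$ do not intersect.
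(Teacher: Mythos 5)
Your proof is correct and takes essentially the same route as the paper: both arguments combine the well-separation bound $\norm{\bm{x}_{i_1}-\bm{x}_{i_2}}\ge 4R+\epsilon$ with the triangle inequality and the bound $\norm{\bm{w}}\le 2R$; you phrase it as a lower bound of $2R+\epsilon$ on the distance between ball centers, while the paper equivalently shows any two points of the respective balls are at positive distance. Your observation about the sign of $\bm{w}$ in~\eqref{eqn:seqAi} is apt and correctly dismissed by the symmetry of $\mathcal{B}_{2R}^{(n)}$.
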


Lemma~\ref{lemma:A1} ensures that under the well-separated assumption, the domain of $h_{i_1}^{(\bm{w})}$ contains only the signal instance $f_{i_1}$ and does not overlap with any other signal $f_{i_2}$ for $i_2 \ne i_1$. Furthermore, the domains of $h_{i_1}^{(\bm{w})}$ and $h_{i_2}^{(\bm{w})}$ are disjoint for all $i_1 \ne i_2$. We next show that the sequence $\{h_i^{(\bm{w})}\}_{i \in \pp{N}}$ is i.i.d. for every fixed $\bm{w} \in \mathcal{B}_{2R}^{(n)}$.

\begin{lem} \label{lemma:A2} 
Assume the well-separated MTD model (Definition \ref{def:wellSeperatedModel}). Define the signal $\tilde{f}_\mathcal{R} : \mathcal{B}_{3R}^{(n)} \to \mathbb{R}$, as follows:
\begin{align}
    \tilde{f}_\mathcal{R}(\bm{x}) = f_\mathcal{R}(\bm{x}) \mathbf{1}_{\|\bm{x}\| \leq R} + 0 \cdot \mathbf{1}_{R < \bm{\|x\|} \leq 3R}, \label{eqn:A2}
\end{align}
which is the padding of $f_\mathcal{R} = \mathcal{R} \cdot f$ to a ball with radius $3R$.
Then, for every fixed  $\bm{w} \in \mathcal{B}_{2R}^{(n)}$, the sequence  
of signals ${h_i^{(\bm{w})}}$, for $i \in \pp{N}$, as defined in \eqref{eqn:seqAi}, is i.i.d., and follows the distribution of $h^{(\bm{w})}: \mathcal{B}_R^{(n)} \to \mathbb{R}$,
\begin{align}
    h^{(\bm{w})} \p{\bm{x}} = \tilde{f}_\mathcal{R} \p{\bm{x} + \bm{w}} + \xi \p{\bm{x} + \bm{w}}, \label{eqn:ArV}
\end{align}
where $\mathcal{R} \in \mathsf{SO}(n)$ is drawn according to the distribution law $\rho$, and $\xi : \mathcal{B}_{3R}^{(n)} \to \mathbb{R}$ is a white noise with variance $\sigma^2$. 
\end{lem}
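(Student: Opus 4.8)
The plan is to prove Lemma~\ref{lemma:A2} in two stages: first establish that each $h_i^{(\bm{w})}$ has the distribution of $h^{(\bm{w})}$ in~\eqref{eqn:ArV}, and then establish mutual independence across $i\in[N]$ for each fixed $\bm{w}\in\mathcal{B}_{2R}^{(n)}$.

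\textbf{Distributional identity.} Fix $i\in[N]$ and $\bm{w}\in\mathcal{B}_{2R}^{(n)}$. By definition~\eqref{eqn:seqAi}, $h_i^{(\bm{w})}(\bm{x}) = z(\bm{x} - (\bm{x}_i - \bm{w}))$ for $\bm{x}\in\mathcal{B}_R^{(n)}$, which means we are reading off the window of the observation $z$ centered at $\bm{x}_i-\bm{w}$. Substituting the MTD model $z = \sum_{j=0}^{N-1} s_j * f_j + \xi$ with $s_j(\bm{x})=\delta(\bm{x}-\bm{x}_j)$ gives $z(\bm{x}') = \sum_j f_j(\bm{x}' - \bm{x}_j) + \xi(\bm{x}')$. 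Evaluating at $\bm{x}' = \bm{x} + \bm{w} - \bm{x}_i + \bm{x}_i$... more carefully, $h_i^{(\bm w)}(\bm x) = \sum_{j} f_j\big(\bm{x} - (\bm{x}_i-\bm{w}) - \bm{x}_j\big) + \xi\big(\bm{x}-(\bm{x}_i-\bm{w})\big)$. For the $j=i$ term the argument is $\bm{x} + \bm{w} - \bm{x}_i$ shifted appropriately, i.e. $f_i(\bm{x}+\bm{w})$ once one re-centers; here I would invoke Lemma~\ref{lemma:A1} to kill every $j\neq i$ term: since $\bm{x}\in\mathcal{B}_R^{(n)}$, the point $\bm{x} - (\bm{x}_i-\bm{w})$ ranges over $\mathcal{B}_R^{(n)}(\bm{x}_i - \bm{w})$, which by the first display of Lemma~\ref{lemma:A1} is disjoint from $\mathcal{B}_R^{(n)}(\bm{x}_j)$ where $f_j$ is supported. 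Hence $h_i^{(\bm w)}(\bm x) = f_i(\bm x + \bm w) + \xi(\bm x - (\bm x_i - \bm w))$. Writing $f_i = \mathcal{R}_i\cdot f = f_{\mathcal{R}_i}$, extending by zero to $\mathcal{B}_{3R}^{(n)}$ gives $\tilde f_{\mathcal{R}_i}(\bm x+\bm w)$ (the zero-padding is harmless since $\bm{x}+\bm{w}\in\mathcal{B}_{3R}^{(n)}$ and $f_{\mathcal{R}_i}$ vanishes outside $\mathcal{B}_R^{(n)}$). Finally, by the translation invariance of the law of spatial white noise, $\xi(\cdot - (\bm{x}_i-\bm{w}))$ has the same law as $\xi(\cdot + \bm{w})$ restricted to $\mathcal{B}_R^{(n)}$; since $\mathcal{R}_i\sim\rho$, we conclude $h_i^{(\bm w)} \stackrel{d}{=} h^{(\bm w)}$ as in~\eqref{eqn:ArV}.

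\textbf{Mutual independence.} For independence I would note that $h_i^{(\bm w)}$ is a measurable function of (a) the rotation $\mathcal{R}_i$ and (b) the noise field $\xi$ restricted to the ball $\mathcal{B}_R^{(n)}(\bm{x}_i - \bm{w})$. By the second display of Lemma~\ref{lemma:A1}, the balls $\mathcal{B}_R^{(n)}(\bm{x}_{i_1}-\bm{w})$ and $\mathcal{B}_R^{(n)}(\bm{x}_{i_2}-\bm{w})$ are disjoint for $i_1\neq i_2$, so by the independent-increments property of the Gaussian white-noise random measure (Definition~\ref{def:nparamBM}(3)), the restrictions of $\xi$ to these disjoint regions are mutually independent. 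Combined with the assumed i.i.d.\ structure of $\{\mathcal{R}_i\}$ and the independence of $\{\mathcal{R}_i\}$ from $\xi$, the random objects $\{(\mathcal{R}_i, \xi|_{\mathcal{B}_R^{(n)}(\bm{x}_i-\bm{w})})\}_{i\in[N]}$ are mutually independent, hence so are their images $\{h_i^{(\bm w)}\}_{i\in[N]}$.

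\textbf{Main obstacle.} The routine parts (change of variables, translation invariance of white noise) are standard; the one point requiring care is that $h_i^{(\bm w)}$ genuinely depends on $\xi$ only through its restriction to a single ball $\mathcal{B}_R^{(n)}(\bm{x}_i-\bm{w})$ rather than some larger set. This is immediate from the formula $h_i^{(\bm w)}(\bm x) = f_{\mathcal{R}_i}(\bm x+\bm w) + \xi(\bm x - (\bm x_i - \bm w))$ with $\bm{x}\in\mathcal{B}_R^{(n)}$, but one must be careful that the domain on which we defined $h_i^{(\bm w)}$ (the ball $\mathcal{B}_R^{(n)}$) and the well-separation margin ($R_{\mathsf{sep}}\geq 4R$, giving a buffer of $2R$) together guarantee the disjointness needed; this is precisely what Lemma~\ref{lemma:A1} provides, so the proof reduces to invoking it twice and then citing the white-noise independence structure. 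A secondary subtlety is interpreting "restriction of white noise to a region'' rigorously via the white-noise random measure on Borel subsets of that region, which is the formulation already set up in Appendix~\ref{sec:preliminariesToStatisticalPart}.
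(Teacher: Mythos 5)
Your proposal is correct and follows essentially the same route as the paper: identical distribution via re-centering the window on the $i$-th occurrence (the paper phrases this through intermediate patches $p_i(\bm{x})=z(\bm{x}-\bm{x}_i)$ with $h_i^{(\bm{w})}(\bm{x})=p_i(\bm{x}+\bm{w})$, you expand the MTD sum directly and invoke translation invariance of the white-noise law), and mutual independence from the disjointness guaranteed by Lemma~\ref{lemma:A1} together with the independent-increments property of the noise and the i.i.d.\ rotations. The only caveat is the sign bookkeeping in the re-centering step, which you flag but leave as "once one re-centers"; the paper's own proof is no more explicit on this point.
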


To simplify notation, define for any $\bm{x} \in [-MR,MR]^n$
\begin{align}
    F \p{\bm{x}} 
    = z\p{\bm{x} + \bm{\tau}_0}z\p{\bm{x} + \bm{\tau}_1} \ldots z\p{\bm{x} + \bm{\tau}_{d-1}}
 = \prod_{j=0}^{d-1} z\p{\bm{x} + \bm{\tau}_j}, \label{eqn:app_C0}
\end{align}
where $\bm{\tau}_0, \bm{\tau}_1, \bm{\tau}_2, ..., \bm{\tau}_{d-1} \in \mathcal{B}_R^{(n)}$. With this definition, the $d$-th order empirical autocorrelation  \eqref{eqn:autoCorrealtionMoments} can be expressed as,
\begin{align}
     {a}^{(d)}_{z} (\bm{\tau}_0, \bm{\tau}_1, \bm{\tau}_2, ..., \bm{\tau}_{d-1}) 
    = \frac{1}{(2MR)^n} \int_{\bm{x} \in [-MR, MR]^n} F \p{\bm{x}} d \bm{x}, \label{eqn:autoCorrelationInFrep}
\end{align}
Moreover, from the definition of $h_i^{(\bm{w})}$ in~\eqref{eqn:seqAi}, for any $\bm{x} \in \mathcal{B}_{2R}^{(n)}$ and $\bm{\tau}_j \in \mathcal{B}_R^{(n)}$, we have
\begin{align}
    F \p{\bm{x} - \bm{x}_i} 
    &= \prod_{j=0}^{d-1} z\p{\bm{x} - \p{\bm{x}_i - \bm{\tau}_j}} 
    = \prod_{j=0}^{d-1} h_i^{(\bm{x})} (\bm{\tau}_j). \label{eqn:app_I9}
\end{align}

It is important to note that the values ${F(\bm{x})}$ are not generally i.i.d. over $\bm{x}$, so the strong law of large numbers cannot be applied directly to~\eqref{eqn:autoCorrelationInFrep}. To overcome this, we decompose the integral into a sum of independent integrals over disjoint regions, where the integrands are i.i.d., and apply the strong law of large numbers to each term separately.

\begin{lem} \label{lemma:A3}
    Assume the well-separated MTD model, and let $F(\bm{x})$ be defined as in~\eqref{eqn:app_C0}. Then,  for any fixed $\bm{x} \in \mathcal{B}_{2R}^{(n)}$, the sequence $\{F(\bm{x} - \bm{x}_i)\}_{i=0}^{N-1}$ is i.i.d., where $\{\bm{x}_i\}_{i=0}^{N-1}$ are the signal centers.
\end{lem}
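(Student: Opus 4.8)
## Proof proposal for Lemma~\ref{lemma:A3}

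The plan is to reduce the claim to the i.i.d. property of the windowed signals $\{h_i^{(\bm{w})}\}_{i\in[N]}$ already established in Lemma~\ref{lemma:A2}, specialized to the fixed window center $\bm{w}=\bm{x}\in\mathcal{B}_{2R}^{(n)}$. First I would recall the key identity~\eqref{eqn:app_I9}, namely that for any fixed $\bm{x}\in\mathcal{B}_{2R}^{(n)}$ and any shifts $\bm{\tau}_0,\dots,\bm{\tau}_{d-1}\in\mathcal{B}_R^{(n)}$ we have $F(\bm{x}-\bm{x}_i)=\prod_{j=0}^{d-1}h_i^{(\bm{x})}(\bm{\tau}_j)$. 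Thus $F(\bm{x}-\bm{x}_i)$ is a deterministic (measurable) function $\Phi$ of the single windowed observation $h_i^{(\bm{x})}$, where $\Phi(h)=\prod_{j=0}^{d-1}h(\bm{\tau}_j)$ depends only on the fixed data $(\bm{\tau}_0,\dots,\bm{\tau}_{d-1})$ and not on the index $i$. Since measurable functions of i.i.d. random elements are themselves i.i.d., the conclusion follows from Lemma~\ref{lemma:A2}.

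The steps I would carry out, in order, are: (i) fix $\bm{x}\in\mathcal{B}_{2R}^{(n)}$ and the shifts $\bm{\tau}_0,\dots,\bm{\tau}_{d-1}\in\mathcal{B}_R^{(n)}$, and invoke Lemma~\ref{lemma:A1} to guarantee that each window $\mathcal{B}_R^{(n)}(\bm{x}_i-\bm{x})$ inside the observation $z$ is disjoint from every other signal instance and from every other window, so that $h_i^{(\bm{x})}$ depends only on the $i$-th signal copy and the noise restricted to that window; (ii) apply Lemma~\ref{lemma:A2} with $\bm{w}=\bm{x}$ to conclude that $\{h_i^{(\bm{x})}\}_{i\in[N]}$ is i.i.d.\ with the common distribution of $h^{(\bm{x})}$ in~\eqref{eqn:ArV}, using that the rotations $\{\mathcal{R}_i\}$ are i.i.d.\ $\sim\rho$ and that white noise restricted to the disjoint windows yields independent, identically distributed noise fields; (iii) write $F(\bm{x}-\bm{x}_i)=\Phi(h_i^{(\bm{x})})$ with the fixed evaluation map $\Phi(h)=\prod_{j=0}^{d-1}h(\bm{\tau}_j)$, noting that this product is well defined as an $L^2(\Omega)$ random variable by the Wiener–Itô construction of Proposition~\ref{prop:limit-shifted} (the same construction underlying the empirical autocorrelations); (iv) conclude that $\{F(\bm{x}-\bm{x}_i)\}_{i=0}^{N-1}$, being the image of an i.i.d.\ sequence under a single measurable map, is i.i.d.

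The only genuine subtlety — and the step I expect to require the most care — is making precise that $\Phi$ is a \emph{bona fide} measurable function on the path space of the windowed observation, given that $h_i^{(\bm{x})}$ is not a classical function but a sum of a bounded continuous signal and spatial white noise, so the pointwise products $h(\bm{\tau}_j)$ must be interpreted through the regularized multiple Wiener–Itô integral of Appendix~\ref{sec:preliminariesToStatisticalPart}. I would handle this by observing that each regularized approximant $\widetilde{I}_d^{(\varepsilon)}(\cdot)$ restricted to the $i$-th window is a measurable functional of the underlying white-noise measure $W_i$ on that window, and that these functionals are i.i.d.\ across $i$ because the $W_i$'s on the disjoint windows are i.i.d.; passing to the $L^2(\Omega)$ limit (Proposition~\ref{prop:limit-shifted}) preserves both measurability and the i.i.d.\ property, since an $L^2$-limit of i.i.d.\ sequences along a common index is again i.i.d. Everything else is routine bookkeeping: Lemma~\ref{lemma:A1} guarantees the geometric disjointness that makes the windows stochastically independent, and Lemma~\ref{lemma:A2} packages this into the i.i.d.\ statement for the $h_i^{(\bm{x})}$'s. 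This is why the lemma is labeled ``straightforward'' in the text — it is a direct corollary of the two preceding lemmas together with the stochastic-integral framework, and I would present it as such.
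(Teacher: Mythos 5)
Your proposal is correct and follows essentially the same route as the paper: both rest on the identity $F(\bm{x}-\bm{x}_i)=\prod_{j=0}^{d-1}h_i^{(\bm{x})}(\bm{\tau}_j)$ from~\eqref{eqn:app_I9} and the observation that a fixed measurable map applied to the i.i.d.\ sequence $\{h_i^{(\bm{x})}\}$ of Lemma~\ref{lemma:A2} yields an i.i.d.\ sequence. Your added care about interpreting the pointwise products through the Wiener--It\^o framework of Proposition~\ref{prop:limit-shifted} is a reasonable elaboration that the paper leaves implicit, but it does not change the argument.
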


As a consequence of Lemma~\ref{lemma:A3}, we have the following result:

\begin{lem} \label{lemma:A4} 
Assume the well-separated MTD model, and recall the definition of $F\p{\bm{x}}$ in \eqref{eqn:app_C0}. 
Then,  
    \begin{align}
        \frac{1}{(2MR)^n} & \int_{\bm{x} \in [-MR, MR]^n}  F \p{\bm{x}} d \bm{x}
        \xrightarrow{\s{a.s.}} \nonumber
        \\ &\frac{\gamma}{V_n\p{2R}} \int_{\bm{x} \in \mathcal{B}_{2R}^{(n)}} \mathbb{E}_{\mathcal{R} \sim \rho, \xi} \ppp{F\p{\bm{x} - \bm{x}_0}} \ d \bm{x} \label{eqn:app_A10}
         \\ & + \p{1 - \gamma} \mathbb{E} \pp{\xi \p{\bm{\tau}_0} \xi  \p{\bm{\tau}_1} ...\xi \p{\bm{\tau}_{d-1}}}\nonumber, \label{eqn:app_A11}
    \end{align}
as $N, M \to \infty$. 
\end{lem}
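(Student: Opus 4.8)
The plan is to show that, in the joint regime $N,M\to\infty$ with the density $\gamma$ fixed, the normalized integral $\frac{1}{(2MR)^n}\int_{[-MR,MR]^n}F(\bm{x})\,d\bm{x}$ separates into a density‑weighted signal average and a complement‑weighted pure‑noise average. I begin with a geometric decomposition. Well‑separation with $R_{\mathsf{sep}}>4R$ (via Lemma~\ref{lemma:A1}) guarantees that the enlarged windows $\{\mathcal{B}_{2R}^{(n)}(\bm{x}_i)\}_{i=0}^{N-1}$ around the signal centers are pairwise disjoint and that, for $\bm{x}\in\mathcal{B}_{2R}^{(n)}(\bm{x}_i)$ and $\bm{\tau}_j\in\mathcal{B}_R^{(n)}$, the sampled point $\bm{x}+\bm{\tau}_j$ lies in $\mathcal{B}_{3R}^{(n)}(\bm{x}_i)$, a region containing the support of $f_i$ and of no other signal. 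Writing $\mathcal{N}\triangleq[-MR,MR]^n\setminus\bigcup_i\mathcal{B}_{2R}^{(n)}(\bm{x}_i)$, I split
\begin{align*}
 \frac{1}{(2MR)^n}\int_{[-MR,MR]^n}\! F(\bm{x})\,d\bm{x}
 = \underbrace{\frac{1}{(2MR)^n}\sum_{i=0}^{N-1}\int_{\mathcal{B}_{2R}^{(n)}(\bm{x}_i)}\! F(\bm{x})\,d\bm{x}}_{S_{N,M}}
 \;+\; \underbrace{\frac{1}{(2MR)^n}\int_{\mathcal{N}}\! F(\bm{x})\,d\bm{x}}_{\mathcal{E}_{N,M}},
\end{align*}
and on $\mathcal{N}$ one has $\|\bm{x}-\bm{x}_i\|>2R$ for every $i$, which forces $\bm{x}+\bm{\tau}_j$ out of every signal support, so there $F(\bm{x})=\prod_j\xi(\bm{x}+\bm{\tau}_j)$ is a purely noise functional.

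For $S_{N,M}$, the substitution $\bm{x}=\bm{x}_i+\bm{w}$ gives $S_{N,M}=\frac{N}{(2MR)^n}\cdot\frac1N\sum_i V_i$ with $V_i\triangleq\int_{\mathcal{B}_{2R}^{(n)}}F(\bm{x}_i+\bm{w})\,d\bm{w}$, and $\frac{N}{(2MR)^n}=\frac{\gamma}{V_n(2R)}$ by~\eqref{eqn:densityMTDmodel}. By Lemmas~\ref{lemma:A2}--\ref{lemma:A3} together with translation invariance of the noise, the $V_i$ are identically distributed with common mean $\int_{\mathcal{B}_{2R}^{(n)}}\mathbb{E}_{\mathcal{R}\sim\rho,\xi}[F(\bm{x}-\bm{x}_0)]\,d\bm{x}$ (using~\eqref{eqn:app_I9} and Lemma~\ref{lemma:A2} to identify the integrand), and with finite variance by the $L^2$ construction of the stochastic integrals (Proposition~\ref{prop:limit-shifted}). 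They are not independent---two windows can share noise when $\|\bm{x}_i-\bm{x}_{i'}\|\in(4R,6R)$---but $\mathrm{Cov}(V_i,V_{i'})=0$ once $\mathcal{B}_{3R}^{(n)}(\bm{x}_i)\cap\mathcal{B}_{3R}^{(n)}(\bm{x}_{i'})=\emptyset$ (then $V_i,V_{i'}$ are measurable functions of disjoint, hence independent, pieces of the observation), and a packing argument (the disjoint windows $\mathcal{B}_{2R}^{(n)}(\bm{x}_i)$ lying within distance $6R$ of a fixed center fit inside a ball of volume $4^nV_n(2R)$) shows that at most $O(N)$ ordered pairs fail this. Hence $\mathrm{Var}\big(\tfrac1N\sum_i V_i\big)=O(1/N)$, and with the exact mean this gives $L^2$---and, by Chebyshev and Borel--Cantelli, almost sure---convergence of $S_{N,M}$ to $\frac{\gamma}{V_n(2R)}\int_{\mathcal{B}_{2R}^{(n)}}\mathbb{E}_{\mathcal{R}\sim\rho,\xi}[F(\bm{x}-\bm{x}_0)]\,d\bm{x}$.

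For $\mathcal{E}_{N,M}$, I read $\int_{\mathcal{N}}\prod_j\xi(\bm{x}+\bm{\tau}_j)\,d\bm{x}$ through the mollified multiple Wiener--It\^o construction of Appendix~\ref{sec:preliminariesToStatisticalPart}: its mean equals $\frac{|\mathcal{N}|}{(2MR)^n}\,\mathbb{E}[\xi(\bm{\tau}_0)\cdots\xi(\bm{\tau}_{d-1})]$ by translation invariance of white noise, and its variance is $O\big(\sigma^{2d}|\mathcal{N}|/(2MR)^{2n}\big)=O(\sigma^{2d}/(2MR)^n)$ by the It\^o isometry. Since $|\mathcal{N}|/(2MR)^n=1-NV_n(2R)/(2MR)^n+O((MR)^{-1})\to1-\gamma$, Chebyshev and Borel--Cantelli give a.s. convergence of $\mathcal{E}_{N,M}$ to $(1-\gamma)\,\mathbb{E}[\xi(\bm{\tau}_0)\cdots\xi(\bm{\tau}_{d-1})]$. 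Adding the two limits---and noting that the $O((MR)^{n-1})=o((2MR)^n)$ windows clipped by the boundary of $[-MR,MR]^n$ contribute negligibly after normalization---yields the claimed almost‑sure identity.

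The main obstacle is the noise term: because spatial white noise is a distribution, $\prod_j\xi(\bm{x}+\bm{\tau}_j)$ exists only as a limit of mollified multiple Wiener--It\^o integrals, so the mean and variance estimates that drive its concentration have to be extracted from that construction and be uniform in the shifts $\bm{\tau}_j$; moreover, the exactness of the domain decomposition is tied to the precise threshold $R_{\mathsf{sep}}>4R$ through Lemma~\ref{lemma:A1}. The signal‑side law of large numbers is routine once the disjoint‑support/i.i.d.\ structure of Lemmas~\ref{lemma:A1}--\ref{lemma:A3} is in hand, apart from the standard technicality of upgrading $L^2$ (or in‑probability) convergence to almost‑sure convergence via Borel--Cantelli along a subsequence in $M$, with an interpolation step when $n=1$ (where $\sum_M M^{-n}$ diverges).
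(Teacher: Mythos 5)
Your proposal is correct and uses the same window-versus-complement decomposition as the paper (disjoint balls $\mathcal{B}_{2R}^{(n)}(\bm{x}_i)$ carrying the fraction $\gamma$ of the volume, plus a pure-noise complement carrying $1-\gamma$), but the law-of-large-numbers step for the signal term is executed in a genuinely different order. The paper exchanges the finite sum with the integral over the window coordinate and applies the SLLN for each \emph{fixed} $\bm{x}\in\mathcal{B}_{2R}^{(n)}$, where Lemma~\ref{lemma:A3} gives exact independence because the sampled $R$-balls are disjoint across $i$ for a fixed offset; only afterwards is the limit integrated via dominated convergence. You integrate first, obtaining the window functionals $V_i$, and correctly observe that these are \emph{not} independent under the $4R$ separation (their $\mathcal{B}_{3R}^{(n)}$-supports can overlap when $\|\bm{x}_i-\bm{x}_{i'}\|\in(4R,6R)$); you then recover the law of large numbers through a zero-covariance-plus-packing bound yielding $\mathrm{Var}\bigl(N^{-1}\sum_i V_i\bigr)=O(1/N)$. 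Both routes are valid. The paper's ordering buys exact independence at the cost of an uncountable family of $\bm{x}$-dependent null sets (handled implicitly by Fubini) and of pointwise evaluation of the noise, which is only formal; your ordering keeps everything at the level of well-defined $L^2$ Wiener--It\^o functionals but pays with the covariance argument and with the standard subsequence/maximal-inequality upgrade from $O(1/N)$ variance to almost-sure convergence --- direct Borel--Cantelli does not apply since $\sum_N 1/N$ diverges, a point you flag at the end but state too casually in the body. Your treatment of the complement term via the It\^o isometry and an explicit variance bound is also more justified than the paper's one-line appeal to ``the strong law again'' on a region that is not presented as a disjoint union of i.i.d.\ cells.
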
 

The first term in~\eqref{eqn:app_A10} can be expressed in terms of the autocorrelation ensemble mean (Definition \ref{def:autoCorrelationMomentsEnsembele}):
\begin{lem} \label{lemma:A5} 
Assume the well-separated MTD model. Recall the definition of the autocorrelation ensemble mean $\bar{a}^{(d)}_{Y,\rho}$ (Definition \ref{def:autoCorrelationMomentsEnsembele}). Then, 
\begin{align}
    \frac{\gamma}{V_n\p{2R}} \int_{\bm{x} \in \mathcal{B}_{2R}^{(n)}} \mathbb{E}_{\mathcal{R} \sim \rho, \xi} \ppp{F\p{\bm{x} - \bm{x}_0}} \ d \bm{x}  = \gamma \cdot  \bar{a}^{(d)}_{Y,\rho} (\bm{\tau}_0, \bm{\tau}_1, \bm{\tau}_2, ..., \bm{\tau}_{d-1}).
\end{align}
\end{lem}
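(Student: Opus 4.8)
The plan is to read Lemma~\ref{lemma:A5} as a direct bookkeeping identity between the two definitions, for which the substantive work has already been carried out in Lemmas~\ref{lemma:A1} and~\ref{lemma:A2}. First I would invoke the pointwise identity~\eqref{eqn:app_I9}, namely $F(\bm{x} - \bm{x}_0) = \prod_{j=0}^{d-1} h_0^{(\bm{x})}(\bm{\tau}_j)$, which holds for every $\bm{x} \in \mathcal{B}_{2R}^{(n)}$ and all $\bm{\tau}_0, \dots, \bm{\tau}_{d-1} \in \mathcal{B}_R^{(n)}$ under the well-separated assumption. This rewrites the integrand on the left-hand side entirely in terms of the window signal $h_0^{(\bm{x})}$, which is exactly the object whose law is identified in Lemma~\ref{lemma:A2}.

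Next I would apply Lemma~\ref{lemma:A2}: for each fixed $\bm{w} = \bm{x} \in \mathcal{B}_{2R}^{(n)}$, the random function $h_0^{(\bm{x})}$ has the same distribution as $h^{(\bm{x})}(\cdot) = \tilde{f}_{\mathcal{R}}(\cdot + \bm{x}) + \xi(\cdot + \bm{x})$, where $\mathcal{R} \sim \rho$ and $\xi$ is white noise of variance $\sigma^2$ on $\mathcal{B}_{3R}^{(n)}$. Evaluating at $\bm{\tau}_j$ and recalling that $Y_{\mathcal{R}}(\bm{x}) = \tilde{f}_{\mathcal{R}}(\bm{x}) + \xi(\bm{x})$ from Definition~\ref{def:autoCorrelationMomentsEnsembele}, the random vector $\big(h_0^{(\bm{x})}(\bm{\tau}_0),\dots,h_0^{(\bm{x})}(\bm{\tau}_{d-1})\big)$ has the same joint distribution as $\big(Y_{\mathcal{R}}(\bm{x}+\bm{\tau}_0),\dots,Y_{\mathcal{R}}(\bm{x}+\bm{\tau}_{d-1})\big)$. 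Taking expectations over $\mathcal{R}$ and $\xi$ therefore gives
\[
    \mathbb{E}_{\mathcal{R} \sim \rho, \xi}\ppp{F(\bm{x} - \bm{x}_0)}
    = \mathbb{E}_{\mathcal{R} \sim \rho, \xi}\ppp{\prod_{j=0}^{d-1} Y_{\mathcal{R}}(\bm{x} + \bm{\tau}_j)},
    \qquad \bm{x} \in \mathcal{B}_{2R}^{(n)}.
\]

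Finally I would integrate this identity over $\bm{x} \in \mathcal{B}_{2R}^{(n)}$, divide by $V_n(2R)$, and multiply by $\gamma$; by the definition of the autocorrelation ensemble mean in~\eqref{eqn:autoCorrealtionMomentsEnsembele}, the resulting right-hand side is precisely $\gamma \cdot \bar{a}^{(d)}_{Y,\rho}(\bm{\tau}_0, \dots, \bm{\tau}_{d-1})$, which is the claim.

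I do not expect a genuine obstacle here. The only points requiring care are that the restriction of the ambient white noise in $z$ to a window of radius $R$ must reproduce a fresh, statistically identical copy of the white noise appearing in the ensemble model $Y_{\mathcal{R}}$, and that the zero-padding of $\tilde{f}_{\mathcal{R}}$ to $\mathcal{B}_{3R}^{(n)}$ makes the shifted evaluations $Y_{\mathcal{R}}(\bm{x}+\bm{\tau}_j)$ well defined for all $\bm{x}\in\mathcal{B}_{2R}^{(n)}$, $\bm{\tau}_j\in\mathcal{B}_R^{(n)}$. Both facts are exactly what Lemmas~\ref{lemma:A1}–\ref{lemma:A2} supply (via disjointness of the windows and stationarity of white noise), so Lemma~\ref{lemma:A5} follows by direct substitution and a single change of variables.
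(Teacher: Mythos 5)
Your proposal is correct and follows essentially the same route as the paper's proof: rewrite $F(\bm{x}-\bm{x}_0)$ via~\eqref{eqn:app_I9} as a product of window-signal evaluations, identify the law of $h_0^{(\bm{x})}$ with that of $Y_{\mathcal{R}}(\bm{x}+\cdot)$ using Lemma~\ref{lemma:A2}, and match the integrated expectation to Definition~\ref{def:autoCorrelationMomentsEnsembele}. The two caveats you flag (stationarity of the restricted white noise and well-definedness of the shifted evaluations via zero-padding) are exactly the points the paper delegates to Lemmas~\ref{lemma:A1}--\ref{lemma:A2}, so nothing is missing.
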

Combining~\eqref{eqn:autoCorrelationInFrep} with Lemmas~\ref{lemma:A4} and~\ref{lemma:A5} completes the proof. It remains to prove the lemmas. 

\subsubsection{Proof of Lemma~\ref{lemma:A2}}
By the construction of the MTD model in \eqref{eqn:MTDmodelGeneral}, the observed stochastic process is generated by two independent components: (i) the group elements $\{g_i\}_{i=0}^{N-1} \subset G$, which are drawn independently from a common distribution $\rho$ and act on the signal $f$, and (ii) additive Gaussian white noise $\xi$. The signal $f$ and the locations $\{s_i\}_{i=0}^{N-1}$ are deterministic but unknown. Therefore, to analyze the statistical structure of the sequence $\{h_i^{(\bm{w})}\}_{i=0}^{N-1}$ defined in \eqref{eqn:seqAi}, it suffices to show that these are i.i.d. with respect to the randomness introduced by the group actions and the additive noise.

\paragraph{Independence of $h_i^{(\bm{w})}$.}
By Lemma~\ref{lemma:A1}, for every pair $i_1 \neq i_2$, the domains of $h_{i_1}^{(\bm{w})}$ and $h_{i_2}^{(\bm{w})}$ are disjoint. Consequently, these functions depend on disjoint regions of the observation $z$ and are affected by independent realizations of the additive white noise $\xi$ and group elements $g_{i_1}$ and $g_{i_2}$. Since, by definition of the MTD model, the group elements and the noise are independent across instances, it follows that $h_{i_1}^{(\bm{w})}$ and $h_{i_2}^{(\bm{w})}$ are independent.

\paragraph{Identical distribution.}
Let $z \in [-MR, MR]^n$ denote an observation from the well-separated MTD model \eqref{eqn:MTDmodelGeneral}. Let $\{\bm{x}_0, \bm{x}_1, \dots, \bm{x}_{N-1}\}$ denote the centers of the individual signal instances $\{f_i\}_{i=0}^{N-1}$ within $z$, such that $s_i(\bm{x}) = \delta(\bm{x} - \bm{x}_i)$ for each $i \in [N]$. 

Define the shifted functions $\{p_i\}_{i=0}^{N-1} : \mathcal{B}_{3R}^{(n)} \to \mathbb{R}$ by
\begin{align}
    p_i(\bm{x}) \triangleq z(\bm{x} - \bm{x}_i),
\end{align}
so that $p_i$ represents the local patch of the observation centered at the $i$-th instance location $\bm{x}_i$. That is, $p_i$ is supported on a ball of radius $3R$ centered at the origin and corresponds to the portion of $z$ around the $i$-th signal instance $f_i$.

Recall the signal function $\tilde{f}_\mathcal{R}$ defined in \eqref{eqn:A2}:
\begin{align}
    \tilde{f}_\mathcal{R}(\bm{x}) = f_\mathcal{R}(\bm{x}) \cdot \mathbf{1}_{\|\bm{x}\| \leq R} + 0 \cdot \mathbf{1}_{R < \|\bm{x}\| \leq 3R}.
\end{align}
Then, under the well-separated MTD model (Definition~\ref{def:wellSeperatedModel}), each $p_i$ is distributed as
\begin{align}
    p(\bm{x}) \triangleq \tilde{f}_\mathcal{R}(\bm{x}) + \xi(\bm{x}), \qquad \bm{x} \in \mathcal{B}_{3R}^{(n)}, \label{eqn:A22}
\end{align}
where $\mathcal{R} \sim \rho$ is a random group element of $\mathsf{SO}(n)$ and $\xi$ is additive Gaussian white noise. That is, $p_i$ represents a noisy observation of a randomly rotated version of the signal $f$. 

Now, by the definition of $h_i^{(\bm{w})}$ in \eqref{eqn:seqAi}, we have
\begin{align}
    h_i^{(\bm{w})}(\bm{x}) = p_i(\bm{x} + \bm{w}), \qquad \bm{x} \in \mathcal{B}_R^{(n)}. \label{eqn:A23}
\end{align}
Combining \eqref{eqn:A22} and \eqref{eqn:A23}, we see that $h_i^{(\bm{w})}$ is distributed as
\begin{align}
    h^{(\bm{w})}(\bm{x}) \overset{d}{=} \tilde{f}_g(\bm{x} + \bm{w}) + \xi(\bm{x} + \bm{w}), \qquad \bm{x} \in \mathcal{B}_R^{(n)},
\end{align}
which is precisely the distribution defined in \eqref{eqn:ArV}. Therefore, each $h_i^{(\bm{w})}$ is identically distributed.

\begin{remark}
Although the distributions of the centered patches $p_i$ are identical, the functions $\{p_i\}$ are not mutually independent, as they are extracted from overlapping regions of the MTD observation $z$ and thus share common noise components. In contrast, the functions $\{h_i^{(\bm{w})}\}$ are constructed from disjoint regions of $z$, ensuring that they are both identically distributed and mutually independent.
\end{remark}

\subsubsection{Proof of Lemma~\ref{lemma:A3}}
By the definition of $F(\bm{x})$ in \eqref{eqn:app_C0} and of $h_i^{(\bm{w})}$ in \eqref{eqn:seqAi}, we have the following identity, which holds for every $\bm{x} \in \mathcal{B}_{2R}^{(n)}$ and for every $\bm{\tau}_0, \bm{\tau}_1, \ldots, \bm{\tau}_{d-1} \in \mathcal{B}_R^{(n)}$:
\begin{align}
    F (\bm{x} - \bm{x}_i)
    &= \prod_{j=0}^{d-1} z(\bm{x} - (\bm{x}_i - \bm{\tau}_j))
    = \prod_{j=0}^{d-1} h_i^{(\bm{x})} (\bm{\tau}_j). \label{eqn:app_I34}
\end{align}

Lemma~\ref{lemma:A2} establishes that for any fixed $\bm{x} \in \mathcal{B}_{2R}^{(n)}$, the sequence of signals $\ppp{ h_i^{(\bm{x})} }_{i \in \pp{N}}$ is i.i.d. Consequently, since each term in the product on the right-hand side of \eqref{eqn:app_I34} depends on $h_i^{(\bm{x})}$ evaluated at different shifts $\bm{\tau}_j$ (all fixed), the product itself remains an i.i.d. sequence over the index $i$. That is, the mapping $i \mapsto F(\bm{x} - \bm{x}_i)$ defines an i.i.d. sequence for every $\bm{x} \in \mathcal{B}_{2R}^{(n)}$.
This observation directly implies that the left-hand side of \eqref{eqn:app_I34}, namely $\ppp{F(\bm{x} - \bm{x}_i)}_{i\in \pp{N}}$, forms an i.i.d. sequence. This completes the proof.

\subsubsection{Proof of Lemma~\ref{lemma:A4}}
We define the sets for $i \in \pp{N}$,
\begin{align}
    \mathcal{C}_M^{(i)} = \mathcal{B}_{2R}^{(n)}(\bm{x}_i),
\end{align} 
that is, the ball of radius $2R$ centered at $\bm{x}_i$. By the well-separated MTD model assumption, for all $i_1 \neq i_2$,
\begin{align}
    \mathcal{C}_M^{(i_1)} \cap \mathcal{C}_M^{(i_2)} = \emptyset.
\end{align}
Define
\begin{align}
    \mathcal{C}_M = \bigcup_{i=0}^{N-1} \mathcal{C}_M^{(i)},
\end{align}
which is the union of disjoint balls of radius $2R$ centered at the signal locations $\{\bm{x}_i\}_{i=0}^{N-1}$. Since these balls are disjoint, we have
\begin{align}
    \frac{|\mathcal{C}_M|}{(2MR)^n} = \frac{N \cdot V_n(2R)}{(2MR)^n} = \gamma,
\end{align}
where $|\mathcal{C}_M|$ denotes the volume of $\mathcal{C}_M$, and the density $\gamma$ is defined as in \eqref{eqn:densityMTDmodel}.

We decompose the domain of integration as
\begin{align}
    \int_{\bm{x} \in [-MR, MR]^n} F(\bm{x}) \, d\bm{x} 
    = \int_{\bm{x} \in \mathcal{C}_M} F(\bm{x}) \, d\bm{x}
    + \int_{\bm{x} \in \mathcal{C}_M^c} F(\bm{x}) \, d\bm{x}, \label{eqn:A28}
\end{align}
where $\mathcal{C}_M^c = [-MR, MR]^n \setminus \mathcal{C}_M$ is the complement of $\mathcal{C}_M$ in the observation domain.

Since $\mathcal{C}_M$ is the union of disjoint balls $\mathcal{C}_M^{(i)}$, we can write
\begin{align}
    \int_{\bm{x} \in \mathcal{C}_M} F(\bm{x}) \, d\bm{x}
    = \sum_{i=0}^{N-1} \int_{\bm{x} \in \mathcal{B}_{2R}^{(n)}} F(\bm{x} - \bm{x}_i) \, d\bm{x}. \label{eqn:app_I39}
\end{align}
Since $F(\bm{x})$ is a product of bounded functions, and $\bm{x} \mapsto F(\bm{x} - \bm{x}_i)$ is uniformly bounded over compact subsets of $\mathbb{R}^n$, the integrands are uniformly dominated by an integrable function. 
Then, by the dominated convergence theorem, we may interchange summation and integration:
\begin{align}
    \sum_{i=0}^{N-1} \int_{\bm{x} \in \mathcal{B}_{2R}^{(n)}} F(\bm{x} - \bm{x}_i) \, d\bm{x}
    = \int_{\bm{x} \in \mathcal{B}_{2R}^{(n)}} \sum_{i=0}^{N-1} F(\bm{x} - \bm{x}_i) \, d\bm{x}.
\end{align}
By Lemma~\ref{lemma:A3}, the sequence $\{F(\bm{x} - \bm{x}_i)\}_{i \in [N]}$ is i.i.d. for each fixed $\bm{x} \in \mathcal{B}_{2R}^{(n)}$. Hence, using the strong law of large numbers, we have
\begin{align}
    \frac{1}{(2RM)^n} \sum_{i=0}^{N-1} F(\bm{x} - \bm{x}_i)
    & = \frac{N}{(2RM)^n} \cdot \frac{1}{N} \sum_{i=0}^{N-1} F(\bm{x} - \bm{x}_i)
  \nonumber  \\ & \xrightarrow{\text{a.s.}} \frac{\gamma}{V_n(2R)} \, \mathbb{E}_{\mathcal{R} \sim \rho, \xi} \left[ F(\bm{x} - \bm{x}_0) \right]. \label{eqn:A17}
\end{align}

Since \eqref{eqn:A17} holds for all $\bm{x} \in \mathcal{B}_{2R}^{(n)}$, we again apply the dominated convergence theorem to obtain
\begin{align}
    \frac{1}{(2RM)^n} \int_{\bm{x} \in \mathcal{B}_{2R}^{(n)}} \sum_{i=0}^{N-1} F(\bm{x} - \bm{x}_i) \, d\bm{x}
    \xrightarrow{\text{a.s.}} \frac{\gamma}{V_n(2R)} \int_{\bm{x} \in \mathcal{B}_{2R}^{(n)}} \mathbb{E}_{\mathcal{R} \sim \rho, \xi} \left[ F(\bm{x} - \bm{x}_0) \right] \, d\bm{x}. \label{eqn:A30}
\end{align}

We now turn to the second term in \eqref{eqn:A28}:
\begin{align}
    \int_{\bm{x} \in \mathcal{C}_M^c} F(\bm{x}) \, d\bm{x}
    = \int_{\bm{x} \in \mathcal{C}_M^c} \xi(\bm{x} + \bm{\tau}_0) \xi(\bm{x} + \bm{\tau}_1) \cdots \xi(\bm{x} + \bm{\tau}_{d-1}) \, d\bm{x}.
\end{align}
The domain $\mathcal{C}_M^c$ has volume ratio $|\mathcal{C}_M^c| / (2RM)^n = 1 - \gamma$. As $M \to \infty$, using the strong law again:
\begin{align}
    \frac{1}{(2RM)^n} \int_{\bm{x} \in \mathcal{C}_M^c} \xi(\bm{x} + \bm{\tau}_0) \cdots \xi(\bm{x} + \bm{\tau}_{d-1}) \, d\bm{x}
    \xrightarrow{\text{a.s.}} (1 - \gamma) \, \mathbb{E} \left[ \xi(\bm{\tau}_0) \cdots \xi(\bm{\tau}_{d-1}) \right]. \label{eqn:A33}
\end{align}
Substituting \eqref{eqn:A30}, \eqref{eqn:app_I39}, and \eqref{eqn:A33} into \eqref{eqn:A28} concludes the proof.

\subsubsection{Proof of Lemma~\ref{lemma:A5}}
By the definition of $F(\bm{x})$ in \eqref{eqn:app_C0}, and the locations $\{ \bm{x}_i \}_{i=0}^{N-1}$, which denote the centers of the individual signal occurrences $f_i$, it follows that for every fixed $\bm{x} \in \mathcal{B}_{2R}^{(n)}$, the term $F(\bm{x} - \bm{x}_i)$ depends only on the random field $h_i^{(\bm{x})}$, as defined in \eqref{eqn:seqAi}. In particular, following from \eqref{eqn:app_I9}, we have
\begin{align}
    \mathbb{E}_{\mathcal{R} \sim \rho, \xi} \left[ F(\bm{x} - \bm{x}_0) \right] 
    = \mathbb{E}_{\mathcal{R} \sim \rho, \xi} \left[ \prod_{j=0}^{d-1} h_0^{(\bm{x})} (\bm{\tau}_j) \right]. \label{eqn:app_I48}
\end{align}

According to Lemma \ref{lemma:A2} and \eqref{eqn:ArV}, the random field $h_0^{(\bm{x})}$ satisfies
\begin{align}
    h^{(\bm{x})} (\bm{\tau}_j) = \tilde{f}_\mathcal{R} (\bm{\tau}_j + \bm{x}) + \xi (\bm{\tau}_j + \bm{x}), \label{eqn:app_I49}
\end{align}
where $\tilde{f}_\mathcal{R}$ is the padded signal associated with transformation $\mathcal{R} \sim \rho$, as defined in \eqref{eqn:A2}, and $\xi$ denotes additive white noise. 
Substituting \eqref{eqn:app_I49} into \eqref{eqn:app_I48} yields
\begin{align}
    \mathbb{E}_{\mathcal{R} \sim \rho, \xi} \left[ F(\bm{x} - \bm{x}_0) \right]
    &= \mathbb{E}_{\mathcal{R} \sim \rho, \xi} \left[ \prod_{j=0}^{d-1} \left( \tilde{f}_\mathcal{R}(\bm{\tau}_j + \bm{x}) + \xi(\bm{\tau}_j + \bm{x}) \right) \right]. \label{eqn:app_I51}
\end{align}
Similar to \eqref{eqn:ensembeleRV}, we define the random field $Y(\bm{x})$ as:
\begin{align}
    Y(\bm{x}) = \tilde{f}_\mathcal{R}(\bm{x}) + \xi(\bm{x}), \label{eqn:app_I52}
\end{align}
for $\bm{x} \in \mathcal{B}_{2R}^{(n)}$. Then, the expectation in \eqref{eqn:app_I51} can be expressed concisely in terms of $Y$:
\begin{align}
    \frac{\gamma}{V_n(2R)} & \int_{\bm{x} \in \mathcal{B}_{2R}^{(n)}} \mathbb{E}_{\mathcal{R} \sim \rho, \xi} \left[ F(\bm{x} - \bm{x}_0) \right] \, d\bm{x} \nonumber
    \\ &= \frac{\gamma}{V_n(2R)} \int_{\bm{x} \in \mathcal{B}_{2R}^{(n)}} \mathbb{E}_{\mathcal{R} \sim \rho, \xi} \left[ \prod_{j=0}^{d-1} Y(\bm{x} + \bm{\tau}_j) \right] d\bm{x}\nonumber \\
    &= \gamma \cdot \bar{a}^{(d)}_{Y,\rho} (\bm{\tau}_0, \bm{\tau}_1, \dots, \bm{\tau}_{d-1}),
\end{align}
where $\bar{a}^{(d)}_{Y,\rho}$ is the $d$-th order ensemble-mean autocorrelation of the field $Y$, as defined in \eqref{eqn:autoCorrealtionMomentsEnsembele}. This completes the proof.

\subsection{Proof of Theorem \ref{thm:mainTheorem}} \label{sec:proofSampleComplexityMTD}

The proof of this theorem closely follows that of Theorem~\ref{thm:mainTheoremRigidmotion}. The lower bound of $\omega\left(\sigma^{2d}\right)$
was established in~\cite{balanov2025note}. To establish the matching upper bound, we present an explicit algorithmic reduction, as outlined in Algorithm~\ref{alg:redcutiobFromMTDtoMRA}, which recovers the $d$-th order $\mathsf{SO}(n)$ moment $M_{f,\rho}^{(d)}$ (as defined in Definition~\ref{def:mraTensorMoment}) from the empirical autocorrelations of the MTD model up to order $d+2$, $\{a_y^{(k)}\}_{k=1}^{d+2}$ (Definition~\ref{def:autoCorrelationMomentsEmpirical}).

This reduction proceeds in three main stages:
\begin{enumerate}
    \item \textit{Autocorrelation equivalence:}  
    Under the assumption that $\gamma$ and $\sigma^2$ are known, Proposition~\ref{thm:prop0} shows that the ensemble-mean autocorrelations $\bar{a}_{Y,\rho}^{(d)}$ can be recovered from the empirical autocorrelations $a_{y}^{(d)}$ of the MTD observation $y$, in the limit as $M, N \to \infty$.

    \item \textit{Recovery of population autocorrelations:}  
    From the ensemble-mean autocorrelations $\{ \bar{a}_{Y,\rho}^{(k)} \}_{k=1}^d$, we can recover the population autocorrelations $\{ A_{f,\rho}^{(k)} \}_{k=1}^d$, as defined in Definition~\ref{def:autoCorrelationNoiseFree}, using the relation given in~\eqref{eqn:propRigidMotionEquivalence}.

    \item \textit{Extraction of $\mathsf{SO}(n)$ moments:}     Theorem~\ref{thm:reductionFromAutocorrelationToTensorMoment} asserts that, under current assumptions, the $d$-th order $\mathsf{SO}(n)$ moment $M_{f,\rho}^{(d)}$, can be recovered from the autocorrelations of the second and $(d+2)$-order population, $A_{f,\rho}^{(2)}$ and $A_{f,\rho}^{(d+2)}$. Since $d$ is the minimal moment order required for unique orbit recovery of $f$ in the $\mathsf{SO}(n)$ problem, it follows that the population autocorrelations $A_{f,\rho}^{(2)}$ and $A_{f,\rho}^{(d+2)}$ also suffice for orbit recovery in the MTD setting.
\end{enumerate}

Since the $(d+2)$-order autocorrelation suffices for orbit recovery, we invoke Proposition~\ref{thm:sampleComplexityRigidMotion} to obtain the desired upper bound $\omega\left(\sigma^{2d+4}\right)$, which completes the proof.

\begin{algorithm}[t!]
  \caption{\texttt{Extraction of $\mathsf{SO}(n)$ moments from MTD autocorrelations} \label{alg:redcutiobFromMTDtoMRA}}
\textbf{Input:} A well-separated MTD observation $z: \pp{-MR, MR}^n \to \mathbb{R}$ (Definition \ref{def:wellSeperatedModel}), satisfying Assumptions \ref{assum:support} and \ref{assum:nonVanishingSupport}. \\
\textbf{Output:} $d$-th order $\mathsf{SO}(n)$ moment $M_{f, \rho}^{(d)}$ (Definition \ref{def:mraTensorMoment}).
\begin{enumerate}
    \item Compute the empirical MTD autocorrelations of $z$ up to order $d+2$:  
          \begin{align}
            \ppp{{a}^{(1)}_{z}, {a}^{(2)}_{z}, \dots, {a}^{(d+2)}_{z}},
          \end{align}
          as defined in Definition \ref{def:autoCorrelationMomentsEmpirical}.
    \item From $\ppp{{a}^{(1)}_{z}, {a}^{(2)}_{z}, \dots, {a}^{(d+2)}_{z}}$, determine the corresponding ensemble mean MTD autocorrelations up to order $d+2$ (see Definition \ref{def:autoCorrelationMomentsEnsembele}):  
      \begin{align}
        \ppp{\bar{a}^{(1)}_{Y,\rho}, \bar{a}^{(2)}_{Y,\rho}, \dots , \bar{a}^{(d+2)}_{Y,\rho}},  
      \end{align}
    as $M \to \infty$, by applying Proposition \ref{thm:prop0} .
    
    \item Extract the population autocorrelations $\ppp{A^{(1)}_{f,\rho}, A^{(2)}_{f,\rho}, \dots , A_{f,\rho}^{(d+2)}}$ (Definition \ref{def:autoCorrelationNoiseFree}) up to order $d+2$ from the ensemble mean MTD autocorrelations $\ppp{\bar{a}^{(1)}_{Y,\rho}, \bar{a}^{(2)}_{Y,\rho}, \dots , \bar{a}^{(d+2)}_{Y,\rho}}$ (Proposition \ref{thm:propRigidMotionEquivalence}).  

    \item Retrieve the $d$-th order moment $M_{f, \rho}^{(d)}$ from the population autocorrelations $A^{(d+2)}_{f,\rho}$ and $A^{(2)}_{f,\rho}$ as described in Theorem \ref{thm:reductionFromAutocorrelationToTensorMoment}.  
\end{enumerate}
\end{algorithm}

\section{Numerical experiments in two dimensions}
\label{app:experimentalMethods2D}

This appendix outlines the numerical procedure used to reconstruct a band-limited image $f$, supported on a disk, from its invariants under the rigid-motion group $\mathsf{SE}(2)$ (see Section~\ref{sec:empirical} and Figures~\ref{fig:6}-\ref{fig:8} for an overview).  
Throughout, rotations are assumed to be uniformly distributed with respect to the Haar measure on $\SO(2)$, though similar procedures can be formulated for non-uniform orientation distributions as well. Consequently, we omit the dependence on $\rho$ in the notation and denote the corresponding autocorrelations and moments simply by $A_f^{(d)}$ and $M_f^{(d)}$, respectively.

The reconstruction pipeline proceeds through three main stages:
\begin{enumerate}
    \item \emph{Extraction of $\mathsf{SO}(2)$ moments from $\mathsf{SE}(2)$ autocorrelations.}  
    The fourth-order and fifth-order autocorrelation functions, $A_f^{(4)}$ and $A_f^{(5)}$, are first computed, followed by the extraction of the corresponding second-order and third-order moments, $M_f^{(2)}$ and $M_f^{(3)}$, as established in Theorem~\ref{thm:reductionFromAutocorrelationToTensorMoment}.  

    \item \emph{Intra-ring inversion from $\{M_f^{(2)}, M_f^{(3)}\}$.}  
    Each radial ring in the Fourier domain is reconstructed independently from its associated rotation-invariant moments $\{M_f^{(2)}, M_f^{(3)}\}$, yielding an estimate of both the magnitude and the phase structure within that ring, up to a global rotation.

    \item \emph{Inter-ring angular synchronization using $M_f^{(2)}$.}  
    Finally, the relative orientations between rings are recovered through pairwise phase synchronization based solely on the second-order moments $M_f^{(2)}$, producing a globally aligned reconstruction of the image.
\end{enumerate}

While the proposed procedure establishes a provable reconstruction framework, it is not designed to be computationally optimal or robust. The methods presented here serve primarily as a proof of concept, intended to demonstrate and validate the theoretical extraction of $\mathsf{SO}(2)$ moments from $\mathsf{SE}(2)$ autocorrelations, rather than to provide an algorithm optimized for large-scale implementation. In principle, a more direct approach operating on the image bispectrum itself could be developed, potentially offering improved numerical performance, as exemplified in~\cite{ma2019heterogeneous}.

\subsection{Extraction of SO(2) moments from SE(2) autocorrelations}
\label{subsec:numericalImplementation}

Theorem~\ref{thm:reductionFromAutocorrelationToTensorMoment} establishes that the $\SO(2)$ moments of a disk-supported function can be extracted from its $\SE(2)$ autocorrelations via the boundary limit~\eqref{eqn:mainTheoremExtraction}.  
In this subsection, we describe how this theoretical relation is implemented numerically and provide the computational details underlying the experiments presented in Section~\ref{sec:empirical}.

The numerical implementation proceeds in three main steps: (i) We first rewrite the relation of Theorem~\ref{thm:reductionFromAutocorrelationToTensorMoment} in terms of one-dimensional ring signals and their angular correlations.
(ii) Next, we discretize the continuous variables, i.e., translations, radii, and angular coordinates, obtaining discrete analogues of all quantities required for computation.  
(iii) Finally, we transform along the angular coordinate using the discrete Fourier transform (DFT), which yields an efficient representation and substantially reduces computational complexity through the use of the FFT algorithm. 

Concretely, for any collection of points $\bm{\eta}_1,\ldots,\bm{\eta}_d$ in the disk $\mathcal{B}_R \triangleq \mathcal{B}_R^{(2)}$, the $d$th-order $\SO(2)$ moment of $f$ is given according to Theorem~\ref{thm:reductionFromAutocorrelationToTensorMoment} by
\begin{align}
    M_{f}^{(d)}(\bm{\eta}_1, \dots, \bm{\eta}_d) 
    &= \lim_{\delta \to 0^+} \frac{\displaystyle \int_{S^{1}} A^{(d+2)}_{f}\!\left(\bm{\tau}_0^{(\delta)}(\theta), \bm{\tau}_1^{(\delta)}(\theta), \bm{\eta}_1, \dots, \bm{\eta}_d\right)\, d\theta} 
    {\displaystyle \int_{S^{1}} A^{(2)}_{f}\!\left(\bm{\tau}_0^{(\delta)}(\theta), \bm{\tau}_1^{(\delta)}(\theta)\right)\, d\theta},
    \label{eqn:mainTheoremExtraction2}
\end{align}
where the boundary points approaching antipodal positions are defined as
\begin{align}
    \bm{\tau}_0^{(\delta)}(\theta) = (R(1 - \delta), \theta), 
    \qquad
    \bm{\tau}_1^{(\delta)}(\theta) = (-R(1 - \delta), \theta),
\end{align}
with $\theta \in S^1$. In our experiments we restrict attention to the cases $d=2$ and $d=3$, which suffice for the complete recovery pipeline.

\subsubsection{Continuous setting}
We first present the extraction in the continuous domain and subsequently derive its discretized form required for numerical implementation.  
Let $ f : \mathcal{B}_R \subset \mathbb{R}^2 \to \mathbb{R} $ denote a function supported on a disk of radius $R$.  
For each translation center $\bm{t} \in \mathbb{R}^2$, define
\begin{align}
    s(\bm{t}) = \frac{1}{2\pi}\int_{0}^{2\pi} f \bigl( \bm{t} + \bm{\tau}_0(\theta)\bigr) \, f \bigl( \bm{t} + \bm{\tau}_1(\theta) \bigr)\, d\theta,
    \label{eqn:boundary-conditions}
\end{align}
where $\bm{\tau}_0(\theta)$ and $\bm{\tau}_1(\theta)$ are antipodal boundary offsets of radius $R(1-\delta)$.  
The factor $s(\bm{t})$ coincides with the denominator in~\eqref{eqn:mainTheoremExtraction2}, namely,
\begin{align}
    D = \int_{S^{1}} A^{(2)}_{f}\!\left(\bm{\tau}_0^{(\delta)}(\theta), \bm{\tau}_1^{(\delta)}(\theta)\right)\, d\theta = \int_{\mathbb{R}^2} s(\bm{t})\, d\bm{t}.
    \label{eqn:denominator-normalization}
\end{align}
For notational simplicity, we henceforth suppress the explicit dependence on $\delta$.

Fix $r \in [0, R]$. For each translation center $\bm{t} \in \mathbb{R}^2$, define the associated ring signal as the restriction of $f$ to the circle of radius $r$ centered at $\bm{t}$:
\begin{align}
    a^{(\bm{t})}_r(\varphi) = f\bigl(\bm{t} + r(\cos\varphi,\sin\varphi)\bigr),
    \qquad \varphi \in [0, 2\pi). 
    \label{eqn:ring-signals}
\end{align}
For any pair of radii $r_1, r_2 \in [0, R]$, the numerator in~\eqref{eqn:mainTheoremExtraction2} for the case $d = 2$ can be expressed as
\begin{align}
    \Gamma^{(2)}(r_1, r_2, \Delta\varphi) = \int_{\mathbb{R}^2} d\bm{t} \left[s(\bm{t}) \cdot \frac{1}{2\pi} \int_{0}^{2\pi} a^{(\bm{t})}_{r_1}(\varphi)\, a^{(\bm{t})}_{r_2}(\varphi+\Delta\varphi)\, d\varphi \right].
\end{align}
The extracted second-order moment is then obtained as
\begin{align}
    M_{\text{ext}}^{(2)}(r_1, r_2, \Delta\varphi) = \frac{\Gamma^{(2)}(r_1, r_2, \Delta\varphi)}{D},
\end{align}
where $D$ is defined in~\eqref{eqn:denominator-normalization}. The special case $r_1 = r_2 = r$ corresponds to \emph{intra-ring} correlations, whereas $r_1 \neq r_2$ represents \emph{inter-ring} correlations.

Analogously to the second-order case, the third-order numerator in~\eqref{eqn:mainTheoremExtraction2} 
for radii $r_1, r_2, r_3 \in [0, R]$ is given by
\begin{align}
  \nonumber \Gamma^{(3)}&(r_1, r_2, r_3, \Delta\varphi_1, \Delta\varphi_2)
   \\ & =  \int_{\mathbb{R}^2} d\bm{t} 
    \left[s(\bm{t}) \cdot \frac{1}{2\pi} \int_{0}^{2\pi} a^{(\bm{t})}_{r_1}(\varphi)\, a^{(\bm{t})}_{r_2}(\varphi+\Delta\varphi_1)\, a^{(\bm{t})}_{r_3}(\varphi+\Delta\varphi_2)\, d\varphi \right].
\end{align}
The normalization factor $D$ is the same as in~\eqref{eqn:denominator-normalization}, and the corresponding extracted third-order moment is defined as
\begin{align}
    M_{\text{ext}}^{(3)}(r_1, r_2, r_3, \Delta\varphi_1, \Delta\varphi_2)
    = \frac{\Gamma^{(3)}(r_1, r_2, r_3, \Delta\varphi_1, \Delta\varphi_2)}{D}.
\end{align}
By Theorem~\ref{thm:reductionFromAutocorrelationToTensorMoment}, as $ \delta \to 0^+ $, the extracted moments converge to the true rotation-invariant moments on $\mathsf{SO}(2)$, that is, $M_{\text{ext}}^{(d)} \to M_f^{(d)}$.

\subsubsection{Discrete setting}
We now describe the discretization of the continuous formulation used in the numerical implementation.  
In the continuous setting, the expectation in Theorem~\ref{thm:reductionFromAutocorrelationToTensorMoment} is taken over all translations $\bm{t} \in \mathbb{R}^2$.  
In practice, this integral is approximated using a finite translation grid $\Lambda = \{\bm{t}_j\}_{j=1}^{N_t} \subset \mathbb{R}^2$ that covers a region sufficiently large to contain the disk $\mathcal{B}_R$ supporting $f$. 

The variables are discretized as follows. The radius $r$ is sampled on a finite set of concentric rings $\{r_q\}_{q=1}^{R_{\max}} \subset [0, R]$;  the boundary angle $\theta$ is discretized into $N_\theta$ uniformly spaced samples of $[0, 2\pi)$; and the angular coordinate $\varphi$ on each ring is discretized into $N_\varphi$ uniformly spaced samples. The corresponding parameters are summarized in Table~\ref{tab:discretization}. The specific discretization values used in the numerical experiments are reported in Section~\ref{sec:empirical} of the main text.

\begin{table}[h]
    \centering
    \begin{tabular}{ll}
    \toprule
    Symbol & Description / Discretization \\
    \midrule
    $ \Lambda = \{\bm{t}_j\}_{j=1}^{N_t} $, $ N_t $ & Translation grid in $ \mathbb{R}^2 $ covering $ \mathcal{B}_R $ \\
    $ \{r_q\}_{q=1}^{R_{\max}} $ & Discrete set of radii; $ R_{\max} $ denotes the number of rings \\
    $ N_\theta $ & Boundary-angle samples on $ [0, 2\pi) $, $ \theta_k = 2\pi k / N_\theta $ \\
    $ N_\varphi $ & Ring-angle samples on $ [0, 2\pi) $, $ \varphi_\ell = 2\pi \ell / N_\varphi $ \\
    \bottomrule
    \end{tabular}
    \caption{Discretization parameters used in the numerical extraction procedure.}
    \label{tab:discretization}
\end{table}

The continuous boundary factor 
$s(\bm{t})$, defined in~\eqref{eqn:boundary-conditions}, 
is approximated in the discrete setting by
\begin{align}
    s_h(\bm{t}_j) = 
    \frac{2\pi}{N_\theta}\sum_{k=0}^{N_\theta-1} f\bigl(\bm{t}_j+\bm{\tau}_0(\theta_k)\bigr)\, f\bigl(\bm{t}_j+\bm{\tau}_1(\theta_k)\bigr),
    \qquad \theta_k = \tfrac{2\pi k}{N_\theta}.
    \label{eq:disc_s}
\end{align}
Here, the subscript $h$ denotes dependence on the discretizations. The integral over translations in the continuous formulation is replaced by a finite sum over the translation grid $\Lambda = \{\bm{t}_j\}_{j=1}^{N_t}$, which approximates
\begin{align}
  \int_{\mathbb{R}^2} g(\bm{t})\, d\bm{t}
   \ \approx\ 
  \sum_{j=1}^{N_t} g(\bm{t}_j).
\end{align}
Accordingly, the common denominator~\eqref{eqn:denominator-normalization} is approximated by
\begin{align}
    D_h = \sum_{j=1}^{N_t} s_h(\bm{t}_j).
    \label{eq:disc_D}
\end{align}

For each radius $r_q$ and translation center $\bm{t}_j$, the discretized ring signals defined in~\eqref{eqn:ring-signals} are given by
\begin{align}
    a^{(\bm{t}_j)}_{r_q}(\varphi_\ell)
    = f \bigl(\bm{t}_j + r_q(\cos\varphi_\ell,\sin\varphi_\ell)\bigr),
    \qquad
    \varphi_\ell = \tfrac{2\pi \ell}{N_\varphi},\ \ell = 0, \ldots, N_\varphi - 1.
    \label{eq:disc_ring}
\end{align}
Using~\eqref{eq:disc_s}-\eqref{eq:disc_ring}, the discrete second-order numerator is computed as
\begin{align}
    \Gamma^{(2)}_{h}(r_{q_1}, r_{q_2}, \Delta\varphi) = \sum_{j=1}^{N_t} s_h(\bm{t}_j)\,
    \Biggl[\frac{1}{N_\varphi}\sum_{\ell=0}^{N_\varphi-1}
    a^{(\bm{t}_j)}_{r_{q_1}}(\varphi_\ell)\,
    a^{(\bm{t}_j)}_{r_{q_2}}(\varphi_\ell+\Delta\varphi) \Biggr],
    \label{eq:N2_h_spatial}
\end{align}
and the discrete third-order numerator is
\begin{align}
    \nonumber \Gamma^{(3)}_{h}&(r_{q_1}, r_{q_2}, r_{q_3}, \Delta\varphi_1, \Delta\varphi_2) \\ & = \sum_{j=1}^{N_t}
    s_h(\bm{t}_j)\, 
    \Biggl[\frac{1}{N_\varphi}\sum_{\ell=0}^{N_\varphi-1}  a^{(\bm{t}_j)}_{r_{q_1}}(\varphi_\ell)\,   a^{(\bm{t}_j)}_{r_{q_2}}(\varphi_\ell+\Delta\varphi_1)\,   a^{(\bm{t}_j)}_{r_{q_3}}(\varphi_\ell+\Delta\varphi_2) \Biggr].
    \label{eq:N3_h_spatial}
\end{align}
The extracted discrete moments are then obtained as
\begin{align}
    M^{(2)}_{\text{ext}, h}(r_{q_1}, r_{q_2}, \Delta\varphi) = \frac{\Gamma^{(2)}_{h}(r_{q_1}, r_{q_2}, \Delta\varphi)}{D_h},
    \label{eq:M2_h_ratio}
\end{align}
and
\begin{align}
    M^{(3)}_{\text{ext}, h}(r_{q_1}, r_{q_2}, r_{q_3}, \Delta\varphi_1, \Delta\varphi_2)
    = \frac{\Gamma^{(3)}_{h}(r_{q_1}, r_{q_2}, r_{q_3}, \Delta\varphi_1, \Delta\varphi_2)}{D_h}.
    \label{eq:M3_h_ratio}
\end{align}

\subsubsection{Use of the fast Fourier transform}

Direct evaluation of the correlation sums for each angular lag $\Delta\varphi$ is computationally expensive, requiring $\mathcal{O}(N_\varphi^2)$ operations with respect to the number of angular samples $N_\varphi$.  
A more efficient approach exploits the Fourier representation of the ring signals and computes correlations using FFT, which reduces the computational complexity to $\mathcal{O}(N_\varphi \log N_\varphi)$.

Let $\mathcal{F}$ and $\mathcal{F}^{-1}$ denote the unitary DFT and its inverse acting on the angular grid $\{\varphi_\ell\}_{\ell=0}^{N_\varphi-1} = \{2 \pi \ell / N_\varphi\}_{\ell=0}^{N_\varphi -1}$.  
For a fixed translation $ \bm{t} $ and radius $r$, the unitary discrete Fourier coefficients of the ring signal $a^{(\bm{t})}_{r_q}$ are defined as
\begin{align}
    c^{(\bm{t})}_{r_q}[m] 
    &= (\mathcal{F}\,a^{(\bm{t})}_{r_q})[m] = \frac{1}{\sqrt{N_\varphi}}
    \sum_{\ell=0}^{N_\varphi-1}
     a^{(\bm{t})}_{r_q} \!\left(\tfrac{2\pi \ell}{N_\varphi}\right) e^{-i 2\pi m \ell / N_\varphi},
    \qquad m = 0, \ldots, N_\varphi - 1.
    \label{eq:cr_unitary}
\end{align}

By the discrete convolution theorem, the cyclic convolution of two ring signals satisfies
\begin{align}
    \frac{1}{N_\varphi}\sum_{\ell=0}^{N_\varphi-1}  a^{(\bm{t})}_{r_{q_1}}\!\left(\tfrac{2\pi \ell}{N_\varphi}\right) a^{(\bm{t})}_{r_{q_2}}\!\left(\tfrac{2\pi \ell}{N_\varphi} + \Delta\varphi\right)
    = \sum_{m=0}^{N_\varphi-1}
    c^{(\bm{t})}_{r_{q_1}}[m]\,
    c^{(\bm{t})}_{r_{q_2}}[-m]\,
    e^{i m \Delta\varphi}.
    \label{eq:WK_discrete_2}
\end{align}
Substituting~\eqref{eq:WK_discrete_2} into the spatial-domain expression~\eqref{eq:N2_h_spatial} and exchanging the order of summation yields the Fourier-domain form
\begin{align}
  \Gamma_h^{(2)}(r_{q_1}, r_{q_2}, \Delta\varphi)
   = \mathcal{F}^{-1}_{m \to \Delta\varphi}
  \Bigg(\sum_{j=1}^{N_t}
    s_h(\bm{t}_j)\,
    c^{(\bm{t}_j)}_{r_{q_1}}[m]\,
    c^{(\bm{t}_j)}_{r_{q_2}}[-m]\Bigg).
  \label{eq:N2_fft}
\end{align}

Analogously, the third–order correlation admits the discrete spectral representation
\begin{align}
    &\frac{1}{N_\varphi}\sum_{\ell=0}^{N_\varphi-1} a^{(\bm{t})}_{r_{q_1}}\!\left(\tfrac{2\pi \ell}{N_\varphi}\right) a^{(\bm{t})}_{r_{q_2}}\!\left(\tfrac{2\pi \ell}{N_\varphi}+\Delta\varphi_1\right) a^{(\bm{t})}_{r_{q_3}}\!\left(\tfrac{2\pi \ell}{N_\varphi}+\Delta\varphi_2\right)
    \nonumber\\[4pt]
    &\quad =  
    \sum_{m_1,m_2=0}^{N_\varphi-1}
        c^{(\bm{t})}_{r_{q_1}}[m_1]\,
        c^{(\bm{t})}_{r_{q_2}}[m_2]\,
        c^{(\bm{t})}_{r_{q_3}}[-(m_1+m_2)]\,
        e^{i(m_1\Delta\varphi_1 + m_2\Delta\varphi_2)}.
    \label{eq:WK_discrete_3}
\end{align}
Substituting~\eqref{eq:WK_discrete_3} into~\eqref{eq:N3_h_spatial} yields
\begin{align}
    \nonumber \Gamma_h^{(3)} & (r_{q_1}, r_{q_2}, r_{q_3}, \Delta\varphi_1, \Delta\varphi_2)
    \\ & = \mathcal{F}^{-1}_{(m_1,m_2)\to(\Delta\varphi_1,\Delta\varphi_2)}
    \Bigg(\sum_{j=1}^{N_t}
         s_h(\bm{t}_j)\,
         c^{(\bm{t}_j)}_{r_{q_1}}[m_1]\,
         c^{(\bm{t}_j)}_{r_{q_2}}[m_2]\,
         c^{(\bm{t}_j)}_{r_{q_3}}[-(m_1+m_2)]\Bigg).
    \label{eq:N3_fft}
\end{align}


Algorithm~\ref{alg:se2ToSo2Reduction} summarizes the complete extraction procedure.  
Steps~(1)-(2) compute the boundary function $s_h(\bm{t}_j)$ and their aggregate $D_h$, which provide the common denominator in~\eqref{eq:disc_D}.  
Steps~(3)-(4) construct the discrete ring signals and their Fourier coefficients as defined in~\eqref{eq:disc_ring} and~\eqref{eq:cr_unitary}.  
Steps~(5)-(6) form the spectral accumulators $\Gamma^{(2)}_h$ and $\Gamma^{(3)}_h$ according to~\eqref{eq:N2_fft} and~\eqref{eq:N3_fft}.  
Finally, Step~(7) applies inverse FFTs and normalizes by $D_h$, yielding the extracted discrete moments $M^{(2)}_{\text{ext}, h}$ and $M^{(3)}_{\text{ext}, h}$ as in~\eqref{eq:M2_h_ratio}–\eqref{eq:M3_h_ratio}.

\begin{algorithm}[]
  \caption{\texttt{Extraction of $\mathsf{SO}(2)$ moments from $\mathsf{SE}(2)$ autocorrelations}}
  \label{alg:se2ToSo2Reduction}
  \textbf{Input:} Image $f$ supported on $\mathcal{B}_R$; discretization parameters $(\Lambda = \{t_j \}_{j=1}^{N_t},\{r_q\}_{q=1}^{R_{\max}},N_\theta,N_\varphi)$. \\
  \textbf{Output:} Extracted discrete moments $\{M^{(2)}_{\text{ext}, h},\, M^{(3)}_{\text{ext}, h}\}$.
  \begin{algorithmic}[1]
    \State \emph{(Boundary sampling)} For each $\bm{t}_j \in \Lambda$ and $\theta_k = 2\pi k/N_\theta$, evaluate boundary offsets $\bm{\tau}_0(\theta_k),\,\bm{\tau}_1(\theta_k)$.
    \State \emph{(Boundary weights)} Compute $s_h(\bm{t}_j)$ by~\eqref{eq:disc_s} and  $D_h$ by~\eqref{eq:disc_D}.
    \State \emph{(Ring sampling)} For each $r_q$ and $\bm{t}_j$, sample the ring signals $a^{(\bm{t}_j)}_{r_q}(\varphi_\ell)$ via~\eqref{eq:disc_ring}.
    \State \emph{(Fourier coefficients)} Compute $c^{(\bm{t}_j)}_{r_q}[m] = (\mathcal{F}\,a^{(\bm{t}_j)}_{r_q})[m]$ using~\eqref{eq:cr_unitary}.
    \State \emph{(Spectral accumulator, $d=2$)} For each pair $(r_{q_1}, r_{q_2})$, form
      \begin{align}
        \Gamma^{(2)}_h(r_{q_1}, r_{q_2}, \Delta\varphi) = \mathcal{F}^{-1}_{m \to \Delta\varphi}
        \Biggl(\sum_{j=1}^{N_t}
          s_h(\bm{t}_j)\,
          c^{(\bm{t}_j)}_{r_{q_1}}[m]\,
          c^{(\bm{t}_j)}_{r_{q_2}}[-m] \Biggr),
        \label{eq:alg_N2}
      \end{align}
      cf.~\eqref{eq:N2_fft}.
    \State \emph{(Spectral accumulator, $d=3$)} Analogously, compute $\Gamma^{(3)}_h$ using~\eqref{eq:N3_fft}.
    \State \emph{(Normalization)} Return (cf.~\eqref{eq:M2_h_ratio}–\eqref{eq:M3_h_ratio})
      \begin{align}
        M^{(2)}_{\text{ext}, h} \,=\, \Gamma^{(2)}_h / D_h,
        \qquad
        M^{(3)}_{\text{ext}, h} \,=\, \Gamma^{(3)}_h / D_h.
      \end{align}
  \end{algorithmic}
\end{algorithm}

\subsection{Orbit recovery from SO(2) moments via bispectrum inversion}

\subsubsection{Per-ring inversion}
\label{sec:ringInversion}

Given the $\SO(2)$ moments $\{M_f^{(2)}, M_f^{(3)}\}$, we now describe how to recover each individual ring signal $a_r(\varphi_\ell)$, discretized on $N_\varphi$ angular samples by $\varphi_\ell = 2 \pi \ell / N_\varphi$.  
The procedure follows a variant of the classical bispectrum  pipeline~\cite{nikias1993signal,kakarala2012bispectrum,bendory2017bispectrum} and is conceptually analogous to the one-dimensional multi-reference alignment (MRA) method, where the bispectrum uniquely determines the signal up to a global rotation~\cite{bendory2017bispectrum}. In practice, we substitute $\{M_f^{(2)},M_f^{(3)}\}$ with their extracted discrete approximations $\{M^{(2)}_{\text{ext},h},M^{(3)}_{\text{ext},h}\}$ obtained from Algorithm~\ref{alg:se2ToSo2Reduction}, which introduces a small discretization error.

\paragraph{Step 1: Amplitude recovery from $M_f^{(2)}$.}
By construction (Algorithm~\ref{alg:se2ToSo2Reduction}, Steps~(3)-(4)), the coefficients $c_r[m]$ denote the unitary angular Fourier transform of the ring signal $a_r(\varphi_\ell)$ defined in~\eqref{eq:cr_unitary}. Since $M_f^{(2)}(r,\Delta\varphi)$ represents the circular autocorrelation of $a_r(\varphi_\ell)$, the discrete convolution theorem implies
\begin{align}
    \mathcal{F}\!\left\{M_f^{(2)}(r,\Delta\varphi)\right\}[m] = |c_r[m]|^2.
    \label{eq:amp_from_M2}
\end{align}
Thus, the amplitude spectrum $|c_r[m]|$ is obtained directly from the second-order moment.

\paragraph{Step 2: Normalized bispectrum.}
As shown in the third-order spectral representation~\eqref{eq:WK_discrete_3} and~\eqref{eq:N3_fft}, each frequency triplet $(m_1,m_2,-m_1-m_2)$ contributes
\begin{align}
    B_r[m_1,m_2] = c_r[m_1]\,c_r[m_2]\,c_r[-(m_1+m_2)].
    \label{eq:bispec_def}
\end{align}
Normalizing~\eqref{eq:bispec_def} by the amplitudes from~\eqref{eq:amp_from_M2} yields the normalized bispectrum
\begin{align}
    C_r[m_1,m_2] \triangleq \frac{B_r[m_1,m_2]}{|c_r[m_1]||c_r[m_2]||c_r[-(m_1+m_2)]|}
    =  e^{i(\theta_{m_1}+\theta_{m_2}+\theta_{-(m_1+m_2)})},
    \label{eq:norm_bispec}
\end{align}
where $c_r[m]=|c_r[m]|e^{i\theta_m}$.  
Each constraint~\eqref{eq:norm_bispec} is naturally associated with a weight
\begin{align}
    W_r[m_1,m_2] =  |c_r[m_1]||c_r[m_2]||c_r[-(m_1+m_2)]|,
    \label{eq:bispec_weights}
\end{align}
which reflects the relative reliability of the corresponding bispectral measurement.

\paragraph{Step 3: Phase recovery.}
The normalized bispectrum constraints~\eqref{eq:norm_bispec} couple the unknown phases $\{\theta_m\}$ across all frequencies. Since the signal is identifiable only up to a circular shift, we fix $\theta_1=0$.
Fix an integer $K \le \big\lfloor (N_\varphi - 1)/2 \big\rfloor$ and retain only the harmonics $|m| \le K$. For each $m=2,\dots,K$, the relation~\eqref{eq:norm_bispec} with $(m_1,m_2)=(j,m-j)$ implies, up to multiples of $2\pi$,
\begin{align}
    \arg C_r[j,m-j] = \theta_j + \theta_{m-j} - \theta_m,
    \qquad j=1,\dots,m-1.
    \label{eq:phase_constraint}
\end{align}
Each such constraint provides an estimate for $\theta_m$, which we combine using the  weights in~\eqref{eq:bispec_weights}.  
The practical update rule is a weighted circular mean:
\begin{align}
    \theta_m = \arg \, \!\Biggl(\sum_{j=1}^{m-1} W_r[j,m-j]\, \exp \Big(i\big(\theta_j+\theta_{m-j}-\arg C_r[j,m-j]\big)\Big)\Biggr).
    \label{eq:phase_update}
\end{align}
The recursion proceeds sequentially: $\theta_2$ is determined from $j=1$, then $\theta_3$ from $(j=1,2)$, and so on, propagating forward up to $K$.  
This sequential estimator is equivalent to a weighted least-squares fit on the unit circle, where~\eqref{eq:phase_update} provides a circular-mean estimate that performs well under moderate noise~\cite{nikias1993signal}. Finally, Hermitian symmetry imposes $\theta_{-m}=-\theta_m$, ensuring that the reconstructed ring signal is real-valued.

\paragraph{Step 4: Spectrum assembly.}
With amplitudes $|c_r[m]|$ obtained from the second-order moment~\eqref{eq:amp_from_M2} and phases $\theta_m$ recovered from the normalized bispectrum constraints~\eqref{eq:phase_update}, we reassemble the complex Fourier coefficients as
\begin{align}
    c_r[m] = |c_r[m]|\,e^{i\theta_m}, 
    \qquad |m|\leq K,
    \label{eq:coef_reassembly}
\end{align}
enforcing Hermitian symmetry $c_r[-m]=\overline{c_r[m]}$ to ensure a real-valued signal, while setting all higher modes to zero.  
Finally, applying the inverse FFT to~\eqref{eq:coef_reassembly} yields the reconstructed ring samples:
\begin{align}
    a_r(\varphi_\ell) = \frac{1}{\sqrt{N_\varphi}}
    \sum_{m=-K}^{K} c_r[m]\, e^{i m \varphi_\ell},
    \qquad \ell=0,\dots,N_\varphi-1,
    \label{eq:ring_recon}
\end{align}
for $\varphi_\ell = 2 \pi \ell / N_\varphi$.

\subsubsection{Inter-ring angular synchronization}
\label{sec:ringSync}

After per-ring inversion (Section~\ref{sec:ringInversion}), each reconstructed ring $\widehat{a}_r(\varphi)$ is determined only up to an unknown in-plane rotation $\theta_r \in [0,2\pi)$.  
We denote by $a_r(\varphi)$ the ground-truth ring at radius $r$ and $\widehat{a}_r(\varphi)$ for its estimate, related by
\begin{align}
    \widehat{a}_r(\varphi) = a_r(\varphi+\theta_r).
\end{align}
Let $c_r[k]$ and $\widehat{c}_r[k]$ denote the DFTs of $a_r(\cdot)$ and $\widehat{a}_r(\cdot)$, respectively, with $N_\varphi$ angular samples and harmonic indices $k\in\{0,1,\dots,N_\varphi-1\}$.  
The goal is to estimate the set $\{\theta_r\}_{r=1}^{R_{\max}}$, up to a global gauge, using only second-order information.  
Although each ring may be individually rotated, pairwise second-order moments computed from the full image preserve relative orientations across radii, which enables the synchronization between the rings.

\paragraph{Pairwise second-order model.}
For radii $i,j \in \{1 ,2, \ldots, R_{\max} \}$, define the pairwise moment
\begin{align}
    M^{(2)}_{ij}(\Delta\varphi) = \frac{1}{2\pi}\!\int_0^{2\pi} a_i(\varphi)\,a_j(\varphi+\Delta\varphi)\,d\varphi .
    \label{eq:def_M2ij}
\end{align}
Its DFT satisfies the correlation identity
\begin{align}
    \mathcal{F}\!\left\{M^{(2)}_{ij}\right\}[k]
    = c_i[k]\,\overline{c_j[k]},
    \qquad k=0,1,\dots,N_\varphi-1.
    \label{eq:pair_WK_discrete}
\end{align}
A rotation by $\theta_r$ acts diagonally in the Fourier domain:
\begin{align}
    \widehat{c}_r[k] = c_r[k]\,e^{-ik\theta_r}.
    \label{eq:rot_model}
\end{align}
Combining \eqref{eq:pair_WK_discrete} and \eqref{eq:rot_model}, and excluding the DC mode $k=0$, yields the normalized edge measurements
\begin{align}
    q_{ij}[k] \triangleq \frac{\widehat{c}_i[k]\,\overline{\widehat{c}_j[k]}} {\mathcal{F}\!\{M^{(2)}_{ij}\}[k]} = e^{-ik(\theta_i-\theta_j)},
    \qquad k=1,\dots,N_\varphi-1.
    \label{eq:sync_edge_measure}
\end{align}
Thus, $\arg(q_{ij}[k])$ encodes the relative offset $\theta_i-\theta_j$, scaled by $k$. The relations in \eqref{eq:sync_edge_measure} form an overdetermined system across $(i,j)$ pairs and harmonics. 

A convenient way to organize the relations in \eqref{eq:sync_edge_measure} is as a weighted measurement graph: the vertices correspond to the discrete radii, and each edge $(i,j)$ carries multi-frequency orientation information through the values $\{q_{ij}[k]\}_{k\ge 1}$. This graph encodes an overdetermined system for the unknown offsets ${\theta_i}$, and can be solved by standard angular-synchronization methods. Common approaches include semidefinite-programming relaxations~\cite{singer2011angular} and likelihood-refinement schemes initialized from a spectral solution~\cite{boumal2013robust}. In our implementation, we adopt a spectral method with multi-frequency pooling and data-driven edge weights, following~\cite{singer2011angular}.

\paragraph{Maximum-likelihood formulation.}
Let $G=(V,E)$ be the measurement graph with $V=\{1,\dots,R_{\max}\}$ and edges $(i,j)\in E$.  
We estimate $\theta=(\theta_1,\dots,\theta_{R_{\max}})$, up to a global gauge, by
\begin{align}
    \max_{\theta\in\mathbb{R}^{R_{\max}}}  \sum_{(i,j)\in E}  \sum_{k\in\mathcal{K}} w_{ij}[k] \Re \Big\{e^{ik(\theta_i-\theta_j)}\,\overline{q_{ij}[k]}\Big\},
    \qquad \theta_1=0,
    \label{eq:sync_MLE_multik}
\end{align}
where $\mathcal{K}\subset\{1,\dots,N_\varphi-1\}$ is the set of selected harmonics, $q_{ij}[k]$ are given by \eqref{eq:sync_edge_measure}, and $w_{ij}[k]\geq 0$ are weights.  
Writing $q_{ij}[k]=e^{-ik(\theta_i-\theta_j)}\,\eta_{ij}[k]$ with $|\eta_{ij}[k]|=1$ modeling phase noise, \eqref{eq:sync_MLE_multik} is a weighted least-squares estimator on the unit circle~\cite{singer2011angular,boumal2013robust}.  
In practice, we use
\begin{align}
    w_{ij}[k] \propto 
    \bigl|\mathcal{F}\{M^{(2)}_{ij}\}[k]\bigr|\,
    \bigl|\widehat{c}_i[k]\bigr|\,
    \bigl|\widehat{c}_j[k]\bigr|,
    \label{eq:weights_impl}
\end{align}
which assigns larger weights to harmonics with strong cross-power and large amplitudes.

\paragraph{Multi-frequency pooling.}
Because the phase of $q_{ij}[k]$ wraps $k$ times faster than $\theta_i-\theta_j$, harmonics must be pooled with phase de-scaling. We map each measurement to the principal branch by dividing the phase by $k$, such that the angles are reduced to $(-\pi,\pi]$ after the division,
\begin{align}
    \widehat{\delta\theta}_{ij}
    &=\arg\!\left(\sum_{k\in\mathcal{K}} w_{ij}[k]\, \exp\Bigl(-\tfrac{i}{k}\,\arg(q_{ij}[k])\Bigr)\right),
    \label{eq:sync_edge_pool_angle}\\
    W_{ij}
    &=\left|\sum_{k\in\mathcal{K}} w_{ij}[k] \, \exp \Bigl(-\tfrac{i}{k}\,\arg(q_{ij}[k])\Bigr)\right|.
    \label{eq:sync_edge_pool_weight}
\end{align}
We then assemble the Hermitian affinity
\begin{align}
    H_{ij} = 
    \begin{cases}         W_{ij}\,e^{i\widehat{\delta\theta}_{ij}}, & i\neq j,\\[2pt]
    0, & i=j,
    \end{cases}
    \qquad H_{ji}=\overline{H_{ij}},
    \label{eq:sync_H}
\end{align}
compute its leading eigenvector $\widehat{u}$, and extract synchronized angles as
\begin{align}
    \widehat{\theta}_r = \arg(\widehat{u}_r) - \arg(\widehat{u}_1),
    \qquad r=1,\dots,R_{\max},
    \label{eq:sync_spectral_solution}
\end{align}
which fixes the global gauge by setting $\widehat{\theta}_1=0$.  
Each spectrum is then rotated via $\widehat{c}_r[k]\leftarrow \widehat{c}_r[k]\,e^{-ik\widehat{\theta}_r}$, and the aligned rings are obtained by inverse DFT.  

\section{Numerical experiments in three dimensions}
\label{app:experimentalMethods3D}

This section formalizes the recovery of a three-dimensional volume from its $\mathsf{SE}(3)$ autocorrelations. As in the two-dimensional case, we assume throughout that rotations are distributed according to the Haar-uniform measure on $\mathsf{SO}(3)$. For notational simplicity, we omit the dependence on $\rho$ and denote the resulting $\mathsf{SE}(3)$ autocorrelations and $\mathsf{SO}(3)$ moments simply by $A_f^{(d)}$ and $M_f^{(d)}$, respectively.
The input consists of the second, fourth, and fifth-order autocorrelations, $A_f^{(2)}$, $A_f^{(4)}$, and $A_f^{(5)}$.
The reconstruction pipeline then proceeds in two main stages:
\begin{enumerate}
    \item \emph{Extraction of $\mathsf{SO}(3)$ moments from $\mathsf{SE}(3)$ autocorrelations.}
    Using the algebraic relations of Theorem~\ref{thm:reductionFromAutocorrelationToTensorMoment}, the second and third-order $\mathsf{SO}(3)$ moments, $M_f^{(2)}$ and $M_f^{(3)}$, are extracted from the autocorrelations $A_f^{(2)}$, $A_f^{(4)}$, and $A_f^{(5)}$.

    \item \emph{Frequency-marching reconstruction from $\{M_f^{(2)}, M_f^{(3)}\}$.}  
    The $\mathsf{SO}(3)$ moments, expressed in the spherical-harmonic (SH) basis, enable a linear, degree-by-degree recovery of the spherical-harmonic coefficients of $f$ through a frequency-marching procedure~\cite{bendory2025orbit}.
\end{enumerate}

\paragraph{Section structure.} Subsection~\ref{sec:radial-angular-rep-and-SH-form} introduces the representation of volume $f$ in spherical coordinates, and formulates the $\mathsf{SO}(3)$ second-order (Gram matrix) and third-order (bispectrum) moments in spherical-harmonic basis. 
Subsection~\ref{subsec:numericalImplementation3D} describes the practical extraction of the second-order and third-order moments of $\mathsf{SO}(3)$ from the fourth-order and fifth-order autocorrelations.  
Finally, subsection~\ref{subsec:freq_march} details the frequency-marching recovery: given the extracted $\mathsf{SO}(3)$ moments, we reconstruct the spherical-harmonic coefficients by solving iterative linear systems.

\subsection{Radial-angular representation and spherical harmonics form of moments} \label{sec:radial-angular-rep-and-SH-form}

\subsubsection{Radial-angular representation}
\label{subsec:radial_angular}

Let $f:\mathbb{R}^3 \to \mathbb{R}$ be a real-valued function supported in the unit ball $\mathcal{B} \triangleq  \{x \in \mathbb{R}^3 : \|x\| \le 1\}$.
In spherical coordinates $(r,\theta,\varphi) \in [0,1] \times [0,\pi] \times [0,2\pi)$, the volume $f$ admits an expansion in the normalized spherical-harmonic basis $\{Y_{\ell m}\}_{\ell \ge 0,  |m| \le \ell}$:
\begin{align}
    f(r,\theta,\varphi)
    &= \sum_{\ell=0}^{L_{\max}}
    \sum_{m=-\ell}^{\ell}
    F_{\ell m}(r)\,Y_{\ell m}(\theta,\varphi),
    \qquad
    \int_{S^2} \overline{Y_{\ell m}}(\omega)\,Y_{\ell' m'}(\omega)\,d\omega
    = \delta_{\ell\ell'}\delta_{mm'} .
    \label{eq:sph_expansion}
\end{align}
Here, $\ell \in \mathbb{N}$ is the \emph{angular degree}, $m \in \{-\ell,\ldots,\ell\}$ the \emph{azimuthal index}, and $L_{\max}$ the \emph{bandlimit}, i.e., the maximal angular degree present in the expansion. For each $(\ell,m)$, the function $F_{\ell m}:[0,1]\to\mathbb{R}$ specifies the corresponding radial profile.

\paragraph{Radial discretization by Gaussian shells.}
In our computational framework, we adopt a discrete variant based on a set of Gaussian-weighted radial shells, which are widely used in harmonic analysis and computational imaging ~\cite{freeden1998constructive, makadia2003direct, kazhdan2003rotation} since they are spatially localized in radius and permit simple and stable numerical quadrature.  
We approximate the radial coordinate using $R_{\max}$ localized \emph{Gaussian shells} centered at $\{r_q\}_{q=1}^{R_{\max}} \subset(0,1)$ with width $\sigma_{\mathrm{sh}}>0$:
\begin{align}
    W_q(r) &= C_q\exp\Big(-\tfrac{1}{2}\tfrac{(r-r_q)^2}{\sigma_{\mathrm{sh}}^2}\Big),  \qquad C_q^{-2} = \int_{0}^{1} \exp\Big(-\tfrac{(r-r_q)^2}{\sigma_{\mathrm{sh}}^2}\Big)\,r^2 \,dr ,
    \label{eq:W_def}
\end{align}
so that each $W_q$ has unit norm in $L^2([0,1],r^2dr)$. The overlap matrix is
\begin{align}
    S_{ab} \triangleq  \int_0^1 W_a(r)\,W_b(r)\,r^2\,dr,
    \qquad a,b\in\{1,\dots,R_{\max}\},
    \label{eq:W_overlap}
\end{align}
with $S_{aa}=1$ by construction. If the centers $\{r_q\}$ are well-separated compared to $\sigma_{\mathrm{sh}}$, then $S_{ab}\approx 0$ for $a\neq b$. The corresponding \emph{shell coefficients} are defined as
\begin{align}
    \widehat{F}_{\ell m}[q] \triangleq \int_0^1 F_{\ell m}(r)\,W_q(r)\,r^2\,dr, \qquad q=1,\dots,R_{\max} .
    \label{eq:shell_coeff_scalar}
\end{align}

\subsubsection{Spherical-harmonic form of the moments}
\label{sec:so3_invariants}

Before describing the recovery procedure from the extracted moments, it is convenient to express the second-order and third-order $\mathsf{SO}(3)$ moments in the spherical-harmonic basis $\{Y_{\ell m}\}$.  
In this representation, the angular dependence is captured entirely by the \emph{Gaunt coefficients} (equivalently, the Wigner-$3j$ symbols), which encode the coupling between angular modes, while the radial dependence enters only through the overlaps of the window functions $\{W_c\}$~\eqref{eq:W_def}.  

For degrees $\ell_i$ and orders $|m_i| \leq \ell_i$ ($i=1,2,3$), define the Gaunt coefficients by
\begin{align}
  G^{\ell_1\ell_2\ell_3}_{m_1m_2m_3}
   = \int_{S^2}
    Y_{\ell_1 m_1}(\omega)\,
    Y_{\ell_2 m_2}(\omega)\,
    Y_{\ell_3 m_3}(\omega)\,
  d\omega,
  \label{eq:gaunt_def}
\end{align}
which can be written in closed form using Wigner-$3j$ symbols as
\begin{align}
  G^{\ell_1\ell_2\ell_3}_{m_1m_2m_3} =  \sqrt{\tfrac{(2\ell_1+1)(2\ell_2+1)(2\ell_3+1)}{4\pi}}\,
  \begin{pmatrix}\ell_1&\ell_2&\ell_3\\[2pt]0&0&0\end{pmatrix}
  \begin{pmatrix}\ell_1&\ell_2&\ell_3\\[2pt]m_1&m_2&m_3\end{pmatrix}.
  \label{eq:gaunt_w3j}
\end{align}
Here, the $2\times3$ arrays denote Wigner-$3j$ symbols 
\cite{varshalovich1988quantum,edmonds1996angular}.  
These coefficients encode the Clebsch-Gordan coupling rules in a symmetric, orthonormal form and automatically enforce the standard selection rules: (i) the triangle condition $|\ell_1-\ell_2| \le \ell_3 \le \ell_1+\ell_2$; (ii) the parity constraint $\ell_1+\ell_2+\ell_3$ even; (iii) and the order constraint $m_1+m_2+m_3=0$.

\paragraph{Invariants corresponding to $M_f^{(2)}$ and $M_f^{(3)}$.}
The second-order moments correspond to Gram matrices of the shell coefficients, while the third-order moments (the bispectrum) couple angular modes through the Gaunt coefficients and radial windows, producing scalar invariants and linear systems that constrain the unknown spherical-harmonic coefficient vectors.
To make these relationships explicit, it is convenient to collect all spherical-harmonic coefficients on a given degree $\ell$ into a matrix form.

\begin{definition}[Shell spherical-harmonic coefficient matrix]
\label{def:Fmatrix}
Recall the definition of the shell coefficients $\widehat{F}_{\ell m}[q]$, for $1 \leq q \leq R_{\max}$, in~\eqref{eq:shell_coeff_scalar}.
For each degree $\ell$, define the matrix of shell coefficients as
\begin{align}
    \widehat{\mathbf{F}}_\ell \in \mathbb{C}^{(2\ell+1)\times R_{\max}},
    \qquad [\widehat{\mathbf{F}}_\ell]_{m,q} = \widehat{F}_{\ell m}[q],
    \quad m=-\ell,\dots,\ell.
\end{align}
\end{definition}

Under a rotation $\mathcal{R}\in\mathsf{SO}(3)$, the spherical-harmonic coefficients transform blockwise according to the irreducible representation $D^{(\ell)}(\mathcal{R})$:
\begin{align}
  \widehat{\mathbf{F}}_\ell \mapsto D^{(\ell)} (\mathcal{R})\, \widehat{\mathbf{F}}_\ell,
  \qquad \big[\widehat{\mathbf{F}}_\ell^{(\mathcal{R})}\big]_{m,q} = \sum_{m'=-\ell}^{\ell} D^{(\ell)}_{m m'}(\mathcal{R})\,[\widehat{\mathbf{F}}_\ell]_{m',q},
  \label{eq:left_action}
\end{align}
where $D^{(\ell)}(\mathcal{R})$ is the unitary Wigner $D$-matrix of degree $\ell$.

\begin{definition}[Degree-2 Gram invariants]
\label{def:gram}
For each $\ell\in\{0,1,\dots,L_{\max}\}$, define the Gram matrix
\begin{align}
  \mathbf{G}_{2,\ell} \triangleq \widehat{\mathbf{F}}_\ell^{H} \widehat{\mathbf{F}}_\ell \in \mathbb{C}^{R_{\max}\times R_{\max}}.
  \label{eq:G2_def}
\end{align}
\end{definition}

The matrix $\mathbf{G}_{2,\ell}$ is Hermitian, positive semidefinite, and invariant under the left action~\eqref{eq:left_action}, since $D^{(\ell)}(\mathcal{R})$ is unitary.  
Thus, $\mathbf{G}_{2,\ell}$ constitutes the spherical-harmonic realization of the second-order moment $M_f^{(2)}$, capturing radial correlations within each angular frequency band~$\ell$.

\begin{definition}[Degree-3 bispectrum invariants]
\label{def:bispectrum}
For shells $a,b,c \in \{1,\dots,R_{\max} \}$, define the radial triple-overlap
\begin{align}
  T_{a,b,c} \triangleq \int_0^1 W_a(r)\,W_b(r)\,W_c(r)\,r^2\,dr.
  \label{eq:triple_overlap}
\end{align}
Let $\widehat{F}_{\ell m}[q]$ be the shell coefficients from~\eqref{eq:shell_coeff_scalar} and $G^{\ell_1\ell_2\ell_3}_{m_1m_2m_3}$ the Gaunt coefficients from~\eqref{eq:gaunt_def}--\eqref{eq:gaunt_w3j}.
For each admissible angular triple $(\ell_1,\ell_2,\ell_3)$ and shells $(a,b,c) \in \{1 ,2 \ldots , R_{\max} \}^3$, the degree-3 bispectrum invariant is defined by
\begin{align}
    B_{\ell_1,\ell_2,\ell_3}(a,b,c) \triangleq
    T_{a,b,c}  \sum_{\substack{|m_i|\le \ell_i \\ i=1,2,3}}
    G^{\ell_1\ell_2\ell_3}_{m_1m_2m_3}\,
    \widehat{F}_{\ell_1 m_1}[a]\,
    \widehat{F}_{\ell_2 m_2}[b]\,
    \overline{\widehat{F}_{\ell_3 m_3}[c]}.
    \label{eq:bispectrum_def}
\end{align}
\end{definition}

\subsection{Extraction of SO(3) moments from SE(3) autocorrelations}
\label{subsec:numericalImplementation3D}

Theorem~\ref{thm:reductionFromAutocorrelationToTensorMoment} establishes that the $\mathsf{SO}(3)$ moments can be obtained from its $\mathsf{SE}(3)$ autocorrelations through a boundary-limit relation.  
In this subsection, we describe a practical numerical implementation of this extraction in three dimensions.  
The overall structure closely parallels the two-dimensional case, with appropriate modifications to account for the transition from planar to spherical geometry and the associated three-dimensional integration domains.

For any collection of points   $\bm{\eta}_1,\ldots,\bm{\eta}_d$ in the ball $\mathcal{B}_R \triangleq  \mathcal{B}_R^{(3)}\subset \mathbb{R}^3$, the $d$th-order $\SO(3)$ moment of $f$ is given, according to Theorem~\ref{thm:reductionFromAutocorrelationToTensorMoment}, by
\begin{align}
    M_{f}^{(d)}(\bm{\eta}_1, \dots, \bm{\eta}_d)
    &= \lim_{\delta \to 0^+} 
    \frac{\displaystyle 
      \int_{S^{2}} 
      A^{(d+2)}_{f}\left( \bm{\tau}_0^{(\delta)}(\omega), \bm{\tau}_1^{(\delta)}(\omega), \bm{\eta}_1, \dots, \bm{\eta}_d
      \right)\, d\omega}
      {\displaystyle 
      \int_{S^{2}} 
      A^{(2)}_{f}\left( \bm{\tau}_0^{(\delta)}(\omega), \bm{\tau}_1^{(\delta)}(\omega)
      \right)\, d\omega},
    \label{eqn:mainTheoremExtraction3D}
\end{align}
where the boundary points approaching antipodal locations on the sphere are defined as
\begin{align}
    \bm{\tau}_0^{(\delta)}(\omega) = (R(1 - \delta),\omega), 
    \qquad
    \bm{\tau}_1^{(\delta)}(\omega) = (-R(1 - \delta), \omega),
    \qquad \omega \in S^2.    \label{eqn:boundaryAntipodes3D}
\end{align}

The numerical procedure follows three steps:
(i) rewrite the $\SE(3)\to\SO(3)$ extraction in terms of spherical shells;
(ii) discretize translations, radii (via Gaussian shells), and sphere angles (via a quadrature grid);
(iii) compute the translation accumulators.

\subsubsection{Continuous setting} \label{subsubsec:continuous_reduction_3D}
Let $f:\mathcal{B}_R\subset\mathbb{R}^3\to\mathbb{R}$ be supported on the ball of radius $R$.
For each translation center $\bm{t}\in\mathbb{R}^3$, define the \emph{sphere signal} at radius $r\in[0,R]$ by
\begin{align}
    a^{(\bm{t})}_r(\omega)  =  f\bigl(\bm{t}+r\,\omega\bigr), 
    \qquad \omega\in S^2 .
    \label{eq:3D_sphere_signal}
\end{align}
Introduce the boundary factor
\begin{align}
    s(\bm{t}) = \int_{S^2} f \bigl(\bm{t}+\bm{\tau}_0(\omega)\bigr)\, f \bigl( \bm{t} + \bm{\tau}_1(\omega) \bigr) \,d \omega,
    \label{eq:3D_boundary_factor}
\end{align}
where $\bm{\tau}_0(\omega), \bm{\tau}_1(\omega)$ are defined in~\eqref{eqn:boundaryAntipodes3D}, and we fix a small $\delta > 0$. This coincides with the denominator of the extraction in~\eqref{eqn:mainTheoremExtraction3D}, namely,
\begin{align}
    D = \int_{S^2} A_f^{(2)}\bigl(\bm{\tau}_0(\omega),\bm{\tau}_1(\omega)\bigr)\,d \omega =  \int_{\mathbb{R}^3} s(\bm{t})\,d\bm{t}.
    \label{eq:3D_denominator}
\end{align}

\paragraph{Second-order moment extraction.}
For $d=2$, the extracted second-order $\SO(3)$ moment at radii $(r_1,r_2)$ can be expressed, after rotational averaging, as a function depending only on the angular separation $\gamma \in [0,\pi]$:
\begin{align}
    \Gamma^{(2)}(r_1,r_2;\gamma) =    \int_{\mathbb{R}^3} s(\bm{t})\, \Biggl[\int_{S^2} a^{(\bm{t})}_{r_1}(\omega)\, a^{(\bm{t})}_{r_2}\bigl(\mathcal{R}_\gamma\omega\bigr)\, d \omega\Biggr]\,d\bm{t},
    \label{eq:3D_gamma2_def}
\end{align}
where $\mathcal{R}_\gamma$ denotes rotation mapping the north pole to a direction at geodesic distance $\gamma$ from it.  
The inner integral depends on $\gamma$ only through this relative angle.

Expanding each spherical signal $a^{(\bm{t})}_r$ in spherical harmonics gives
\begin{align}
    c_{\ell m}^{(\bm{t})}(r) = \int_{S^2} a^{(\bm{t})}_r(\omega)\, \overline{Y_{\ell m}(\omega)}\, d \omega,
    \qquad
    0\le \ell\le L_{\max},\,  |m|\le \ell,
    \label{eq:3D_SHT_coef}
\end{align}
where $Y_{\ell m}$ are orthonormal on $S^2$ with respect to $d\omega$. 
We note that at the origin, the spherical-harmonic coefficients coincide with the intrinsic volume coefficients $F_{\ell m}(r)$, defined in~\eqref{eq:sph_expansion}, that is, $c_{\ell m}^{(\bm{t}=0)}(r) = F_{\ell m}(r)$.

The rotationally averaged correlation theorem on the sphere then implies the Legendre expansion
\begin{align}
  \frac{1}{4\pi}\!\int_{S^2}\! 
    a^{(\bm{t})}_{r_1}(\omega)\,
    a^{(\bm{t})}_{r_2}(\mathcal{R}_\gamma\omega)\, d \omega
   = \sum_{\ell=0}^{L_{\max}} 
    \frac{2\ell+1}{4\pi}\,
    \Biggl(\sum_{m=-\ell}^{\ell} c_{\ell m}^{(\bm{t})}(r_1)\, \overline{c_{\ell m}^{(\bm{t})}(r_2)} \Biggr)\,
    P_\ell(\cos\gamma),
  \label{eq:3D_zonal_corr}
\end{align}
where $P_\ell$ denotes the Legendre polynomial associated with the spherical harmonics via the addition theorem
\begin{align}
    P_\ell(\omega_1 \cdot \omega_2) = \frac{4\pi}{2\ell+1} \sum_{m=-\ell}^{\ell} Y_{\ell m}(\omega_1)\, \overline{Y_{\ell m}(\omega_2)},
    \label{eq:addition_theorem}
\end{align}
so that $P_\ell(\cos\gamma)$ represents the rotation-invariant component of degree~$\ell$. Substituting into \eqref{eq:3D_gamma2_def} yields the Legendre expansion
\begin{align}
    \Gamma^{(2)}(r_1,r_2;\gamma)
    = \sum_{\ell=0}^{L_{\max}}
    (2\ell+1)\,P_\ell(\cos\gamma) 
    \int_{\mathbb{R}^3} s(\bm t)\,
    \Biggl(\sum_{m=-\ell}^{\ell}
    c_{\ell m}^{(\bm{t})}(r_1)\,
    \overline{c_{\ell m}^{(\bm{t})}(r_2)}
    \Biggr)\,d\bm t.
    \label{eq:Gamma2_legendre_exact}
\end{align}
Consequently, the extracted second-moment invariant is
\begin{align}
    M^{(2)}_{\text{ext}}(r_1,r_2;\gamma)
    = \frac{\Gamma^{(2)}(r_1,r_2;\gamma)}{D},
\end{align}
where $D$ is defined in~\eqref{eq:3D_denominator}.

\paragraph{Third-order moment extraction.}
For $d=3$, the extracted third-order $\SO(3)$ moment admits the convenient representation in the spherical-harmonic basis using Gaunt coefficients, defined in~\eqref{eq:gaunt_def}.
Then, define the translation-weighted bispectral accumulator
\begin{align}
    \nonumber \Gamma^{(3)} & (r_1,r_2,r_3;\ell_1,\ell_2,\ell_3)
    \\ &  =  \int_{\mathbb{R}^3} 
    s(\bm{t}) \sum_{\substack{|m_i|\le \ell_i \\ i=1,2,3}} G^{\ell_1\ell_2\ell_3}_{m_1m_2m_3}\, c_{\ell_1 m_1}^{(\bm{t})}(r_1)\, c_{\ell_2 m_2}^{(\bm{t})}(r_2)\, \overline{c_{\ell_3 m_3}^{(\bm{t})}(r_3)} \, d\bm{t}.
  \label{eq:3D_bispec_accum}
\end{align}
which is the numerator the right-hand side in~\eqref{eqn:mainTheoremExtraction3D}.
Thus, combining~\eqref{eq:3D_bispec_accum} and~\eqref{eq:3D_denominator} as the numerator and denominator of~\eqref{eqn:mainTheoremExtraction3D}, respectively, results,
\begin{align}
    M^{(3)}_{\text{ext}}(r_1,r_2,r_3;\ell_1,\ell_2,\ell_3) = \frac{\Gamma^{(3)}(r_1,r_2,r_3;\ell_1,\ell_2,\ell_3)}{D}.
    \label{eq:3D_M3_red}
\end{align}

\subsubsection{Discrete setting}
\label{subsubsec:discrete_reduction_3D}

We now describe the discrete counterparts of the continuous formulas presented in Section~\ref{subsubsec:continuous_reduction_3D}, as used in our numerical implementation. 
In the continuous formulation, expectations are taken over all translations $\bm{t}\in\mathbb{R}^3$, and integrals are carried out over the radial and angular variables $(r,\omega)\in[0,1]\times S^2$. 
In practice, these integrals are approximated using finite grids and quadrature rules.

Let $\Lambda=\{\bm{t}_j\}_{j=1}^{N_t}\subset\mathbb{Z}^3$ be a symmetric voxel grid of translations (after zero-padding of $f$). 
Let $\{(\theta_i,\varphi_i),w_i\}_{i=1}^{N_{\mathrm{ang}}}$ denote a quadrature rule on $S^2$ (e.g., Fibonacci nodes), with
\begin{align}
    \omega_i = \big(\sin\theta_i\cos\varphi_i,\ \sin\theta_i\sin\varphi_i,\ \cos\theta_i\big).
\end{align}
For the radial coordinate, fix a quadrature $\{(r_n,w^{(r)}_n)\}_{n=1}^{N_r}$ on $[0,1]$ with weight $r^2\,dr$, and define a family of $R_{\max}$ Gaussian shell windows $\{W_q\}_{q=1}^{R_{\max}}$ normalized as in \eqref{eq:W_def}.
Finally, for the Legendre expansion in~\eqref{eq:Gamma2_legendre_exact}, we sample the angles $\gamma$ by $\mu = \cos\gamma \in [-1,1]$ using a Gauss-Legendre quadrature $\{(\mu_k,w_k)\}_{k=1}^{K_{\mathrm{GL}}}$. 
These discretization parameters are summarized in Table~\ref{tab:discretization_so3}.

\begin{table}[h]
\centering
\begin{tabular}{ll}
\toprule
    Symbol & Description/ Discretization \\
    \midrule
    $\Lambda=\{\bm{t}_j\}_{j=1}^{N_t}$ & Translation grid in $\mathbb{Z}^3$ covering the unit ball (with padding) \\
    $\{(\theta_i,\varphi_i),w_i\}_{i=1}^{N_{\mathrm{ang}}}$ & Angular quadrature on $S^2$, weights $w_i$ \\
    $\{(r_n,w^{(r)}_n)\}_{n=1}^{N_r}$ & Radial quadrature on $[0,1]$ with weight $r^2\,dr$ \\
    $\{W_q\}_{q=1}^{R_{\max}}$ & Gaussian shell windows (normalized as in \eqref{eq:W_def}) \\
    $\{(\mu_k,w_k)\}_{k=1}^{K_{\mathrm{GL}}}$ & Gauss--Legendre quadrature on $\mu=\cos\gamma\in[-1,1]$ \\
    \bottomrule
    \end{tabular}
    \caption{Discretization parameters used in the $\mathsf{SE}(3)$ to $\mathsf{SO}(3)$ extraction.}
    \label{tab:discretization_so3}
\end{table}

The continuous boundary factor $s(\bm{t})$ in~\eqref{eq:3D_boundary_factor} is replaced by its discrete approximation
\begin{align}
    s_h(\bm{t}_j) = \sum_{i=1}^{N_{\mathrm{ang}}} w_i\, f\bigl(\bm{t}_j + \bm{\tau}_0(\omega_i)\bigr)\,   f\bigl(\bm{t}_j + \bm{\tau}_1(\omega_i)\bigr),
    \label{eq:3D_disc_boundary}
\end{align}
and the global normalization constant~\eqref{eq:3D_denominator} becomes
\begin{align}
    D_h = \sum_{j=1}^{N_t} s_h(\bm{t}_j).
    \label{eq:3D_disc_denominator}
\end{align}
For each translation $\bm{t}_j$ and discrete radius $r_n$, the spherical signal $a^{(\bm{t}_j)}_{r_n}(\omega)$ from~\eqref{eq:3D_sphere_signal} is sampled over the quadrature nodes $\omega_i$, and the spherical-harmonic coefficients~\eqref{eq:3D_SHT_coef} are approximated as
\begin{align}
  c_{\ell m}^{(\bm{t}_j)}(r_n) = \sum_{i=1}^{N_{\mathrm{ang}}} w_i\, f\bigl(\bm{t}_j + r_n \omega_i\bigr)\, \overline{Y_{\ell m}(\omega_i)},
  \qquad 
  0\le \ell\le L_{\max},\ |m|\le \ell.
  \label{eq:3D_disc_SHT}
\end{align}

We define the projection of $c_{\ell m}^{(\bm{t})}(r)$ onto the Gaussian shell basis, similar to \eqref{eq:shell_coeff_scalar}:
\begin{align}
    H^{(\bm{t})}_{\ell m}[q]\
    \triangleq  
    \int_0^1 c_{\ell m}^{(\bm{t})}(r)\,W_q(r)\,r^2\,dr,
    \qquad |m|\le\ell.
    \label{eq:relation_f_to_c}
\end{align}
Then, substituting \eqref{eq:3D_SHT_coef} into~\eqref{eq:relation_f_to_c} results
\begin{align}
    H^{(\bm{t})}_{\ell m}[q]\
   =  \int_0^1 \int_{S^2} 
    a^{(\bm{t})}_r(\omega)\,
    \overline{Y_{\ell m}(\omega)}\,
  \,W_q(r)\,r^2 \,dr \, d \omega, \label{eqn:app-B15} 
\end{align}
for $0 \leq \ell \leq L_{\max}, \, |m| \leq \ell, 1 \leq q \leq R_{\max}$.
The Gaussian-shell coefficients defined in~\eqref{eq:relation_f_to_c} are discretized by applying the radial quadrature rule:
\begin{align}
  H^{(\bm{t}_j)}_{\ell m}[q] = \sum_{n=1}^{N_r} 
    w^{(r)}_n\,W_q(r_n)\, c_{\ell m}^{(\bm{t}_j)}(r_n), \qquad 1\le q\le R_{\max}.  \label{eq:discrete_shell_projection}
\end{align}

\paragraph{Discrete second-order moment extraction.}
For each translation $\bm{t}_j$ and shell pair $(q_1,q_2) \in \{1 ,2, \ldots, R_{\max} \}^2$, define the discrete cross-power spectra
\begin{align}
  G_\ell^{(\bm{t}_j)}[q_1,q_2] = \sum_{m=-\ell}^{\ell} H^{(\bm{t}_j)}_{\ell m}[q_1]\,
  \overline{H^{(\bm{t}_j)}_{\ell m}[q_2]}.
  \label{eq:cross_power_discrete}
\end{align}
Substituting these expansions into~\eqref{eq:Gamma2_legendre_exact} and approximating the integration over $\cos\gamma \in [-1,1]$ by a Gauss-Legendre quadrature yields the discrete Legendre representation
\begin{align}
  \Gamma^{(2)}_h(r_{q_1},r_{q_2};\mu_k)
   = \sum_{j=1}^{N_t} s_h(\bm{t}_j)
  \sum_{\ell=0}^{L_{\max}} (2\ell+1)\, G_\ell^{(\bm{t}_j)}[q_1,q_2]\,
    P_\ell(\mu_k), \qquad \mu_k \in [-1,1],
  \label{eq:Gamma2_discrete}
\end{align}
where each $G_\ell^{(\bm{t}_j)}[q_1,q_2]$ collects the shellwise spherical-harmonic correlations at translation~$\bm{t}_j$.
Finally, the normalized extracted moment is defined as
\begin{align}
  M^{(2)}_{\text{ext},h}(r_{q_1},r_{q_2};\mu_k)
   = \frac{\Gamma^{(2)}_h(r_{q_1},r_{q_2};\mu_k)}{D_h},
  \label{eq:M2_discrete}
\end{align}
where $D_h = \sum_{j=1}^{N_t}s_h(\bm{t}_j)$ is the discrete normalization factor defined in~\eqref{eq:3D_disc_denominator}.

\paragraph{Discrete third-order moment extraction.}
For the bispectrum, using the precomputed Gaunt coefficients $G^{\ell_1\ell_2\ell_3}_{m_1m_2m_3}$ from~\eqref{eq:gaunt_def}, the discrete counterpart of~\eqref{eq:3D_bispec_accum} is given by
\begin{align}
    \nonumber \Gamma_h^{(3)}&(r_{q_1},r_{q_2},r_{q_3};
    \ell_1,\ell_2,\ell_3)
    \\ &  = 
    \sum_{j=1}^{N_t} s_h(\bm{t}_j)
    \sum_{m_1=-\ell_1}^{\ell_1}
    \sum_{m_2=-\ell_2}^{\ell_2}
    \sum_{m_3=-\ell_3}^{\ell_3} G^{\ell_1\ell_2\ell_3}_{m_1m_2m_3}\,
    H^{(\bm{t}_j)}_{\ell_1 m_1}[q_1]\,
    H^{(\bm{t}_j)}_{\ell_2 m_2}[q_2]\,
    \overline{H^{(\bm{t}_j)}_{\ell_3 m_3}[q_3]},
    \label{eq:bispec_accum_discrete}
\end{align}
and the corresponding normalized third-order moment is
\begin{align}
    M^{(3)}_{\text{ext},h}(r_{q_1},r_{q_2},r_{q_3};\ell_1,\ell_2,\ell_3) = \frac{\Gamma_h^{(3)}(r_{q_1},r_{q_2},r_{q_3};\ell_1,\ell_2,\ell_3)}{D_h}.
    \label{eq:M3_discrete}
\end{align}

\subsection{Orbit recovery from SO(3) moments via frequency marching}
\label{subsec:freq_march}

We are now ready to describe the recovery of the volume from the extracted moments.  
Although the notation below is expressed in terms of the full Gram and bispectrum invariants, the actual inputs to the algorithm are their extracted counterparts, obtained from the extraction of the $\mathsf{SO}(3)$ moments from the  $\mathsf{SE}(3)$ autocorrelations described earlier.

The recovery procedure implemented here follows the constructive \emph{frequency marching} strategy introduced in~\cite{bendory2025orbit}, with all algorithmic details made explicit for reproducibility.  
The goal is to reconstruct the shell coefficient vectors
\begin{align}
    \widehat{\mathbf{f}}_{\ell}(c)  \triangleq   \widehat{\mathbf{F}}_{\ell}(:,c) \in \mathbb{C}^{2\ell+1}, \qquad c \in \{ 1, 2 , \ldots, R_{\max} \} \label{eqn:shell-vector-def}
\end{align}
where $\widehat{\mathbf{f}}_\ell(c)$ denotes the $c$-th column of $\widehat{\mathbf{F}}_\ell$ (Definition~\ref{def:Fmatrix}), from the degree-2 and degree-3 invariants of $f$.
The central idea is that the Gram invariants (Definition~\ref{def:gram}) determine the lowest two angular degrees, while the higher degrees are recovered inductively from the bispectrum (Definition~\ref{def:bispectrum}), progressively building the full set of spherical-harmonic coefficients of the volume.

\paragraph{Linear systems from the bispectrum.}
Each entry in the bispectrum~\eqref{eq:bispectrum_def} with $\ell_3=\ell$ yields a linear equation in the unknown vector ${\widehat{\mathbf{f}}_\ell(c)}$~\eqref{eqn:shell-vector-def}.  
Fix a target pair $(\ell,c)$.  
Let $\Pi_{\ell,c}$ denote the set of admissible tuples $(\ell_1,\ell_2,a,b)$ with $\ell_1,\ell_2<\ell$ and $a,b\in\{1,\dots,R_{\max} \}$ for which $B_{\ell_1,\ell_2,\ell}(a,b,c)$ is available.  
Its cardinality $N_{\ell,c}=|\Pi_{\ell,c}|$ is the number of bispectrum equations available for $(\ell,c)$.  
By stacking these equations we obtain the linear system
\begin{align}
    \mathbf{A}_{\ell,c}\,\widehat{\mathbf{f}}_\ell(c) = \mathbf{b}_{\ell,c},
    \qquad  \mathbf{A}_{\ell,c}\in\mathbb{C}^{N_{\ell,c}\times (2\ell+1)},  
    \quad  \mathbf{b}_{\ell,c}\in\mathbb{C}^{N_{\ell,c}}.
  \label{eq:bisys_ls}
\end{align}
More explicitly, for each $(\ell_1,\ell_2,a,b)\in\Pi_{\ell,c}$, we can write
\begin{align}
    B_{\ell_1,\ell_2,\ell}(a,b,c) = \mathbf{v}_{\ell_1,\ell_2,a,b,c}^{H}\,\widehat{\mathbf{f}}_\ell(c),
\end{align}
where the vector $\mathbf{v}_{\ell_1,\ell_2,a,b,c}\in\mathbb{C}^{2\ell+1}$ is computed from lower-degree coefficients via
\begin{align}
    \big[\mathbf{v}_{\ell_1,\ell_2,a,b,c}\big]_{m_3}
    = T_{a,b,c} \sum_{m_1,m_2} G^{\ell_1\ell_2\ell}_{m_1m_2m_3}\,
    \widehat{F}_{\ell_1 m_1}[a]\,
    \widehat{F}_{\ell_2 m_2}[b].
\end{align}
The corresponding row and entry in~\eqref{eq:bisys_ls} are then
\begin{align}
    \mathbf{A}_{\ell,c}[i,:]^{H}
    = \mathbf{v}_{\ell_1,\ell_2,a,b,c}^{H},
    \qquad \mathbf{b}_{\ell,c}[i]
    = B_{\ell_1,\ell_2,\ell}(a,b,c),
    \label{eq:bisys_row}
\end{align}
so that $\mathbf{A}_{\ell,c}$ depends only on known Gaunt coefficients, precomputed overlaps, and the previously recovered lower-degree shell matrices $\{\widehat{\mathbf{F}}_{\ell'}\}_{\ell'<\ell}$.

\paragraph{Frequency marching procedure.}
Recovery proceeds degree by degree:
\begin{enumerate}
  \item \emph{Base case $\ell=0$.}  
  The Gram matrix $\mathbf{G}_{2,0}$ is rank-1 and determines $\widehat{\mathbf{F}}_0\in\mathbb{C}^{1\times R_{\max}}$ up to a global phase, obtained from its principal eigenvector/eigenvalue.

  \item \emph{Gauge fixing at $\ell=1$.}  
  The Gram invariant $\mathbf{G}_{2,1}$ specifies only the product $\widehat{\mathbf{F}}_1^{H}\widehat{\mathbf{F}}_1$, leaving a rotational $\mathsf{SO}(3)$ ambiguity. A gauge must therefore be fixed, either by supplying $\widehat{\mathbf{F}}_1$ externally (oracle gauge) or by choosing an arbitrary representative consistent with the bispectrum equations, which is a valid step due to the inherent global $\mathsf{SO}(3)$ ambiguity.

  \item \emph{Inductive step $\ell\ge 2$.}  
  For each shell $c$, assemble~\eqref{eq:bisys_ls} from $\Pi_{\ell,c}$ using the lower-degree solutions, then recover $\widehat{\mathbf{f}}_\ell(c)$ by solving the regularized least-squares problem
  \begin{align}
    \widehat{\mathbf{f}}_\ell(c)
     = \argmin_{x\in\mathbb{C}^{2\ell+1}}
    \|\mathbf{A}_{\ell,c}x -\mathbf{b}_{\ell,c}\|_2^2 + \lambda \|x\|_2^2,
    \label{eq:regLS}
  \end{align}
  with $\lambda\ge 0$ an optional small ridge parameter to improve stability.  
\end{enumerate}

In summary, the Gram invariants fully determine the degree-$0$ coefficients and fix the norms at degree $1$, while the bispectrum invariants inductively determine all higher degrees.  
This constructive scheme is called frequency marching, as the reconstruction proceeds sequentially in $\ell$, each step building on the lower-degree solutions. In principle, one may extend the approach by reiterating this pass to further refine the estimates.

\end{appendices}
\end{document}